\documentclass{llncs}

\usepackage{hyperref}
\usepackage{bm}
\usepackage{amsmath}
\usepackage{url}
\usepackage{amssymb}
\usepackage{stmaryrd}
\expandafter\def\csname opt@stmaryrd.sty\endcsname
{only,shortleftarrow,shortrightarrow}

\usepackage{cmll}
\usepackage{proof}
\usepackage{ifdraft}
\usepackage{tikz-cd}
\usepackage{mathtools}
\usepackage{bussproofs}
\usepackage{esint}
\usepackage{lscape}
\usepackage{string-diagrams}

\usepackage{extpfeil}
\usepackage[inline]{enumitem}

\usepackage[bibliography=common]{apxproof}

 \newtheoremrep{theorem}[theorm]{Theorem}
   \newtheoremrep{lemma}[theorem]{Lemma}
  \newtheoremrep{definition}[theorem]{Definition}
   \newtheoremrep{proposition}[theorem]{Proposition}
   \newtheoremrep{corollary}[theorem]{Corollary}
    \newtheoremrep{example}[theorem]{Example}
    \newtheoremrep{remark}[theorem]{Remark}
   \newtheoremrep{notation}[theorem]{Notation}
\newcommand{\NN}{\mathbb{N}}
\newcommand{\RR}{\mathbb{R}}
\newcommand{\card}[1]{\text{card}(#1)}

\newcommand{\Bool}{\mathbf{Bool}}
\newcommand{\deux}{\mathbf 2}

\newcommand*{\RRpos}{\mathbb{R}_{\ge 0}}

\renewcommand*{\le}{\leqslant}
\renewcommand*{\ge}{\geqslant}
\newcommand*{\set}[1]{\left\{ #1 \right\}}

\newcommand*{\suchthat}{\;\middle|\;}
\newcommand*{\setcompr}{\;;\;}
\newcommand*{\defeq}{\coloneq}
\renewcommand*{\phi}{\varphi}
\renewcommand*{\epsilon}{\varepsilon}

\newcommand*{\dif}[1]{\mathop{d #1}}
\newcommand{\interval}[2]{[#1,#2]}
\newcommand*{\scalar}[2]{\langle #1, #2 \rangle}

\DeclareMathOperator{\ic}{ic}

\DeclareMathOperator{\Path}{Path}
\DeclareMathOperator{\FMeas}{FMeas}

\newcommand{\pcoh}{\mathbf{PCoh}}

\newcommand{\Set}{\textbf{Set}}
\newcommand{\Catone}{\mathbf{C}}
\newcommand{\Cattwo}{\mathbf{D}}
\newcommand{\Pol}{\mathbf{SBS}}

\newcommand{\Cones}{\mathbf{Cones}}
\newcommand{\Mcones}{\mathbf{MCones}}
\newcommand{\Icones}{\mathbf{ICones}}
\newcommand{\ICones}{\Icones}
\newcommand{\MCones}{\Mcones}
\newcommand{\MeasC}{\mathbf{Meas}}
\newcommand{\Stoch}{\mathbf{Stoch}}
\newcommand{\SubStoch}{\mathbf{SubStoch}}
\newcommand{\BorelStoch}{\mathbf{BorelStoch}}

\newcommand{\DistSubStoch}{\mathbf{SubStoch}^{\leq \omega}}

\newcommand{\Giry}{\mathcal G}

\newcommand{\wlim}{\lim^{w}}
\newcommand{\Meas}{\text{Meas}}
\newcommand{\norm}[1]{\lvert #1 \rvert}
\newcommand{\distrone}{\mu}
\newcommand{\distrtwo}{\nu}
\newcommand{\dirac}[1]{\{ #1\}^1}
\newcommand{\functorClin}{C}

\newcommand{\DD}{\text{DD}}
\newcommand{\dd}{\text{d}}
\newcommand{\DDn}[1]{\text{DD}_{#1}}

\newcommand{\mtest}[1]{\mathcal M_{#1}}
\newcommand{\mtestc}[2]{\mathcal M_{#2}^{#1}}

\newcommand{\Arity}{\textbf{ar}}
\newcommand{\pathes}[2]{\text{Path}(#1,#2)}
\newcommand{\unitball}[1]{\mathcal B#1}
\newcommand{\ball}{\unitball}
\newcommand{\cint}{\fint}
\newcommand{\cintc}[1]{\fint^{#1}}

\newcommand{\multkern}[1]{\text{mulnom}[#1]}

\newcommand{\pcs}[1]{\mathcal{#1}} %probabilistic coherence space
\newcommand{\web}[1]{\left\vert#1\right\vert} %web of probabilistic coherence space
\newcommand{\clique}[1]{\mathrm{P}\left(#1\right)} %set of cliques in a pcoh

\newcommand{\Rp}[1]{({\mathbb R}_{\geq 0})^{#1}}
\newcommand{\multinomial}[1]{m(#1)} %multinomial coeff of a multiset (number of its enumerations)
\newcommand{\MfSet}[1]{\mathcal{M}_{\text{f}}^{\text{Set}}\left(#1\right)} %finite multiset
\newcommand{\Mn}[2]{\mathcal{M}_{#1}\left(#2\right)} %finite multiset
\newcommand{\MnSet}[2]{\mathcal{M}^{\text{Set}}_{#1}\left(#2\right)} %finite multiset
\newcommand{\MnCat}[3]{\mathcal{M}^{#1}_{#2}\left(#3\right)}

\newcommand{\pCoh}{\mathbf{pcoh}} % linear pCoh
\newcommand{\pCoht}{\mathbf{PCoht}} % linear pCoh with totality
\newcommand{\id}[1][]{\mathrm{id}^{#1}} % identity endomorphism of #1
\newcommand{\idm}[2]{\mathrm{id}^{#1}_{#2}}
 \newcommand\PCF{\ensuremath{\mathrm{PCF}}}%PPCF for math notations
        \newcommand\ttrue{\mathrm{t}}
        \newcommand\ffalse{\mathrm{f}}
        	\newcommand{\sem}[1]{\left\llbracket #1 \right\rrbracket}

\newcommand{\iotatc}{\text{incl}_{TC}}
\newcommand{\iotares}{\iota_{TC}}

 \newcommand{\unit}{\mathbf 1}
 \newcommand{\scal}[2]{\scalar {#1}{#2} }

 \newcommand{\eq}[1]{\text{eq}_{#1}}
 \newcommand{\coeq}[1]{\text{coeq}_{#1}}
 \newcommand{\law}[1]{\text{law}(#1)}
 \newcommand{\ob}[1]{\text{Ob}(#1)}
 \newcommand{\pointed}[1]{{#1}_{\bullet}}
 \newcommand{\pointA}{\overline A}
 \newcommand{\pointAun}{\overline {A_1}}
 \newcommand{\pointAdeux}{\overline {A_2}}
 \newcommand{\tc}[1]{\widehat{#1}}
 
 \newcommand{\Ar}{\Arity}

\title{Interpreting De Finetti's Theorem in the Category of Integrable Cones (long version)}

\author{Raphaëlle {Crubillé}}
  \institute{Aix Marseille Univ, CNRS, LIS, Marseille, France}

\authorrunning{R. Crubillé} %TODO mandatory. First: Use abbreviated first/middle names. Second (only in severe cases): Use first author plus 'et al.'

\begin{document}
\maketitle
\begin{abstract}
We establish a connection between two results in the literature on probabilistic semantics: a formulation of De Finetti's theorem in the language of category theory due to Jacobs and Staton, and the generic construction of the free exponential of Linear Logic by Melliès et al, that has been instantiated in the model of probabilistic coherence spaces by Crubillé et al. The structural proximity of these two constructions is manifest, but making this connection formal requires technical developements on the relationship between the category of stochastic kernels and the category of integrable cones, two well-known categories in probabilistic semantics.  We then use this connection to give a characterisation of the total elements of the probabilistic coherence space $!\Bool$.
  \end{abstract}
%\maketitle

There have been (at least) two different approaches pursued very actively in recent years to mix probability theory and category theory. The first one, around ideas from \emph{synthetic probability theory, Markov categories...}~\cite[\dots]{Cho_Jacobs_2019,FRITZ2020107239} looks at how to abstract the laws of probability theory at the level of categories, with the goal of understanding probabilistic phenomena in a more structural way. A standard endeavour here is to start with a concrete category that arises from probability theory -- such as the category of finite sets and stochastic matrices, or the category $\Stoch$ of measurable spaces and stochastic kernels, to express a standard theorem of probability theory using categorical language: functors, limits, adjunctions\ldots, and then to try to prove it abstractly
in a category where the requirements from synthetic probability theory hold, e.g. a Markov category.  

The other approach has emerged from the field of \emph{programming languages semantics}, with the aim of finding categories that would be good models for programming languages with \emph{probabilistic primitives}, e.g.~discrete or continuous probabilistic sampling, or Bayesian reasoning. The main problem here is to make the structure needed to interpret probabilistic primitives coexist with the other structural constraints of a model for a programming language: cartesian closure for higher-order languages such as the simply typed lambda-calculus, monoidal closure plus a well-behaved comonad for models of linear logic\dots Unfortunately, the categories that appear naturally from probability theory (e.g.~$\Stoch$) are not cartesian or monoidal closed, which has driven the construction of more involved categories as models of higher-order probabilistic computation~\cite[\dots]{heunen2017convenient,paquetcontinuous,ehrhard2025integration}. 

In the present work, we are interested in how both these approaches deal with the notion of \emph{exchangeability}. On the one hand, several works in synthetic probability theory have developed categorical understandings of De Finetti's theorem,
a foundational result that characterises the \emph{exchangeable} infinite sequences of random variables on some measurable space $X$. Here exchangeable means that the distribution of the sequence is invariant under permutations of finite prefixes.
An  infinite list of independent, identically distributed  random variables (noted i.i.d.-sequence in the following) is obviously exchangeable. 
De Finetti's theorem says that, under some conditions on $X$,
\emph{every} infinite exchangeable list of $X$-valued random variables can be written as a (unique) \emph{mixture} of i.i.d.-sequences.\footnote{De Finetti did not prove the modern form of De Finetti's theorem, since he did not accept the Kolmogorov axiomatisation of probability theory--see for instance~\cite{historic,fishburn1986axioms}. But this theorem is named after him because he proposed taking exchangeable sequences of Boolean random variables--definable in discrete probability theory, where there was no controversy on the formalism--as a foundation for probability theory on the continuous interval $[0,1]$.}

On the other hand, exchangeability also appears as a crucial requirement for the exponential comonad $!$ of Linear Logic, that represents data types that can be copied as many times as necessary. These requirements imply indeed that for every $A$, $!A$ is a \emph{commutative} comonoid, and this commutativity implies that $!A$ should be understood as a \emph{exchangeable} infinite list of copies of $A$.
In a symmetric monoidal closed category $\Catone$
taking $!A$ to be the free commutative comonoid over $A$, assuming that it exists for all $A$, automatically gives a model of (multiplicative, intuitionnistic) Linear Logic: this class of models is called \emph{Lafont} models.
In~\cite{metabasson}, a generic construction of the free exponential comonad of Linear Logic on a monoidal category is developed  as a way of unifying the construction done in several existing concrete Lafont models of Linear Logic. An example of this construction, as proved in~\cite{crubille2017free}, is the model called probabilistic coherence spaces ($\pcoh$), a fully abstract~\cite{EhrhardPT18} model of discrete probabilistic $\PCF$.

In the present paper, we build a close connection between the free exponential construction of Lafont models, and the categorical formulations of De Finetti's theorem~\cite{jacobs2020finetti,moss2022probability,fritz2021finetti}.
To do this, we take as ambient category the category of integrable cones, introduced by Ehrhard and Geoffroy~\cite{ehrhard2025integration} as a model of continuous probabilistic $\PCF$, and in which both $\Stoch$ and $\pcoh$ can be embedded.  The free exponential construction and the De Finetti construction are structurally very similar, but connecting them formally requires taking a close look at the relationship between $\Stoch$ and $\Icones$: that constitutes the technical core  of our work. In the last part of our paper, we present an application of  this connection to characterise the total elements of $!\Bool$ -- where $\Bool$ is the interpretation in $\pcoh$ of the Boolean type -- using the probability measures on the $[0,1]$ interval.

\section{Exchangeability in Monoidal Categories}\label{sect:exch_in_mon_cat}
\subsection{Categorical De Finetti theorems}\label{sect:DeFinetti}
Let us note as usual $\Giry$ for the Giry monad over the category $\MeasC$ of measurable spaces, thus $\Giry X$ is the set of probability distributions over $X$ equipped with a structure of measurable space. Under suitable conditions on $X$, De Finetti's theorem says that if $s$ is an exchangeable sequence of $X$-valued random variables, then there exists a unique probability measure $\mu \in \Giry\Giry X$ such that the \emph{law} of $s$--which is a probability measure on $X^\omega,$ the infinite product of copies of  $X$--is the \emph{mixture} along $\mu(d\alpha)$ of the law of $i.i.d$ sequences, each of law $\alpha \in \Giry X$. Seen as a probabilistic process, it is the probability distribution of the sequence of elements of $X$ obtained by first sampling from $\mu$ an element $\alpha \in \Giry X$, and then sampling infinitely often an element of $X$ into the probability distribution $\alpha \in \Giry x$.
Let's look for instance at $X = \deux:= \{0,1\}$: this result means that the probability measures on $\{0,1\}^\omega$ that verify the exchangeability requirement are in bijection with the set of Bernoulli distributions, that may be identified to the $[0,1]$ interval.

Jacobs and Staton presented in~\cite{jacobs2020finetti} a categorical  De Finetti's theorem. Their setting is the category $\Stoch$, where the objects are measurable spaces, and the morphisms are the stochastic kernels; $\Stoch$ can be described as 
 the Kleisli category of the Giry monad on $\MeasC$, and 
the cartesian product on $\MeasC$ gives rise to a symmetric monoidal product in $\Stoch$. 
Their formulation is based on the construction of a chain in $\Stoch$, that they call the \emph{draw-and-delete chain}. 
For any measurable space $X$, the objects in this chain are the $\Mn n X$ -- the spaces of multisets on $X$  of size exactly $n$ -- that represent \emph{urns} containing $n$ elements. The morphisms are the  $\DDn n:\Mn n X \rightarrow \Mn {n-1}X$ obtained by drawing at random an element in the urn, and then removing  it from the urn.

From a categorical point of view, this chain can also be built from the more primitive \emph{discard} chain, whose objects are the $X^{\otimes n}$, 
and where the discard morphism $\dd_n: X^{\otimes{n+1}} \rightarrow X^{\otimes n}$ arises from the fact that $\unit$ is also the terminal object. We can then obtain the draw-and-delete chain by observing that the $\Mn n X$ are the equalisers of the symmetries on $X^{\otimes n}$, and lifting the discard chain to the level of equalisers.\footnote{In our diagrams, we will systematically omit the canonical isomorphism $X \cong X \otimes \unit$.}
\begin{equation}
\begin{tikzpicture}[scale=0.8]
\small
  \node (0) at (0,0) {$\unit$};
  \node (s0) at (0,1){$\unit$};
  \node (1) at (3.5,0) {$\Mn 1 X$};
  \node (s1) at (3.5,1){$X$};
  \node (2) at (7,0) {$\Mn 2 X$};
  \node (s2) at (7,1){$X^{\otimes 2}$};
 \node (s3) at (9.5,1){$\dots$};
  \node (3) at (9.5,0) {$\dots$};

\draw[->] (0) to node[right] {$\eq 0$} (s0);
\draw[->] (1) to node[right] {$\eq 1$} (s1);
\draw[->] (2) to node[right] {$\eq 2$} (s2);

\draw[->] (3) to node[above] {$ \DDn 2 $} (2);
\draw[->] (2) to node[above] {$ \DDn 1 $} (1);
\draw[->] (1) to node[above] {$ \DDn 0 $} (0);
\draw[loop,->,min distance=5mm,looseness=4] (s1) to node[above]{symm} (s1);
\draw[loop,->,min distance=5mm,looseness=3.5] (s2) to node[above]{symm} (s2);
\draw[->] (s3) to node[above] {$X^{\otimes 2}\times(\cdot) $} (s2);
\draw[->] (s2) to node[above] {$ X \otimes (\cdot) $} (s1);
\draw[->] (s1) to node[above] {$ (\cdot) $} (s0);
\end{tikzpicture}
\end{equation}
The law of an exchangeable infinite sequence of random variables on $X$ can then be seen as a \emph{cone} from $\unit$ to the draw-and-delete chain: the morphism $1 \rightarrow \Mn n X$ is the law of the $n$ first random variables of the sequence. The fact that the targets are the equalisers of the symmetries corresponds to the exchangeability requirement. The formulation of De Finetti's theorem proposed by~\cite{jacobs2020finetti}
is that $\Giry X$ should be the \emph{limit} of the draw-and-delete chain. In particular, the universal property applies to cones from $\unit$, thus those arising from exchangeable sequences of $X$-valued random variables, which corresponds to the usual De Finetti theorem. 

A less stratified formalisation of De Finetti's theorem~\cite{moss2022probability,fritz2021finetti} has also been proposed by Perrone et al. This version requires the existence of the infinite product $X^\omega$, axiomatised categorically as the limit of the discard chain. Then instead of looking at the equalisers of symmetries at each layer, Perrone et al.'s formulation of De Finetti's theorem is that $\Giry X$ should be the equaliser of all finitely supported symmetries on $X^\omega$:
\[\begin{tikzpicture}
    \small
\node (a) at (0,0) {$\Giry X$};
\node (b) at (3,0){$X^{\omega}$};
\draw[->,loop,min distance=5mm,looseness=4.5] (b) to node[above] {finite symm} (b);
\draw[->] (a) to node[below] {} (b);
\end{tikzpicture}\]

While $X^\omega$ can always be defined as the infinite Cartesian product in $\MeasC$, its limit characterisation in $\Stoch$ is a consequence of Kolmogorov's extension theorem that does not hold in general for measurable spaces--but holds for all \emph{standard Borel spaces}, i.e.~Polish spaces equipped with their Borel $\sigma$-algebra. This formalisation of De Finetti's theorem has  been shown in~\cite{fritz2021finetti} in the general setting of a Markov category with a number of additional requirements, that hold in particular for the category $\Pol$ of  standard Borel spaces.
\subsection{The free exponential layered construction}\label{sect:layeredconstruction}
 Starting from a symmetric monoidal category $\mathcal C$, there are several possible axiomatisations for $\mathcal C$ to be able to model the exponential modality of Linear Logic, thus to be a model of (Intuitionistic) Linear Logic -- see~\cite{mellies2009categorical} for a survey on this. 
The most demanding one, proposed by Lafont, consists in requiring the existence of the free commutative comonoid generated by $A$--noted $!A$--for every object $A$ of the category. When it is the case, $!$ can be extended into a comonad that verifies all requirements for a categorical model of Linear Logic.
A layered construction is given in~\cite{metabasson} for the free exponential modality, that applies to many--but not all-- known Lafont models of linear logic. Firstly, it asks for the existence, for every $A$, of the \emph{free copointed object} $(A_\bullet ,w:A_\bullet\rightarrow \unit) $ over $\unit$ generated by $A$, where $\unit$ is the monoidal unit.
Secondly, it requires the existence of the equalisers of the symmetries on ${A_\bullet}^{\otimes n}$ for every $n$. This leads to the \emph{approximations chain}:
\begin{equation*}
  \begin{tikzpicture}[scale=0.8]
\small
  \node (0) at (0,0) {$\unit$};
  \node (s0) at (0,1){$\unit$};
  \node (1) at (3.5,0) {$A_\bullet^{\leq 1}$};
  \node (s1) at (3.5,1){$\pointed A$};
  \node (2) at (7,0) {$A_\bullet^{\leq 2}$};
  \node (s2) at (7,1){$\pointed A^{\otimes 2}$};
 \node (s3) at (9.5,1){$\dots$};
  \node (3) at (9.5,0) {$\dots$};
\draw[->] (0) to node[right] {$\eq 0$} (s0);
\draw[->] (1) to node[right] {$\eq 1$} (s1);
\draw[->] (2) to node[right] {$\eq 2$} (s2);
\draw[->] (3) to node[above] {} (2);
\draw[->] (2) to node[above] {} (1);
\draw[->] (1) to node[above] {} (0);
\draw[loop,->,min distance=5mm,looseness=4.5] (s1) to node[above]{symm} (s1);
\draw[loop,->,min distance=5mm,looseness=4] (s2) to node[above]{symm} (s2);
\draw[->] (s3) to node[above] {$\pointed A^{\otimes 2}\times w$} (s2);
\draw[->] (s2) to node[above] {$ \pointed A \otimes w$} (s1);
\draw[->] (s1) to node[above] {$ w$} (s0);
\end{tikzpicture}
\end{equation*}
As proved in~\cite{metabasson}, the limit of this chain--when it exists, and when moreover both the equalisers of symmetries and the chain limit commute with the tensor product--is the free commutative comonoid generated by $A$, and thus gives a Lafont model of Linear Logic. It has been proved in~\cite{crubille2017free} by this method that the exponential modality of probabilistic coherence spaces is indeed the free one.

\subsection{Our contributions in the present paper} 
From a formal point of view, there is a striking similarity between the draw-and-delete chain from Section~\ref{sect:DeFinetti} and the  approximations chain of Section~\ref{sect:layeredconstruction}, that reflects the fact that the question of \emph{exchangeability} -- and how to manage symmetries between different copies of a resource -- is fundamental in Linear Logic semantics. Technically, the only difference between the two chain constructions is that $\unit$ is not required to be the terminal object in the free exponential construction from~\cite{metabasson}, and accordingly  $X^{\otimes n}$ in the draw-and-delete chain is replaced by ${X_\bullet}^{\otimes n}$ in the construction of the approximations chain. 

A natural place to explore this connection is the category of \emph{integrable cones} ($\Icones$)~\cite{ehrhard2025integration} because both the category of stochastic kernels and the category of probabilistic coherent spaces can be faithfully embedded there. Let's write $\sem{\cdot}^{\Stoch}$ and $\sem{\cdot}^{\pcoh}$ for those respective embeddings.
We can sum up the main contributions of our work as:
\begin{enumerate}
\item We show that for every standard Borel space $X$, $\Giry X$ is the \emph{free commutative comonoid} generated by $X$ in the category $\Stoch(\Pol)$ --  the full sub-category of $\Stoch$ whose objects are standard Borel spaces.
\item We show that in  $\Icones$ the draw-and-delete chain from the De Finetti formulation can be built  over the image $\sem{X}^{\Stoch}$ of  any standard Borel space $X$; that the approximations chain from the free exponential construction can be built over the image $\sem{\pcs A}^{\pcoh}$ of any probabilistic coherence space $\pcs A$; and that these chains have indeed for limit there $\sem{\Giry X}^{\Stoch}$ and $\sem{!\pcs A}^{\pcoh}$ respectively.
In order to prove that $\sem{\Giry X}^{\Stoch}$ is the limit of the draw-and-delete chain  in $\Icones$, we first show that the faithful functor from $\Stoch$ to $\Icones$ preserves both the tensor product and \emph{connected} limits, which is a relevant result in itself, and also the more technically involved part of our work.
\item A countable set $X$  can be represented both as a discrete measurable space in $\Stoch$, and as a probabilistic coherence space in $\pcoh$, thus both chains exist with respect to $X$ in the category $\Icones$. For such an $X$,  we build a chain morphism that by the universal property of the limit gives rise to a canonical monomorphism $\sem{\Giry X}^{\Stoch} \xrightarrow{\iota} \sem{!X}^{\pcoh}$, that can be understood as mapping any measure in $\Giry X$ a \emph{promotion} in $!X$.
That is the central result in our paper: it says that the (laws of) infinite sequences of exchangeable random variables on $X$ can be seen as elements of $!X$.
\begin{center}
\begin{tikzpicture}[scale=1]
    \small
  \node (0) at (-2.1,4.2) {$\unit$};
  \node (s0) at (-2.1,3.5){$\unit$};
  \node (1) at (0,4.2) {$\Mn n X$};
  \node (s1) at (0,3.5){$X^{\otimes n}$};
  \node (2) at (3,4.2) {$\Mn {n+1} X$};
  \node (s2) at (3,3.5){$X^{\otimes {n+1}}$};
 \node (s3) at (5.5,3.5){$\dots$};
  \node (3) at (5.5,4.2) {$\dots$};
 \node (as3) at (-1.7,3.5){$\dots$};
  \node (a3) at (-1.7,4.2) {$\dots$};
 
\node (limit1) at (6.3,5) {$\Giry X$};
\node (limit2) at (6.3,0.7) {$! X$};

\draw[->] (0) to node[right] {} (s0);
\draw[->] (1) to node[right] {} (s1);
\draw[->] (2) to node[right] {} (s2);
\draw[->,dashed] (limit1) to node[left] {$\exists ! \iota$} (limit2);
\draw[->, bend right=9] (limit1) to node[right] {} (2);
\draw[->,bend right=11] (limit1) to node[right] {} (1);
\draw[->] (3) to node[above] {$ \DDn n $} (2);
\draw[->] (2) to node[above] {$ \DDn {n+1} $} (1);
\draw[->] (1) to node[above] {} (a3);
\draw[->] (s3) to node[above] {$X^{\otimes {n+1}}\times(\cdot) $} (s2);
\draw[->] (s2) to node[above] {$ X^{\otimes n}\otimes (\cdot) $} (s1);
\draw[->] (s1) to node[above] {} (as3);

    %%%%%%%%%%%%%%%%%%%%%%%%%%%%%%%%%%
\node (eq) at (0,1.3) {$ X^{\leq n}$};
\node (tens1) at (0,2) {$( X_\bullet)^{\otimes n}$};
\node (tens2) at (3,2) {$(X_\bullet)^{\otimes n+1}$};
\node (t0) at (-1.7,1.3){\ldots};
\node (t1) at (5,1.3){\ldots};
\node (u0) at (-1.7,2){\ldots};
\node (u1) at (5,2){\ldots};
\node(un) at (-2.1,1.3){$\unit$};
\node(unbis) at (-2.1,2){$\unit$};
\draw[->] (un) to node{} (unbis);
\node (c) at (3,1.3) {$ X^{\leq n+1}$};
\draw[->] (eq) to node [left] {} (tens1);
\draw[->] (c) to node [left] {} (tens2);
\draw[->] (tens2) to node [above] {$(X_\bullet)^{\otimes n}\otimes{\pi_2}$} (tens1);
\draw[->,dashed] (s1) to node [above] {} (tens1);
\draw[->,dashed] (s2) to node [above] {} (tens2);
\draw[->] (c) to node [above] {} (eq);
\draw[->] (eq) to node [above] {} (t0);
\draw[->] (t1) to node [above] {} (c);
\draw[->] (tens1) to node [above] {} (u0);
\draw[->] (u1) to node [above] {} (tens2);
\draw[->, bend left=7] (limit2) to node[right] {} (eq);
\draw[->, bend left=7] (limit2) to node[right] {} (c);
\end{tikzpicture}

\end{center}

\item Finally,
we show that for any discrete countable space $X$, all \emph{total} elements -- in the sense of the theory of probabilistic coherence spaces with totality~\cite[Appendix A]{ehrhard2025variable} in $!X$ are in the image of this morphism $\Giry X \rightarrow !X$, thus that the total elements in $!X$ are \emph{exactly} the probabilistic \emph{mixtures} of promotions.
  \end{enumerate}

\section{Exchangeability in monoidal categories}\label{sect:connection_chains}

In this section,  we fix  $\Catone$  any symmetric monoidal category, and we abstract the common pattern in the two chain constructions of Section~\ref{sect:exch_in_mon_cat} by defining generic draw-and-delete chains from any $\unit$-copointed object. Then, we show at this level of generality that morphisms between $\unit$-copointed objects can be lifted to chain-morphisms between the corresponding draw-and-delete chains, and finally we use this general framework to build in $\pcoh$ a chain morphism from the De Finetti chain to the approximations chain.
\subsection{ Draw-and-Delete chains in a generic category}

 \begin{toappendix}
 \subsection{Connecting Draw-and-Delete chains in a generic category}
 \end{toappendix}

   We will consider $\unit$-copointed objects in $\Catone$, i.e.~pairs $ (A,w:A \rightarrow \unit)$. We first define what is a \emph{free copointed object generated by $A$}. 
  \begin{definition}
    Let $A$ an object in $\Catone$. A \emph{free copointed object generated by $A$} is a copointed object $(B,w)$ equipped with a morphism $\eta_B:B \rightarrow A$, such that for any copointed object $(C,u)$ and morphism $\eta_C:C\rightarrow A$, there exists a unique morphism of co-pointed object $f: (C,u) \rightarrow (B,w)$ such that $\eta_C = \eta_B \circ f$:
   $$ \begin{tikzcd}
      & B \ar[ld,"\eta_B ",swap] \ar[rd,"w"] \\
      A & C \ar[r,"u ",swap] \ar[l,"\eta_C"] \ar[dashed,u,"\exists ! f "] & \unit \\
      \end{tikzcd}$$
      
      When it exists, we note $A_\bullet$ the free copointed object generated by $A$.
    \end{definition}
\begin{remark}
Observe that if $\mathcal C$ has cartesian products, $A_\bullet$ is simply $A \with \unit$.
On the other hand, if $\unit$ is also the terminal object -- as for instance in $\Stoch$ -- the free copointed object generated by $A$ is simply $(A,A\rightarrow \unit)$, where $A\rightarrow \unit$ is the unique morphism from $A$ to $\unit$.
\end{remark}
    Given any co-pointed object $(A,w)$, we can define the \emph{draw-and-delete chain} generated by $A$ as long as for every $n \in \NN$, the equaliser of all symmetries on $A^{\otimes n}$ exists.  
\begin{definition}\label{def:dd_chain}
  Let $\pointA = (A,w:A \rightarrow \unit)$ a $\unit$-copointed object in $\Catone$.
  We say that a \emph{draw-and-delete (DD) chain can be built} over $\pointA$ when for every $n \in \NN$, the equaliser of the $n!$ symmetries over $A^{\otimes n}$ exists. When it is the case, we note $\Mn n A$ for this equaliser, and we call \emph{draw-and-delete morphism} the unique morphism  $\DDn n^{\pointA}$ given by:
    $$\begin{tikzcd}
    A^{\otimes n} & & A^{\otimes n+1} \\
    \Mn n A & & \Mn {n+1} A
    \ar[from=1-3,to=1-1,"A^{\otimes n} \otimes w"]
    \ar[from=2-3,to=2-1,"\DDn {n}^{\pointA}"]
    \ar[from=2-1,to=1-1,"\eq n"]
    \ar[from=2-3,to=1-3,"\eq {n+1}"]
  \end{tikzcd}$$
  We call \emph{draw-and-delete chain over $\pointA$} the chain:  $$\unit \xleftarrow{\DDn 0^{\pointA}} \Mn 1 A \xleftarrow{\DDn 1^{\pointA}} \Mn 2 A \xleftarrow{\DDn 2^{\pointA}}\ldots.$$

\end{definition}

\begin{example}
  The draw-and-delete chain built by Jacobs and Staton in~\cite{jacobs2020finetti} -- that we'll call in the following the De Finetti  chain, and note $\DD^{\text{DF}}$ -- is an instance of our generic construction of draw-and-delete chains, by taking $\Catone = \Stoch$, and as copointed object $(X, \_:X \rightarrow \unit)$.
  In any category $\Catone$ which is a model of intuitionistic multiplicative  Linear Logic such that moreover the exponential is built using the layered construction of~\cite{metabasson}, the approximations chain for $A$  there is an instance of our draw-and-delete chain, by taking as copointed object $\pointed A$ the free copointed object over $A$.
  \end{example}

    As mentioned earlier, the layered construction of~\cite{metabasson} asks for \emph{commutation} of the equalisers with the monoidal product. We spell out what it means below:
    \begin{definition}
      Let $A$ be an object in $\Catone$, such that for every $n$, the equaliser $(\Mn n A,\text{eq}_n:\Mn n A \rightarrow A^{\otimes n})$ of the symmetries over $A^{\otimes n}$ exists. We say that the \emph{equalisers of symmetries commute with the tensor product} when $\forall Y$ in $\Catone$, $\forall n \in \NN$:
      $$(\Mn n A \otimes Y, \text{eq}_n \otimes Y:\Mn n A \otimes Y \rightarrow A^{\otimes n}\otimes Y)$$ is the equaliser of the $\{\sigma \otimes Y \mid \sigma \text{ a symmetry on } A^{\otimes n}\}$ -- where by abuse of notation we note $Y$ for the morphism $\id_Y$.
    \end{definition}
    We now formally state the central result of~\cite{metabasson}, i.e. that -- under some conditions on commutativity of limits and monoidal products -- the limit of the draw-and-delete chain over $A_\bullet$  is the free comonoid over $A$. 
    \begin{theorem}[from~\cite{metabasson}]\label{th:metabasson}
      Let $A$ be an object in $\Catone$ such that:
      \begin{enumerate}
        \item the free copointed object $A_\bullet$ generated by $A$ exists,
        \item $\forall n \in \NN$, the equaliser of symmetries on $A^{\otimes n}$ exists and commutes with the tensor product,
        \item the limit $(!A,\rho_{\infty,n}:!A \rightarrow \Mn n A)$ of the draw-and-delete chain generated by $A_\bullet$ exists, and moreover it commutes with the tensor product -- in the sense where for any $B$ in $\Catone$, $(!A\otimes B, \rho_{\infty,n} \otimes B: ! A \otimes B \rightarrow \Mn n A \otimes B)_{n \in \NN}$ is a limit cone for the chain $\DD_n^{A_\bullet} \otimes B$,
          \end{enumerate}
then $!A$ is the free commutative comonoid generated by $A$.
      \end{theorem}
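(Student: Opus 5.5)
The plan is to reconstruct the argument of~\cite{metabasson} in three stages: build the comonoid structure on $!A$, show that comonoid morphisms into $!A$ correspond to maps into $A$, and deduce freeness. Write $\Mn n {A_\bullet}$ for the equaliser of symmetries on $A_\bullet^{\otimes n}$, and $\rho_n := \rho_{\infty,n}$.

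\emph{Structure.} The counit $!A\to\unit$ is $\rho_0$. The dereliction $\epsilon: !A \to A$ is $\eta_{A_\bullet}\circ \eq 1\circ \rho_1$. For the comultiplication $\delta:!A\to !A\otimes !A$, I would use the commutation hypotheses. By hypothesis (3), applied twice, $!A\otimes !A$ is the limit of the doubly indexed diagram $\Mn p {A_\bullet}\otimes \Mn q {A_\bullet}$. Since $\mathbb N\times\mathbb N$ is cofinal on its diagonal, it suffices to give a compatible family $!A\to \Mn p{A_\bullet}\otimes\Mn q{A_\bullet}$. For this, take $\rho_{p+q}$ followed by the map $\Mn {p+q}{A_\bullet}\to \Mn p{A_\bullet}\otimes \Mn q{A_\bullet}$. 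That map is obtained from $\eq{p+q}: \Mn{p+q}{A_\bullet}\to A_\bullet^{\otimes p}\otimes A_\bullet^{\otimes q}$ by factoring twice through equalisers. This uses hypothesis (2), which says that $\Mn p{A_\bullet}\otimes Y$ is the equaliser of $\sigma\otimes Y$. The map $\eq{p+q}$ is invariant under every symmetry, and in particular under those acting on the first $p$ factors or on the last $q$ factors, so both factorisations exist. Compatibility with the $\DD$ maps follows from the naturality of $w$-deletions with respect to these factorisations. Coassociativity, counitality and cocommutativity are then checked componentwise, by postcomposing with the limit projections and equaliser monos, which are jointly monic. At that level the identities reduce to the corresponding identities for block decompositions of $A_\bullet^{\otimes n}$.

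\emph{Universal property.} Let $(C,d,e)$ be a commutative comonoid with $f:C\to A$. By freeness of $A_\bullet$, the pair $(f,e)$ gives a copointed morphism $f_\bullet:C\to A_\bullet$. Define $C\to A_\bullet^{\otimes n}$ as $f_\bullet^{\otimes n}\circ d^{(n)}$, where $d^{(n)}$ is the $n$-fold comultiplication, with $d^{(0)}=e$. By cocommutativity this map is invariant under symmetries, so it factors through $\eq n$ as $h_n: C\to\Mn n{A_\bullet}$. Since $w\circ f_\bullet = e$ and $e$ is the counit, we get $\DDn n\circ h_{n+1}=h_n$. The limit property therefore gives $h:C\to !A$. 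It remains to check three things: that $h$ is a comonoid morphism (again componentwise via $\rho_p\otimes\rho_q$), and that $\epsilon\circ h=f$.

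\emph{Uniqueness.} Given any comonoid morphism $g$ with $\epsilon\circ g=f$, I would show that $\eq n\circ\rho_n\circ g= f_\bullet^{\otimes n}\circ d^{(n)}$. The key lemma is that $\eq n\circ\rho_n$ equals $(\eq1\rho_1)^{\otimes n}\circ\delta^{(n)}$. This comes from the definition of $\delta$ and the fact that $\rho_1$ is the copointed map $!A\to A_\bullet$ induced by $(\epsilon,\rho_0)$. Then, using $g$ as a comonoid morphism, $(\eq1\rho_1)^{\otimes n}\delta^{(n)} g = (\eq1\rho_1 g)^{\otimes n} d^{(n)}$. Here $\eq1\rho_1 g$ is a copointed map over $f$, so it equals $f_\bullet$ by freeness.

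I expect the main obstacle to be the construction of $\delta$ and the verification of its identities. This is where both commutation hypotheses are genuinely used. It also involves a careful bookkeeping of how the $A_\bullet\otimes w$ deletions interact with the splittings $\Mn{p+q}{A_\bullet}\to \Mn p{A_\bullet}\otimes \Mn q{A_\bullet}$. In particular, one has to pass from the limit over $\mathbb N$ to the limit over $\mathbb N\times\mathbb N$ of the tensor of two chains.
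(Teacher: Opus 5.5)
Your proof is correct. The paper gives no argument of its own for this statement and simply imports it from \cite{metabasson}; your reconstruction is essentially the argument given there. It builds the comultiplication from the splitting maps $\Mn{p+q}{A_\bullet}\to\Mn p{A_\bullet}\otimes\Mn q{A_\bullet}$ using both commutation hypotheses, with hypothesis~(2) correctly read for $A_\bullet^{\otimes n}$. It obtains existence by factoring the symmetric maps $f_\bullet^{\otimes n}\circ d^{(n)}$ through $\eq n$, and uniqueness from $\eq n\circ\rho_{\infty,n}=(\eq 1\circ\rho_{\infty,1})^{\otimes n}\circ\delta^{(n)}$.
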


    Any morphism of copointed object $\overline A \rightarrow \overline B$ induces a chain morphism between the draw-and-delete chains generated by $\overline A$ and $\overline B$ respectively -- when these chains exist. We present the construction in Definition~\ref{def:mn_alpha} and Proposition~\ref{th:chain_morphism_from_copointed_morphisms} below.
\begin{definition}\label{def:mn_alpha}

   Let $A_1, A_2$ two objects of $\Catone$, such that their respective equalisers of symmetries exist.
   Let $\alpha:A_1 \rightarrow A_2$ a morphism.
   Using the fact that  $ \alpha^{\otimes} \circ eq_n^{A_1}:\Mn n {A_1}  \rightarrow {A_2}^{\otimes n} $ equalises all symmetries over ${A_2}^{\otimes n}$, we define  $\Mn n \alpha$ for the unique morphism such that:
   $$\begin{tikzpicture}
 \node (a) at (0,0) {$ \Mn n {A_1}$};
 \node (b) at (3,0){$A^{\otimes n}$};
 \node (c) at (6,0){${A_2}^{\otimes n}$};
   \node (d) at (6,-1){$\Mn n {A_2}$};
   \draw[->] (a) to node [above] {$\eq n ^{A_1} $} (b);
   \draw[->] (b) to node [above] {$\alpha^{\otimes n}$} (c);
   \draw[->] (d) to node [right]{$\eq n ^{A_2}$}(c);
   \draw[->] (a) to node [below] {$\Mn n \alpha$}(d);
     \end{tikzpicture}$$
  \end{definition}
 When $\alpha$ is also a morphism of copointed objects $\pointAun \rightarrow \pointAdeux$, the $(\Mn n \alpha)_{n \in \NN}$ form a chain morphism from the draw and delete chain on $\pointAun$ to the draw and delete chain on $\pointAdeux$. We prove this using the uniqueness property that comes from the definition of the $\Mn n \alpha$. The proof, which is a standard application of a universal property, can be found in Appendix.

 \begin{propositionrep}\label{th:chain_morphism_from_copointed_morphisms}
  Let $\pointAun= (A_1,w_1)$ and $\pointAdeux = (A_2,w_2)$ two copointed objects.
  Let $\alpha:\pointAun \rightarrow \pointAdeux$ a morphism of copointed objects, i.e. $\alpha:A_1 \rightarrow A_2$ is such that  $w_1 = w_2 \circ \alpha$.
    Then the $(\Mn n \alpha)_{n \in \NN}$ form a chain morphism between the corresponding draw-and-delete chains i.e.~the diagram below commutes for all $ n \in \NN$:
 
       $$\begin{tikzpicture}
 \node (a) at (0,0) {$ \Mn n {A_1}$};
 \node (b) at (3,0){$\Mn {n+1} {A_1}$};
 \node (c) at (0,-1.2){${\Mn n {A_2}}$};
 \node (d) at (3,-1.2){$ \Mn {n+1}{A_2}$};
   \draw[->] (b) to node [above] {$\DDn{n}^{\pointAun} $} (a);
   \draw[->] (a) to node [left] {$\Mn n  \alpha$} (c);
   \draw[->] (b) to node [right]{$\Mn {n+1} \alpha$}(d);
   \draw[->] (d) to node [below] {$\DDn{n}^{\pointAdeux} $}(c);
     \end{tikzpicture}$$
   \end{propositionrep}
   \begin{proof}
     The proof is a standard application of a universal property.
     Our goal is to show that $ \Mn n \alpha \circ \DDn n^{\pointAun} = \DDn n^{\pointAdeux} \circ \Mn {n+1} \alpha  $. First, observe that since $\Mn n {A_2}$ is defined as an equaliser, we can rewrite equivalently our goal as:
     \begin{equation}\label{eq:proof_chain_morph_1}
\eq n^{A_2} \circ \Mn n \alpha \circ \DDn n^{\pointAun} = \eq n^{A_2} \circ \DDn n^{\pointAdeux} \circ \Mn {n+1} \alpha \quad : \Mn {n+1} {A_1} \rightarrow \Mn n {A_2}. 
\end{equation}
Let us note $\dd_n^{\overline A_i}= A_i^{\otimes n}\otimes w_i$. From the definition of $\DDn n^{\pointAdeux}$, we know that  $\eq n^{A_2} \circ \DDn n^{\pointAdeux} = \dd_n^{\pointAdeux} \circ \eq{n+1}^{A_2}$. We can thus rewrite our goal~\eqref{eq:proof_chain_morph_1} as:
 \begin{equation}\label{eq:proof_chain_morph_2}
\eq n^{A_2} \circ \Mn n \alpha \circ \DD n^{\pointAun} = \dd n^{\pointAdeux} \circ \eq{n+1}^{A_2} \circ \Mn {n+1} \alpha \quad : \Mn {n+1} {A_1} \rightarrow \Mn n {A_2}. 
\end{equation}
We now use the fact that by definition of $\Mn {n+1} \alpha$, $\eq{n+1}^{A_2}\circ\Mn {n+1}\alpha = \alpha^{\otimes n+1}\circ \eq{n+1}{A_1}$, and similarly : $\Mn {n} \alpha$, $\eq{n}^{A_2}\circ\Mn {n}\alpha = \alpha^{\otimes n}\circ \eq{n}{A_1}$
 \begin{equation}\label{eq:proof_chain_morph_3}
 \alpha^{\otimes n}\circ \eq{n}{A_1} \circ \DDn n^{\pointAun} = \dd_n^{\pointAdeux} \circ  \alpha^{\otimes n+1}\circ \eq{n+1}{A_1} \quad : \Mn {n+1} {A_1} \rightarrow \Mn n {A_2}. 
\end{equation}
We can again apply -- this time on the left hand-side -- the fact that by definition of $\DD{}$: $\eq n^{A_1} \circ \DDn n^{\pointAun} = \dd_n^{\pointAun} \circ \eq{n+1}^{A_1}$, and we obtain as goal:
 \begin{equation}\label{eq:proof_chain_morph_4}
 \alpha^{\otimes n}\circ \dd_n^{\pointAun} \circ \eq{n+1}^{A_1} = \dd_n^{\pointAdeux} \circ  \alpha^{\otimes n+1}\circ \eq{n+1}^{A_1} \quad : \Mn {n+1} {A_1} \rightarrow \Mn n {A_2}. 
\end{equation}
So to conclude, it is enough to show that $ \alpha^{\otimes n}\circ \dd_n^{\pointAun} = \dd_n^{\pointAdeux} \circ  \alpha^{\otimes n+1}$. Recall that $\dd_n^{\overline{A_i}} = (A_i)^\otimes n\otimes w_i$: from there we see that we can conclude using functoriality of $\otimes$, and the fact that $\alpha$ is a morphism of copointed objects.
\end{proof}

As a corollary of Proposition~\ref{th:chain_morphism_from_copointed_morphisms}, when $A_\bullet$ exists, and the draw-and-delete chain over $A_\bullet$ is well defined, then this chain can be considered as free among all the draw-and-delete chains built over $A$, in the sense that there is a chain morphism from any other draw and delete chain over $A$ to the approximations one. This applies in particular to the approximations chain built for the layered construction of the free exponential, in any model of LL where this construction goes through, for instance $\pcoh$.

\subsection{\texorpdfstring{Exchangeability diagram for $X^{\omega}$}{Exchangeability diagram on the infinite tensor product}}
As said in the introduction, the categorical variant of De Finetti theorem proposed by Jacobs and Staton~\cite{jacobs2020finetti} consists in asking $\Giry X$ to be the limit of the draw-and-delete chain build over the free copointed object $X_\bullet = (X,(\_):X\rightarrow \unit)$. We now present the categorical formulation of De Finetti theorem due to~\cite{fritz2021finetti}, that uses finitely supported symmetries on the \emph{infinite monoidal product} $X^\omega$. Let us first define the infinite monoidal product over any copointed object in an arbitrary symmetric monoidal category.
\begin{definition}
Let $\overline A = (A,w)$ a copointed object. We call the \emph{delete chain over $A$} the chain $(A^{\otimes{n+1} } \xrightarrow {A^{\otimes n}\otimes w} A^{\otimes n})_{n \in \NN}$. When this chain has a limit, we say that the \emph{infinite tensor product over $\overline{A}$ exists}, and we note it  $({\overline A}^\omega, \pi_{\leq n}^{\overline A} : {\overline A}^\omega\rightarrow A^{\otimes n})$. 
\end{definition}
For every $N \in \NN$, ${\overline A}^\omega$ is also the limit of the \emph{truncated} delete chain $(A^{\otimes{n+1} } \xrightarrow {A^{\otimes n}\otimes w} A^{\otimes n})_{n \geq N}$. We use this observation to define the \emph{finitely supported symmetries.}
\begin{notation}
We note $S_n$ for the set of bijections $\{1,\ldots,n\} \rightarrow \{1,\ldots,n\}$. For $A$ an object in $\Catone$, and $\sigma \in S_n$, by abuse of notation we note also $\sigma$ for the corresponding symmetry on $A^{\otimes n}$. Moreover, for $m > n$, we note $\sigma^{\uparrow m}$ for the bijections on $\{1,\ldots,m\}$ which behaves as $\sigma$ on $\{1,\ldots,n\}$, and as the identity on $\{n+1,\ldots,m\}$.
\end{notation}
\begin{definition}
  Let $\overline A = (A,w)$ a copointed object such that the \emph{infinite tensor product over $\overline{A}$ exists}.
For $N \in \NN$, and $\sigma \in S_N$, we note $\dot \sigma$ for the unique morphism ${\overline A}^{\omega} \rightarrow {\overline A}^{\omega}$ given by the cone on the truncated delete chain $(\sigma^{\uparrow n} \circ \pi_{\leq n}^{\overline A}: {\overline A}^\omega \rightarrow{} A^{\otimes n})_{n \geq N}$. The \emph{finitely supported symmetries on ${\overline A}^\omega$} are all the $\dot \sigma$, for some finite permutation $\sigma$. 
\end{definition}
The categorical De Finetti theorem proved in~\cite{moss2022probability,fritz2021finetti} consists in the fact that for every $X$  in $\Pol$, $\Giry X$ is the equaliser of all finitely supported symmetries on $X_\bullet^\omega$ in $\Stoch(\Pol)$.
\subsection{Connecting the two categorical De Finetti theorems}

In this section, we prove that the two categorical presentations of De Finetti's theorem are actually equivalent. 
\begin{propositionrep}\label{prop:cone_from_cone_1}
  Let $Z \xrightarrow{f} X^{\omega}$ a cone on the exchangeability diagram for $X$. Then for every $n \in \NN$, $\pi_{\leq n}^X \circ f: Z \rightarrow X^{\otimes n}$ equalises all symmetries. We note   $ (\pi_{\leq n}^X \circ f)^{\dagger}:Z \rightarrow \Mn n X$ for the unique mediating morphism obtained using the universal property for $\Mn n X$:
  {$$
  \begin{tikzcd}[ampersand replacement=\&]
    Z \ar["f", rr] \ar[d,"\exists ! (\pi_{\leq n}^X \circ f)^{\dagger}",dashed] \& \& X^{\omega} \ar[right,loop,"\text{finit. supp. symm.}", out=30, in=-15, looseness=3] \ar[d,"\pi_{\leq n}^X"]\\
    \Mn n X \ar[rr,"\text{eq}_n "]\& \& X^{\otimes n} \ar[loop,"\text{symm.}", out=+15, in=-30, looseness=3]
    \end{tikzcd}
  $$}
  Moreover $(Z, Z \xrightarrow{{(\pi_{\leq n}^X \circ f)}^{\dagger}} \Mn n X)_{n \in \NN}$ is a cone on the draw-and-delete chain.
  \end{propositionrep}
    \begin{proof}
    We prove successively the two parts of the statement:
    \begin{itemize}
    \item First, let's fix $n \in \NN$, and $\sigma: \{1,\ldots,n\}\rightarrow \{1,\ldots,n\}$ a bijection. We're going to prove that $\sigma \circ \pi_{\leq n} \circ f = \pi_{\leq n} \circ f$. Let's note $\dot \sigma:\NN \rightarrow \NN$ for the finitely supported bijection on $\NN$ that behaves as $\sigma$ on the $n$'th first integers, and as the identity elsewhere. As before, we note also $\dot \sigma$ for the corresponding finitely supported symmetry $X^\omega \rightarrow X^\omega$. Observe that $\sigma \circ \pi_{\leq n} = \pi_{\leq n} \circ \dot \sigma$. From there, we can see that $\sigma \circ \pi_{\leq n} \circ f = \pi_{\leq n} \circ \sigma \circ f$. Using our hypothesis that $f$ equalises all finitely supported symmetries on $X^w$, we can conclude that $\sigma \circ \pi_{\leq n} \circ f = \pi_{\leq n} \circ f$.
    \item Let's now consider the cone $(Z,((\pi_{\leq n} \circ f)^{\dagger}:Z \rightarrow \Mn n X)_{n \in \NN})$. We want to show that it is a cone on the DD chain. So let us fix $n \in \NN$: we're going to show that $\DD_{n}\circ (\pi_{\leq n+1}\circ f)^{\dagger} = (\pi_{\leq n} \circ f)^{\dagger}$.
      First, using the unicity of $(\pi_{\leq n} \circ f)^{\dagger}$, we can reduce our goal to: $\eq n  \circ \DD_n \circ  (\pi_{\leq n+1}\circ f)^{\dagger} = \pi_{\leq n} \circ f $. We know by definition of $\DDn n$ that $ \eq n \circ \DDn n = \dd_n\circ \eq {n+1}$. Using this, we can rewrite again our goal as: $\dd_n\circ \eq {n+1} \circ  (\pi_{\leq n+1}\circ f)^{\dagger} = \pi_{\leq n} \circ f $. The next step consists in unfolding the definition of  $(\pi_{\leq n+1}\circ f)^{\dagger}$: it is the (unique) morphism $Z \rightarrow \Mn{n+1}{X}$ such that $\eq {n+1} \circ  (\pi_{\leq n+1}\circ f)^{\dagger} = \pi_{\leq n+1}\circ f $. This allows us to do one more rewriting step on our goal, which becomes:  $\dd_n\circ \pi_{\leq n+1}\circ f = \pi_{\leq n} \circ f $. We can now conclude using the fact that $\dd_n \circ \pi_{\leq n+1} = \pi_{\leq n}$, since $(X^\omega,\pi_{\leq n}:X^{\omega} \rightarrow X^n)$ is a cone on the delete chain $(d_n:X^{n+1}\rightarrow X^n )$. 
      \end{itemize}
    \end{proof}
    \begin{toappendix}
      \begin{proposition}\label{prop:cone_from_cone_1_parametrized}
        Suppose that the equalisers of symmetries of the $X^{\otimes n}$ commute with tensor product.  Let $Z \xrightarrow{f} X^{\omega}\otimes Y$ that equalises all the $(\dot\sigma \otimes Y)_{n \in \NN,\sigma \in S_n}$. Then for every $n \in \NN$, $(\pi_{\leq n}^X \otimes Y) \circ f: Z \rightarrow X^{\otimes n}$ equalises all symmetries. We note   $ ((\pi_{\leq n}^X \otimes Y)\circ f)^{\dagger}:Z \rightarrow \Mn n X$ for the unique mediating morphism obtained using the universal property for $\Mn n X \otimes Y$:
  {$$
  \begin{tikzcd}[ampersand replacement=\&]
    Z \ar["f", rr] \ar[d,"\exists ! ((\pi_{\leq n}^X\otimes Y) \circ f)^{\dagger}",dashed] \& \& X^{\omega}\otimes Y \ar[right,loop,"\text{finit. supp. symm.}", out=30, in=-15, looseness=3] \ar[d,"\pi_{\leq n}^X \otimes Y"]\\
    \Mn n X \otimes Y\ar[rr,"\text{eq}_n "]\& \& X^{\otimes n}\otimes Y \ar[loop,"\text{symm.}", out=+15, in=-30, looseness=3]
    \end{tikzcd}
  $$}
  Moreover $(Z, Z \xrightarrow{{(\pi_{\leq n}^X \otimes Y \circ f)}^{\dagger}} \Mn n X \otimes Y)_{n \in \NN}$ is a cone on the draw-and-delete chain.
\end{proposition}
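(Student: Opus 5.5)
The plan is to copy the proof of Proposition~\ref{prop:cone_from_cone_1} verbatim, tensoring every morphism on the right with $Y$ (i.e.\ with $\id_Y$). The two parts are handled in order: first the equalising property at each level $n$, then the cone condition with respect to the tensored draw-and-delete chain $(\DDn n \otimes Y)_{n}$. Everything reduces to functoriality of $\otimes$ together with the unparametrised identities already used.

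For the first part, fix $n$ and $\sigma \in S_n$, and let $\dot\sigma$ be the induced finitely supported symmetry on $X^\omega$. The identity $\sigma \circ \pi_{\leq n} = \pi_{\leq n}\circ\dot\sigma$ holds by construction of $\dot\sigma$ as the mediating morphism of the truncated cone at level $n$. Tensoring it with $Y$ and using bifunctoriality gives
$(\sigma\otimes Y)\circ(\pi_{\leq n}\otimes Y)\circ f = (\pi_{\leq n}\otimes Y)\circ(\dot\sigma\otimes Y)\circ f$.
By hypothesis $f$ equalises $\dot\sigma\otimes Y$, so this equals $(\pi_{\leq n}\otimes Y)\circ f$. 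Hence $(\pi_{\leq n}\otimes Y)\circ f$ equalises all $\sigma\otimes Y$. The hypothesis that equalisers of symmetries commute with the tensor product says exactly that $(\Mn n X\otimes Y, \eq n\otimes Y)$ is the equaliser of these morphisms. This yields the unique mediating morphism $((\pi_{\leq n}\otimes Y)\circ f)^\dagger$. Note that its codomain is $\Mn n X\otimes Y$; the statement's "$\Mn n X$" is a typo.

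For the cone property, we must show $(\DDn n\otimes Y)\circ((\pi_{\leq n+1}\otimes Y)\circ f)^\dagger = ((\pi_{\leq n}\otimes Y)\circ f)^\dagger$. Since $\eq n\otimes Y$ is an equaliser, hence monic, it suffices to check the equality after postcomposing with it. We then rewrite in sequence:
\begin{enumerate}
\item $(\eq n\otimes Y)\circ(\DDn n\otimes Y) = (\dd_n\otimes Y)\circ(\eq{n+1}\otimes Y)$, obtained by tensoring the defining square of $\DDn n$ with $Y$;
\item the defining equation of the dagger at level $n+1$;
\item $(\dd_n\otimes Y)\circ(\pi_{\leq n+1}\otimes Y) = \pi_{\leq n}\otimes Y$, which follows from the cone equation $\dd_n\circ\pi_{\leq n+1}=\pi_{\leq n}$.
\end{enumerate}
The right-hand side then becomes $(\pi_{\leq n}\otimes Y)\circ f$, which is the defining equation of the level-$n$ dagger.

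No step is a real obstacle. The only point needing care is in the first part: we must invoke the commutation hypothesis to know that $\Mn n X\otimes Y$ is the relevant equaliser. We do not need the infinite product to commute with $\otimes$, because we only use the projections $\pi_{\leq n}\otimes Y$ and never the universal property of $X^\omega\otimes Y$.
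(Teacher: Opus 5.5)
Your proposal is correct and follows the paper's proof essentially step by step. For the equalising part you use the same tensored identity $(\sigma\otimes Y)\circ(\pi_{\leq n}\otimes Y) = (\pi_{\leq n}\otimes Y)\circ(\dot\sigma\otimes Y)$. For the cone part you use the same reduction through the equaliser $\eq n\otimes Y$, given by the commutation hypothesis, followed by the rewrites $\eq n\circ\DDn n=\dd_n\circ\eq{n+1}$, the level-$(n+1)$ dagger equation, and $\dd_n\circ\pi_{\leq n+1}=\pi_{\leq n}$. Your remarks are also accurate: the codomain should read $\Mn n X\otimes Y$, and commutation of $X^\omega$ with $\otimes$ is not needed here.
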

\begin{proof}
    We prove successively the two parts of the statement:
    \begin{itemize}
    \item First, let's fix $n \in \NN$, and $\sigma: \{1,\ldots,n\}\rightarrow \{1,\ldots,n\}$ a bijection. We're going to prove that $(\sigma \otimes Y) \circ (\pi_{\leq n} \otimes Y) \circ f = (\pi_{\leq n} \otimes Y) \circ f$. 
      Observe that $\sigma \circ \pi_{\leq n} = \pi_{\leq n} \circ \dot \sigma$, by definition of $\dot \sigma$. As a consequence $(\sigma \otimes Y) \circ (\pi_{\leq n} \otimes Y) = (\pi_{\leq n} \otimes Y) \circ (\dot \sigma \otimes Y)$. From there, we can see that $(\sigma \otimes Y)\circ (\pi_{\leq n}\otimes Y) \circ f = (\pi_{\leq n} \otimes Y) \circ (\sigma \otimes Y)\circ f$. Using our hypothesis that $f$ equalises all the  $\dot \sigma \otimes Y$ on $X^w\otimes Y$, we can conclude that $(\sigma \otimes Y)\circ (\pi_{\leq n} \otimes Y) \circ f = (\pi_{\leq n} \otimes Y) \circ f$.
    \item Let's now consider the cone $(Z,((\pi_{\leq n} \otimes Y \circ f)^{\dagger}:Z \rightarrow \Mn n X \otimes Y)_{n \in \NN})$. We want to show that it is a cone on the $\DD \otimes Y$ chain. So let us fix $n \in \NN$: we're going to show that $(\DD_{n}\otimes Y)\circ ((\pi_{\leq n+1} \otimes Y)\circ f)^{\dagger} = ((\pi_{\leq n} \otimes Y) \circ f)^{\dagger}$.
      First, using the unicity of $((\pi_{\leq n}\otimes Y) \circ f)^{\dagger}$ -- because the equalisers of symmetries onn $X^{\otimes n}$ commutes with $\otimes$ --  we can reduce our goal to: $(\eq n \otimes Y )\circ (\DD_n\otimes Y) \circ  ((\pi_{\leq n+1}\otimes Y)\circ f)^{\dagger} = (\pi_{\leq n}\otimes Y) \circ f $. We know by definition of $\DDn n$ that $ \eq n \circ \DDn n = \dd_n\circ \eq {n+1}$. Using this, we can rewrite again our goal as: $((\dd_n\circ \eq {n+1})\otimes Y) \circ  (\pi_{\leq n+1}\otimes Y\circ f)^{\dagger} = \pi_{\leq n} \circ f $. The next step consists in unfolding the definition of  $((\pi_{\leq n+1} \otimes Y)\circ f)^{\dagger}$: it is the (unique) morphism $Z \rightarrow \Mn{n+1}{X} \otimes Y$ such that $(\eq {n+1} \otimes Y) \circ  ((\pi_{\leq n+1} \otimes Y)\circ f)^{\dagger} = (\pi_{\leq n+1}\otimes Y)\circ f $. This allows us to do one more rewriting step on our goal, which becomes:  $((\dd_n\circ \pi_{\leq n+1})\otimes Y)\circ f = (\pi_{\leq n} \otimes Y) \circ f $. We can now conclude using the fact that $\dd_n \circ \pi_{\leq n+1} = \pi_{\leq n}$, since $(X^\omega,\pi_{\leq n}:X^{\omega} \rightarrow X^n)$ is a cone on the delete chain $(\dd_n:X^{n+1}\rightarrow X^n )$.

      \end{itemize}

  \end{proof}
      \end{toappendix}
   
    Now, we show how to   transform any cone on the draw and delete chain into an equaliser of the finitely supported symmetries on $X^\omega$.
    The first step consists in observing that there is a chain morphism from the draw-and-delete chain to the delete chain.
    
      \begin{notation}
         Let $(Z \xrightarrow {f_n} \Mn n X)_{n \in \NN}$ a cone on the draw and delete chain.
      We note $f^\omega$ for the morphism $Z \rightarrow X^\omega$ obtained from this cone using the universal property of $X^\omega$ on $(Z\xrightarrow{\text{eq}_n \circ f_n} X^{\otimes n})$ -- which can be easily seen to be a cone on the delete chain:
      \vspace{-5pt}
$$\begin{tikzcd}
  \ldots  X^{\otimes n} &  X^{\otimes n+1}\ar[l,"\dd_{n}"]&\ldots   & X^\omega \ar[in = +10, out=170,ll] \ar[lll, in = +10, out=160] \\
    \Mn n X\ar[u,"\eq n"]  &  \Mn {n+1} X   \ar[l,"\DD_{n}"] \ar[u,"\eq {n+1}",swap] & \ldots & Z\ar[ll, out=200, in=-10,"f_n",swap ,near end]\ar[lll, out = 200, in=-20,"f_{n+1}",swap, near end] \ar[dashed,"\exists! f^{\omega}",u,swap]
    \end{tikzcd}$$  
  \end{notation}

  \begin{propositionrep}\label{prop:cone_from_cone_2}
    Let $(Z \xrightarrow {f_n} \Mn n X)_{n \in \NN}$ a cone on the draw and delete chain.
    Then $f^\omega:Z \rightarrow X^\omega$ equalises all finitely supported symmetries on $X^\omega$. 
\end{propositionrep}
\begin{proof}
  Let's $\sigma:\{1,\ldots,n\} \rightarrow \{1,\ldots,n\}$ a bijection, and as before we note $\dot \sigma$ for the associated isomorphism $X^\omega \rightarrow X^\omega$. Our goal is to show that $\dot \sigma \circ f^\omega = f^\omega$. Recall that $f^\omega$ has been obtained from the universal property on $X^\omega$, as the unique morphism that factorises the cone  $(Z \xrightarrow {\eq n \circ f_n} X^{\otimes n})_{n \in \NN}$ through the projections on $X^\omega$. So in order to prove that $\dot \sigma \circ f^\omega = f^\omega$, it is enough to check that $\dot \sigma \circ f^\omega$ also factorises this cone, i.e. that for every $N$, $\pi_{\leq N} \circ \dot \sigma \circ f^\omega = \eq N \circ f_N$. It is actually enough to prove it for $N \geq n$, because for any $n \in \NN$, $X^\omega$ is also the limit of $X^{\otimes n} \xleftarrow{\dd_n} X^{\otimes n+1}\ldots$. So let $N \geq n$. By definition of $\dot \sigma:X^\omega \rightarrow X^\omega$, it holds that  $\pi_{\leq N} \circ \dot \sigma = \dot \sigma_N \circ \pi_{\leq N} : X^\omega \rightarrow X^{\otimes N}$. Thus  $\pi_{\leq N} \circ \dot \sigma \circ f^\omega = \dot \sigma_N \circ \pi_{\leq N} \circ f^\omega = \dot \sigma_N \circ (\eq N \circ f_N)$ by definition of $f^\omega$. We can conclude from here using the fact that $\eq N$ is the equaliser morphism, thus $\dot \sigma_N \circ \eq N = \eq N$, which ends the proof. 
\end{proof}

\begin{toappendix}
  
  \begin{proposition}\label{prop:cone_from_cone_2_parametrized}
    Suppose the $X^\omega$ as limit of the delete chain commutes with the tensor product.
    Let $(Z \xrightarrow {f_n} \Mn n X \otimes Y)_{n \in \NN}$ a cone on the draw and delete chain.
 We note $f^\omega$ for the morphism $Z \rightarrow X^\omega\otimes Y$ obtained from this cone using the universal property of $X^\omega \otimes Y$ on $(Z\xrightarrow{(\text{eq}_n \otimes Y) \circ f_n} X^{\otimes n} \otimes Y)$ -- which can be easily seen to be a cone on the $(\text{delete chain} \otimes Y)$:
      
 $$\begin{tikzcd}
   \ldots  {X^{\otimes n}\otimes Y} &  {X^{\otimes n+1} \otimes Y}\ar[l,"\dd_{n}"]&\ldots   & {X^\omega \otimes Y}\ar[in = +10, out=170,ll] \ar[lll, in = +10, out=160] \\
     {\Mn n X \otimes Y}\ar[u,"\eq n \otimes Y"]  &  {\Mn {n+1} X\otimes Y }  \ar[l,"\DD_{n} \otimes Y"] \ar[u,"\eq {n+1}\otimes Y",swap] & \ldots & Z\ar[ll, out=200, in=-10,"f_{n}",swap ,near end]\ar[lll, out = 200, in=-20,"f_{n+1}",swap, near end] \ar[dashed,"\exists! f^{\omega}",u,swap]
     \end{tikzcd}$$  
    Then $f^\omega:Z \rightarrow (X^\omega \otimes Y)$ equalises the $(\dot \sigma \otimes Y)_{n \in \NN, \sigma \in S_n}$ on $X^\omega \otimes Y$. 
\end{proposition}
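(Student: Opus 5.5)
The proof follows the same pattern as Proposition~\ref{prop:cone_from_cone_2}, with every morphism tensored by $Y$. We first check that $f^\omega$ is well defined. The family $((\eq n \otimes Y)\circ f_n)_{n\in\NN}$ must be a cone on the chain $(\dd_n \otimes Y)_n$. This holds because
\[
(\dd_n\otimes Y)\circ(\eq{n+1}\otimes Y) = (\dd_n\circ \eq{n+1})\otimes Y = (\eq n\circ \DDn n)\otimes Y ,
\]
which uses the definition of $\DDn n$ and functoriality of $\otimes$. Combined with the cone condition $(\DDn n\otimes Y)\circ f_{n+1} = f_n$, this gives $(\dd_n \otimes Y)\circ(\eq{n+1}\otimes Y)\circ f_{n+1} = (\eq n\otimes Y)\circ f_n$. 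By the hypothesis that $X^\omega$ commutes with the tensor product, $(X^\omega\otimes Y, \pi_{\leq n}\otimes Y)$ is a limit of this chain, so the mediating morphism $f^\omega$ exists and is unique.

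Next, fix $\sigma\in S_n$. We show $(\dot\sigma\otimes Y)\circ f^\omega = f^\omega$ by the uniqueness part of the universal property of $X^\omega\otimes Y$. It suffices to prove that $(\dot\sigma\otimes Y)\circ f^\omega$ also mediates the cone, i.e. that $(\pi_{\leq N}\otimes Y)\circ(\dot\sigma\otimes Y)\circ f^\omega = (\eq N\otimes Y)\circ f_N$ for all $N$.

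We would like to check this only for $N\geq n$. The truncation remark made before the definition of finitely supported symmetries must then be transported to the tensored chain. Either $X^\omega\otimes Y$ is also the limit of the truncated chain $(\dd_n\otimes Y)_{n\geq N}$, since truncation preserves limits of chains; or we argue directly that the equalities for $N< n$ follow from the one at level $n$ by postcomposing with the maps $\dd\otimes Y$. The direct argument uses that both sides are cones: for the left-hand side, $(\pi_{\leq N}\otimes Y)$ forms a cone and $(\dot\sigma\otimes Y)\circ f^\omega$ is a single morphism precomposed. Either way this step is routine.

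For $N\geq n$, the definition of $\dot\sigma$ gives $\pi_{\leq N}\circ\dot\sigma = \sigma^{\uparrow N}\circ \pi_{\leq N}$, hence
\[
(\pi_{\leq N}\otimes Y)\circ(\dot\sigma\otimes Y) = (\sigma^{\uparrow N}\otimes Y)\circ(\pi_{\leq N}\otimes Y).
\]
The left-hand side of the goal therefore becomes $(\sigma^{\uparrow N}\otimes Y)\circ(\eq N\otimes Y)\circ f_N$ by definition of $f^\omega$. Since $\sigma^{\uparrow N}\circ \eq N = \eq N$, as $\eq N$ equalises all symmetries, this equals $(\eq N\otimes Y)\circ f_N$, which concludes. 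Note that the commutation of the equalisers of symmetries with $\otimes$ is not needed here, only the commutation of $X^\omega$ with $\otimes$. The only point needing care is the truncation step, which amounts to the standard fact that a cofinal subchain has the same limit.
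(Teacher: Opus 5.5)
Your proof is correct and follows essentially the same route as the paper. Both reduce $(\dot\sigma\otimes Y)\circ f^\omega = f^\omega$ to uniqueness of the mediating morphism into $X^\omega\otimes Y$, then check the factorisation for $N\geq n$ using $\pi_{\leq N}\circ\dot\sigma = \sigma^{\uparrow N}\circ\pi_{\leq N}$ and $\sigma^{\uparrow N}\circ\eq N = \eq N$. You additionally spell out two points the paper leaves implicit: the cone condition on $((\eq n\otimes Y)\circ f_n)_n$, and why it suffices to check only $N\geq n$ in the tensored chain.
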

\begin{proof}
  Let's $\sigma:\{1,\ldots,n\} \rightarrow \{1,\ldots,n\}$ a bijection, and as before we note $\dot \sigma$ for the associated isomorphism $X^\omega \rightarrow X^\omega$. Our goal is to show that $(\dot \sigma \otimes Y) \circ f^\omega = f^\omega$. Recall that $f^\omega$ has been obtained from the universal property on $(X^\omega \otimes Y)$, as the unique morphism that factorises the cone  $(Z \xrightarrow {(\eq n \otimes Y) \circ f_n} X^{\otimes n} \otimes Y)_{n \in \NN}$ through the $(\text{projections on }X^\omega \otimes Y$. So in order to prove that $(\dot \sigma \otimes Y) \circ f^\omega = f^\omega$, it is enough to check that $(\dot \sigma \otimes Y) \circ f^\omega$ also factorises this cone, i.e. that for every $N\geq n$, $(\pi_{\leq N} \otimes Y) \circ (\dot \sigma \otimes Y) \circ f^\omega =( \eq N \otimes Y) \circ f_N$.
  So let $N \geq n$. By definition of $\dot \sigma:X^\omega \rightarrow X^\omega$, it holds that  $\pi_{\leq N} \circ \dot \sigma = \dot \sigma_N \circ \pi_{\leq N} : X^\omega \rightarrow X^{\otimes N}$. Thus  $((\pi_{\leq N} \circ \dot \sigma)\otimes Y) \circ f^\omega = ((\dot \sigma_N \circ \pi_{\leq N})\otimes Y) \circ f^\omega = (\dot \sigma_N \otimes Y)\circ ((\eq N\otimes Y) \circ f_N)$ by definition of $f^\omega$. We can conclude from here using the fact that $\eq N$ is the equaliser morphism, thus $\dot \sigma_N \circ \eq N = \eq N$, which ends the proof. 
\end{proof}
  \end{toappendix}

As an immediate consequence of Propositions~\ref{prop:cone_from_cone_1} and~\ref{prop:cone_from_cone_2}, we obtain the equivalence of the two formulations of De Finetti theorem. By proving \emph{parametrized variants} of Propositions~\ref{prop:cone_from_cone_1} and~\ref{prop:cone_from_cone_2} -- Propositions~\ref{prop:cone_from_cone_1_parametrized} and~\ref{prop:cone_from_cone_2_parametrized} in appendix -- we can also prove that commutation with tensor product for one of the exchangeability diagrams also induces commutation with the tensor product for the other. 

\begin{theorem}\label{theorem:equivalence_of_two_df_formulations}
In any symmetric monoidal category $(\Catone,\otimes,\unit)$, for any $\unit$-copointed object $\overline X = (X,w:X\rightarrow \unit)$ such that:
\begin{enumerate}
\item the draw-and-delete chain can be built over $\overline X$; 
\item the delete chain $\unit \leftarrow X \leftarrow X^{\otimes 2}\ldots$ built over $\overline X$ has a limit. 
  \end{enumerate}

  The draw-and-delete chain has a limit $(L,\rho_n:L \rightarrow \Mn n X)$ in $\Catone$ if and only if the equaliser $L'$ of all finitely supported symmetries on $X^\omega$ exists, and when it is the case $L \cong L'$.
  If moreover both the equalisers of symmetries on $X^{\otimes n}$ and the limit $\overline{X}^{\omega}$ of the delete chain commute with the tensor product, then it holds that whenever the limit of one of the two diagrams commutes with the tensor product, the other commutes too.
\end{theorem}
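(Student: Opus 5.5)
The plan is to show that the two constructions of Propositions~\ref{prop:cone_from_cone_1} and~\ref{prop:cone_from_cone_2} set up, for every object $Z$, a bijection between cones from $Z$ on the draw-and-delete chain and morphisms $Z \to \overline X^\omega$ equalising all finitely supported symmetries. This bijection should be natural in $Z$, meaning compatible with precomposition by any $g: Z' \to Z$. Given such a natural bijection, the two functors $Z \mapsto \mathrm{Cone}(Z,\DD)$ and $Z \mapsto \{f : Z \to X^\omega \mid \forall \sigma,\ \dot\sigma\circ f = f\}$ are isomorphic. A limit of the chain is a representing object for the first functor, and an equaliser $L'$ is a representing object for the second. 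So one is representable if and only if the other is, and then by Yoneda $L \cong L'$. Concretely, the isomorphism sends $\rho$ to $\rho^\omega : L \to X^\omega$.

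First I check that the two constructions are mutually inverse. Start with $f$ equalising the symmetries. Form the cone $((\pi_{\le n}\circ f)^\dagger)_n$, and take $g = ((\pi_{\le n}\circ f)^\dagger)^\omega$. Then $\pi_{\le n}\circ g = \eq n \circ (\pi_{\le n}\circ f)^\dagger = \pi_{\le n}\circ f$ for all $n$. By uniqueness in the universal property of $X^\omega$, this gives $g = f$.

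Conversely, start with a cone $(f_n)$. The morphism $(\pi_{\le n}\circ f^\omega)^\dagger$ is characterised by $\eq n \circ (-) = \pi_{\le n}\circ f^\omega = \eq n\circ f_n$. Since $\eq n$ is a monomorphism (being an equaliser), it equals $f_n$.

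Naturality in $Z$ also follows from uniqueness of mediating morphisms: $(f\circ g)^\omega$-type and $\dagger$-type constructions commute with precomposition because both sides satisfy the same defining equations. I would present this directly rather than via Yoneda if cleaner. Suppose $(L,\rho)$ is the limit and we set $e = \rho^\omega$. Any equalising $f:Z\to X^\omega$ yields a cone, hence a unique $u: Z\to L$ with $\rho_n\circ u = (\pi_{\le n}\circ f)^\dagger$. Then $e\circ u = f$ by the inverse property and naturality. Uniqueness of $u$ comes from injectivity of the correspondence. The symmetric argument handles the converse direction.

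For the tensor part, I repeat exactly the same argument with the parametrised Propositions~\ref{prop:cone_from_cone_1_parametrized} and~\ref{prop:cone_from_cone_2_parametrized}, applied to the chain $\DD\otimes Y$ and the morphisms $\dot\sigma\otimes Y$ on $X^\omega\otimes Y$. These are available because equalisers of symmetries and $X^\omega$ commute with $\otimes$. The same inverse and naturality verifications give a natural bijection between cones on $\DD\otimes Y$ and morphisms into $X^\omega\otimes Y$ equalising the $\dot\sigma\otimes Y$.

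The remaining point is to identify the candidate objects. The candidate limit $L\otimes Y$ with projections $\rho_n\otimes Y$ corresponds, under the bijection, to $L'\otimes Y$ with $e\otimes Y$. This holds because $(\rho_n\otimes Y)^\omega = \rho^\omega\otimes Y$, which is checked on projections using $\pi_{\le n}\otimes Y$. So $L\otimes Y$ is a limit of $\DD\otimes Y$ if and only if $L'\otimes Y$ is the equaliser of the $\dot\sigma\otimes Y$.

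The main obstacle I expect is bookkeeping. The definition of $f^\omega$ relies on the cone $(\eq n\circ f_n)$ over the delete chain, which needs $\dd_n\circ\eq{n+1} = \eq n\circ\DDn n$. I also have to make sure the truncated-chain characterisation of $X^\omega$, used in defining $\dot\sigma$, is compatible with $\otimes Y$ in the parametrised case.
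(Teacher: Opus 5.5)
Your proposal is correct and follows the paper's route: the equivalence comes from the cone-transformations of Propositions~\ref{prop:cone_from_cone_1} and~\ref{prop:cone_from_cone_2}, and the tensor clause comes from their parametrised variants, Propositions~\ref{prop:cone_from_cone_1_parametrized} and~\ref{prop:cone_from_cone_2_parametrized}. The paper only calls this an ``immediate consequence'', whereas you also spell out the details it leaves implicit: that $(-)^\dagger$ and $(-)^\omega$ are mutually inverse and natural in $Z$, and the identity $(\rho_n\otimes Y)^\omega=\rho^\omega\otimes Y$ needed to transfer commutation with $\otimes$.
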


\subsection{\texorpdfstring{Application to $\Stoch$}{Application to the category of stochastic kernels}}

\begin{toappendix}
  \label{appendix:categorical_df}

  \subsection{The category Stoch}
 We first give a more formal presentation of the structure of the symmetric monoidal category $\Stoch$.
Let us note as usual $\MeasC$ for the category of measurable spaces, and measurable maps between them. Probability effects can be modelled in $\MeasC$ by way of the $\Giry$ probability monad:
\begin{definition}[The Giry Monad]
  Let $(X,\Sigma_X)$ be a measurable space. We note $\Giry (X)$ for the measurable space obtained by equipping the set of probability measures over $X$ with the smallest sigma-algebra that makes the evaluation map $(\mu \in \Giry X \mapsto \mu(A) \in \RR)$ measurable for any $A \in \Sigma_X$. 
\end{definition}
The construction above is well-known to give a monad on $\Meas$, and moreover the morphisms $X \rightarrow Y$ in the Kleisli category of $\Giry$ are exactly the stochastic kernels (also called Markov kernels) from $X$ to $Y$. We note $\Stoch$ for the Kleisli category of the Giry Monad on $\Meas$.

For any $\Catone$ a full sub-category of $\Meas$ stable by $\Giry$, we will also note $\Stoch(\Catone)$ for the Kleisli category of $\Giry$ on $\Catone$. In this work, we are especially interested in full sub-category of standard Borel spaces: they are the measurable spaces whose $\sigma$-algebra is generated as the Borel $\sigma$ algebra of a \emph{polish} topological space, i.e. a separable and completely metrizable topological space.
  \begin{definition}
  A standard Borel space is a measurable space $(X,\Sigma_X$), where the underlying space $X$ is a polish topological space, and $\Sigma_X$ is its Borel $\sigma$-algebra.   We note $\Pol$ for the full sub-category of $\Meas$ consisting of standard Borel spaces and measurable maps between them.
  \end{definition}
This can in particular be applied to $\Pol$--see e.g.~~\cite{kechris,DBLP:journals/entcs/DahlqvistSDG18}, as we recall below:
\begin{proposition}
$\Pol$ is stable by $\Giry$, and for any standard Borel space $X$, the underlying topology on $\Giry X$ is the \emph{weak topology},i.e.~the smallest topology that make the maps  $(\mu \in \Meas (X) \mapsto \int f(x) \cdot \mu(dx) \in \RR)$ continuous, for $f:X \rightarrow \RR$ bounded continuous. 
  \end{proposition}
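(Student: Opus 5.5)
The statement has two parts, and I would prove them as follows. First, $\Giry X$ equipped with the weak topology is Polish. Second, the Borel $\sigma$-algebra of that topology coincides with the $\sigma$-algebra defining $\Giry X$, namely the one generated by the evaluations $\mu \mapsto \mu(A)$. Together these show that the measurable space $\Giry X$ is standard Borel, with the weak topology as a witnessing Polish topology. Throughout, fix a Polish topology on $X$ generating $\Sigma_X$ and a compatible metric.

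\textbf{Polishness.} Embed $X$ homeomorphically as a $G_\delta$ subset of a compact metrizable space $K$, for instance the Hilbert cube. By the Riesz representation theorem, $\Giry K$ is the set of positive normalised functionals on $C(K)$. Since $C(K)$ is separable, Banach--Alaoglu makes $\Giry K$ compact metrizable in the weak topology. The map $\Giry X \to \Giry K$, $\mu \mapsto \mu(\cdot \cap X)$, is a homeomorphism onto $\set{\nu \in \Giry K \suchthat \nu(K\setminus X)=0}$. Write $X=\bigcap_k U_k$ with $U_k$ open. Then this image is $\bigcap_{k,m}\set{\nu \suchthat \nu(U_k) > 1-1/m}$. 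Each of these sets is open, because $\nu \mapsto \nu(U)$ is lower semicontinuous: it is a supremum of $\nu \mapsto \int f\,d\nu$ over continuous $0\le f\le 1_U$. The image is therefore $G_\delta$ in a compact metrizable space, hence Polish. Checking that this map really is a homeomorphism onto its image is the step I expect to be the main obstacle. The subtlety is that continuous bounded functions on $X$ need not extend to $K$. I would handle it with the Portmanteau theorem: compare convergence against open sets of $X$, which are traces of open sets of $K$.

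\textbf{Equality of $\sigma$-algebras.} Let $\mathcal E$ be the evaluation $\sigma$-algebra and $\mathcal B$ the Borel $\sigma$-algebra of the weak topology.

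For $\mathcal E\subseteq\mathcal B$: by the lower semicontinuity above, $\mu\mapsto\mu(U)$ is Borel for every open $U$. The class of $A\in\Sigma_X$ for which $\mu\mapsto\mu(A)$ is $\mathcal B$-measurable is a Dynkin system containing the open sets, which form a $\pi$-system. By the $\pi$-$\lambda$ theorem this class is all of $\Sigma_X$.

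For $\mathcal B\subseteq\mathcal E$: each $\mu\mapsto\int f\,d\mu$ with $f$ bounded measurable is $\mathcal E$-measurable, by uniform approximation of $f$ by simple functions. Since $X$ is separable metrizable, it has a countable family $(f_j)$ of bounded continuous (e.g.\ bounded Lipschitz) functions that determines weak convergence. The weak topology is then second countable, with a subbase of sets $\set{\mu \suchthat \int f_j\,d\mu\in(a,b)}$ with $a,b$ rational. Every open set is a countable union of finite intersections of such sets, so it lies in $\mathcal E$.

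Combining the two parts, $(\Giry X,\mathcal E)$ is the Borel space of a Polish topology, so $\Pol$ is stable under $\Giry$.
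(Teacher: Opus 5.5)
Your proof is correct. The paper proves nothing here itself; it cites Kechris (Theorems 17.23--17.24) and Dahlqvist et al., and your argument is essentially the standard proof found there: you realise $X$ as a $G_\delta$ in a compact metrizable $K$ to get Polishness of $\Giry X$, then use lower semicontinuity of $\mu\mapsto\mu(U)$, a $\pi$-$\lambda$ argument and second countability of the weak topology to identify the Borel and evaluation $\sigma$-algebras. The one fact you take from outside, a countable family $(f_j)$ generating the weak topology, also follows from your first part: $\Giry X$ embeds into the compact metrizable $\Giry K$, so it is second countable and hence hereditarily Lindel\"of.
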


  Since $\Giry$ is a commutative strong monad on the cartesian monoidal structure of $\MeasC$ the category $\Stoch$ is symmetric monoidal. The monoidal product in $\Stoch$ of two measurable spaces is simply their cartesian product in $\Meas$.
  The monoidal unit $\unit_{\Stoch}$ is the singleton space, the exchange morphism  $\text{switch}:A \times_{\MeasC} B\rightarrow B \times_{\MeasC} A $ is defined as $\text{switch}(a,b) = \dirac{(b,a)}$, and the associator is $\text{assoc}:(A\times_{\MeasC} B)\times_{\MeasC} C \rightarrow A \times_{\MeasC} (B\times_{\MeasC} C)$ is defined as $\text{assoc}(a,(b,c)) = \dirac{((a,b),c)}$.

  If $X \xrightarrow f X'$ and $Y \xrightarrow g Y'$ are two stochastic kernels, $f \otimes g$ is the kernel which for any $(x,y) \in X \times Y$ returns the product measure $f(x) \times g(y)$. 

  By a similar construction, starting from the \emph{Panangaden Monad} of \emph{sub-probability} measures, we can build $\SubStoch$, the category of sub-stochastic kernels as its Kleisli category. There is obviously a faithful embedding $\Stoch \rightarrow \SubStoch$. Moreover, $\SubStoch$ also inherits a symmetric monoidal product of the $\Meas$ cartesian product, and the embedding $\Stoch \rightarrow \SubStoch$ preserves the monoidal product. However $\unit$ isn't terminal in $\SubStoch$.

      We start with the construction of the infinite product over a measurable space.
\begin{definition}
  For any measurable space $X$,  $X^{\omega}$ is the set of infinite lists of elements in $X$, equipped with the $\infty$-product $ \sigma$-algebra, i.e. where the $\sigma$-algebra is generated by the $X_1\ldots X_n \uparrow:= \{x_1,\ldots,x_n,y \mid x_1 \in X_1,\ldots, x_n \in X_n, y \in X^{\omega}\}$ for $X_1,\ldots,X_n$ elements of the $\sigma$-algebra on $X$. If moreover $X$ is the Borel $\sigma$-algebra generated by some topology, $X^\omega$ coincides with the Borel $\sigma$-algebra generated by the $\infty$-product topology.
 
\end{definition}

  Starting from a probability measure $\distrone$ on $X$ we can build an infinite-product probability measure $\distrone^\omega$ on $X^\omega$, by $\distrone^\omega(A_1\times \ldots \times A_n \uparrow) = \prod_{1 \leq i \leq n} \distrone(A_i)$. Moreover, by construction of $X^\omega$, the function $T^{\omega}:\distrone \in \Giry X \mapsto \distrone^\omega \in \Giry{(X^\omega)}$ is measurable.
While the construction above can be done in \emph{any} measurable spaces, for a standard Borel space $X$, $X^\omega$ can be expressed as the categorical limit of its finite approximations: it is a consequence of Kolmogorov's extension theorem (e.g.~~\cite{kechris}).
The Proposition below was proved by Fritz and Rischel~\cite{fritz2020infinite}.  
\begin{propositionrep}\label{prop:infinite_product_borel_standard}
  Let $X$ be a standard Borel space.
$(X^\omega,\pi_{\leq n}^X)_{n \in \NN}$ is the limit in $\Stoch(\Pol)$ of the diagram $(X^{n} \xleftarrow{\dd_n} X^{n+1})_{n \in \NN}$, where $\dd_n$ is the deterministic stochastic kernel that to any $(x_1,\ldots,x_{n+1}) \in X^{n+1} \mapsto (x_1,\ldots,x_n) \in X^{n}$, and  $\pi_{\leq n}^X$ is the deterministic stochastic kernel that projects any infinite word on $X^\omega$ to its prefix of length $n$. Moreover this limit is preserved by the functor $\_ \otimes Y$, for any $Y$ a standard Borel space.
\end{propositionrep}
\begin{appendixproof}
It has been shown in~\cite{fritz2020infinite} that $X^\omega$ is the limit in $\Stoch$  of the diagram $(\Pi_{n \in F} X \xleftarrow{d_n} \Pi_{n \in G} X)_{F\subseteq G, \, F,G \text{ finite subsets of }\NN}$-- in~\cite{fritz2020infinite}, this result is actually shown in $\BorelStoch$, but the proof uses only the fact that $X$ is Borel standard, and not the fact that measurable spaces in the ambient category are also standard Borel spaces. The proposition above can be derived from there, by observing that there is a correspondence between the cones on the diagram above, and the cones on $(X^{n} \xleftarrow{\dd_n} X^{n+1})_{n \in \NN}$.   
\end{appendixproof}

  \subsection{Finitely supported symmetries on the infinite product}\label{sect:df_infinite_product}

We define below \emph{finitely supported symmetries} on $X^\omega$.
\begin{notation}
Recall that in $\Stoch$ the space  $X^n$ is  the monoidal product of $n$ copies of $X$, thus  any bijection $\sigma$
% : \{1, \ldots,n\} \rightarrow \{1,\ldots,n\}$
  on $\{1,\ldots,n\}$ for some $n \in \NN$ induces a symmetry $\dot \sigma:X^n \rightarrow X^n$. For $m \geq n$, we can extend this by defining $\dot \sigma_m: X^m \rightarrow X^m$ as the symmetry on $X^m$ associated to the bijection on $\{1,\ldots,m\}$ that permutes the first $n$-th elements along $\sigma$, and fix the others.
\end{notation}

The next step is to associate to $\sigma$ a morphism $X^\omega \rightarrow X^\omega$, that we'll call the \emph{finitely supported permutation on $X^\omega$ associated to $\sigma$}.

For this, we use the fact that $\forall N \in \NN$, $X^\omega$ is also the limit of the chain $X^N \xleftarrow {\dd_N} X^{N+1} \xleftarrow{\dd_{N+1}}\ldots$, and that moreover  $\dd_{m} \circ \dot \sigma_{m+1} = \dot \sigma_m$. Observe that this construction is not specific to $\Stoch$, and will hold in any monoidal category where the limit $X^\omega$ exists.
\begin{definition}
Let $n \in \NN$, and $\sigma: \{1, \ldots,n\} \rightarrow \{1,\ldots,n\}$ a bijection. We note $\dot \sigma: X^\omega \rightarrow X^\omega$ for the $\Stoch$-morphism obtained by the universal property of $X^\omega$ from the cone $(X^\omega \xrightarrow{\dot \sigma_m \circ \pi_{\leq m}} X^m)_{m \geq n}$. 
  \end{definition}

  \end{toappendix}

  Jacobs and Staton in~\cite{jacobs2020finetti} showed the existence in $\Stoch$ of the draw-and-delete chain for $X=\deux$.
  \footnote{In a later work, Staton and Summers~\cite{staton2023quantum} extended the draw-and-delete formulation of De Finetti's theorem to the category of \emph{compact convex Hausdorff spaces}, equipped with the Radon monad. Any second-countable compact convex Hausdorff space is Polish, but the two classes are incomparable}
 As we state in Proposition~\ref{prop:existence_of_equalisers_and_coequalisers_in_stoch} below, the construction can actually be done similarly for any standard Borel space $X$, by taking some additional care for measurability checks; the proof can be found in Appendix, Proposition~\ref{prop:mn_equaliser_and_coequaliser}. 
 \begin{toappendix}
\subsection{Proof of Proposition~\ref{prop:existence_of_equalisers_and_coequalisers_in_stoch}}
   \end{toappendix}
 \begin{propositionrep}\label{prop:existence_of_equalisers_and_coequalisers_in_stoch}
    For a standard Borel space $X$,
the draw-and-delete chain over the copointed object $(X,\_:X \rightarrow \unit)$ -- that we call the De Finetti chain -- can be built in $\Stoch$. Moreover, the equaliser of the symmetries on $A^{\otimes}$ commutes with the tensor product.
\end{propositionrep}
\begin{proof}
The construction follows the same path as for in Jacobs and Staton work for the Boolean case;  we additionnally need to check measurability of all the morphisms we need. First, we equipp the set of $n$-sized multisets with the structure of a measurability space. Our construction is the standard one, see for instance~\cite{BlanchiP24}.

\begin{definition}
  We define $\Mn n X$ as the measurable space with as underlying space the set of multiset of length $n$ of elements in $X$, and as $\sigma$-algebra the smallest one that makes measurable the map
  $\coeq n := (a_1,\ldots,a_n) \in X^n \mapsto [a_1,\ldots,a_n] \in \Mn n X$.
  
 \end{definition}
 It is known -- see for instance~\cite{dash2021monads,macchi1975coincidence} that if $X$ is a standard Borel space, then it is also the case of $\Mn n X$. Moreover, when $X$ is a standard Borel space, there is a measurable total order  $ < \,\subseteq X \times X$, because any standard Borel space is isomorphic to the reals, or discrete and countable. From there, it is possible -- as done in~\cite{dash2021monads} -- to define uniquely an \emph{ordering} $\text{ord}(\mu) \in X^n$ for any multiset $\mu \in \MnCat \Set n X$, such that moreover $\mu \in \Mn n X \mapsto \text{ord}(\mu) \in X^n$ is measurable. 
 \begin{definition}
   For $X$ a standard Borel space,
We define a measurable map $\eq n:\Mn n X \rightarrow X^n$ by: $\eq n = \sum_{\sigma \in S_n} \frac 1 {\card{S_n}}\cdot  \dot{\sigma} \circ \text{ord}$, where $S_n$ is the set of all permutations on $\{1,\ldots,n\}$.
\end{definition}

\begin{proposition}\label{prop:mn_equaliser_and_coequaliser}
For $X,Y$ a standard Borel space, $(\Mn n X \otimes Y,\eq n \otimes Y)$ and $(\Mn n X \otimes Y,\coeq n \otimes Y )$ are in $\Stoch$ respectively the equalizer and coequalizer of all the $(\dot \sigma \otimes Y)_{\sigma \in S_n}$. %for $\sigma:A^{\otimes n} \rightarrow A^{\otimes n}$ symmetries. 
\end{proposition}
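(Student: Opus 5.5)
We work directly in $\Stoch$ with stochastic kernels, and reduce both universal properties to two identities about the measurable maps $\coeq n$, $\text{ord}$ and $\eq n$. The first identity is $\coeq n \circ \text{ord} = \id_{\Mn n X}$, which holds because $\text{ord}(\mu)$ enumerates the multiset $\mu$. The second is $\coeq n \circ \dot\sigma = \coeq n$ for every $\sigma \in S_n$, which holds because a multiset forgets order. From these we derive, as kernels, $\coeq n \circ \eq n = \sum_\sigma \frac{1}{\card{S_n}}\, \coeq n \circ \dot\sigma\circ \text{ord} = \id$. We also get $\dot\tau\circ\eq n = \eq n$ for every $\tau \in S_n$, by reindexing the uniform average over $S_n$. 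Finally, we aim for the key identity
\[ \eq n \circ \coeq n = \textstyle\sum_{\sigma \in S_n} \frac{1}{\card{S_n}}\, \dot\sigma \quad : X^n \rightarrow X^n .\]
To prove it, we note that for $x \in X^n$, $\text{ord}(\coeq n(x))$ is some permutation $\dot\tau_x(x)$ of $x$. Averaging over all $\sigma\circ\tau_x$ then gives the same uniform mixture as averaging over all $\sigma$. This is a pointwise computation on Dirac measures, so we need no measurability of $x\mapsto\tau_x$.

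All these identities survive tensoring with $\id_Y$, by functoriality of $\otimes$ in $\Stoch$. So we prove the parametrised statement directly, writing $e = \eq n\otimes Y$, $c=\coeq n\otimes Y$ and $P = \sum_\sigma \frac1{\card{S_n}} (\dot\sigma\otimes Y)$.

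\emph{Equaliser.} Let $f: Z \rightarrow X^n\times Y$ be a kernel with $(\dot\sigma\otimes Y)\circ f = f$ for all $\sigma$. Then $P\circ f = f$, because each term equals $f$ and $P$ is a convex combination. We take $g := c\circ f$. Then $e\circ g = (e\circ c)\circ f = P\circ f = f$, so $f$ factors through $e$. For uniqueness, if $e\circ g' = f$, then $g' = c\circ e\circ g' = c\circ f$, since $c \circ e = \id$; hence $e$ is split mono.

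\emph{Coequaliser.} Let $h: X^n\times Y\rightarrow Z$ satisfy $h\circ(\dot\sigma\otimes Y) = h$ for all $\sigma$. We take $k := h\circ e$. Then $k\circ c = h\circ P = h$. For uniqueness, $c$ is split epi, since $c\circ e=\id$, so $k'\circ c = h$ forces $k' = h\circ e$.

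Throughout, composition of kernels is bilinear with respect to finite convex combinations. This is what lets us pass averages through composites.

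The main obstacle is measurability and well-definedness rather than the algebra. We need $\text{ord}$ to be measurable with $\coeq n\circ\text{ord}=\id$; this is cited from the literature using a measurable total order on the standard Borel space $X$. We also need $\eq n$ to be a genuine kernel, that is, $\mu\mapsto\frac1{\card{S_n}}\sum_\sigma \delta_{\sigma(\text{ord}\,\mu)}$ must be measurable into $\Giry(X^n)$. This follows because it is a finite convex combination of Diracs of measurable maps. Finally, the kernel identity for $\eq n \circ \coeq n$ must be checked as an equality of measures for each $x$, which is the pointwise computation above. Once these points are secured, the rest is formal.
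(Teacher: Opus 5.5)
Your proof is correct and follows the paper's argument: it uses the same mediating maps $(\coeq n\otimes Y)\circ f$ and $h\circ(\eq n\otimes Y)$, justified by the same pointwise identity $\eq n\circ\coeq n=\frac{1}{\card{S_n}}\sum_{\sigma}\dot\sigma$. The only difference is in uniqueness: the paper simply asserts that $\eq n\otimes Y$ is mono and $\coeq n\otimes Y$ is epi, whereas your identity $\coeq n\circ\eq n=\mathrm{id}$ makes them split, which gives a cleaner justification.
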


\begin{proof}
  It is clear that $\eq n \otimes Y:\Mn n X \otimes Y  \rightarrow X^n \otimes Y$ equalizes all the $\sigma \otimes Y$, and similarly that $\coeq n: X^n \rightarrow \Mn n X$ co-equalizes  all the $\sigma \otimes Y$. We now need to show that they are the  equalizer and  co-equalizer respectively.
    \begin{itemize}
    \item Let $(A,f:A \rightarrow{(X^{\otimes n}\otimes Y)}$ in $\Stoch$ that equalizes the $\dot \sigma \otimes Y$, for all the symmetries $\dot \sigma$ on $X^{\otimes n}$. Thus we can define $f^{\dagger}:a \in A \mapsto ((\coeq n \otimes Y) \circ f)(a) \in \Mn n X \otimes Y$, and it is immediate that it is a morphism in $\Stoch$, because it is build there by composition and $\otimes$ on morphisms. From the fact that for every $\sigma$, $(\dot \sigma \otimes Y )\circ f = f$, we can check by computing concretely the composition that $ (\eq n \otimes Y) \circ (\coeq n \otimes Y) \circ f = f$.  %It isn't obvious, but it can be obtained from the fact that  if we start from a multiset $\mu$, the number of permutations that fix an enumeration $(a_1,\ldots,a_n)$ of $\mu$ doesn't depends on the enumeration we chose.
        Indeed, observe that for $(a_1,\ldots,a_n)\in X^{\otimes n}$:   
      $$  \eq n \circ \coeq n(a_1,\ldots,a_n) =  \frac 1 {\card{S_n}}\cdot \sum_{\sigma}\dot \sigma \circ \text{ord}([a_1,\ldots,a_n]) 
        =   \frac 1 {\card{S_n}}\cdot \sum_{\sigma}\dot \sigma (a_1,\ldots,a_n),$$ 
      thus $\eq n \circ \coeq n = \frac 1 {\card{S_n}} \cdot \sum_{\sigma}\dot \sigma$, and since $f$ equalises the $\dot \sigma \otimes Y$,  $(\eq n \otimes Y) \circ (\coeq n \otimes Y) \circ f =  \frac 1 {\card{S_n}} \sum_{\sigma } (\sigma \otimes Y) \circ f = f.$

        It means that $(\eq n \otimes Y) \circ f^{\dagger} = f$, so $f^\dagger$ is indeed a factorisation of $f$ by $\eq n \otimes Y : \Mn n X \rightarrow X^n$. Now let us check the unicity of this factorization. Let be any $f^\star$ such that $(\eq n \otimes Y) \circ f^{\star} = f$. We can rewrite this equality as $(\eq n \otimes Y) \circ f^{\star} = (\eq n \otimes Y) \circ (\coeq n \otimes Y) \circ f$.  Observe now that $\eq n \otimes Y: (m,y) \in \Mn n X \times Y \rightarrow (\eq n (m) \otimes \dirac y) \in X^{\otimes n} \times y$ is a monomorphism, it means that it follows from the previous inequality that  $ f^{\star} =  (\coeq n \otimes Y) \circ f = f^\dagger$.
      \item Let $(A,g:X^{\otimes n} \otimes Y \rightarrow A)$ in $\Stoch$,  that co-equalises all the $\sigma \otimes Y$. We reason symmetrically to the previous case. Our hypothesis on $g$ is that for every $\sigma$, $g \circ (\sigma \otimes Y) = g$. From there, we can check--again, because  $\eq n \circ \coeq n = \frac 1 {\card{S_n}} \cdot \sum_{\sigma}\dot \sigma$ --that $g \circ (\eq n \otimes Y) \circ (\coeq n \otimes Y) = g$. We pose $g^\dagger =  g \circ (\eq n \otimes Y)$. We can see immediately that $g^\dagger \circ (\coeq n \otimes Y) =g$, and we can check the uniqueness of the factorization using the fact that $\coeq n \otimes Y$ is an epimorphism.
        \end{itemize}
      \end{proof}

  \end{proof}

  The existence of the infinite tensorial product $X^\omega$ in $\Stoch(\Pol)$ was shown in~\cite{fritz2020infinite} to be a consequence of Kolmogorov's extension theorem. As said in the introduction, Perrone et al showed {in~\cite{moss2022probability,fritz2021finetti}}\footnote{The \emph{existence} of a decomposition through $\Giry X \rightarrow X^\omega$ for any morphism $Z \rightarrow X^{\omega}$ that equalises the symmetries is shown in~\cite{fritz2021finetti} in a general framework of Markov categories-- with some additional requirements, while the \emph{unicity} of the decomposition is shown specifically for $\Stoch(\Pol)$ in~\cite{moss2022probability}. } that $\Giry X$ is the limit of the symmetries on $X^{\omega}$ when taking as ambient category $\Stoch(\Pol)$, where the objects are Standard Borel spaces, and the morphisms are stochastic kernels. 
  Our Theorem~\ref{theorem:equivalence_of_two_df_formulations} thus shows that we can transfer this De Finetti result in $\Stoch(\Pol)$ to extend the stratified formulation of Jacobs and Staton to every standard Borel space $X$-- since we've shown in Proposition~\ref{prop:existence_of_equalisers_and_coequalisers_in_stoch} that the draw-and-delete chain over $(X,\_:X\rightarrow \unit)$ could indeed always be built. As a consequence, $\Giry X$ is the limit of the draw-and-delete chain over $X$ in $\Stoch(\Pol)$.

  We are interested in a slightly stronger \emph{parametrized version}, namely that $\Giry X\otimes Y$ is the limit of the chain $\DD^{\text{DF}} \otimes Y$: our goal is to be able to apply Theorem~\ref{th:metabasson} to prove that $\Giry X$ is moreover the \emph{free commutative comonoid} over $X$. It was proved in~\cite{fritz2020infinite} that the limit of the delete chain commutes with the tensor product, and we proved in Proposition~\ref{prop:existence_of_equalisers_and_coequalisers_in_stoch} that also the equalisers in $\Stoch(\Pol)$ commutes with the tensor product. From there, we only have to show a stronger, \emph{parametrized} variant of~\cite{moss2022probability,fritz2021finetti}\footnote{We've been made aware after publication that this parametrized De Finetti theorem was already proved in~\cite{Fritz_2026}, Corollary 4.5, in a general categorical setting. Our proof seems to be a concrete instantiation of theirs.} result that $\Giry X$ is the equalisers of the finitely supported, i.e. that this equaliser also commutes with the tensor product:

  \begin{toappendix}
\subsection{Proof of Proposition~\ref{th:de-finetti-perrone}}
    \end{toappendix}
\begin{propositionrep}\label{th:de-finetti-perrone}
  Let $X, Y$ be standard Borel spaces. Then  $\Giry X \otimes Y$ 
  is the equaliser in $\Stoch(\Pol)$ of  the $X^{\omega} \otimes Y\xrightarrow{\sigma \otimes Y} X^{\omega} \otimes Y$, for all   finitely supported symmetries $\sigma$.
\end{propositionrep}
\begin{proofsketch}
The proof given in~\cite{fritz2021finetti} of their categorical De Finetti theorem -- $\Giry X$ as the equaliser of all finitely supported symmetries on $X^\omega$ --  is in the language of general Markov categories, and uses heavily the idea of conditional expectations. We didn't try to adapt their proof to show commutation of equaliser with monoidal product, and instead find it simpler here to do a more direct proof in $\Stoch$ of our parametrized version of De Finetti theorem. For this, we use crucially the notion of \emph{weak convergence of measures} in $\Pol$, and a caracterisation, presented in~\cite{klenke2013probability} -- of the representation measure in $\Giry(\Giry X)$ from De Finetti theorem as \emph{weak limit} of a sequence of simpler \emph{empirical measures}.
  \end{proofsketch}
\begin{proof}
    The goal of this appendix is to show that the equalizer of the symmetries on $X^\omega$-- which we know to be $\Giry X$ by the categorical De Finetti theorem of~\cite{fritz2021finetti,moss2022probability} -- commutes with the monoidal product in $\Stoch(\Pol)$, as we stated in Theorem~\ref{th:de-finetti-perrone}. More precisely, we are going to show a \emph{parametrized variant} of the categorical De Finetti theorem from~\cite{fritz2021finetti,moss2022probability}, i.e. that for any standard Borel spaces $X,Y$, $\Giry X\otimes Y$ is in $\Stoch(\Pol)$ the  equalisers of the $\{\dot \sigma \otimes Y \mid \sigma \text{ a symmetry}\}$ on $X^\omega \otimes Y$.
The proof given in~\cite{fritz2021finetti} of their categorical De Finetti theorem -- $\Giry X$ as the equaliser of all finitely supported symmetries on $X^\omega$ --  is in the language of general Markov category, and uses heavily the idea of conditional expectations. While it is likely it could be adapted to show commutation of equaliser with monoidal product, we do a more direct proof here of our parametrized version of De Finetti theorem, that uses crucially the notion of \emph{weak convergence of measures} in $\Pol$.

Before starting our proof of this parametrized categorical De Finetti Theorem, let us look at where the difficulty lies already in the non-parametrized case: let $f:Y\rightarrow X^{\omega}$ that equalises all symmetries. Let us note $(\Giry X,\text{eq}:\Giry X \rightarrow X^\omega)$ the equalizer. By the usual statement in measure theory of De Finetti Theorem for standard Borel spaces, it is not complicated to obtain the existence, for each $y \in Y$, of a representing measure $\distrone_Y \in \Giry X$ such that $\text{eq}(\distrone_y) = f(y)$. The main problem consists in proving that defining $y \in Y \mapsto \mu_y \in \Giry Y$ produces indeed a $\Stoch$-morphism.  Our way of solving this problem, which is obviously still there in the parametized case, is to use a stronger variant of De Finetti theorem for $\Pol$, that gives a caracterisation of the representing measure as the \emph{weak limit} of a sequence of \emph{empirical measures}. This caracterisation is of interest for us, because the pointwise limit of a sequence of measurable functions $R\rightarrow Q$ with $Q$ metrizable is again measurable--see e.g.~\cite{kechris} 11.2. So we can prove measurability of $(y \in Y \mapsto \mu_y \in \Giry Y)$ by proving measurability of every $(y \in Y \mapsto \mu_n^y)$, where $\mu_n^y$ is the $n$-th empirical measure approximating $\distrone_y$, which will be much easier.

\begin{definition}[Weak topology on $\Giry(X)$]
  Let $X$ be a Polish space.
  Let $(\mu_n)_{n }\in \NN$ be a sequence in $\Giry(X)$ and $\mu \in \Giry(X)$. The sequence $(\mu_n)_{n \in \NN}$ \emph{converges weakly} towards $\mu$ when: for all $\psi: X \rightarrow \RR$ bounded and continuous, $\lim_{n \rightarrow + \infty} \int \psi(r)\cdot \mu_n(dr) = \int \psi(r)\cdot \mu(dr)$. 
 \end{definition}
\begin{lemma}[Linearity of Weak convergence]
Let $(\mu_n)_{n \in \NN},{\nu_n}_{n \in \NN}$ be two sequences in $\Giry X$ such that $\wlim_{n \in \NN} \mu_n = \mu$, and $\wlim_{n \in \NN} \nu_n = \nu$. Let $\lambda \in [0,1]$. Then it holds that $\wlim_{n \in \NN}(\lambda \cdot \mu_n + \nu_n) = \lambda \mu + \nu$. 
  \end{lemma}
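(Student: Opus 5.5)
The statement to prove is the lemma "Linearity of Weak convergence": if $\mu_n\to\mu$ and $\nu_n\to\nu$ weakly and $\lambda\in[0,1]$, then $\lambda\cdot\mu_n+\nu_n\to\lambda\mu+\nu$ weakly.

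The plan is to check the defining property of weak convergence directly against an arbitrary test function, using only linearity of the integral in the measure. Fix $\psi: X \rightarrow \RR$ bounded and continuous. The first step is to observe that for finite measures the map $\rho \mapsto \int \psi(r)\cdot\rho(dr)$ is additive and positively homogeneous. This is immediate for indicator functions, since $(\lambda\mu_n+\nu_n)(A)=\lambda\mu_n(A)+\nu_n(A)$. It then extends to simple functions, and to bounded measurable functions by monotone convergence applied separately to $\psi^+$ and $\psi^-$. Boundedness of $\psi$ together with finiteness of the measures ensures all integrals are finite, so no $\infty-\infty$ issue arises. This gives
\[\int \psi(r)\cdot(\lambda\mu_n+\nu_n)(dr) = \lambda\int\psi(r)\cdot\mu_n(dr)+\int\psi(r)\cdot\nu_n(dr).\]

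The second step is to pass to the limit $n\to+\infty$ on the right-hand side. By the hypotheses $\wlim_n\mu_n=\mu$ and $\wlim_n\nu_n=\nu$, and by linearity of limits of real sequences, the right-hand side converges to $\lambda\int\psi\,d\mu+\int\psi\,d\nu$. Applying the identity of the first step once more, this equals $\int\psi(r)\cdot(\lambda\mu+\nu)(dr)$. Since $\psi$ was arbitrary, this is exactly weak convergence of $\lambda\mu_n+\nu_n$ towards $\lambda\mu+\nu$.

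The only point needing care is where the measures live: $\lambda\mu_n+\nu_n$ is generally not a probability measure. So weak convergence has to be read for finite measures, with the same definition via bounded continuous test functions. In the intended uses the weights will sum to one and the result stays in $\Giry X$. There is no real obstacle; the argument is routine linearity of the integral.
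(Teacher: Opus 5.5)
Your proof is correct. The paper states this lemma without proof, as a standard fact, and your argument (linearity of $\rho \mapsto \int \psi\, d\rho$ for a bounded continuous $\psi$, followed by linearity of limits of real sequences) is exactly the routine verification it leaves implicit. You are also right that $\lambda\mu_n+\nu_n$ has total mass $\lambda+1$ and so does not lie in $\Giry X$; the statement must be read for finite measures, or with weights $\lambda$ and $1-\lambda$, and your argument covers either reading.
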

  It is known that if $X$ is Polish, $\Giry X$ equipped with the weak topology is again Polish.
  By definition, the weak topology is the smallest topology that makes the maps $(\mu \in \Giry (X) \mapsto \int f(x) \cdot \mu(dx) \in \RR)$ continuous, for $f:X \rightarrow \RR$ bounded continuous.
\begin{proposition}[From Kechris, th 17.24]\label{proposition:weak_maps_generate_giry}
  Let $X$ be a Polish space. Then $\Sigma_B$--the Borel $\sigma$-algebra of $\Giry(X)$ equipped with the weak topology is generated by:
  \begin{itemize}
  \item the maps $\mu \mapsto \mu(A)$, with $ A \in \Sigma_B$;
    \item and also by the maps $\mu \mapsto \int f(x)\cdot \mu(dx)$, for $f:X \rightarrow \RR$ bounded continuous.
    \end{itemize}
  \end{proposition}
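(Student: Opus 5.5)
We will write $\Sigma_w$ for the Borel $\sigma$-algebra of the weak topology on $\Giry(X)$. We write $\Sigma_e$ for the $\sigma$-algebra generated by the evaluations $\mu \mapsto \mu(A)$ with $A$ Borel in $X$. We write $\Sigma_c$ for the one generated by the maps $I_f : \mu \mapsto \int f(x)\cdot\mu(dx)$ with $f$ bounded continuous. The plan is to prove the chain $\Sigma_w \subseteq \Sigma_c \subseteq \Sigma_e \subseteq \Sigma_c \subseteq \Sigma_w$, with each inclusion handled separately.

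\emph{Step 1: $\Sigma_c \subseteq \Sigma_w$.} This is immediate: by definition of the weak topology every $I_f$ is continuous, hence $\Sigma_w$-measurable.

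\emph{Step 2: $\Sigma_w \subseteq \Sigma_c$.} The sets $I_f^{-1}(U)$, with $U \subseteq \RR$ open, form a subbase of the weak topology, and each of them lies in $\Sigma_c$. Finite intersections of them form a base $\mathcal B \subseteq \Sigma_c$. We use the fact recalled above that $\Giry X$ with the weak topology is Polish, hence second countable, hence hereditarily Lindelöf. Then every weakly open set $O$, which is a union of members of $\mathcal B$, is already a countable union of members of $\mathcal B$, so $O \in \Sigma_c$. Since the open sets generate $\Sigma_w$, this gives $\Sigma_w \subseteq \Sigma_c$. This is the only place where Polishness of $\Giry X$ is used, and I expect it to be the main subtle point: without second countability the inclusion can fail.

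\emph{Step 3: $\Sigma_c \subseteq \Sigma_e$.} Fix $f$ bounded and measurable. Choose simple functions $s_k = \sum_i c_i \mathbf 1_{A_i}$ converging uniformly to $f$. Each $I_{s_k} = \sum_i c_i\, \mu(A_i)$ is $\Sigma_e$-measurable, and $I_f$ is the pointwise limit of the $I_{s_k}$, hence $\Sigma_e$-measurable.

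\emph{Step 4: $\Sigma_e \subseteq \Sigma_c$.} Let $\mathcal D$ be the class of Borel $A \subseteq X$ such that $\mu \mapsto \mu(A)$ is $\Sigma_c$-measurable.
\begin{itemize}
\item \emph{Open sets.} Fix a compatible metric $d$ and an open $A \neq X$. The functions $f_k(x) = \min(1, k\, d(x, X\setminus A))$ are bounded continuous and increase to $\mathbf 1_A$. By monotone convergence $\mu(A) = \lim_k I_{f_k}(\mu)$, so $A \in \mathcal D$; the case $A = X$ is trivial.
\item \emph{Dynkin system.} $\mathcal D$ is closed under complements, since $\mu(X\setminus A) = 1 - \mu(A)$. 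It is closed under countable disjoint unions, since $\mu$ is $\sigma$-additive and a pointwise limit of $\Sigma_c$-measurable functions is $\Sigma_c$-measurable.
\item \emph{Conclusion.} The open sets form a $\pi$-system generating the Borel sets, so Dynkin's $\pi$-$\lambda$ theorem gives that $\mathcal D$ contains all Borel sets. Hence $\Sigma_e \subseteq \Sigma_c$.
\end{itemize}

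Combining Steps 1–4 yields $\Sigma_w = \Sigma_c = \Sigma_e$, which is the statement of Proposition~\ref{proposition:weak_maps_generate_giry}.
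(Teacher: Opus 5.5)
Your proof is correct. The paper gives no argument of its own for this statement and simply cites Kechris, Theorem 17.24, and your chain $\Sigma_w \subseteq \Sigma_c \subseteq \Sigma_e \subseteq \Sigma_c \subseteq \Sigma_w$ is essentially the standard proof behind that reference: continuity of the $I_f$, hereditary Lindel\"ofness of the second-countable space $\Giry X$, uniform approximation by simple functions, and a Dynkin argument starting from open sets approximated from below by bounded continuous functions. You also rightly read the evaluation maps as $\mu \mapsto \mu(A)$ for $A$ Borel in $X$ (the $\Sigma_B$ in the statement is a typo), and your one external input, Polishness of $\Giry X$, is a fact the paper itself recalls as known, so the argument is not circular.
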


  As said above, one way of proving (non-categorical) De Finetti type results for Polish spaces is to show that for any infinite exchangeable sequence $X_1,\ldots,X_n,\ldots$ of $X$-valued  random variables, the sequence of the empiric mean $Z_n = \sum \frac 1 n \dirac{X_i}$--which are random variables with values in $\Giry X$--converges \emph{in law} to some probability measure $\distrone$ in $\Giry \Giry X$ when $n \rightarrow +\infty$ -- meaning that the sequence $\law{Z_n} \in \Giry \Giry X$ converges for the weak convergence towards $\mu$. Then, it is possible to conclude by proving that distinct exchangeable sequences of $X$-random variables produce this way necessary distinct measures in $\Giry\Giry (X)$. This approach is for instance developed in~\cite{klenke2013probability}, Chapter 13.4.
 
 \begin{proposition}[from~\cite{klenke2013probability}]\label{th:de_finetti_via_weak_convergence}
  Let $E$ be a Polish space and $(X1,X2,…)$ an infinite exchangeable sequence of $E$-valued random variables. Then there exists a (necessarily unique) probability measure $D \in \Giry{\Giry E}$ such that, for every $n\in \NN$, and $A_1,\ldots,A_n\subseteq E$ measurable,
  $\text{Prob}(X_1\in A_1,…,X_n\in A_n)=\int \distrone(A_1)\ldots \distrone(A_n) D(d\distrone).$
  Furthermore, if we define the empirical measures
  $Z_n:=\frac 1 n \cdot \sum_{i=1}^n \dirac{X_i}: \Omega \rightarrow \Giry E$, then the family $(\text{law}(Z_n))_{n \in \NN}$ is tight, and moreover $\text{law}(Z_n)$ converge weakly towards $D$ when $n \rightarrow + \infty$.

\end{proposition}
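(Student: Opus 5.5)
The plan is to take the empirical measures $Z_n$ as the primary object. I will first show that their laws are tight, then identify every subsequential weak limit through its mixed moments, and finally read off both the representation formula and the uniqueness of $D$ from these moments. Throughout, $E$ is Polish, so $\Giry E$ with the weak topology is Polish. By the Kechris result recalled above (Proposition~\ref{proposition:weak_maps_generate_giry}), its Borel $\sigma$-algebra is generated by the maps $\mu \mapsto \int f\,d\mu$ with $f$ bounded continuous.

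\emph{Step 1: tightness.} By exchangeability every $X_i$ has the same law $\nu=\text{law}(X_1)$. Hence the mean measure satisfies $\mathbb{E}[Z_n(A)] = \nu(A)$ for every $n$. Since $E$ is Polish, $\nu$ is tight: for each $k$ pick a compact $K_k$ with $\nu(E\setminus K_k)\le 4^{-k}/\epsilon$. The set $\mathcal K=\set{\mu \suchthat \mu(E\setminus K_k)\le 2^{-k} \text{ for all } k}$ is weakly compact by Prokhorov. By Markov's inequality and a union bound over $k$, $\text{Prob}(Z_n\notin\mathcal K)\le \sum_k 2^k\nu(E\setminus K_k)\le \epsilon$, uniformly in $n$. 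By Prokhorov on the Polish space $\Giry E$, every subsequence of $(\text{law}(Z_n))_n$ therefore has a weakly convergent sub-subsequence.

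\emph{Step 2: moments of limit points.} Fix bounded continuous $f_1,\dots,f_k$ and set $\Phi(\mu)=\prod_j \int f_j\,d\mu$; this is bounded and weakly continuous on $\Giry E$. I will expand
\[
\mathbb{E}[\Phi(Z_n)] = n^{-k}\sum_{i_1,\dots,i_k\le n}\mathbb{E}\Bigl[\prod_j f_j(X_{i_j})\Bigr].
\]
Tuples with pairwise distinct indices each contribute $\mathbb{E}[\prod_j f_j(X_j)]$ by exchangeability. The remaining tuples make up a fraction $O(k^2/n)$ of the total and are uniformly bounded. So $\mathbb{E}[\Phi(Z_n)]\to \mathbb{E}[\prod_j f_j(X_j)]$. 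If $D$ is any weak limit point, continuity and boundedness of $\Phi$ give $\int \Phi\, dD = \mathbb{E}[\prod_j f_j(X_j)]$.

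\emph{Step 3: identification, which I expect to be the main obstacle.} I need the class of functions $\Phi$ above to determine a probability measure on $\Giry E$. This class is closed under products, contains constants, and generates the Borel $\sigma$-algebra of $\Giry E$ by the Kechris result. The functional monotone class theorem then shows that two probability measures agreeing on all such $\Phi$ coincide. Consequently all limit points are equal, and by Step~1 the whole sequence $\text{law}(Z_n)$ converges weakly to this common $D$. The same argument gives uniqueness of any $D$ satisfying the representation formula, since that formula specialises to the moment identities of Step~2.

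\emph{Step 4: from continuous functions to sets.} The identity of Step~2 says that the probability measures $\text{law}(X_1,\dots,X_n)$ and $\int \mu^{\otimes n}\,D(d\mu)$ on $E^n$ agree on all products $f_1\otimes\cdots\otimes f_n$ of bounded continuous functions. These products form a multiplicative class generating the Borel $\sigma$-algebra of the Polish space $E^n$, so the two measures are equal. Evaluating at $A_1\times\cdots\times A_n$ yields $\text{Prob}(X_1\in A_1,\dots,X_n\in A_n)=\int \mu(A_1)\cdots\mu(A_n)\,D(d\mu)$. One small point to verify along the way is that $\mu\mapsto \mu(A)$ is measurable, so the integrand makes sense; this again follows from the Kechris result.
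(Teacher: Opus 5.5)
Your proposal is correct and follows essentially the paper's (Klenke-based) route: tightness of $\text{law}(Z_n)$ from the common marginal $\text{law}(X_1)$ together with Prokhorov, then uniqueness of subsequential limits. The paper only asserts that uniqueness and leaves the representation formula implicit, whereas you justify both, via the mixed-moment expansion (the same one the paper uses in its parametrized variant) and a monotone class argument. One slip: in Step~1 you need $\nu(E\setminus K_k)\le 4^{-k}\epsilon$, not $4^{-k}/\epsilon$, for the union bound to give $\sum_k 2^k\nu(E\setminus K_k)\le\epsilon$.
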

  \begin{proof}
    The proof can be found in~\cite{klenke2013probability},
    % and also in~\cite{post_on_maths_stack_exchange}
    but we give here the main lines. Let us define $c_n = \text{law}(Z_n) \in \Giry \Giry E)$.
    \begin{itemize}
    \item First, it is shown that the sequence $c$ is \emph{tight} in $\Giry\Giry E$.  Let us recall that \emph{tight} means that for every $\epsilon >0$, there exists a compact set $K_\epsilon \subseteq \Giry E$ such that for all $n \in \NN$, $ c_n(K_\epsilon) > 1-\epsilon$. ~\cite{klenke2013probability} (Exercise 13.41) provides the following characterisation of tight families in $\Giry \Giry E$: they are those families $(\distrone_i)_{i \in I}$ such that for any $\epsilon$, there exists a compact $K\subseteq E$ such that:

      \begin{equation}\label{eq:tight_families}
        \forall i \in I,\,f_i(\{\alpha \in \Giry E \mid \alpha(K) \geq 1-\epsilon\}) \geq 1- \epsilon.
        \end{equation}
        So let $\epsilon >0$. First (since $E$ is polish), a singleton $\{\alpha\}$ with $\alpha$ in $\Giry E$ is always tight (see Lemma 13.5 in~\cite{klenke2013probability}), i.e we can find a compact $K_1 \in E$ such that $\text{Prob}(X_1 \in K_1 ) > 1 - \epsilon^{2}$. But recall that by exchangeability, all the $X_i$ are identically distributed, thus $\forall i$, $\text{Prob}(X_i \in K_1 ) > 1 - \epsilon^{2}$. As a consequence, for $n\in \NN$:
        $c_n(\{\alpha \in \Giry E \mid \alpha(K) \geq 1-\epsilon\} ) =  \frac 1 n \cdot \card{\{i \mid law(X_i(K_1) \}} $
\item 
    From there, we obtain that $(c_n)$ is relatively compact using Prokhorov's theorem for Polish spaces. From relative compactness, we can deduce that the sequence $c$ has a converging sub-sequence. Moreover, it can be proved that there is only one possible limit for such a sub-sequence, which proves that $c$ itself converges.
\end{itemize}
\end{proof}
We are now going to prove a kind of \emph{parametrized} variant of Proposition~\ref{th:de_finetti_via_weak_convergence}.The proof follows closely the proof of Proposition~\ref{th:de_finetti_via_weak_convergence} in Klenke book~\cite{klenke2013probability}. 
\begin{definition}
  Let $E$ a topological space. We say that $\mathcal F \subseteq \MeasC(E,\RR)$ is \emph{separating} when:
  $$\forall \mu,\nu \in \Giry E, \quad \quad ( \forall f \in \mathcal F , \int f\cdot d\mu = \int f \cdot d\nu) \quad \Leftrightarrow \quad \mu = \nu.  $$
  If $E$ is a standard Borel space, $\mathcal M(E)^c:= \{f \in \MeasC(E,\RR) \mid f \text{ is bounded continuous}\}$ is a separating family (e.g.~\cite{klenke2013probability}, Theorem 13.11).
\end{definition}
For $E = \Giry X$ for some standard Borel space $X$, the separating families on $\Giry E$ have been studied in~\cite{blount2010convergence}. In particular, the following is proved in Theorem 11 there:
\begin{lemma}
  Let $E$ a standard Borel space.
  Let $\mathcal F \subseteq \mathcal M(E)^c$ be such that:
  \begin{itemize}
  \item is stable by multiplication;
  \item separates points, i.e. $\forall e,e' \in E, (\forall f, f(e) = f(e') \Leftrightarrow e=e')$.
  \end{itemize}
  Then $\mathcal F$ is a separating family on $\Giry E$.
\end{lemma}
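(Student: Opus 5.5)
The plan is to show directly that two probability measures $\mu,\nu$ on $E$ with $\int f\,d\mu=\int f\,d\nu$ for all $f\in\mathcal F$ agree on every bounded continuous function. Since $\mathcal M(E)^c$ is separating, that gives $\mu=\nu$. The argument combines Stone--Weierstrass on compact sets with the tightness of single probability measures on a Polish space (Lemma~13.5 of~\cite{klenke2013probability}, already used above).

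\emph{Step 1: pass to an algebra.} Let $\mathcal A$ be the linear span of $\mathcal F\cup\{1\}$. Because $\mathcal F$ is stable under multiplication, $\mathcal A$ is a unital algebra of bounded continuous functions. It separates points because $\mathcal F$ does. By linearity, and because $\mu(E)=\nu(E)=1$ handles the constant $1$, we get $\int h\,d\mu=\int h\,d\nu$ for every $h\in\mathcal A$. Moreover, for any $h\in\mathcal A$ and any real polynomial $P$, the function $P\circ h$ is again in $\mathcal A$.

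\emph{Step 2: localise to a compact set.} Fix a bounded continuous $g$ and $\epsilon>0$. Since $E$ is Polish, $\mu$ and $\nu$ are tight, so there is a compact $K\subseteq E$ with $\mu(E\setminus K)<\epsilon$ and $\nu(E\setminus K)<\epsilon$. The restriction of $\mathcal A$ to $K$ is a unital subalgebra of $C(K)$ that separates points. By Stone--Weierstrass it is dense, so there is $p\in\mathcal A$ with $\sup_K|p-g|<\epsilon$.

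\emph{Step 3: control $p$ off $K$.} This is the main obstacle: $p$ is close to $g$ on $K$, but nothing bounds it on $E\setminus K$ by a constant independent of $\epsilon$. Each element of $\mathcal A$ is bounded, so $p$ takes values in some interval $[-C,C]$. Set $M=\|g\|_\infty+1$ and let $\phi(t)=\max(-M,\min(M,t))$. By Weierstrass, choose a polynomial $P$ with $\sup_{[-C,C]}|P-\phi|<\epsilon$, and put $q=P\circ p\in\mathcal A$. Then $|q|\le M+\epsilon$ everywhere on $E$. On $K$ we have $|p|\le\|g\|_\infty+\epsilon\le M$, so $\phi\circ p=p$ there, which gives $\sup_K|q-g|<2\epsilon$.

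\emph{Step 4: conclude.} Splitting each integral over $K$ and $E\setminus K$ gives
$$\left|\int g\,d\mu-\int g\,d\nu\right|\le\left|\int q\,d\mu-\int q\,d\nu\right|+4\epsilon+2(\|g\|_\infty+M+\epsilon)\,\epsilon .$$
The first term vanishes by Step 1. Letting $\epsilon\to0$ yields $\int g\,d\mu=\int g\,d\nu$ for every bounded continuous $g$, hence $\mu=\nu$.

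I would deliberately avoid an alternative route through the multiplicative monotone class theorem. That route gives agreement only on $\sigma(\mathcal F)$, and proving $\sigma(\mathcal F)$ is the full Borel $\sigma$-algebra would need extra descriptive set theory. The truncation in Step~3 is the point where care is needed.
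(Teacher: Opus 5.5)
Your proof is correct, but it takes a different route from the paper. The paper gives no argument for this lemma: it imports it from Blount and Kouritzin \cite{blount2010convergence} (Theorem 11 there).

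You instead give the classical self-contained Stone--Weierstrass-plus-tightness argument for point-separating algebras on Polish spaces. You adjoin the constants to get a unital algebra $\mathcal A$ on which $\mu$ and $\nu$ still agree. You approximate a bounded continuous $g$ uniformly on a compact $K$ of large $\mu$- and $\nu$-measure. You then control the approximant off $K$ by composing it with a polynomial approximation of a truncation, which keeps you inside $\mathcal A$.

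Every step checks out. The only tacit assumption is $\epsilon\le 1$ in Step 3, needed so that $\|g\|_\infty+\epsilon\le M$. The argument rests only on Ulam tightness of Borel probability measures on a Polish space and on the fact, already quoted in the paper, that $\mathcal M(E)^c$ is separating.

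What the citation buys is brevity and whatever extra generality the cited theorem has; your argument uses Polishness only through tightness. What your route buys is a proof checkable with the paper's own toolkit: the same Klenke tightness lemma used a few lines earlier. It also shows explicitly where the normalisation $\mu(E)=\nu(E)=1$ enters. For finite measures the statement actually fails: take the single idempotent $f$ with $f(0)=0$, $f(1)=1$ on a two-point space, and compare $\delta_0$ with $2\delta_0$.

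Your reason for avoiding the monotone class route is also sound. That route needs $\sigma(\mathcal F)$ to equal the Borel $\sigma$-algebra. This is true for Polish $E$: extract a countable point-separating subfamily by Lindel\"of, then apply Lusin--Souslin. But it requires exactly the descriptive set theory you wanted to avoid.
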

\begin{notation}
  Let $E,F$ standard Borel space. Given $n \in \NN, f_1,\ldots,f_n \in \mathcal M(E)^c$, $g\in \mathcal M(F)^c$, we note $\pi[f_1,\ldots,f_n,g]:= (\mu,z) \in \Giry E \times F \mapsto (f_1(z)\cdot \prod_{1\leq i \leq n}\int f_i.d\mu) \in \RR).$ It is easy to check that $\pi[f_1,\ldots,f_n,g] \in \mathcal M(\Giry E \times F)^c$.
\end{notation}
\begin{corollary}
  The family:
  $\{ \pi[f_1,\ldots,f_n,g] \mid  n \in \NN, \, f_1,\ldots,f_n \in \mathcal M(E)^c, g\in \mathcal M(F)^c\}$ is a separating family on $\Giry E \times F$.
  \end{corollary}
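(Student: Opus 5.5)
The plan is to apply the Lemma just stated, from~\cite{blount2010convergence}, with $E$ replaced by the standard Borel space $\Giry E \times F$. Since $\Giry E$ (weak topology) and $F$ are Polish, so is $\Giry E\times F$, and $\Giry E \times F$ is a standard Borel space. Note first that the intended definition of $\pi[f_1,\ldots,f_n,g]$ is $(\mu,z)\mapsto g(z)\cdot\prod_{i}\int f_i\,d\mu$; the $f_1(z)$ in the Notation is a typo for $g(z)$. Write $\mathcal F$ for the family of all such functions, and allow $n=0$ (empty product equal to $1$). It then suffices to check the hypotheses of the Lemma: that $\mathcal F \subseteq \mathcal M(\Giry E\times F)^c$, that $\mathcal F$ is stable under multiplication, and that $\mathcal F$ separates points.

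\textbf{Membership.} Each $\mu\mapsto\int f_i\,d\mu$ is continuous by definition of the weak topology, and it is bounded by $\|f_i\|_\infty$. The function $g$ is bounded and continuous on $F$. Hence $\pi[f_1,\ldots,f_n,g]$ is a finite product of bounded continuous functions composed with the continuous projections, so it is bounded and continuous on the product.

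\textbf{Stability under multiplication.} This is immediate from the shape of the functions:
\[
\pi[f_1,\ldots,f_n,g]\cdot\pi[f'_1,\ldots,f'_m,g'] = \pi[f_1,\ldots,f_n,f'_1,\ldots,f'_m,\,g\cdot g'],
\]
and $g\cdot g'\in\mathcal M(F)^c$.

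\textbf{Point separation.} This is the only step with content, though it is still routine. Take $(\mu,z)\neq(\mu',z')$.
\begin{itemize}
\item If $z\neq z'$, the space $F$ is metrizable with some metric $\delta$, so $g:=\min(1,\delta(z,\cdot))$ is bounded continuous with $g(z)=0\neq g(z')$. Then $\pi[g]$ (the case $n=0$) separates the two points.
\item If $z=z'$ but $\mu\neq\mu'$, then since $\mathcal M(E)^c$ is separating on $\Giry E$, there is a bounded continuous $f$ with $\int f\,d\mu\neq\int f\,d\mu'$. Taking $g\equiv 1$, the function $\pi[f,1]$ separates the two points.
\end{itemize}
If one prefers not to use $n=0$, replace $\pi[g]$ by $\pi[1,g]$, where the constant function $1$ gives $\int 1\,d\mu=1$.

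With the three hypotheses verified, the Lemma yields that $\mathcal F$ is a separating family on $\Giry(\Giry E\times F)$. I expect the only delicate point to be the typo and the bookkeeping of the $n=0$ case; no real obstacle is anticipated.
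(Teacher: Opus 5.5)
Your proposal is correct and follows the route the paper intends. The paper states this corollary as an immediate consequence of the Blount--Kouritzin lemma applied to the standard Borel space $\Giry E \times F$, and you have spelled out the three hypotheses it leaves implicit (bounded continuity, closure under products, point separation). Your reading of $f_1(z)$ as a typo for $g(z)$ is also right, as the use of $\pi[f_1,\ldots,f_n,g]$ in the subsequent parametrized De Finetti proposition confirms.
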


  \begin{proposition}\label{th:de_finetti_via_weak_convergence_parametrized}
    Let $E,F$ be a standard Borel spaces,  $(X1,X2,…)$ an infinite sequence of $E$-valued random variables, and $Y$ an $F$-valued random variable, such that $\forall \sigma$ permutation in $S_n$, $\law{(X_1,\ldots,X_n,Y)} = \law{(X_{\sigma(1)},\ldots,X_{\sigma(n)}}$. Then there exists a (necessarily unique) probability measure $D \in \Giry{(\Giry E) \times Y}$ such that
    $\forall n\in \NN, \forall f_1,\ldots,f_n : E\rightarrow \RR$ continuous bounded, $\forall g:F\rightarrow \RR$ continuous bounded,
$$ \int_{(\mu,z)\in \Giry E \times F} \pi[f_1,\ldots,f_n,g](\mu,z). dD = \mathbb{E}\left[g(Y)\cdot \prod_{i=1}^n f_i(X_i) \right] $$

  Furthermore, if we define the empirical measures
  $W_n=(\frac 1 n \cdot \sum_{i=1}^n \dirac{X_i},Y): \Omega \rightarrow (\Giry E) \times F$, the sequence of $\text{law}(W_n) \in \Giry (\Giry E \times F)$ is tight and thus -- by Prohorov's theorem -- sequentially compact, and as a consequence there exists a subsequence of $\text{law}(W_n) \in \Giry (\Giry E \times F)$ converge weakly towards $D$ when $n \rightarrow + \infty$. Since $D$ is unique, it also means that the sequence $\law{W_n}$ converge weakly toward $D$.
 
\end{proposition}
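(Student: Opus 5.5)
Following Klenke's proof of Proposition~\ref{th:de_finetti_via_weak_convergence}, I would proceed in three steps: tightness of $\law{W_n}$, identification of every subsequential limit through the integrals of the test functions $\pi[f_1,\ldots,f_n,g]$, and uniqueness from the Corollary stating that these functions form a separating family on $\Giry E\times F$. I read the exchangeability hypothesis as $\law{(X_1,\ldots,X_n,Y)}=\law{(X_{\sigma(1)},\ldots,X_{\sigma(n)},Y)}$ for every $\sigma\in S_n$, i.e.\ $Y$ is kept fixed while the $X_i$ are permuted.

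\emph{Tightness.} A family of product-space laws is tight as soon as both families of marginals are. The second marginal of $\law{W_n}$ is $\law{Y}$ for every $n$; a single measure on a Polish space is tight. The first marginal is $\law{Z_n}$ with $Z_n=\frac1n\sum_{i\le n}\dirac{X_i}$. The sequence $(X_i)$ is exchangeable, since forgetting $Y$ preserves the hypothesis. So the tightness part of Proposition~\ref{th:de_finetti_via_weak_convergence} applies directly. Prohorov's theorem on the Polish space $\Giry E\times F$ then gives a weakly convergent subsequence $\law{W_{n_k}}\to D$.

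\emph{Identifying the limit.} Fix $f_1,\ldots,f_m\in\mathcal M(E)^c$ and $g\in\mathcal M(F)^c$. The map $\pi[f_1,\ldots,f_m,g]$ is bounded continuous, so
\[ \int \pi[f_1,\ldots,f_m,g]\,d\law{W_{n_k}} = \mathbb E\Big[g(Y)\prod_{j=1}^m \tfrac1{n_k}\textstyle\sum_{i=1}^{n_k} f_j(X_i)\Big] \]
converges to $\int\pi[f_1,\ldots,f_m,g]\,dD$. I expand the product as a sum over maps $\iota:\{1,\ldots,m\}\to\{1,\ldots,n_k\}$.
\begin{itemize}
\item Non-injective maps number at most $\binom m2 n_k^{m-1}$. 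Each contributes at most $\|g\|_\infty\prod_j\|f_j\|_\infty$, so after dividing by $n_k^m$ their total is $O(1/n_k)$.
\item For injective $\iota$, exchangeability with $Y$ fixed (applied with a permutation of $\{1,\ldots,N\}$, $N\ge n_k$, sending $j\mapsto\iota(j)$) gives $\mathbb E[g(Y)\prod_j f_j(X_{\iota(j)})]=\mathbb E[g(Y)\prod_j f_j(X_j)]$.
\item The proportion of injective maps is $n_k(n_k-1)\cdots(n_k-m+1)/n_k^m\to1$.
\end{itemize}
Hence $\int\pi[f_1,\ldots,f_m,g]\,dD=\mathbb E[g(Y)\prod_j f_j(X_j)]$, which is the announced formula.

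\emph{Uniqueness and convergence of the whole sequence.} Suppose $D'$ also satisfies the formula, for instance a limit of another subsequence. Then $D$ and $D'$ agree on the family $\{\pi[f_1,\ldots,f_m,g]\}$, which is separating by the Corollary, so $D=D'$. Since $(\law{W_n})$ is relatively compact and all its subsequential limits coincide, the whole sequence converges weakly to $D$.

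The main obstacle is the separating-family step. One must check that the Blount-type lemma really applies to $\Giry E\times F$. The family $\{\pi[f_1,\ldots,f_m,g]\}$ is closed under products, because a product of two such functions is again of this form after concatenating the $f$'s and multiplying the $g$'s. It separates points, because the maps $\mu\mapsto\int f\,d\mu$ separate measures and the $g\in\mathcal M(F)^c$ separate points of $F$. The underlying lemma is stated for a space $E$ rather than a product, so care is needed there. A secondary subtlety is the combinatorial bound on non-injective index maps, which is routine.
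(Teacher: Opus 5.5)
Your proposal is correct and follows the same route as the paper: tightness of $\law{W_n}$, identification of subsequential limits on the test functions $\pi[f_1,\ldots,f_n,g]$, and uniqueness from the separating-family Corollary. Two of your steps are better than the paper's. Your tightness argument via the two marginals is more direct than the paper's detour through $\lambda^{-1}(K)$. More importantly, your $O(1/n_k)$ bound on non-injective index maps is actually needed: the paper asserts that every tuple $(m_1,\ldots,m_n)$, including tuples with repeated indices, gives the same expectation $\mathbb{E}[g(Y)\prod_i f_i(X_i)]$, which is false in general, so your version repairs that step.
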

\begin{proof}
  \begin{itemize}
  \item We are first going to show that the family $(W_n)_{n \in \NN}$ is tight. Let $\epsilon > 0$
    \begin{itemize}
    \item The first step consists in applyin Proposition~\ref{th:de_finetti_via_weak_convergence}, and the well known result that any singleton is tight in $\Giry F$ -- see e.g. Lemma 13.5 in~\cite{klenke2013probability} -- to obtain two compacts: $C_E \subseteq \Giry E$ and $C_F \subseteq F$ such that $\law(Y)(C_F)\geq 1- \frac{\epsilon}{4}$ and  $\forall n, \law(\frac 1 n \cdot \sum \dirac{X_i})(C_F)\geq 1- \frac{\epsilon}{4}$.
      Moreover, by Tychonoff theorem, $C_E \times C_F$ is again a compact in $(\Giry E) \times  F$.
    \item The first step consists in observing that the family of $E\times F$-valued random variables $U_i = X_i \times Y$ is again exchangeable. It means that we can apply Theorem~\ref{th:de_finetti_via_weak_convergence} above, and we obtain that $Z_n:=\frac 1 n \cdot \sum_{i=1}^n \dirac{X_i \times Y}$ is tight. So let $K \subseteq \Giry(E\times F)$ a compact such that $\forall n$, $\text{law}(Z_n)(K) \geq 1-\frac {\epsilon}{4}$.
    \item Let us note $\lambda: (\mu,f) \in (\Giry E) \times F \mapsto \mu \times \dirac f \in \Giry(E\times F)$. $\lambda$ is a continuous function.  We define $K':=\lambda^{-1}(K) \cap (C_E\times C_F)$. We can prove that $K'$ is a compact, because of the following fact (that holds because in any compact metric space, the compacts subset are exactly the closed ones, and moreover the inverse image of a closed set by a continuous function is closed) :
\begin{align*}
  & A,B \text{ metric spaces}, f:A\rightarrow B \text{ continuous}, A \text{ compact }\\ &
                                                                                           \quad \Rightarrow \quad (\forall I \subseteq B, \text{ s.t.} I \text{ is compact}, f^{-1}(I) \text{ is compact}).
  \end{align*}
      
    \item So to conclude, it is enough to show that $\forall n, \law{W_n}(K') \geq 1-\epsilon. $ Observe that:
      $$\law{W_n}(K') \geq \law{W_n}(\lambda^{-1}(K)) - \law{W_n}{(\Giry E\times F \setminus (C_E\times C_F))} $$
      We are going to compute the two parts separately. On the one hand: 
      $\law{W_n}(\lambda^{-1}(K)) = \law{Z_n}(K) \geq 1-\frac \epsilon K $ by hypothesis on $K$.
      On the other hand:
     $ \law{W_n}{(\Giry E\times F \setminus (C_E\times C_F))} = \text{Prob}(\sum_{i} \frac 1 n X_i \not \in C_E \vee Y \not \in C_F) \leq  \text{Prob}(\sum_{i} \frac 1 n X_i \not \in C_E) + \text{Prob} ( Y \not \in C_F) \leq \frac {\epsilon}{4}$ by hypothesis. Combining the two, we obtain $\law{W_n}(\lambda^{-1}(K)) \geq 1 - \frac {3\epsilon} 4$, which allows to conclude.
   \end{itemize}
 \item Let's now take $f_1,\ldots f_n \in \mathcal M^c(E), g\in \mathcal M^C(F)$, and prove that: $\pi[f_1,\ldots,f_n,g](\mu,z). dD = \mathbb{E}\left[g(Y)\cdot \prod_{i=1}^n f_i(X_i) \right]$. Since the $\pi[f_1,\ldots,f_n,g]$ are bounded continuous, by definition of the weak limit, it is enough to show:
   $$\lim_{m\rightarrow +\infty}\pi[f_1,\ldots,f_n,g](\mu,z). d\law{Z_m} = \mathbb{E}\left[g(Y)\cdot \prod_{i=1}^n f_i(X_i) \right]$$.
   We can see that:
   \begin{align*}
      \pi[f_1,\ldots,f_n,g](\mu,z). d\law{Z_m} = & \frac 1 {m^k} \cdot \mathbb{E}[g(Y)\cdot \prod_{1\leq i \leq n} \sum_{1\leq j \leq m} f_i(X_j)]  \\
     = & \frac 1 {m^k} \cdot \mathbb{E}[\sum_{(m_1,\ldots,m_n)\in \{1,\ldots,m\}^n}g(Y)\cdot \prod_{1\leq i \leq n}  f_i(X_{m_i})]\\
         = & \frac 1 {m^k} \cdot \sum_{(m_1,\ldots,m_n)\in \{1,\ldots,m\}^n}  \mathbb{E}[g(Y)\cdot f_1(X_{m_1}) \ldots f_n(X_{m_n})]
     \end{align*}
   Recall there the \emph{parametrized exchangeability} on the $((X_1, X_1,\ldots),Y)$: it says exactly that $\forall (m_1,\ldots,m_n)\in \{1,\ldots,m\}^n $,  $ \mathbb{E}[g(Y)\cdot f_1(X_{m_1}) \ldots f_n(X_{m_n})]=  \mathbb{E}[g(Y)\cdot f_1(X_{1}) \ldots f_n(X_{n})]$. We can conclude from there. 
    \end{itemize}
  \end{proof}
  \begin{proposition}
    Let $E,F$ standard Borel spaces.
    For $n \in \NN$, we note $\text{eq}_n: \distrone \in \Giry(\Giry E \otimes F)\mapsto \int_{(\distrtwo,z)} (\mu \times \ldots \times \mu \times \dirac z) d\distrone \in \Giry (E^n \times F)$.
    Let $(X1,X2,…)$ an infinite sequence of $E$-valued random variables, and $Y$ an $F$-valued random variable, such that $\forall \sigma$ permutation in $S_n$, $\law{X_1,\ldots,X_n,Y} = \law{X_{\sigma(1)},\ldots,X_{\sigma(n)}}$, and $D$ the measure in $\Giry(\Giry E^\omega \times F)$ obtained from there along Proposition~\ref{th:de_finetti_via_weak_convergence_parametrized} above. Then:
$$ \text{eq}_n(D) = \law{X_1,\ldots,X_n,Y}$$
\end{proposition}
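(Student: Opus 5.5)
We need to show $\text{eq}_n(D)$ and $\law{X_1,\ldots,X_n,Y}$ coincide as probability measures on $E^n \times F$. Both are probability measures on a standard Borel space. So it suffices to check that they integrate every function in a separating family to the same value. We will use the family of product test functions
\[
(x_1,\ldots,x_n,z) \mapsto g(z)\cdot\prod_{i=1}^n f_i(x_i),
\]
with $f_i \in \mathcal M(E)^c$ and $g \in \mathcal M(F)^c$. This family is closed under multiplication and separates points of $E^n\times F$. By the lemma from~\cite{blount2010convergence} recalled above (or directly, since such products generate the product $\sigma$-algebra and form a $\pi$-system after approximating indicators of open rectangles), it is a separating family on $\Giry(E^n\times F)$.

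The first step is to compute the left-hand side against such a test function. Unfold the definition of $\text{eq}_n$ as a mixture along $D(d(\mu,z))$ of the product measures $\mu^{\otimes n}\times\dirac z$, and apply Fubini to the bounded measurable integrand. This gives
\[
\int g(z)\prod_i f_i(x_i)\, d\,\text{eq}_n(D) \;=\; \int_{(\mu,z)} g(z)\prod_{i=1}^n \int f_i\, d\mu \;\, D(d(\mu,z)).
\]
The right-hand side is exactly $\int \pi[f_1,\ldots,f_n,g]\, dD$. Here we read $\pi[\ldots]$ with the factor $g(z)$, correcting the typo $f_1(z)$ in the notation. One should also check that $(\mu,z)\mapsto \mu^{\otimes n}\times\dirac z$ is measurable, so that $\text{eq}_n(D)$ is well defined. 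This follows from Proposition~\ref{proposition:weak_maps_generate_giry}: evaluations on measurable rectangles are products of the measurable maps $\mu\mapsto\mu(A_i)$ and indicators in $z$, and rectangles generate the $\sigma$-algebra.

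The second step invokes Proposition~\ref{th:de_finetti_via_weak_convergence_parametrized}. It states precisely that $\int \pi[f_1,\ldots,f_n,g]\,dD = \mathbb E[g(Y)\prod_i f_i(X_i)]$. This expectation is the integral of the same test function against $\law{X_1,\ldots,X_n,Y}$. Hence the two measures agree on the separating family, and therefore they are equal.

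The main obstacle is the separating-family step. We need separation on $\Giry(E^n\times F)$ itself, not on $\Giry(\Giry E\times F)$, so the cited corollary does not apply verbatim. I would argue it via a monotone class argument. Bounded continuous functions on the Polish space $E$ approximate indicators of open sets pointwise and monotonically, so equality of integrals extends by dominated convergence to indicators of products of open sets $U_1\times\cdots\times U_n\times V$. These products form a $\pi$-system generating the Borel $\sigma$-algebra of $E^n\times F$, since a product of second-countable spaces has Borel $\sigma$-algebra equal to the product $\sigma$-algebra. Dynkin's $\pi$-$\lambda$ theorem then yields equality of the two measures.
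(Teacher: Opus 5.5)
Your proposal is correct and follows the paper's proof, which likewise combines Proposition~\ref{th:de_finetti_via_weak_convergence_parametrized} with a separating family on $\Giry(E^n\times F)$. You add details the paper leaves implicit: the Fubini computation identifying the integral of $g(z)\prod_i f_i(x_i)$ against $\text{eq}_n(D)$ with $\int \pi[f_1,\ldots,f_n,g]\,dD$ (with the $g(z)$ typo fixed), and the monotone-class argument for the separation step.
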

\begin{proof}
We prove that using Proposition~\ref{th:de_finetti_via_weak_convergence_parametrized}, combined with separating families, this time on $\Giry (E^n \times F)$.
  \end{proof}

\begin{lemma}\label{lemma:de_finetti_aux_1_parametrized}
Let $(Y \xrightarrow{f}{X^\omega \otimes Z})$ that equalises the $\{\dot \sigma \otimes Z \mid \sigma \in S_n\}$. For any $y \in Y$, let us call $X_1^y,\ldots,X_n^y,\ldots$ the infinite exchangeable sequence of $X$-valued random variables defined by taking $(X^\omega\times Z,f(y))$ as probability space, and $X_i^y:((x_1,x_2,\ldots),z) \mapsto x_i$, and similarly $Z^y$ the $Z$-valued random variable $Z^y: :((x_1,x_2,\ldots),z) \mapsto z$ . Let $D^y \in \Giry{(\Giry X \times Z)}$ the probability distribution given by Theorem~\ref{th:de_finetti_via_weak_convergence_parametrized}. Then $\alpha: y \in Y \mapsto D^y \in \Giry (\Giry X\times Z)$ is a morphism in $\Stoch(Y,\Giry X\otimes Z)$. 
\end{lemma}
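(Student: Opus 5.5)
The plan is to show that $y \mapsto D^y$ is a measurable map $Y \rightarrow \Giry(\Giry X \times Z)$. A stochastic kernel $Y \rightarrow \Giry X\otimes Z$ in $\Stoch(\Pol)$ is exactly such a measurable map.

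We realise $D^y$ as a pointwise weak limit of measurable maps, following the strategy announced at the start of this appendix. By Proposition~\ref{th:de_finetti_via_weak_convergence_parametrized}, $D^y = \wlim_{n} \law{W^y_n}$ with
\[
W_n^y\bigl((x_1,x_2,\ldots),z\bigr) = \Bigl(\tfrac 1 n \sum_{i=1}^n \dirac{x_i},\, z\Bigr).
\]
Since $\Giry(\Giry X \times Z)$ is Polish, hence metrizable, for the weak topology, and a pointwise limit of measurable maps into a metrizable space is measurable (Kechris 11.2), it suffices to show that each map $\alpha_n : y \mapsto \law{W^y_n}$ is measurable.

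For this, observe that $W_n:=W_n^y$ does not depend on $y$: it is a fixed map $e_n : X^\omega\times Z \rightarrow \Giry X \times Z$, and $\law{W^y_n}$ is the pushforward $f(y)\circ e_n^{-1}$. So $\alpha_n = \Giry(e_n)\circ f$, that is, $\alpha_n$ is the Kleisli composite of $f$ with the deterministic kernel induced by $e_n$. It therefore remains to show that $e_n$ is measurable. The second component is a projection. The first factors as $X^\omega \times Z \xrightarrow{\pi_{\leq n}} X^n \rightarrow \Giry X$, with $(x_1,\ldots,x_n)\mapsto \frac 1 n\sum_i \dirac{x_i}$. By Proposition~\ref{proposition:weak_maps_generate_giry}, it suffices to check that $(x_1,\ldots,x_n)\mapsto \frac 1 n\sum_i \mathbf 1_A(x_i)$ is measurable for every measurable $A$, which is clear. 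Since $\Giry$ is a functor on $\MeasC$ and $f$ is measurable into $\Giry(X^\omega\times Z)$, each $\alpha_n$ is measurable.

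The main subtlety is convergence of the full sequence rather than of a subsequence. A subsequence could depend on $y$, and then the pointwise-limit argument would break. This is why we rely on the final clause of Proposition~\ref{th:de_finetti_via_weak_convergence_parametrized}: because $D^y$ is unique and the sequence is tight, the whole sequence $\law{W^y_n}$ converges for every $y$. We apply that proposition to the exchangeable family $(X_i^y)$ together with $Z^y$, whose parametrized exchangeability follows from $f$ equalising all $\dot\sigma\otimes Z$. With full convergence in hand, $\alpha = \lim_n \alpha_n$ pointwise in the weak topology, so $\alpha$ is measurable and hence a morphism of $\Stoch(\Pol)$.
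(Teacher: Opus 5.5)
Your proof is correct and follows the same route as the paper: you express $D^y$ as the pointwise weak limit of the approximating laws, use Kechris 11.2 to pass measurability to the limit, and write each approximant as the Kleisli composite of $f$ with a deterministic kernel. Your version is cleaner at the one step where the paper slips. The paper rewrites $\beta_n(y)$ as $\frac 1 n \sum_i \law{\dirac{X_i^y},Z^y}$, which confuses the law of the empirical measure with the average of the laws of the $\dirac{X_i^y}$: for $n \geq 2$ the latter is concentrated on Dirac measures while the former in general is not. Pushing $f(y)$ forward along your map $e_n$ is exactly what is needed.
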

\begin{proof}
  First, by theorem~\ref{th:de_finetti_via_weak_convergence_parametrized}, we know that for every $y$, $\alpha$ is the pointwise weak limit of the functions $\beta_n:= y \in Y \mapsto \law{ \frac 1 n \cdot \sum_{i=1}^n \dirac{X_i^y},Z^y} \in \Giry (\Giry X \times Z)$. It is a known fact in measure theory that the pointwise limit of a sequence of measurable functions $R\rightarrow Q$ with $Q$ metrizable is again measurable--see e.g.~\cite{kechris} 11.2. Thus to conclude that $\alpha$ is indeed in $\MeasC(Y,\Giry (\Giry X \times Z))$, it is enough to show that each of the function $\beta_n$ is measurable, or in other terms that $\beta_n$ is a stochastic kernel in $\Stoch(Y,\Giry X \otimes Z)$.
 Observe that $\beta_n(y) = \sum_{i} \frac 1 n \cdot \law{\dirac {X_i^y},Z^y}$, thus it is enough to show that for any $i$, the function $(y \in Y \mapsto \law{\dirac {X_i^y},Z^y} \in \Giry(\Giry X\times Z)$ is measurable. We can see that for $y \in Y$, $\law{\dirac {X_i^y},Z^y} = b_{\star}(f(y))$, where $b_\star$ is the pushforward of the measurable function $b:((x_1,x_2,\ldots), z) \in X^\omega \times Z \mapsto (\dirac x_i,z) \in \Giry X\times Z$. In other terms, $\beta_n = b \circ f$, where $b$ is seen as a element in $\Stoch(X^\omega \times Z,\Giry X\times Z)$ via the embedding $\MeasC \rightarrow \Stoch$ that see any measurable function as a (deterministic) kernel. That concludes the proof.

  \end{proof}

  \end{proof}

We have now everything we need to apply the layered construction of the free commutative comonoid from~\cite{metabasson} to the category $\Stoch(\Pol)$, and we can thus conclude that $\Giry X$ is the free commutative commonoid generated by $X$.

\begin{theorem}\label{prop:dd_limit_in_stoch}
For any standard Borel space $X$, the De Finetti DD-chain over $X$ has $\Giry X$ as limit, and this limit commutes with the tensor product. Moreover, $\Giry X$ is the free commutative commonoid over $X$.
\end{theorem}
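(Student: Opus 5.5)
The plan is to assemble the results already established and feed them into Theorem~\ref{theorem:equivalence_of_two_df_formulations} and then Theorem~\ref{th:metabasson}. Work in $\Stoch(\Pol)$ with the copointed object $\overline X = (X,\_:X\rightarrow \unit)$. By Proposition~\ref{prop:existence_of_equalisers_and_coequalisers_in_stoch}, the draw-and-delete chain over $\overline X$ can be built, and the equalisers of symmetries on $X^{\otimes n}$ commute with $\otimes$. By Proposition~\ref{prop:infinite_product_borel_standard}, the delete chain has limit $X^\omega$, and this limit commutes with $\otimes$. By the categorical De Finetti theorem of~\cite{moss2022probability,fritz2021finetti}, $\Giry X$ is the equaliser of all finitely supported symmetries on $X^\omega$. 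Proposition~\ref{th:de-finetti-perrone} strengthens this: $\Giry X\otimes Y$ is the equaliser of the $\dot\sigma\otimes Y$ for every standard Borel $Y$, so this equaliser commutes with the tensor product.

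For the first claim, apply Theorem~\ref{theorem:equivalence_of_two_df_formulations}: its hypotheses hold, so the limit $L$ of the DD-chain exists and $L\cong \Giry X$. Concretely, the limit cone is $\rho_n := (\pi_{\leq n}\circ \text{eq})^\dagger : \Giry X \rightarrow \Mn n X$, obtained as in Proposition~\ref{prop:cone_from_cone_1}; this sends $\mu$ to the law of the multiset of $n$ i.i.d.\ draws from $\mu$. The second part of Theorem~\ref{theorem:equivalence_of_two_df_formulations} then transfers commutation with $\otimes$ from the equaliser formulation to the chain formulation. If one wants to be explicit, this transfer goes through the parametrized Propositions~\ref{prop:cone_from_cone_1_parametrized} and~\ref{prop:cone_from_cone_2_parametrized}. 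A cone $Z\rightarrow \Mn n X\otimes Y$ yields $f^\omega : Z\rightarrow X^\omega\otimes Y$ equalising the $\dot\sigma\otimes Y$. This factors uniquely through $\Giry X\otimes Y$, and uniqueness on the chain side follows because the two constructions are mutually inverse up to the equaliser monos.

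For freeness, apply Theorem~\ref{th:metabasson}. Since $\unit$ is terminal in $\Stoch$, the free copointed object $X_\bullet$ is just $\overline X$ (by the Remark). So the approximations chain coincides with the De Finetti DD-chain. Conditions 2 and 3 of Theorem~\ref{th:metabasson} are exactly the commutation results just collected. Hence $\Giry X$ is the free commutative comonoid over $X$ in $\Stoch(\Pol)$.

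The main obstacle is the commutation with $\otimes$ on the De Finetti side, Proposition~\ref{th:de-finetti-perrone}; it is taken as given here. The remaining care point is a bookkeeping subtlety: Theorem~\ref{th:metabasson} requires the commutation conditions for all objects $Y$ of the ambient category. Working in $\Stoch(\Pol)$ rather than $\Stoch$ makes all objects standard Borel, so the cited results apply uniformly. One should also check that $\Mn n X$ and $\Giry X$ are themselves standard Borel, which is covered by the cited results of~\cite{dash2021monads} and Kechris.
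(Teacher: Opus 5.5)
Your proposal is correct and follows essentially the paper's own argument. Like the paper, you take $(X,\_:X\rightarrow\unit)$ as the free copointed object because $\unit$ is terminal, and you get commutation of the equalisers of symmetries from Proposition~\ref{prop:existence_of_equalisers_and_coequalisers_in_stoch}. You then obtain the limit of the draw-and-delete chain, and its commutation with $\otimes$, by combining Proposition~\ref{th:de-finetti-perrone} with Theorem~\ref{theorem:equivalence_of_two_df_formulations}, and conclude with Theorem~\ref{th:metabasson}. Your extra remarks only make explicit what the paper's sketch leaves implicit: invoking Proposition~\ref{prop:infinite_product_borel_standard} to discharge the hypotheses of the equivalence theorem, and working in $\Stoch(\Pol)$ so that commutation holds for every object $Y$.
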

\begin{proofsketch}
  Though~\cite{metabasson} is primarily interested by models of (intuitionnistic) linear logic, thus symmetric monoidal \emph{closed} categories, the layered construction for the free commutative comonoid is proved there for any symmetric monoidal category, and as a consequence can be applied to $\Stoch(\Pol)$. Since $\unit$ is a terminal object in $\Stoch$, the unique morphism $X \rightarrow \unit$ is the free copointed object over $X$. 
  We proved in Proposition~\ref{prop:existence_of_equalisers_and_coequalisers_in_stoch} that the equalisers of symmetrie ox $X^{\otimes n}$ exists and commute with the tensor product, and we obtain by combining Theorem~\ref{th:de-finetti-perrone} and  Proposition~\ref{theorem:equivalence_of_two_df_formulations} that the limit of the draw-and-delete chain generated by $X\rightarrow \unit$ exists and commute with the tensor product, thus all conditions are met.
  \end{proofsketch}

\section{Draw and Delete Chains in Probabilistic Coherence spaces}\label{new_pcoh}

 Probabilistic coherence spaces (PCS), originally introduced by Girard~\cite{girard200410}, have been developed as a model of linear logic by Danos and Ehrhard~\cite{danos2011probabilistic}. As mentioned earlier, it was shown in~\cite{crubille2017free}
 that the layered construction of the free exponential works in $\pcoh$, equipping it with the structure of a Lafont model of Linear Logic\footnote{this free $!$ moreover  coincides with the exponential defined earlier in~\cite{danos2011probabilistic}}.

 The category $\pcoh$ of probabilistic coherence spaces contains as a full subcategory $\DistSubStoch$, the category of \emph{discrete countable} measurable spaces, and sub-stochastic kernels between them.
 There are thus two DD chains of interest in the category $\pcoh$. The first one is obtained by applying the embedding $\Stoch^{\leq \omega} \hookrightarrow \pcoh$ to the De Finetti chain over a discrete countable space $X$, while the second one is the DD-chain built over the free copointed object $A \with \unit$ for any PCS $A$, which as seen before is the one of the layered construction for the exponential $!A$.

  \subsection{Probabilistic Coherence Spaces}
 We give a very brief overview of the category $\pcoh$, only designed to give the reader some intuitions.
    In our proofs in Appendix, we need more formal definitions, that can be found there.

\begin{notation}\label{notation:weak_orthogonality}
Let us fix $W$ a countable set. If $u,v \in \Rp{W}$, we note $\scal u v := \sum_{a \in W}u_a\cdot v_a \in \Rp{}\cup \{+\infty\}$. For $A \subseteq \Rp{W}$, we note $A^{\perp}:= \{u \in \Rp{W} \mid \forall v \in A, \scal v u \leq 1\}.$
\end{notation}
A probabilistic coherence space consists in a subset of the non-negative cone of a countable dimensional vector space $\RR^{\web{X}}$. It must verify a number of requirements, the most important being the closure under \emph{bi-orthogonality} -- where two vectors with non-negative coefficients $x$ and $y$ are considered \emph{orthogonal} when $\scal x y \leq 1$.
  \begin{definition}\label{def:pcoh}
    A probabilistic coherence space (PCS) is a pair $X = (\web X, \clique X)$, where $\clique X \subseteq \Rp{\web X}$ such that $\clique{X}^{\perp\perp} = \clique X$, and moreover $\forall a \in \web X$, $c_X(a) := \sup_{u \in \clique X} u_a$ is such that $0 < c_X(a) < +\infty$. We call \emph{elements of $X$} the $x\in\clique X$, and \emph{web of $X$} the set $\web{X}$. A PCS morphism between $X$ and $Y$ is a matrix $f \in \Rp{\web X \times \web Y}$ such that for all $x \in \clique X$, $f\cdot x \in \clique{Y}$, where $\cdot$ is the usual product between the matrix $f$ and the column vector $x$.   We note $\pcoh$ for the category that has PCSs as objects, PCS morphisms as morphisms, and matricial product as composition.
  \end{definition}

   $\pcoh$ has the structure of a model of linear logic; we give below some of the constructions,that correspond to the cartesian and symmetric monoidal products. More details can be found in~\cite{danos2011probabilistic}.
  \begin{definition}[Model of Multiplicative Additive Linear Logic ]
    Let $X,Y$ be two PCS. We define $X\with Y$, $X \otimes Y$ respectively as:
    \begin{align*}
      & \web{X \with Y} := \web X \sqcup \web Y, \qquad \web{X \otimes Y} := \web X \times \web Y, \\ & \clique{X \with Y} := \{x \in \Rp{\web X \sqcup \web Y} \mid x_{\mid \web X} \in \clique{X}, \, x_{\mid \web Y} \in \clique Y \},\\
         & \clique{X \otimes Y} := \{x \otimes y\mid x \in \clique{X},y \in \clique{Y}\}^{\perp\perp},
    \end{align*}
    where $\web X \sqcup \web Y$ is the disjoint union of $\web X$ and $\web Y$, and for $x\in\Rp{\web X}$ and $y\in\Rp{\web Y}$ respectively, $x \otimes y \in \Rp{\web X \times \web Y}$ is defined as $(x \otimes y)_{(a,b)} := x_a \cdot y_b$, for $(a,b) \in \web X \times \web Y$.
    \end{definition}
    For $X_1,X_2$ two PCSs, $X_1 \with X_2$ is the cartesian product of $X_1$ and $X_2$. The projections morphisms are the $\pi_i:X_1\with X_2 \rightarrow X_i$ for $i \in \{1,2\}$, defined as $(\pi_i)_{a,b} := \delta_{a,b}$ for $a \in \web{X_1}\sqcup \web{X_2}$, $b \in \web{X_i}$ -- here $\delta$ is the Kronecker symbol, i.e.~$\delta_{a,b}=1$ if $a=b$, and $0$ otherwise.

  \begin{example}\label{ex:pcs}

 The monoidal unit of the symmetric monoidal structure defined on $\pcoh$ in~\cite{danos2011probabilistic} is the PCS $\unit$, defined by $ \web{\unit} = \{\star\}$, and $\clique{\unit}:=\{x \in \Rp{\{\star\}} \mid x_\star \in [0,1]\} \sim [0,1]$. The monoidal unit of $\pcoh$ \emph{is not} a terminal object. $\pcoh$ has also a terminal object, $0$, whose web is empty, and whose set of elements is $0$, the unique vector in $\RR_{\geq 0}^{\emptyset}$ .
 \end{example}

   \begin{definition}\label{def:wstoch_to_pcoh}
     For $X$ a discrete countable space, we define the PCS $ X^{\pcoh}:= (X,\{p \in \RR_{\geq 0}^X \mid \sum_{x \in X} p_x \leq 1\})$. For a morphism $k \in \SubStoch^{\leq \omega}(X,Y)$, we define $k^{\pcoh} \in \pcoh(X^{\pcoh},Y^{\pcoh} )$ as the transition matrix associated with the sub-stochastic kernel $k$. In particular, we note $\Bool:= \deux^{\pcoh}$ -- for $\deux = \{0,1\}$, to highlight that it is the natural semantics of the Boolean datatype.
   \end{definition}
   Observe that the elements of $X^{\pcoh}$ are exactly the sub-probability distributions over $X$:  this construction was already used to interpret the $\PCF$-data type Nat in~\cite{EhrhardPT18}.
Definition~\ref{def:wstoch_to_pcoh} defines a full and faithful functor $\DistSubStoch \hookrightarrow \pcoh$. It is moreover easy to check that it is strong monoidal; in particular, observe that the the monoidal unit of $\Stoch$ (the singleton space) is sent to the monoidal unit of $\pcoh$ from Example~\ref{ex:pcs}. 
 This functor also preserves small limits -- the proof can be found in Lemma~\ref{lemma:limits_preservation_substoch_pcoh} in appendix --  and the combined functor $\Stoch^{\leq \omega} \rightarrow \SubStoch^{\leq \omega} \rightarrow \pcoh$ preserves small \emph{connected} limits.

 \subsection{The Layered Construction in Probabilistic Coherence spaces}
 We present here briefly -- following~\cite{crubille2017free} -- how the layered construction for the free exponential plays out in $\pcoh$. The monoidal $\unit$ is not terminal in $\pcoh$, thus it is not possible anymore  to take as in $\Stoch$ the free copointed object as $(X,X\rightarrow \unit)$ with a unique $X\rightarrow \unit$. Rather, since $\pcoh$ has cartesian products, the free copointed object generated by any PCS $A$ is $A_\bullet = (A \with \unit, \pi_2:A \with \unit \rightarrow \unit)$.

       \begin{notation}
Consistently with Section~\ref{sect:connection_chains}, we note $(\Mn n {\pcs A},\eq n :{\Mn n {\pcs A}} \rightarrow {\pcs A}^{\otimes n} )$ for the equaliser of the symmetries over $\pcs A^{\otimes n}$ in $\pcoh$. When talking about the set of all multisets over a set $X$, we write $\MnSet n X$.
\end{notation}
An equalizer of the symmetries on $(A\with \unit)^{\otimes n}$ have been computed in~\cite{crubille2017free}: we recall it below:
\begin{proposition}[from~\cite{crubille2017free}]\label{def:mn_pcoh} Let  $\pcs A$ be a PCS.
  Let us define the PCS $\Mn n {\pcs A}$ as
 
  $\web{\Mn n {\pcs A}} := \MnSet n {\web{\pcs A}}$, and $ \clique{\Mn n {\pcs A}} := \{x \in \RR_{\geq 0}^{\MnSet n {\web{\pcs A}}}\mid \eq n.x \in \pcs A^{\otimes n}\}$   where $ (\eq n)_{\mu,(a_1,\ldots,a_n}) := \delta_{\mu, [a_1,\ldots,a_n]}$.
   Then $\eq n$ is a PCS morphism from $\Mn n A$ to $A^{\otimes n}$, and  $(\Mn n A,\eq n)$ is an equalizer of the symmetries on $A^{\otimes n}$.
 \end{proposition}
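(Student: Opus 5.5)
The plan is to check the three ingredients of the statement in turn: that $\Mn n {\pcs A}$ is a PCS, that $\eq n$ is a PCS morphism equalising every symmetry, and that $(\Mn n {\pcs A},\eq n)$ has the universal property. The main tool throughout is the symmetrisation matrix $s := \frac{1}{n!}\sum_{\sigma\in S_n}\sigma$ on $\pcs A^{\otimes n}$.

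\emph{Well-definedness.} First, $\clique{\Mn n {\pcs A}}$ is the inverse image of $\clique{\pcs A^{\otimes n}}$ under the linear map $x\mapsto \eq n . x$. I will show it equals $\{\eq n^{T}. y \mid y\in \clique{(\pcs A^{\otimes n})^{\perp}}\}^{\perp}$, using $\scal{\eq n . x}{y} = \scal{x}{\eq n^{T} . y}$. Being an orthogonal of a set, it is automatically closed under bi-orthogonality.

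Next, the bounds $0<c(\mu)<\infty$. Finiteness follows because $(\eq n . x)_{(a_1,\ldots,a_n)} = x_\mu$ for every enumeration $(a_1,\ldots,a_n)$ of $\mu$. For positivity, take $\mu=[a_1,\ldots,a_n]$ and pick $u_i\in\clique{\pcs A}$ with $(u_i)_{a_i}>0$; set $u = \frac1n\sum_i u_i$, so that $u\in\clique{\pcs A}$ by convexity. Then $u^{\otimes n}$ is symmetric and lies in $\clique{\pcs A^{\otimes n}}$. Define $x_\nu := (u^{\otimes n})_{\text{any enumeration of } \nu}$, which is well defined by symmetry. Then $\eq n . x = u^{\otimes n}$, so $x\in\clique{\Mn n {\pcs A}}$ and $x_\mu>0$. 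By the definition of $\clique{\Mn n {\pcs A}}$, $\eq n$ is a morphism. It equalises the symmetries because $\sigma\cdot\eq n = \eq n$ at the matrix level: the entry $\delta_{\mu,[a_1,\ldots,a_n]}$ is invariant under permuting the $a_i$.

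\emph{Universal property.} Let $f:Z\to\pcs A^{\otimes n}$ satisfy $\sigma\circ f=f$ for all $\sigma$. Then every column $f.z$ is a symmetric vector, i.e.\ $f_{z,(a_1,\ldots,a_n)}$ depends only on $[a_1,\ldots,a_n]$. Define $f^\dagger_{z,\mu} := f_{z,(a_1,\ldots,a_n)}$ for any enumeration of $\mu$. Then $\eq n\cdot f^\dagger = f$ as matrices, and $f^\dagger$ is a PCS morphism since $\eq n . (f^\dagger . z) = f.z\in\clique{\pcs A^{\otimes n}}$. Uniqueness follows because $\eq n$ is injective as a matrix map: each $\mu$ has at least one enumeration, so $x_\mu$ is read off from $\eq n . x$.

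The main obstacle I expect is the bi-orthogonal closure, together with ensuring that the $\clique{\cdot}$ so defined is nontrivial in the sense of the definition of PCS. The positivity of $c(\mu)$ for multisets containing distinct points is exactly where the convex-combination trick $u=\frac1n\sum u_i$ is needed.
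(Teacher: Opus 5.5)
Your argument is correct and complete. The paper does not prove this proposition itself: it imports it from \cite{crubille2017free}. So there is no internal proof to compare against, and your write-up works as a self-contained verification.

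It also checks two things the statement silently presupposes when it calls $\Mn n{\pcs A}$ a PCS. The first is biorthogonal closure of $\clique{\Mn n{\pcs A}}$. This comes from your identity $\clique{\Mn n{\pcs A}}=\{\eq n^{T}.y\mid y\in\clique{\pcs A^{\otimes n}}^{\perp}\}^{\perp}$, obtained from $\scal{\eq n.x}{y}=\scal{x}{\eq n^{T}.y}$ by rearranging non-negative sums. The second is the bound $0<c(\mu)<\infty$, which your convex-combination trick handles.

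Structurally, your universal-property argument follows the same pattern as the paper's own $\Stoch$ equaliser proof in the appendix. There, a symmetric $f$ is factored through a left inverse of $\eq n$ whose composite with $\eq n$ is the symmetriser $\frac{1}{n!}\sum_\sigma\sigma$, and uniqueness comes from $\eq n$ being monic. In $\pcoh$ the corresponding left inverse is the averaging matrix $p$, with $(p.y)_\mu=\frac{1}{\multinomial{\mu}}\sum_{[a_1,\ldots,a_n]=\mu}y_{(a_1,\ldots,a_n)}$. Your entrywise $f^\dagger$ is exactly $p\circ f$ for symmetric $f$.

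Because you work entrywise, neither $p$ nor the symmetriser $s$ is actually needed. The sentence announcing $s$ as the main tool can therefore be dropped, since $s$ never appears in your argument.

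Finally, the case $n=0$ deserves a separate word, because $\frac1n\sum_i u_i$ is meaningless there. It is trivial, since $\Mn 0{\pcs A}$ is just $\unit$.
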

Applying Proposition~\ref{def:mn_pcoh} to PCSs of the form $A \with \unit$, we see that $\web{\Mn n A \with \unit} = \{\hat \nu \mid \nu \in \MnSet {\leq n} {\web{A}}\}$, where $\hat \nu$ is a notation to mean the multiset of size $n$ over $\web{A \with \unit} = \web{A}\cup\{\star\}$ obtained from a multiset $\nu$of size $\leq n$ by adding as many $\star$ as necessary. By abuse of notation, we will identify in the following $\web{\Mn n A \with \unit}$ and $\MnSet {\leq n} {\web{A}}$.
 
\begin{lemma}[From~\cite{crubille2017free}]\label{lemma:dd_morphisms_in_pcoh}
  The draw-and-delete morphisms over the copointed object $(\pcs A\with \unit, \pi_2:\pcs A \with \unit \rightarrow \pcs A)$ are the  $\DDn n: \Mn {n+1} { {\pcs A} \with \unit} \xrightarrow{} \Mn n {{\pcs A} \with \unit}$ given by
   $(\DDn n)_{\mu,\nu} := \delta_{\mu,\nu} $, for $\mu \in \MnSet{\leq n+1}{\web{\pcs A}}$ and $\nu \in \MnSet{\leq n}{\web{\pcs A}}$. 
\end{lemma}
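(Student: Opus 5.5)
The plan is to use the defining property of $\DDn n$ from Definition~\ref{def:dd_chain}: $\DDn n$ is the unique morphism with $\eq n \circ \DDn n = (\pcs A\with\unit)^{\otimes n}\otimes \pi_2 \circ \eq{n+1}$. Since $\eq n$ is a monomorphism (it is an equaliser), it suffices to check that the proposed matrix $D$, with $D_{\mu,\nu} = \delta_{\mu,\nu}$, satisfies this equation. We must also check that $D$ is a genuine PCS morphism $\Mn{n+1}{\pcs A\with\unit}\rightarrow\Mn n{\pcs A\with\unit}$. Equivalently, we can exhibit $D$ as the mediating morphism: the composite $\dd_n\circ \eq{n+1}$ equalises the symmetries on $(\pcs A\with\unit)^{\otimes n}$, and the equaliser of Proposition~\ref{def:mn_pcoh} factors it uniquely.

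\textbf{Step 1: compute the matrix of $\dd_n = (\pcs A\with\unit)^{\otimes n}\otimes\pi_2$.} The projection $\pi_2$ has matrix $(\pi_2)_{c,\star} = \delta_{c,\star}$ for $c\in\web{\pcs A}\sqcup\{\star\}$. Hence $(\dd_n)_{(c_1,\ldots,c_{n+1}),(b_1,\ldots,b_n)} = \delta_{c_{n+1},\star}\prod_{i\le n}\delta_{c_i,b_i}$.

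\textbf{Step 2: compose with $\eq{n+1}$.} This gives, for a multiset $\hat\mu$ of size $n+1$ over $\web{\pcs A}\cup\{\star\}$ (identified with $\mu\in\MnSet{\le n+1}{\web{\pcs A}}$), the entry $(\dd_n\circ\eq{n+1})_{\mu,(b_1,\ldots,b_n)}$. This entry equals $1$ exactly when $[b_1,\ldots,b_n,\star] = \hat\mu$, and $0$ otherwise. In that case $[b_1,\ldots,b_n]$ is the size-$n$ completion $\hat\mu'$ of $\mu$ with one fewer $\star$. This requires $\hat\mu$ to contain at least one $\star$, i.e. $|\mu|\le n$, and then $\mu' = \mu$ as a multiset over $\web{\pcs A}$.

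On the other side, $(\eq n\circ D)_{\mu,(b_1,\ldots,b_n)} = \sum_\nu \delta_{\mu,\nu}\delta_{\hat\nu,[b_1,\ldots,b_n]}$. This equals $1$ exactly when $|\mu|\le n$ and $\hat\mu^{(n)} = [b_1,\ldots,b_n]$. The two matrices therefore agree entrywise. In particular, a $\mu$ of size exactly $n+1$ has a zero row on both sides; this is consistent with $D_{\mu,\nu}=\delta_{\mu,\nu}$, since no $\nu$ of size $\le n$ equals such a $\mu$.

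\textbf{Step 3: check that $D$ is a PCS morphism.} This is the only non-mechanical point, and it is where I expect the main obstacle. The obstacle is avoided by invoking the universal property rather than verifying by hand. Since $\dd_n\circ\eq{n+1}$ is a $\pcoh$ morphism that equalises the symmetries, Proposition~\ref{def:mn_pcoh} gives a unique PCS morphism $g$ with $\eq n\circ g = \dd_n\circ\eq{n+1}$.

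It remains to identify $g$ with $D$. The matrix $\eq n$ is injective as a linear map on $\Rp{\MnSet{\le n}{\web{\pcs A}}}$, because distinct multisets have disjoint sets of enumerations. Steps 1 and 2 give $\eq n\circ D = \eq n\circ g$ as matrices, so $g = D$. Alternatively, a direct check works: for $x\in\clique{\Mn{n+1}{\pcs A\with\unit}}$ we have $\eq n\cdot(D\cdot x) = \dd_n\cdot(\eq{n+1}\cdot x)\in\clique{(\pcs A\with\unit)^{\otimes n}}$. By the description of $\clique{\Mn n{\pcs A\with\unit}}$ in Proposition~\ref{def:mn_pcoh}, this means $D\cdot x\in\clique{\Mn n{\pcs A\with\unit}}$.
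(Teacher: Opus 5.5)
Your proof is correct. The paper gives no proof of this lemma; it imports it from~\cite{crubille2017free}, so there is no in-paper argument to compare against.

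Your direct verification is the natural one. You compute both $\dd_n\circ\eq{n+1}$ and $\eq n\circ D$ entrywise and find that each equals $\delta_{\hat\mu,[b_1,\ldots,b_n,\star]}$. You then identify $D$ with the mediating morphism using injectivity of $\eq n$ (each coordinate of $\eq n\cdot x$ is a single coordinate $x_{[a_1,\ldots,a_n]}$), or equivalently using the clique description of $\Mn n{\pcs A\with\unit}$ in Proposition~\ref{def:mn_pcoh}.

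Three small remarks:
\begin{itemize}
\item You assert without comment that $\dd_n\circ\eq{n+1}$ equalises the symmetries. This can be read off your Step~2 formula directly, since that entry depends only on the multiset $[b_1,\ldots,b_n]$.
\item The ``alternative'' direct check at the end of Step~3 is really all you need, and it carries no genuine obstacle. Membership in $\clique{\Mn n{\pcs A\with\unit}}$ is by definition the condition $\eq n\cdot(D\cdot x)\in\clique{(\pcs A\with\unit)^{\otimes n}}$. Your equation $\eq n\circ D=\dd_n\circ\eq{n+1}$ gives this immediately.
\item You silently, and rightly, read the codomain of $\pi_2$ in the statement as $\unit$ rather than $\pcs A$; the statement has a typo there.
\end{itemize}
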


From Proposition~\ref{def:mn_pcoh} and Lemma~\ref{lemma:dd_morphisms_in_pcoh}, a concrete expression for 
$!\pcs A$ -- the free commutative comonoid over $\pcs A$ -- can be computed as the limit of the approximations chain.

 \begin{proposition}[From~\cite{crubille2017free}]\label{prop:limit_morphisms}
   For a PCS $\pcs A$, the PCS $!\pcs A$ is specified by: $\web {!\pcs A}$ is the set of all finite multisets over $\web A$, and $\clique{! \pcs A} = \{x^! \mid x \in \clique {\pcs A}\}^{\perp \perp}$, where $x^!$ is the promotion of $x \in \clique A$, defined as $x^!_{[a_1,\ldots,a_n]} = \prod_{1 \leq i \leq n} x_{a_i}$. The limit morphisms $\rho_{\infty,n}:!\pcs A \rightarrow (\pcs A \with \unit)^{\otimes n}$ are defined as $(\rho_{\infty,n})_{\mu, \nu} := \delta_{\mu,\nu}$.
 \end{proposition}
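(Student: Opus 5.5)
We need to show that the PCS $!\pcs A$, with web the finite multisets over $\web{\pcs A}$, elements $\{x^! \mid x \in \clique{\pcs A}\}^{\perp\perp}$ and the maps $\rho_{\infty,n}$, is a limit of the draw-and-delete chain over $\pcs A \with \unit$. By Proposition~\ref{def:mn_pcoh} and Lemma~\ref{lemma:dd_morphisms_in_pcoh}, $\Mn n {\pcs A\with\unit}$ has web $\MnSet{\leq n}{\web{\pcs A}}$, and $\DDn n$ is the restriction matrix $\delta_{\mu,\nu}$. So the chain is a sequence of PCSs on increasing webs, linked by coordinate restriction.

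We first compute the limit concretely. Take $L$ with web $\bigcup_n \MnSet{\leq n}{\web{\pcs A}}$, the set of all finite multisets. Its elements are the $z \in \RR_{\geq 0}^{\web L}$ whose restriction $z_{\mid \leq n}$ lies in $\clique{\Mn n {\pcs A\with\unit}}$ for every $n$, and the projections are restriction matrices. Checking that $L$ is a PCS is routine. Closure under bi-orthogonality follows because $\clique L$ is an intersection of sets of the form $\{z \mid z_{\mid\leq n}\in P_n\}$, each of which is of the form $B^{\perp}$ (extend $P_n^{\perp}$ by zeros). For the bounds $0<c_L(\mu)<\infty$: finiteness comes from $c_{\Mn n{}}$, and positivity from the presence of the elements $x^!$ shown below. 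The universal property is then direct. A cone $(f_n:Z\to \Mn n {\pcs A\with\unit})$ is compatible with restrictions, so the matrices glue into a single $f$ with $f_{c,\mu}=(f_n)_{c,\mu}$ for $|\mu|\leq n$. Since $(f\cdot z)_{\mid\leq n}=f_n\cdot z$ is an element, $f\cdot z\in\clique L$. Uniqueness holds because the projections are jointly injective on matrices.

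The remaining and main step is to prove $\clique L=\{x^!\}^{\perp\perp}$. We plan to show both inclusions. First, $x^!\in\clique L$: the restriction of $x^!$ to multisets of size $\leq n$ is precisely the image of $(x,1)^{\otimes n}\in (\pcs A\with\unit)^{\otimes n}$ under the factorisation through $\eq n$, so it is an element. Since $\clique L$ is bi-orthogonally closed, this gives $\{x^!\}^{\perp\perp}\subseteq \clique L$.

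For the converse, the key fact is that $\clique{\Mn n{\pcs A\with\unit}}$ is the bi-orthogonal of the truncated promotions $\{(x^!)_{\mid\leq n}\}$. We would show this with a polarisation argument: symmetric tensors $\sum$ of $y_1\otimes\dots\otimes y_n$ restricted to multisets are generated by $(\sum t_i y_i)^{\otimes n}$. To control the scalars, we would use the known decomposition $(\pcs A\with\unit)^{\otimes n}$ and the barycentric trick of~\cite{crubille2017free}, namely writing $(x,\lambda)$ with $\lambda\leq 1$ via convex combinations. This is the delicate part, and we would borrow the corresponding lemma from~\cite{crubille2017free} rather than reprove it.

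Given this fact, take $z\in\clique L$ and a test $u\in\{x^!\}^{\perp}$. We need $\scal z u\leq 1$. Truncating $u$ to size $\leq n$ gives $u_{\leq n}$, which is orthogonal to every $(x^!)_{\mid\leq n}$ because all coefficients are nonnegative. So $u_{\leq n}\in \clique{\Mn n{}}^{\perp}$, hence $\scal{z_{\mid\leq n}}{u_{\leq n}}\leq 1$. Monotone convergence as $n\to\infty$ then yields $\scal z u\leq 1$, completing the identification.
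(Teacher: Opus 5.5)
The paper gives no proof of this proposition; it imports it from~\cite{crubille2017free}. Your construction of the concrete chain limit $L$ is correct: it has all finite multisets as web, its elements are the $z$ with $z_{\mid\leq n}\in\clique{\Mn n{\pcs A\with\unit}}$ for every $n$, and its projections are restrictions. The inclusion $\{x^!\}^{\perp\perp}\subseteq\clique L$ is also correct.

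The gap is the ``key fact'' your converse inclusion relies on. It is false, so it cannot be borrowed from anywhere. Take $\pcs A=\Bool$ and $n=2$, and let $z=\frac12 e_{[\ttrue,\ffalse]}$. Then $z\in\clique{\Mn 2{\Bool\with\unit}}$, because $\eq 2\cdot z=\frac12(e_\ttrue\otimes e_\ffalse+e_\ffalse\otimes e_\ttrue)$ is a convex combination of elements of $(\Bool\with\unit)^{\otimes 2}$. Now let $u=4e_{[\ttrue,\ffalse]}$. For all $x\in\clique\Bool$, $\scal{x^!}{u}=4x_\ttrue x_\ffalse\leq(x_\ttrue+x_\ffalse)^2\leq 1$, so $u\in\{x^!\}^\perp$ and $u_{\leq 2}=u$. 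Yet $\scal z u=2$.

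This is exactly the failure of finite De Finetti: drawing without replacement from the urn $[\ttrue,\ffalse]$ is exchangeable but not a mixture of i.i.d.\ laws. Polarisation only writes symmetric tensors as \emph{signed} combinations of $y^{\otimes n}$, and no convexity trick controls those constants. Nor is the problem confined to small $n$. Average over positions the products of $\lfloor n/2\rfloor$ copies of $(e_\ttrue,1)$ and $n-\lfloor n/2\rfloor$ copies of $(e_\ffalse,1)$. This shows that $\frac{\lfloor n^2/4\rfloor}{n(n-1)}e_{[\ttrue,\ffalse]}\in\clique{\Mn n{\Bool\with\unit}}$, and this coefficient is $>\frac14$ for every $n\geq 2$. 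Hence $u_{\leq n}\notin\clique{\Mn n{\Bool\with\unit}}^\perp$ at \emph{every} level, and your truncate-then-pair step fails for this $u$ however large $n$ is.

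What is missing is an asymptotic argument in which a high level $N$ constrains the low coordinates $z_{\mid\leq k}$. Fix $u\in\{x^!\}^\perp$ and $k\leq N$, and let $w$ be $u_{\mid\leq k}$ viewed on $\web{\Mn N{\pcs A\with\unit}}$.
\begin{enumerate}
\item For $z'\in\clique{\Mn N{\pcs A\with\unit}}$ we have $\scal{z'}{w}=\scal{\eq N\cdot z'}{w'}$, where $w'_{\vec a}=w_{[\vec a]}/\multinomial{[\vec a]}$.
\item A multiset $\nu$ of size $j$, padded with $N-j$ stars, has $\binom{N}{j}\multinomial{\nu}$ enumerations. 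So on a product $\bigotimes_i(x_i,\lambda_i)$, with $x_i\in\clique{\pcs A}$ and $\lambda_i\leq 1$, the test $w'$ takes value at most $\sum_{j\leq k}\binom{N}{j}^{-1}\sum_{|J|=j}U_j((x_i)_{i\in J})$. Here $U_j$ is the symmetric multilinear form of the degree-$j$ part of $u$.
\item Bound the sum over injective $j$-tuples by the sum over all $j$-tuples. This gives at most $c_{N,k}\scal{\bar x^!}{u}\leq c_{N,k}$, where $\bar x=\frac1N\sum_i x_i\in\clique{\pcs A}$ and $c_{N,k}=\frac{N^k(N-k)!}{N!}$.
\item Since $\eq N\cdot z'$ lies in the bi-orthogonal of the products, this yields $\scal{z_{\mid\leq k}}{u_{\mid\leq k}}=\scal{z_{\mid\leq N}}{w}\leq c_{N,k}$ for every $N\geq k$.
\item Let $N\to\infty$, so that $c_{N,k}\to 1$. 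Then let $k\to\infty$ by monotone convergence to get $\scal z u\leq 1$.
\end{enumerate}
Any correct proof has to use the whole chain in this way rather than any single level. That is where the work in the result cited from~\cite{crubille2017free} lies.
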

We can observe that $!\pcs A$  is built as the bi-orthogonality closure of all the \emph{promotions} of elements in $\pcs A$. For instance, for an element $x$ in ${\deux}^{\pcoh}$ seen as a (sub)-probability distribution over $\{\ttrue,\ffalse\}$, its promotion $x^!$ should be understood as a program of type Bool that can be called several times, and at each call returns a sample drawn independently in the probability distribution $x$. 
   Promotions are enough to interpret probabilistic $\PCF$ into $\pcoh$, but the biorthogonality closure enforces the existence of more elements in $!\Bool$ than just promotions: we'll investigate the nature of those elements that \emph{are not} promotions in the last section.

   \subsection{\texorpdfstring{Connecting the De Finetti construction and the free exponential construction in $\pcoh$.}{Connecting the De Finetti construction and the free exponential construction in probabilistic coherence spaces.}}\label{subsect:connecting_in_pcoh}
   For a countable discrete space $X$ in $\Stoch^{\leq \omega}$, the image of $X_{\bullet}$ by the embedding $\Stoch^{\leq \omega} \rightarrow \pcoh$ \emph{is not isomorphic as copointed object} to the free copointed object  generated by $X^{\pcoh}$ in $\pcoh$, which is $(X^{\pcoh} \with \unit, \pi_2)$. Observe that this does not contradict our result that $\Stoch^{\leq \omega} \rightarrow \pcoh$ preserves \emph{connected} limits, because the free copointed object is a non-connected limit.

   It means that two different draw-and-delete chains can be built over $X^{\pcoh}$: the one generated by $(X_{\bullet})^{\pcoh}$, and the one generated by  $(X^\pcoh)_{\bullet} = (X^\pcoh \with \unit,\pi_2)$. Since $\Stoch^{\leq \omega} \rightarrow \pcoh$ preserves connected limits and is strong monoidal, the draw-and-delete chain over ${(X_\bullet)}^\pcoh$ is simply the image of $\DD^{DF}$ in $\pcoh$, that we will call the \emph{$\pcoh$-De Finetti chain}. 
   By our Proposition~\ref{th:chain_morphism_from_copointed_morphisms}, there exists a chain morphism from this chain to the draw-and-delete chain generated by the free copointed object $X^{\pcoh} \with \unit$.
We compute a concrete expression of this chain morphism in Proposition~\ref{prop:high-level_chain_morph_in_pcoh_bis} below.
      \begin{notation}
  For $X$ any set, and $\mu$ a finite multiset of elements of $X$, we note $\multinomial \mu$ for the multinomial coefficient of $\mu$, defined as the number of enumeration of $\mu$, i.e.~the number of different $n$-uples $(a_1,\ldots,a_n)$ such that $\mu = [a_1,\ldots,a_n]$.
 
  \end{notation}

\begin{proposition}\label{prop:high-level_chain_morph_in_pcoh_bis}
      For $X$ a discrete countable measurable space,
      the unique morphism of copointed object 
      $\alpha:= (X_{\bullet}^{\pcoh}) \rightarrow X^{\pcoh} \with \unit$
     induces a canonical chain morphism: $$(\Mn n \alpha : {\MnCat {\pcoh} n {X^{\pcoh}}}) \rightarrow \MnCat  {\pcoh} n {X^{\pcoh}\with \unit})_{n \in \NN}$$ from the $\pcoh$-De Finetti chain to the approximations $\DD^{!}$ chain over $X^{\pcoh}$, that can be concretely expressed as:
$$(\Mn n \alpha)_{\mu,\nu} = \multinomial{\mu- \nu} \text{ if }\nu \subseteq \mu, \, 0 \text{ otherwise.} .$$
\end{proposition}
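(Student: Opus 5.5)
The plan is to take the existence of the chain morphism directly from Proposition~\ref{th:chain_morphism_from_copointed_morphisms}, and then compute the components $\Mn n \alpha$ by the characterising equation of Definition~\ref{def:mn_alpha}:
\[
\eq n^{X^{\pcoh}\with\unit} \circ \Mn n \alpha = \alpha^{\otimes n} \circ \eq n^{X^{\pcoh}}.
\]
Since $\eq n^{X^{\pcoh}\with \unit}$ is an equaliser, it is a monomorphism. So it suffices to exhibit a matrix $M$ with $\eq n \circ M = \alpha^{\otimes n}\circ \eq n$ and to check that $M$ is the claimed one.

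\textbf{Step 1: the copointed morphism $\alpha$.} First I make $\alpha$ explicit. The copoint of $(X_\bullet)^{\pcoh}$ is the image of the discard kernel, i.e.\ the matrix $w_{x,\star}=1$ for all $x$. Writing $\langle \id, w\rangle : X^{\pcoh}\rightarrow X^{\pcoh}\with \unit$ for the pairing into the cartesian product, we get $\alpha_{x,b}=\delta_{x,b}$ for $b\in X$ and $\alpha_{x,\star}=1$. This is the unique copointed morphism by freeness of $(X^{\pcoh}\with\unit,\pi_2)$, and it satisfies $\pi_2\circ\alpha=w$.

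\textbf{Step 2: the equalisers.} I work with the $\pcoh$ equalisers of Proposition~\ref{def:mn_pcoh} on both sides, namely $(\eq n)_{\mu,(a_1,\ldots,a_n)}=\delta_{\mu,[a_1,\ldots,a_n]}$. This is legitimate because the embedding preserves connected limits, so the image of the De Finetti equaliser is isomorphic to $\MnCat{\pcoh}{n}{X^{\pcoh}}$, which is the object in the statement.

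\textbf{Step 3: the computation.} Fix a multiset $\mu$ of size $n$ over $X$ and a tuple $(b_1,\ldots,b_n)\in (X\sqcup\{\star\})^n$. Then
\[
(\alpha^{\otimes n}\circ \eq n)_{\mu,(b_i)} = \sum_{(a_i),\,[a_i]=\mu}\ \prod_i \alpha_{a_i,b_i}.
\]
Each product equals $1$ exactly when $a_i=b_i$ at every non-$\star$ position, and equals $0$ otherwise. Let $\nu$ be the multiset of the non-$\star$ entries of $b$. The surviving enumerations are then the enumerations of $\mu-\nu$ placed on the $\star$ positions, provided $\nu\subseteq\mu$. Hence the entry is $\multinomial{\mu-\nu}$ if $\nu\subseteq\mu$, and $0$ otherwise.

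This value depends on $b$ only through $[b]=\hat\nu$. Consequently the matrix $M_{\mu,\nu}:=\multinomial{\mu-\nu}\,[\nu\subseteq\mu]$ satisfies $(\eq n\circ M)_{\mu,b}=M_{\mu,[b]}$, which equals the computed entry. Monicity of $\eq n$ then gives $\Mn n\alpha=M$.

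\textbf{Main obstacle.} I expect the delicate point to be bookkeeping, not any difficulty in the argument itself. One must be careful that the representative of the De Finetti equaliser uses delta coefficients rather than Stoch's normalised $1/\multinomial{\mu}$ coefficients; the other convention would introduce a factor $1/\multinomial\mu$. One must also handle correctly the identification of $\web{\Mn n{X^{\pcoh}\with\unit}}$ with $\MnSet{\leq n}{X}$.
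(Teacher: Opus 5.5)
Your proposal is correct and follows the same route as the paper's sketch. The paper likewise identifies $\alpha$ as the pairing $\langle X^{\pcoh},(\_)^{\pcoh}\rangle$, takes existence of the chain morphism from Proposition~\ref{th:chain_morphism_from_copointed_morphisms}, and computes $\Mn{n}{\alpha}$ from its defining equation in Definition~\ref{def:mn_alpha} using the concrete $\pcoh$ equalisers of Proposition~\ref{def:mn_pcoh}; your entrywise count supplies the details the paper leaves implicit, and your observation that $(\eq{n}\circ M)_{\mu,b}=M_{\mu,[b]}$ already fixes the matrix of $\Mn{n}{\alpha}$, so you do not need monicity or a separate check that $M$ is a PCS morphism.
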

        \begin{proofsketch}
          First observe that 
      $\alpha = \langle X^{\pcoh},(\_)^{\pcoh}\rangle:X^{\pcoh} \rightarrow  X^{\pcoh} \with \unit$, where $(\_)^{\pcoh}$ stands for the image under $\Stoch^{\leq \omega} \rightarrow \pcoh $ of the unique morphim $X \rightarrow \unit$ in $\Stoch$.      
      The $(\Mn n \alpha)_{n \in \NN}$ can be computed by combining: their definitions as the unique morphisms that make some diagram commutes in Definition~\ref{def:mn_alpha}, and the concrete description of the equaliser of $n!$ symmetries in $\pcoh$ in Proposition~\ref{def:mn_pcoh}.
    
          \end{proofsketch}

        Summing up, we have shown in Proposition~\ref{prop:high-level_chain_morph_in_pcoh_bis} that we can formally connect in $\pcoh$ the $\pcoh$-De Finetti DD-chain for $X$ and the layered chain for building the free exponential $!X^{\pcoh}$. However, it is not known whether the $\pcoh$-De Finetti chain has a limit even for $X=\deux$, so it seems complicated to have a version of De Finetti's theorem in $\pcoh$. 
This is why in the following we will work with $\Icones$~\cite{ehrhard2025integration}, the conservative extension of $\pcoh$ designed to deal with continuous data types. The intuition is that since the limit of the De Finetti chain in $\Stoch$ is the non-discrete space $[0,1]$, the limit of the $\ICones$-De Finetti chain for $X=\deux$ should be the interpretation of the continuous data type $[0,1]$.

\section{Integrable cones} \label{sec:icones}

In this Section, we present the category of integrable cones $\ICones$ as defined in~\cite{ehrhard2025integration}. It is a model of intuitionistic linear logic where it is possible to interpret a \emph{higher-order} language with continuous probabilistic primitives, and which is a conservative extension of $\pcoh$. The long-term hope of designing an extension of $\pcoh$ to deal with continuous probabilities was to obtain continuous counterparts to the full-abstraction results for discrete probabilistic higher-order languages~\cite{EhrhardPT18,ehrhard2019probabilistic} that hold in $\pcoh$, but this is however still an open problem.

All the definitions and results presented in this Section comes from~\cite{ehrhard2025integration}.

\subsection{The three categorical layers}
This definition is done via three successive layers, each a refinement of the previous one: $\Cones$, $\MCones$ and finally $\Icones$. $\Cones$ manages the \emph{algebraic} and \emph{order} structure; it can be thought as an extension of $\pcoh$ that allows for web-less, uncountable spaces. $\Mcones$ ensures that measurability at ground types is preserved, and $\Icones$ adds a notion of \emph{integration} on all objects required to interpret languages with call-by-value evaluation strategy. There are canonical fully faithful functors $\pcoh \rightarrow \Cones,\MCones,\ICones$. There are also canonical faithful functors $\SubStoch \rightarrow \Cones,\MCones$, but those are not full. The additional constraints on $\ICones$ equip it with a \emph{fully faithful} functor $\SubStoch(\Pol) \rightarrow \ICones$.
\paragraph{\texorpdfstring{The Category $\Cones$}{The category of cones and bounded Scott-continuous functions}}
\begin{definition}[\cite{ehrhard2025integration}, Cone]
    A \emph{cone} $C$ is an $\RRpos$-semimodule such that:
    \begin{enumerate}
        \item for all $x_1, x_2, y \in C$, $x_1 + y = x_2 + y \implies x_1 = x_2$, and 
        \item for all $x_1, x_2 \in C$, $x_1 + x_2 = 0 \implies x_1 = x_2 = 0$;
    \end{enumerate}
    together with  a function $\norm{\cdot}_C : C \to \RRpos$ called its \emph{norm} satisfying:

    \begin{itemize}
         \item \emph{Homogeneity}: for all $x \in C$ and $\lambda \in \RRpos$, $\norm{\lambda x}_C = \lambda \norm{x}_C$
        \item \emph{Separation}: for all $x \in C$, $\norm{x}_C = 0 \implies x = 0$
         \item \emph{Triangle inequality}: for all $x, y \in C$, $\norm{x + y} \le \norm{x}_C + \norm{y}$
        \item \emph{Positivity}: for all $x, y \in C$, $\norm{x} \le \norm{x + y}$
        \item \emph{Continuity}: $\norm{\cdot}_C$ is $\omega$-continuous, with respect to the partial order $\le_C$ defined by
        $x \le_C y$ when $\exists x' \in C,\; x + x' = y$.
        \end{itemize}
\end{definition}

We denote by $\ball C \defeq \set{ x \in C \suchthat \norm{x}_C \le 1}$ the unit ball of $C$.  The cone axioms make $\ball C$ convex, down-closed and  $\omega$-closed (the order $\le_C$ allows to use the notion of $\omega$-continuity coming from domain theory).
Cones morphisms are linear and $\omega$-continuous functions between cones, that moreover preserve the unit ball:
\begin{definition}[\cite{ehrhard2025integration}]
  If $P$ and $Q$ are cones,  a \emph{cone morphism} $f:P \rightarrow Q$ is a linear and $\omega$-continuous function $P \to Q$, which moreover is \emph{1-bounded}, i.e.~sends the unit ball of $P$ into the unit ball of $Q$.

 The category $\Cones$ has cones as objects and cone morphisms as morphisms.
\end{definition}

\paragraph{\texorpdfstring{The category $\MCones$}{The category of measurable cones}}

The category $\MCones$ of \emph{measurable} cones is a refinement of $\Cones$, designed to enforce that morphisms behave well with respect to measurability at ground types.

$\MCones$ is parameterised by the choice of $\Ar$, an \emph{essentially small}\footnote{The construction in~\cite{ehrhard2025integration} is done with a small category $\Arity$. We show in Proposition~\ref{prop:essentially_small_is_ok} in appendix how we can relax this hypothesis to essentially small.} full subcategory of $\Meas$ that we'll call \emph{arity category}, closed under cartesian products and containing the one-point measurable space, that we note $0$. 
In the following, we will always take $\Ar$ to be the category $\Pol$ of standard Borel spaces. $\Pol$ is indeed essentially small, since between any two uncountable standard Borel spaces there is a measurable isomorphism (see e.g.~\cite{kechris}, Theorem 15.6). 

\begin{definition}\label{def:measurability_structure}
    A \emph{measurable cone} is a cone $C$ equipped for each object $R$ of $\Ar$ with a set of $R$-tests $\mtest R^C \subseteq \text{Hom}_{\Set}(R, (\text{Hom}_{\Cones}(C, \RR_{\geq 0}))$, satisfying:
     \begin{itemize}
        \item \emph{Measurability}: for all $R \in \Arity$, $m \in \mtest R$ and $x \in \ball C$, $m(\cdot, x) \in \Meas(R, \interval{0}{1})$
        \item \emph{Composability}: for all $R,S \in \Arity$, $m \in \mtest R$ and $\phi \in \Ar(S, R)$, $m(\phi(\cdot), \cdot) \in \mtest S$
        \item \emph{Separation}: for all $x, y \in C$, $\forall m \in \mtest 0,\; m(x) = m(y) \implies x = y$
        \item \emph{Respect of the norm}: for all $x \in C$,
            $\norm{x} = \sup_{m \in \mtest 0 \setminus 0} \frac{m(x)}{\norm{m}}.$
    \end{itemize}
A \emph{measurable $R$-path on $C$} is a function $\beta : R \to C$ such that \begin{enumerate}\item $\beta(R)$ is bounded in $C$, and moreover\item  for every object $S$ of $\Ar$, and all tests $m \in \mathcal{M}^C_S$, the function
        $(s,r \in S \times R) \mapsto m(s, \beta(r)) \in \MeasC(S \times R, \RRpos).$\end{enumerate} We note $\Path(R,C)$ for the set of all measurable $R$-paths on $C$.
  
\end{definition}

The category $\MCones$  has measurable cones as objects and as morphisms the  $f \in \Cones(P, Q)$ that \emph{preserve the measurability of paths} i.e.~  $\forall \beta \in \Path(R, P)$, $f \circ \beta \in \Path(R,Q)$. 

There is a canonical way to make the space of finite measure over any measurable space into a measurable cone.
\begin{definition}\label{def:meas_paths}
  Let $(X,\Sigma_X)$ a measurable space. We note $X^{\Cones}$ for the cone whose underlying space is $\FMeas{(X)}$, the algebraic operations $+,\cdot$  are defined pointwise, the norm as $\norm \distrone = \distrone(X)$. We note $X^{\MCones}$ for the measurable cone with $X^{\Cones}$ as underlying cone, and whose  measurability tests are:
 
  $$\mtest d = \{f_U:(a,\distrone) \in d \times \FMeas (X) \mapsto \distrone(U) \mid U \in  \Sigma_X \}. $$
  We can moreover extend $( \cdot)^{\MCones}$ in a faithful functor  $\SubStoch \rightarrow \Mcones$ by defining  $f^{\MCones} := (\distrone \in \FMeas(X)  \mapsto f_{\star}(\mu) \in \FMeas(Y))$ for $f$ a sub-stochastic kernel $X \rightarrow Y$. 
\end{definition}
For $X,Y \in \Arity$, the morphisms $f$ in $\MCones(X^{\MCones},Y^{\MCones})$ coincide with the measurable functions $\Giry X \rightarrow  \Giry^{\leq 1} Y$ that moreover are \emph{linear} -- where $\Giry^{\leq 1}$ is the Panangaden monad~\cite{panangaden1998probabilistic}, the analogue of the Giry monad $\Giry$ for the \emph{sub-probability distributions}. If $X$ is a \emph{countable discrete} measurable space, the linearity constraint implies that $f$ is entirely described by its value on Dirac distributions: in other terms $f$ is the push-forward of a sub-stochastic kernel $X \rightarrow Y$. It is however not always the case when considering an arbitrary $X$, thus the embedding $\SubStoch(\Arity) \rightarrow \Mcones$ is not full.

\paragraph{\texorpdfstring{The category $\ICones$}{The category of integrable cones}}
An integrable cone is a measurable cone $C$ such that every measurable path $R \rightarrow C$ can be \emph{integrated} with respect to any finite measure on $R$, producing an element in $C$.

\begin{definition}
    Let $C$ be a measurable cone, $R \in \Ar$, $\beta : R \to C$ a measurable path and $\mu \in \FMeas(R)$, the set of finite measures over $R$. An \emph{integral of $\beta$ over $\mu$} is an element $x \in C$ such that
        $\forall m \in \mathcal{M}^C_0,\; m(x) = \int m(\beta(r)) \mu(\dif r)$.
    By the axioms on the measurability tests of $C$, such an integral is unique: we note it $\cint^C \beta \dif\mu$, or sometimes $\cint^C_{r \in R}\beta(r)\cdot \mu(dr)$.
    A measurable cone $C$ is \emph{integrable} if any measurable $R$-path on $C$ is integrable over any finite measure on $R$.
  \end{definition}
 
    The category $\ICones$ has integrable cones as objects and as morphisms the $f \in \MCones(P, Q)$ that \emph{preserve integrals} i.e.~for every measurable path $\beta : R \to P$ and $\mu \in \FMeas(P)$, $f\left(\cint^P \beta \dif\mu\right) = \cint^Q (f \circ \beta) \dif\mu$.

\begin{example}
  A simple example of integrable cone is the cone $\unit$ of non-negative reals, that will be the unit of the monoidal product.  Addition and scalar multiplication are as usual, and $\norm{\lambda} = \lambda$.
For $R \in \Arity$, there is only one $R$-test: the function $((r,x) \in R\times \RR_{\geq 0} \mapsto x \in \RR_{\geq 0})$. There, integration in the sense of $\Icones$ coincides with integration in the sense of measurable spaces:
  let $e \in \Arity$, $\mu \in \FMeas(e)$, and $\gamma \in \pathes e {\mathbf 1}$.
  Then $\gamma: e \rightarrow \RR_{\geq 0}$ is a measurable (and integrable) function, and $\cint^{\unit} \gamma(t)\cdot \mu(dt) = \int \gamma(t) \cdot \mu(dt)$. 
    \end{example}
    For any measurable space $X$, $X^{\MCones}$ is an integrable cone -- that we will thus note also $X^{\ICones}$. To understand the notion of integration on this cone, observe first that the measurable paths  $R \rightarrow X^{\MCones}$
     are exactly the bounded kernels from $R$ to $X$.

    Then we can see that integrating a path consists in pushing-forward the corresponding kernel.
\begin{lemma}\label{lemma:meas_paths_are_kernels2}
  Let $R\in \Arity$, $X$ any measurable space.

  The measurable \emph{1-bounded} paths $\gamma:R\rightarrow X^\MCones$
  are exactly the morphisms in $\SubStoch(R,X)$, and for $\mu$ a sub-probability measure on $R$, $\cint_{R}^{\FMeas(X)} \gamma\cdot d\mu = \gamma_{\star}(\mu)$, where $\gamma_{\star}$ is the usual push-forward lifting of the kernel $\gamma$.
\end{lemma}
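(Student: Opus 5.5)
We have to prove two things: that the $1$-bounded measurable paths $R \to X^{\MCones}$ are exactly the sub-stochastic kernels from $R$ to $X$, and that integrating such a path amounts to the push-forward.

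\textbf{Step 1: paths are kernels.} Let $\gamma : R \to \FMeas(X)$ be a measurable path with $\gamma(R) \subseteq \ball{X^{\MCones}}$. Each $\gamma(r)$ is then a measure with total mass $\le 1$. Instantiate the path condition with $S = 0$, the one-point space, and with the test $f_U$. This shows that $r \mapsto \gamma(r)(U)$ is measurable for every $U \in \Sigma_X$, so $\gamma$ is a sub-stochastic kernel.

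\textbf{Step 2: kernels are paths.} Conversely, let $k \in \SubStoch(R,X)$. Its image lies in the unit ball, hence is bounded. Since $R$ is a fixed object, the family $\mtest S$ is literally $\{(s,\mu) \mapsto \mu(U)\}$. So for any $S \in \Ar$, the function $(s,r) \mapsto f_U(s, k(r)) = k(r)(U)$ does not depend on $s$. It is measurable as the composite of the measurable projection $S \times R \to R$ with the measurable map $r \mapsto k(r)(U)$. Hence $k$ is a measurable $R$-path.

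\textbf{Step 3: integration is push-forward.} For integration, set $\nu := \gamma_\star(\mu)$, that is, $\nu(U) = \int \gamma(r)(U)\,\mu(dr)$. This is a finite measure: countable additivity follows from monotone convergence, and $\nu(X) \le \mu(R) < \infty$. The $0$-tests of $X^{\MCones}$ are exactly the $f_U$. The defining equation of the integral therefore reads $m(\nu) = \int m(\gamma(r))\,\mu(dr)$ for all $m \in \mtest 0$, and this is literally the definition of $\nu$. By uniqueness of integrals, which follows from the separation axiom, $\cint \gamma\, d\mu = \nu$.

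There is no real obstacle in this argument. The only point needing care is to check that the norm and the unit ball coincide with total mass $\le 1$. If $\mu$ is only required to be finite rather than sub-probability, the same argument goes through unchanged.
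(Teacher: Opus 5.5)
Your proof is correct. The paper does not prove this lemma itself; it imports it from \cite{ehrhard2025integration} together with the rest of Section~\ref{sec:icones}. Your direct unfolding of Definition~\ref{def:meas_paths} is exactly the verification required: the tests $f_U$ ignore their first argument, so the path condition reduces to measurability of $r \mapsto \gamma(r)(U)$, and the $0$-tests turn the defining equation of the integral into the formula for $\gamma_\star(\mu)$, with uniqueness coming from separation.

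The only slip is cosmetic. In Step 2, the independence from $s$ comes from how the tests of $X^{\MCones}$ are defined, not from $R$ being fixed.
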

As a corollary of Lemma~\ref{lemma:meas_paths_are_kernels2} -- by considering the measurable path $\gamma:=(x\in X\rightarrow \dirac x \in X^{\MCones})$ --  we see that for $X\in \Arity$, the morphisms $X^{\MCones} \rightarrow X^{\MCones}$ that preserves integrals are exactly the push-forward of sub-stochastic kernels in $X \rightarrow Y$. It means that the embedding $\SubStoch(\Arity) \rightarrow \MCones$ can be decomposed as $(\ICones \rightarrow \MCones) \circ (\SubStoch(\Arity) \rightarrow \ICones)$, where $(\ICones \rightarrow \MCones)$ is the obvious forgetful functor, and moreover  $(\SubStoch(\Arity) \rightarrow \ICones)$ is \emph{fully faithful}.

    \subsection{Monoidal product of integrable cones}
     The monoidal structure on $\Icones$ is defined abstractly in~\cite{ehrhard2025integration}. The tensor product bifunctor is built by first defining concretely an internal Hom  $A\multimap B$, and then using some adjoint functor theorem to show the existence of a left-adjoint for $\multimap$. This construction is the one presented in Eilenberg-Kelly~\cite{EilenbergKelly}, to obtain a (symmetric) tensor product starting from a (symmetric) \emph{closed category} $(\Catone,\multimap)$: it consists in proving that for every object $A$, the functor $(A \multimap \_)$ is a right adjoint, and that moreover it is an \emph{internal adjunction}, in the sense that it is witnessed by an isomorphism between internal Hom in the category $\ICones$. 
     
     \begin{toappendix}
     \begin{definition}[Linear Arrow]
  Let $P,Q \in \ICones$. The measurable cone $P \multimap Q$ is   defined as:
  \begin{itemize}
  \item its underlying cone is the sub-cone of the cone $P \multimap Q$ consisting of those functions that 
   preserve measurable paths, and commute with integrals;

  \item the  measurable tests on $P \multimap Q$ are defined from the tests on $Q$ and the paths on $P$ as:
    $$\mtestc R {P \multimap Q} = \{t \triangleleft \gamma \mid t \in \mtestc R Q, \gamma \in \pathes R {P} \}, $$
    where $t \triangleleft \gamma$ is the function $R \times {(P\multimap Q)} \rightarrow [0,1]$ defined as $t \triangleleft \gamma(r,f) := t(r,f(\gamma(r)))$.
    \end{itemize}
  \end{definition}

  \begin{proposition}[From~\cite{ehrhard2025integration}]\label{prop:closure_of_icones_v}
    If $P,Q$ are integrable cones, then, $P \multimap Q$ is also an integrable cone, and moreover
     this construction can be extended into a bifunctor $\multimap:\ICones^{\text{op}} \times \ICones \rightarrow \ICones$.
\end{proposition}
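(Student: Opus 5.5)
Assuming $P,Q$ integrable, the plan is to verify three things in order: $P\multimap Q$ is a cone; the family $\mtestc R {P\multimap Q}$ of tests $t\triangleleft\gamma$ satisfies the four axioms of Definition~\ref{def:measurability_structure}; and every measurable path into $P\multimap Q$ has an integral, computed pointwise. Functoriality of $\multimap$ comes last.

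For the cone structure, we use the pointwise operations and the operator norm $\norm{f}=\sup_{x\in\ball P}\norm{f(x)}_Q$. Cancellativity, positivity and the triangle inequality are inherited pointwise from $Q$. A sum of path-preserving, integral-commuting linear maps is again such a map, because integrals in $Q$ are additive; this additivity follows from uniqueness of integrals and linearity of tests. The order is pointwise, and $\omega$-continuity of the norm follows by exchanging two suprema.

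For the tests, measurability of $r\mapsto t(r,f(\gamma(r)))$ holds because $f\circ\gamma$ is a path on $Q$. Composability follows from $\gamma\circ\phi$ being a path and $t(\phi(\cdot),\cdot)$ being a test. For separation and respect of the norm, we use the constant paths $0\to P$ at a point $x$: they give $t(f(x))$, so separation reduces to separation in $Q$. The norm identity needs $\norm{t\triangleleft\gamma}$ to be of the order $\norm t\cdot\norm x$, which is where 1-boundedness of paths enters. Cone-morphism properties of each test (linearity and $\omega$-continuity in $f$) are pointwise.

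The main obstacle is integrability. Given a path $\beta:R\to P\multimap Q$ and $\mu\in\FMeas(R)$, the plan is to define $F(x)=\cint^Q_{r}\beta(r)(x)\,\mu(dr)$. First we show that $r\mapsto\beta(r)(x)$ is a measurable path on $Q$, by testing $\beta$ with $t\triangleleft(\text{constant } x)$. Next we show that $F$ preserves paths: for $\gamma\in\Path(S,P)$ the map $(s,r)\mapsto\beta(r)(\gamma(s))$ must be jointly measurable, which is exactly what the tests $t\triangleleft(\gamma\circ\pi_S)$ on $S\times R$ provide. Then Fubini inside $Q$ gives $F\circ\gamma=\cint_r \beta(r)\circ\gamma\,d\mu$ as an $S$-path. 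Finally, commutation of $F$ with integrals is again Fubini, now for iterated integrals over $S\times R$, justified by testing against $\mtestc 0 Q$ and uniqueness of integrals. Checking $t\triangleleft\gamma(F)=\int t(\beta(r)(\gamma))\,d\mu$ then shows that $F$ is the integral. For the bifunctor, $g\multimap h$ acts by $f\mapsto h\circ f\circ g$, and preservation of paths and integrals follows by the same pointwise arguments.
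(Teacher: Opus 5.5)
Your proposal is correct and follows the same route as the construction the paper relies on (cited from Ehrhard--Geoffroy and restated in the appendix as Proposition~\ref{prop:closure_of_icones}): pointwise cone structure with tests $t\triangleleft\gamma$ over 1-bounded paths, integrals computed pointwise as $x\mapsto\cint^Q_{r}\beta(r)(x)\,\mu(dr)$ via Fubini, and $g\multimap h$ acting by $f\mapsto h\circ f\circ g$. The steps you leave implicit are routine, but two deserve a line: that the subcone of path-preserving, integral-commuting maps is closed under lubs of increasing sequences (by monotone convergence); and, for functoriality, that path-preservation of $g\multimap h$ needs $(s,r)\mapsto\beta(r)(g(\gamma(s)))$ to be an $(S\times R)$-path on $Q_1$ so that path-preservation of $h$ applies, because $t(s,h(\cdot))$ is not itself a test of $Q_1$ and a purely pointwise argument does not suffice.
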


  We give here a more precise version of Proposition~\ref{prop:closure_of_icones_v}, that we will need for our proof of Theorem~\ref{prop:monoidal_structure_on_icones} in appendix.
    \begin{proposition}\label{prop:closure_of_icones}[From~\cite{ehrhard2025integration}]
    For  $P,Q$ in $\ob\Icones$, $P \multimap Q$ is also in $\ob\Icones$, and 
    $\cint^{P \multimap Q}_R \gamma d\mu = (x \in P \mapsto \cint^{Q}_R \gamma(x) d\mu),$ $\forall R \in \Arity,\,\mu \in \FMeas(R), \gamma:R \rightarrow(P \multimap Q)$  measurable path.
 Moreover, $\multimap$ can be extended into a bifunctor $\multimap:\ICones^{\text{op}} \times \ICones \rightarrow \ICones$ with   $f\multimap g:=(\lambda \in P_1 \multimap Q_1)\mapsto (g\circ \lambda\circ f) \in P_2\multimap Q_2$ for $f:P_2\rightarrow P_1$, and $g:Q_1 \rightarrow Q_2$.
\end{proposition}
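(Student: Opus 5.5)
The statement is Proposition~\ref{prop:closure_of_icones}. I would prove it in two parts. First, $P \multimap Q$ is an integrable cone with the pointwise integral formula. Second, $\multimap$ extends to a bifunctor. Throughout I rely only on the definitions of tests, paths and integrals given above.

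\textbf{Step 1: the cone and measurable-cone structure.}
Algebraic operations on $P \multimap Q$ are pointwise, and the norm is the operator norm $\sup_{x \in \ball P}\norm{f(x)}$. Cancellativity and positivity of the order are inherited pointwise from $Q$.

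Next I check that the morphisms which preserve measurable paths and commute with integrals form a sub-cone closed under sums, scalars and $\omega$-sups. Closure under sums and scalars is immediate. For $\omega$-sups, tests of $Q$ are $\omega$-continuous, so pointwise limits of measurable functions remain measurable. Integrals are preserved by monotone convergence, applied test by test.

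The axioms for the tests $t \triangleleft \gamma$ are then checked one by one:
\begin{itemize}
\item \emph{Measurability}: $r \mapsto t(r, f(\gamma(r)))$ is measurable because $f \circ \gamma$ is a path of $Q$.
\item \emph{Composability}: precompose both $t$ and $\gamma$ with $\phi$.
\item \emph{Separation}: take $R = 0$ and constant paths $\gamma = x$.
\item \emph{Respect of the norm}: combine the norm-respect property of $Q$ with the supremum over constant paths $x \in \ball P$.
\end{itemize}

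\textbf{Step 2: integrability and the pointwise formula.}
Let $\gamma : R \to P \multimap Q$ be a measurable path and let $\mu \in \FMeas(R)$. For each $x \in P$, the map $r \mapsto \gamma(r)(x)$ is a measurable path of $Q$. Boundedness follows from boundedness of $\gamma$, and measurability in $(s,r)$ comes from the tests $t \triangleleft (\text{const } x)$, or more generally from paths $\delta$ on $P$ composed with these tests. So I can define
\[
h(x) := \cint^Q_R \gamma(r)(x)\,\mu(dr).
\]

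I must show $h \in P \multimap Q$. Linearity and $\omega$-continuity of $h$ follow by testing: against each $\mathcal M_0$-test, monotone convergence and linearity of the Lebesgue integral give the result, and separation in $Q$ then lifts it.

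The main obstacle is that $h$ preserves paths and integrals.
\begin{itemize}
\item \emph{Paths}: for a path $\delta : S \to P$ and a test $t \in \mtestc{S'}{Q}$, the map $(s',s) \mapsto t(s', h(\delta(s))) = \int t(s', \gamma(r)(\delta(s)))\,\mu(dr)$ must be jointly measurable. This holds by Fubini/Tonelli, since the integrand is measurable in $(s',s,r)$. I obtain this from the measurability of $\gamma$ against the test $t \triangleleft (\delta \circ \pi_S)$ on $R \times S$, which uses composability with projections and the fact that $\Arity$ is closed under products.
\item \emph{Integrals}: $h\left(\cint \delta\,d\nu\right) = \cint (h \circ \delta)\,d\nu$ reduces, after testing, to exchanging the $\mu$- and $\nu$-integrals, again by Tonelli. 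This uses that each $\gamma(r)$ preserves integrals.
\end{itemize}

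Finally, $h$ is an integral of $\gamma$ in the sense of the definition. Tests on $P \multimap Q$ at $0$ are $t \triangleleft x$, and $t(h(x)) = \int t(\gamma(r)(x))\,\mu(dr)$ holds by construction. This gives both integrability and the stated formula.

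\textbf{Step 3: functoriality.}
For $f : P_2 \to P_1$ and $g : Q_1 \to Q_2$, the map $\lambda \mapsto g \circ \lambda \circ f$ lands in $P_2 \multimap Q_2$, since composites preserve paths and integrals. It is linear, $\omega$-continuous and 1-bounded.

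It preserves paths: a test $t \triangleleft \delta$ applied to it becomes $t' \triangleleft (f \circ \delta)$, where $t'$ is $t$ pulled back along $g$. For this I need $g$ to map path-measurability in $Q_1$ to that in $Q_2$, which holds since $g$ preserves paths. It preserves integrals by the pointwise formula of Step 2 together with the fact that $g$ preserves integrals. Identities and composition are respected trivially.
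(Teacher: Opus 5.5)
The paper gives no proof of this proposition: it is quoted from \cite{ehrhard2025integration}, so the only comparison is with a citation. Your self-contained verification is correct.

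What your route buys is that it shows exactly what the result rests on. The integral is defined pointwise. Joint measurability is obtained by testing $\gamma$ against tests living on product arities, so closure of $\Arity$ under products and composability of tests are essential. Tonelli is used twice: for the measurability of $(s',s)\mapsto t(s',h(\delta(s)))$, and for exchanging the $\mu$- and $\nu$-integrals. The citation buys brevity, but it leaves to \cite{ehrhard2025integration} a few points that your write-up should state explicitly:
\begin{itemize}
\item The identity $t(s',h(\delta(s)))=\int t(s',\gamma(r)(\delta(s)))\,\mu(dr)$ requires $t(s',\cdot)\in\mathcal{M}^{Q}_{0}$. This follows from composability along the constant map from the one-point space to $S'$, and it is needed because integrals are defined through $\mathcal{M}_0$-tests only. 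Also, the test you apply to $\gamma$ lives at arity $S'\times S$, not $R\times S$: it is $t(\pi_{S'}(\cdot),\cdot)\triangleleft(\delta\circ\pi_S)$.
\item In Step~3, $t(\cdot,g(\cdot))$ is in general not a test of $Q_1$, so a test of the form $t'\triangleleft(f\circ\delta)$ is not available. Argue instead as follows. The map $\beta(s,r):=\gamma(r)(f(\delta(s)))$ is a measurable $(S\times R)$-path of $Q_1$, by the same product-arity argument applied to the path $f\circ\delta$. Hence $g\circ\beta$ is a path of $Q_2$. Restricting its joint measurability along $(s,r)\mapsto(s,(s,r))$ gives measurability of $(s,r)\mapsto t(s,g(\gamma(r)(f(\delta(s)))))$.
\item Sups in $P\multimap Q$ can be computed pointwise only because the algebraic order of the sub-cone coincides with the pointwise order. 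This holds because if $f\le g$ pointwise, then $g-f$ is again linear, $\omega$-continuous, and preserves paths and integrals.
\item With the paper's literal definitions, $(t\triangleleft\gamma)(r,\cdot)$ is guaranteed to be $1$-bounded, hence a legitimate test, only when $\gamma$ is. So the $[0,1]$-valuedness required in your Measurability check forces tests on $P\multimap Q$ to be formed with $1$-bounded paths. By linearity, all your uses (e.g.\ $t\triangleleft x$ for arbitrary $x\in P$) survive after rescaling.
\end{itemize}
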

 
    \begin{theorem}\label{prop:monoidal_structure_on_icones}[From~\cite{ehrhard2025integration}]

      There exists an $\Icones$ monoidal product $\otimes:\Icones \times \Icones \rightarrow \Icones$, such that $(\Icones,\otimes,\multimap)$ is a symmetric monoidal closed category.

\end{theorem}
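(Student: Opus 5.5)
The plan is the Eilenberg--Kelly route, using Proposition~\ref{prop:closure_of_icones}. That proposition already makes $(\Icones,\multimap)$ a closed category. The unit is the integrable cone $\unit=\RR_{\geq 0}$, and there is a natural isomorphism $\unit\multimap Q\cong Q$ sending $f$ to $f(1)$. So it suffices to show three things: each functor $P\multimap\_$ has a left adjoint $\_\otimes P$; this adjunction is internal; and the symmetry isomorphism $P\multimap(Q\multimap R)\cong Q\multimap(P\multimap R)$ exists and satisfies the coherence axioms of a symmetric closed category. Eilenberg--Kelly then yields the symmetric monoidal closed structure.

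\textbf{Symmetry.} An element of $P\multimap(Q\multimap R)$ is a map $\lambda$ that is linear, $\omega$-continuous and integral-preserving in each variable separately. The iso sends $\lambda$ to $\lambda'(y)(x):=\lambda(x)(y)$. I must check that $\lambda'(y)$ preserves measurable paths and integrals. For integrals this uses the explicit formula for $\cint^{P\multimap Q}$ from Proposition~\ref{prop:closure_of_icones}, evaluated pointwise. For paths it uses the tests $t\triangleleft\gamma$: measurability of $(s,r)\mapsto t(s,\lambda(\gamma(r))(\delta(s)))$ is symmetric in the roles of the two paths. The coherence diagrams then hold by pure currying.

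\textbf{Left adjoint.} I would apply Freyd's general adjoint functor theorem to $P\multimap\_$. This needs three facts.
\begin{enumerate}
\item $\Icones$ is complete. Products and equalisers can be built concretely: take bounded families with the sup norm, tests given by tests on components, and integrals computed componentwise; equalisers are sub-cones closed under integrals.
\item $P\multimap\_$ preserves these limits, which is immediate since limits are computed pointwise.
\item The solution set condition holds.
\end{enumerate}
I expect the solution set condition to be the \textbf{main obstacle}. A bimorphism $P\times Q\to R$ only "uses" the sub-integrable-cone of $R$ generated by its image under integrals and $\omega$-sups. So I would bound the size of that generated sub-cone. It is the closure of the image under countable sums, scalars, $\omega$-limits and integrals of measurable paths indexed by $\Ar=\Pol$. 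Because $\Pol$ is essentially small, these paths can be indexed by a set, namely one representative per isomorphism class. The closure can therefore be formed by a transfinite iteration up to $\omega_1$ steps, giving a cardinal bound $\kappa(P,Q)$, so that only a set of candidate targets up to isomorphism is needed. Here the essential smallness of $\Arity$ (Proposition~\ref{prop:essentially_small_is_ok}) is exactly what is needed.

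\textbf{Internality.} Finally, I show the hom-set bijection $\Icones(P\otimes Q,R)\cong\Icones(P,Q\multimap R)$ lifts to an isomorphism $(P\otimes Q)\multimap R\cong P\multimap(Q\multimap R)$ in $\Icones$. Because the bijection is natural, Yoneda applied to $\Icones(S,\_)$ reduces this to the symmetry and currying isomorphisms already constructed. This uses $S\multimap(P\multimap(Q\multimap R))\cong P\multimap(S\multimap(Q\multimap R))$ and again a tensor-hom bijection.
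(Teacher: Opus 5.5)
Your proposal is essentially correct, with two technical details to repair. It shares the paper's skeleton, which the paper attributes to Ehrhard--Geoffroy: make $(\Icones,\multimap)$ a symmetric closed category, show that each $P\multimap\_$ is a right adjoint, and apply Eilenberg--Kelly. Your Yoneda argument for internality is the standard way to obtain the enriched adjunction from the ordinary one plus the symmetry. The genuine difference is which adjoint functor theorem is used.

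\textbf{The paper's route.} The paper invokes the \emph{Special} Adjoint Functor Theorem. It checks completeness, obtains the cogenerator from the measurability tests, and proves well-poweredness. Well-poweredness is the only place where the size of $\Arity$ enters, through Lemma~\ref{prop:essentially_small_is_ok}: there is only a set of integrable-cone structures on subsets of a given carrier, and monos are injective because $\Icones(\unit,D)$ is the unit ball of $D$.

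\textbf{Your route.} You use Freyd's \emph{General} Adjoint Functor Theorem. The solution set comes from the sub-integrable cone of $R$ generated by the image of a bimorphism, together with a cardinality bound on it. Essential smallness enters at the same point as in the paper. The two details to repair are these.
\begin{enumerate}
\item Iterating the closure $\omega_1$ times handles countable sums and $\omega$-sups, but not integrals. A measurable path indexed by an uncountable space such as $\RR$ can hit cofinally many stages below $\omega_1$, so its integral need not appear. You must iterate up to a regular cardinal exceeding $2^{\aleph_0}$; the size bound survives this change.
\item You should also close under differences $b-a$ for $a\le_R b$. This makes the intrinsic order of the sub-cone coincide with the restriction of $\le_R$. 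Then $\omega$-sups computed in $R$ are least upper bounds in the sub-cone, its unit ball is $\omega$-complete, and both the inclusion and the corestriction of $g$ are $\omega$-continuous.
\end{enumerate}

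\textbf{Comparison.} Your approach is more hands-on: it shows explicitly that every bimorphism factors through a sub-integrable cone of bounded size generated by its image. The SAFT route avoids transfinite closures and cardinal arithmetic altogether, because a cogenerator is essentially free from the tests in $\mathcal M_0$. Those tests are integral-preserving maps into $\unit$ that separate points, and well-poweredness reduces directly to the set-of-structures lemma.
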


In~\cite{ehrhard2025integration}, the category $\Arity$ is required to be \emph{small} -- instead of \emph{essentially small}, as we require in the present paper. As highlighted in~\cite{ehrhard2025integration}, the only place where the size of the $\Arity$ category plays a role is in applying the adjoint functor theorem to $(P \multimap \_)$, and more precisely in the proof that $\ICones$ is well-powered. Technically, what is used is that since $\Arity$ is a set, for every integrable cone $C \in \ob{\ICones}$ the class of all integrable $D$ such that  $\underline D \subseteq \underline C$ is a set -- where $\underline D$ and $\underline C$ are the underlying sets of $D$ and $C$ respectively.

    \begin{lemma}\label{prop:essentially_small_is_ok}
Let $\Arity$ an essentially small sub-category of $\MeasC$,  closed under cartesian product and that contains the one-point measurable space. Let $C$ an object in $\ICones$. Then the class of all integrable cones $D$ such that  $\underline D \subseteq \underline C$ is a set. 
\end{lemma}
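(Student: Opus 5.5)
The strategy is to bound the class of integrable cones with underlying set contained in $\underline C$ by a set-indexed family of data, replacing the (proper-class-sized) $\Arity$ by a small skeleton. Since $\Arity$ is essentially small, we fix a small full subcategory $\Arity_0 \subseteq \Arity$ containing one representative of each isomorphism class. We may choose it to contain the one-point space, and we do not need it closed under products. An integrable cone $D$ with $\underline D \subseteq \underline C$ is determined by four pieces of data: its underlying subset $\underline D$; its semimodule operations and norm, which are functions on $\underline D$; and the family of test sets $(\mtest R^D)_{R \in \Arity}$. The first three range over a set, since they are subsets of, or functions on, the set $\underline C$ and $\RRpos$. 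So the whole difficulty lies in the tests, which are indexed by a proper class of arities $R$.

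The key step is to show that the family $(\mtest R^D)_{R\in\Arity}$ is determined by its restriction to $\Arity_0$. For $R \in \Arity$, pick an isomorphism $\phi:R \rightarrow R_0$ with $R_0\in\Arity_0$. By the \emph{Composability} axiom applied to $\phi$ and to $\phi^{-1}$, we get
$$\mtest R^D = \{ m(\phi(\cdot),\cdot) \mid m \in \mtest{R_0}^D\}.$$
Indeed, for $m' \in \mtest R^D$ we have $m'(\phi^{-1}(\cdot),\cdot)\in \mtest{R_0}^D$, and precomposing it with $\phi$ gives back $m'$. Hence the map $D \mapsto (\underline D, +, \cdot, \norm{\cdot}, (\mtest{R_0}^D)_{R_0\in\Arity_0})$ is injective on the class under consideration.

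It remains to check that its codomain is a set. Each $\mtest{R_0}^D$ is a subset of $\text{Hom}_{\Set}(R_0, \text{Hom}_{\Cones}(D,\RRpos))$. A cone morphism $D\rightarrow\RRpos$ is in particular a function $\underline D \rightarrow \RRpos$, hence an element of the set $\RRpos^{\underline C}$ after extending by $0$ (or any fixed convention) outside $\underline D$; this extension is harmless because $\underline D$ is recorded separately. So $\mtest{R_0}^D$ lives in the power set of $(\RRpos^{\underline C})^{R_0}$, and the family over the small index set $\ob{\Arity_0}$ lives in a product of set-many sets. Being integrable is a property of $D$, not additional structure, so it does not enlarge the count.

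The main point to be careful about is the transport step via Composability. That axiom requires $\phi \in \Arity(S,R)$, so it applies to isomorphisms precisely because $\Arity$ is a full subcategory of $\Meas$ containing both $R$ and $R_0$. Once that is in place, the rest is routine cardinality bookkeeping.
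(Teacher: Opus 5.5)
Your proof is correct, but it takes a different route from the paper. The paper notes that, by Composability, $R \mapsto \mtest R^D$ is a sub-presheaf of a fixed $\Set$-valued presheaf on $\Arity$. It then invokes well-poweredness of $[\Arity^{\text{op}},\Set]$, which is a locally small topos because $\Arity$ is essentially small, to conclude that the measurability structures on a given cone form a set. Finally it takes the union over subsets of $\underline C$ and over cone structures on them.

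You instead carry out the reduction to a small skeleton by hand. Composability along isomorphisms alone shows that the test family is determined by its restriction to $\Arity_0$, and a direct cardinality bound finishes the argument. The underlying fact is the same in both proofs: a presheaf on an essentially small category is determined by its restriction to a skeleton, and this is exactly what makes the paper's appeal to well-poweredness legitimate.

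Your version has some advantages. It is elementary and avoids topos theory. It uses only the isomorphism instances of Composability. It also states exactly in what sense the class ``is a set'', namely via an injection into an honest set. That is worth making explicit, since each test family is a priori indexed by a proper class of arities.

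The paper's argument, in turn, is shorter and more conceptual. It applies uniformly to any structure presented as a subfunctor of a presheaf on an essentially small category.
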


\begin{proof}
  For any measurable cone $C$, an morphism $f:R_2\rightarrow R_1$ m in $\Arity$ induces a function $\mtest{R_1}^C \rightarrow \mtest{R_2}^C$, defined by: $m\in \mtest{R_1}^C \mapsto m \circ (f\times P) \in  \mtest{R_2}^C$ -- for this, we use the axiom \textbf{(Mscomp)} from~\cite{ehrhard2025integration}.  In other terms, $\mtest{}^C:\Arity^{\text{op}} \rightarrow \Set$ is a pre-sheaf on $\Arity$. Moreover, it is a sub-presheaf of the presheaf
  \begin{align*}
    M^C:=& \Arity^{\text{op}} \rightarrow \Set \\
         & R \mapsto \{f:R\times C \rightarrow[0,1]\}
  \end{align*}
  Since $\Arity$ is essentially small, the presheaf category $[\Arity^{\text{op}},\Set]$ is well-powered (since $\Arity$ is essentially small, $[\Arity^{\text{op}},\Set]$ is both a topos and locally small)
  We can deduce from there that if we fix an object $C$ in $\Cones$, the class of all the measurability structures on the cone $C$, that are in particular sub-objects of $M^C$, form a set. Then the class of (all integrable cones $D$ such that  $\underline D \subseteq \underline C$) can be expressed as: $(\cup_{\underline D \subseteq \underline C} \cup_{D = (+,.,\norm{})\text{ a cone-structure on }\underline D} \{\text{measurability structures on }D\})$, and thus it is a set.
  \end{proof}

 \end{toappendix}

 \subsection{Embedding probabilistic coherent spaces into integrable cones}
 As shown in \cite{ehrhard2025integration} there exists also a full and faithful  functor, that moreover is strong monoidal $\ic : \pcoh \to \ICones$.
 $\ic$ is compatible with the other inclusions functors we presented, in the sense that $\SubStoch^{\leq \omega} \hookrightarrow \SubStoch \hookrightarrow \Icones = \SubStoch^{\leq \omega}\hookrightarrow{} \pcoh \xhookrightarrow{ic} \Icones$. Besides, the functor $c:=\pcoh  \xhookrightarrow{ic} \ICones \to \Cones$ is again fully faithful.
 
 Intuitively, we build a cone from a PCS $X$ by \emph{extending} the set of $X$-elements by scalar multiplication, and choosing the norm in such a way that $\clique X$ becomes exactly the unit ball of the cone. The measurability structure is obtained by using the elements of $\clique X^{\bot}$ as \emph{measurability tests} for paths.
\begin{definition}\label{def:ic}
  For $X$ a PCS, we define $X^{\Cones}$ as the cone:
  % with underlying set
  \[ X^{\Cones}: = \set{ \lambda \cdot x \setcompr \lambda \in \RRpos\mid x \in \clique X }; \qquad
  % equipped with the norm
    \norm{x}_{X^{\Cones}} \defeq \inf \{\lambda > 0 \mid \frac{1}{\lambda}\cdot x \in \clique X\}.\] We note $X^{\MCones}$ for the object in $\MCones$ that has $X^{\Cones}$ as underlying cone, and as measurability tests: 
    $\mathcal{M}^{X^{\MCones}} := \{ \scalar{\cdot}{x'} \setcompr x' \in \clique X^\bot\}.$ 
  \end{definition}
  For any PCS $X$, $X^{\MCones}$ is an integrable cone, that we will thus  note also $X^{\ICones}$. Moreover
  we can extend Definition~\ref{def:ic} into functors $c:\pcoh \rightarrow \Cones$ and $\ic:\pcoh \rightarrow \ICones$ by defining for $f \in \pcoh(X,Y)$, $c(f) = \ic(f) = (x \in X^{\Cones} \mapsto f.x \in Y^{\MCones})$.
 
\begin{theorem}[\cite{ehrhard2025integration}] \label{thm:ic-full-faithful}
  Both $c$ and $\ic$ are full and faithful functors, and $\ic$ is strong monoidal. 
  \end{theorem}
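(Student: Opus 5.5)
The plan is to prove the three claims of Theorem~\ref{thm:ic-full-faithful} one at a time: faithfulness, fullness, and strong monoidality of $\ic$, and then the same for $c$. Throughout I would use the dictionary that $X^{\Cones}$ is the positive span of $\clique X$ in $\Rp{\web X}$ with $\ball{X^{\Cones}} = \clique X$. This holds because $\clique X$ is biorthogonally closed, hence convex, down-closed and Scott-closed.

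\emph{Faithfulness.} A matrix $f$ is determined by its action on the base vectors $e_a$. Since $c_X(a) > 0$, the scaled vector $\epsilon e_a$ lies in $\clique X$ for some $\epsilon > 0$, so $c$ and $\ic$ are faithful.

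\emph{Fullness of $c$.} Take a linear, $\omega$-continuous, $1$-bounded $h : X^{\Cones} \to Y^{\Cones}$. Define the matrix $f_{a,b} := h(\epsilon_a e_a)_b/\epsilon_a$. Every $x \in \clique X$ is the directed sup of its finite restrictions, and each restriction is a finite positive combination of the $e_a$. Linearity and $\omega$-continuity then give $h(x) = f\cdot x$. Since $h$ is $1$-bounded, $f$ sends $\clique X$ into $\clique Y$, so $f$ is a $\pcoh$-morphism.

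\emph{Fullness of $\ic$.} This follows from the previous step once two facts are checked. First, every $\pcoh$-morphism preserves measurable paths. The tests on $Y^{\MCones}$ are of the form $\scal{\cdot}{y'}$, and $\scal{f\beta(r)}{y'} = \scal{\beta(r)}{{}^t f y'}$ with ${}^t f y' \in \clique X^\perp$, which is measurable in $r$. Second, every $\pcoh$-morphism preserves integrals. Integrals in $X^{\ICones}$ are characterised by the tests $\scal{\cdot}{x'}$, which include the scaled coordinate projections. So the integral is computed coordinatewise as $\int \beta(r)_a\,\mu(dr)$, and matrix application commutes with it by monotone convergence, since all entries are nonnegative.

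\emph{Strong monoidality.} I would first show that $(X\multimap Y)^{\ICones} \cong \ic(X) \multimap \ic(Y)$. By fullness, the underlying cone of the right-hand side is exactly the cone of $\pcoh$-morphisms. The norms agree because the unit balls coincide, and the test structures agree as well. Then, by the universal property of the Eilenberg--Kelly tensor, the family of isomorphisms
\[\ICones(\ic(X\otimes Y), Z) \cong \ICones(\ic X, \ic Y \multimap Z)\]
has to be established for every integrable cone $Z$, not only for those $Z$ in the image of $\ic$. This is the hard step. I would try to show that a bilinear $\omega$-continuous map out of $\ic X \times \ic Y$ that preserves paths and integrals in each variable is determined by its values on $e_a \otimes e_b$. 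It should then extend to $\clique{X\otimes Y} = \{x\otimes y\}^{\perp\perp}$, using the fact that the tests of $Z$ separate points and pull back into $\clique{X\otimes Y}^{\perp}$. This gives the hom-set iso, and Yoneda yields $\ic(X\otimes Y) \cong \ic X \otimes \ic Y$. The coherence diagrams hold because both sides act the same way on base vectors, and $\ic(\unit) = \RRpos$ holds directly.
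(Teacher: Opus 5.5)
The paper gives no proof of this statement; it imports it from \cite{ehrhard2025integration}. Your proposal is therefore a self-contained substitute rather than a reconstruction, and it is correct.

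\textbf{Faithfulness and fullness.} Your arguments via basis vectors, down-closure and $\omega$-continuity are right. Fullness of $\ic$ then follows at once, since an $\ICones$-morphism is in particular a $\Cones$-morphism. Your two checks on matrices (preservation of paths and of integrals) really show that $\ic$ is well defined on morphisms, not that it is full.

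\textbf{Strong monoidality.} The detour through $(X\multimap Y)^{\ICones}\cong\ic X\multimap\ic Y$ is not needed. Also, ``the test structures agree'' should read ``induce the same measurable paths and integrals'', since the two test sets are not literally equal. The real content is the representability step, and your key idea there is the right one; a few precisions:
\begin{itemize}
\item Take a bilinear ci-map $g$ and a $0$-test $m$ of $Z$, and set $t_{a,b}:=m(g(e_a,e_b))/\norm m$. Then $\scal{x\otimes y}{t}=m(g(x,y))/\norm m\le\norm{g(x,y)}\le 1$. This uses that $g$ is $1$-bounded, being a morphism $\ic X\to(\ic Y\multimap Z)$; you should state that explicitly. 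Hence $t\in\{x\otimes y\mid x\in\clique X,\,y\in\clique Y\}^{\perp}=\clique{X\otimes Y}^{\perp}$.
\item It is the \emph{respect of the norm} axiom of $Z$, not separation, that puts every finite partial sum $\sum_{(a,b)\in F}w_{a,b}\,g(e_a,e_b)$ in $\ball Z$. Their supremum then defines $h(w)$.
\item Measurability of $h\circ\gamma$ follows from $m(s,h(\gamma(r)))=\sum_{a,b}\gamma(r)_{a,b}\,m(s,g(e_a,e_b))$, a countable sum of measurable functions.
\item Preservation of integrals follows from monotone convergence plus separation.
\end{itemize}

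\textbf{Comparison.} This is exactly the universal-property method the paper uses in Section~\ref{section_appendix:strong_monoidal} for $\SubStoch(\Arity)\hookrightarrow\ICones$ (Proposition~\ref{proposition:universal_morphism}). There, Dirac measures play the role of your basis vectors $e_a\otimes e_b$, and integrals play the role of your countable sums. The one new ingredient in your setting is the boundedness step via $\clique{X\otimes Y}^{\perp}$; in the measure case it is automatic from total mass. The citation buys brevity. Your route buys independence from how $\otimes$ was built through the adjoint functor theorem, and a treatment uniform with the paper's appendix.
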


\section{De Finetti theorems in Integrable Cones}\label{sec:df_in_icones}
Recall that in Section~\ref{new_pcoh}, we proved that from a countable discrete space $X$, we were able to generate two draw-and-delete chains from $X^{\pcoh}$: the $\pcoh$ De Finetti chain, and the approximations chain for $!X^{\pcoh}$. The fact that no limit is known for the $\pcoh$-De Finetti chain stopped us going further.  
Our goal now is to overcome this roadblock by going from  $\pcoh$ to  $\ICones$, where all limits exist.

To use the structural proximity of the draw-and-delete chains to deduce a formal connection between $\Giry X$ and $!X^{\pcoh}$ this way, however, we need to prove that \begin{enumerate} \item\label{item:goalun} the $\Icones$-De Finetti draw-and-delete chain is indeed  the  image by $\Stoch(\Arity) \rightarrow \ICones$  of the $\Stoch$-De Finetti chain, and that its limit is indeed $\Giry X^{\ICones}$, and that\item\label{item:goaldeux} the $\ICones$ approximation chain over $X^{\ICones}$ is indeed the image by the embedding $\pcoh \rightarrow \ICones$ of the approximation chain for $!X^{\pcoh}$, and that $(!X^{\pcoh})^{\ICones}$ is indeed its limit. \end{enumerate} 
In the present Section, we tackle~\eqref{item:goalun}, by proving that $\Stoch(\Arity) \rightarrow \ICones$ is strong monoidal and preserves connected limits. We will tackle~\eqref{item:goaldeux} in the next Section~\ref{sect:total_elements}. 
\subsection{\texorpdfstring{The embedding $\SubStoch(ar) \hookrightarrow \ICones$ is strong monoidal.}{The embedding from sub-stochastic kernels to integrable cones is strong monoidal.}}
Recall that a functor $F:\Catone \rightarrow \Cattwo$ between two symmetric monoidal categories is strong symmetric monoidal when for every pair of objects $A,B$ in $\Catone$, there exists an isomorphisms $F A\otimes FB \sim F(A\otimes B)$ natural in $A$ and $B$, and that moreover respects the coherences and symmetries. 

\begin{theorem}\label{th:substoch_to_icones_is_strong_monoidal}
$C:\SubStoch(\Arity) \hookrightarrow \Icones$ is a strong monoidal functor.
\end{theorem}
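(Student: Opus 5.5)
The plan is to exhibit, for standard Borel spaces $X,Y$, an isomorphism $\phi_{X,Y}: X^{\ICones}\otimes Y^{\ICones}\rightarrow (X\times Y)^{\ICones}$. We then check that it is natural and respects associators, unitors and symmetries. Since $\otimes$ in $\ICones$ is only characterised abstractly as a left adjoint to $\multimap$, the whole argument goes through the universal property of the tensor. That is, we show that $(X\times Y)^{\ICones}$, together with the bilinear map $b:(\mu,\nu)\mapsto \mu\times\nu$, represents the same functor as $X^{\ICones}\otimes Y^{\ICones}$. Concretely, we aim to prove that for every integrable cone $Z$ there is a bijection, natural in $Z$,
\[
\ICones((X\times Y)^{\ICones}, Z)\;\cong\;\ICones(X^{\ICones}, Y^{\ICones}\multimap Z),
\]
given by $f\mapsto (\mu\mapsto(\nu\mapsto f(\mu\times\nu)))$. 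Yoneda then yields $\phi_{X,Y}$.

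The first step is the forward direction. We check that $\mu\mapsto f(\mu\times -)$ lands in $Y^{\ICones}\multimap Z$ and is a morphism of $\ICones$. Linearity, $\omega$-continuity and $1$-boundedness are immediate. For measurability of paths we use Lemma~\ref{lemma:meas_paths_are_kernels2}: paths into $X^{\ICones}$ and $Y^{\ICones}$ are bounded kernels, and the product of two kernels is again a kernel into $X\times Y$. Preservation of integrals reduces to $(\gamma_\star\mu)\times\nu=(\gamma\times \delta)_\star(\ldots)$, that is, to Fubini.

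The second step, the converse, is the main obstacle. Given $g:X^{\ICones}\to (Y^{\ICones}\multimap Z)$, we must define $f$ on all finite measures on $X\times Y$, not just on product measures. The idea is to write every $\rho\in\FMeas(X\times Y)$ as an integral of Dirac measures, $\rho=\cint_{(x,y)}\dirac{(x,y)}\,\rho(d(x,y))$, and to set
\[
f(\rho):=\cint^Z_{(x,y)\in X\times Y} g(\dirac x)(\dirac y)\,\rho(d(x,y)).
\]
This integral makes sense only if $(x,y)\mapsto g(\dirac x)(\dirac y)$ is a measurable path in $Z$. This is where $\Arity=\Pol$ being closed under products matters. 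The path $x\mapsto\dirac x$ into $X^{\ICones}$ is mapped by $g$ to a path into $Y^{\ICones}\multimap Z$. By the definition of tests on $\multimap$ (tests $t\triangleleft\gamma$ with $\gamma$ a path of $Y^{\ICones}$ over an arity $R$), joint measurability in $(x,y)$ follows by taking the arity $X\times Y$ and composing with the projections via the composability axiom. We then check that $f$ is linear, continuous and integral-preserving. Integral preservation again uses Lemma~\ref{lemma:meas_paths_are_kernels2}, which says that integration in $(X\times Y)^{\ICones}$ is pushforward, together with a Fubini argument on tests of $Z$. We also check that $f(\mu\times\nu)=g(\mu)(\nu)$, using that $g$ and $g(\mu)$ preserve integrals. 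Uniqueness holds because any integral-preserving $f$ is determined by its values on Diracs.

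Finally, naturality of $\phi$ in $X,Y$ with respect to sub-stochastic kernels is checked on Diracs: both sides are integral-preserving and agree on $\dirac{x}\otimes\dirac y$. The coherence with associator, unitor (the unit is the image of the one-point space, which is $\RRpos=\unit$) and symmetry is verified in the same way, on Dirac product measures, where everything reduces to the set-theoretic isomorphisms of $\times$.
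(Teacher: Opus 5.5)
Your proposal is correct and follows essentially the same route as the paper: you show that $(X\times Y)^{\ICones}$ with the product-measure map $(\mu,\nu)\mapsto\mu\times\nu$ satisfies the universal property of the $\ICones$ tensor, and you build the mediating morphism as $\rho\mapsto\cint g(\dirac x)(\dirac y)\,\rho(d(x,y))$. As in the paper, uniqueness comes from integral-preserving maps out of $\FMeas(X\times Y)$ being determined by their values on Diracs, and symmetry, unit and associator are checked on elementary tensors. The differences are only cosmetic: you phrase it via Yoneda/representability rather than a universal morphism, and you unfold the tests of $Y^{\ICones}\multimap Z$ and the integral-preservation check by hand, where the paper invokes the characterisation of bilinear ci-maps and its lemma that integrating a path against a measure gives an $\ICones$-morphism.
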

\begin{proofsketch}
It has been shown in~\cite{ehrhard2025integration} (Theorem 7.9) that for any $X,Y \in \Arity$, there is an $\ICones$-isomorphism  $\FMeas(X\times_{\Arity} Y) \sim \FMeas X \otimes \FMeas Y$, which is natural in $X$ in $Y$ (in $\Arity)$. This isomorphism is defined by composing several isomorphisms built in the course of the proof of the monoidality of the category $\ICones$. It seems likely that their proof could be adapted to show that this isomorphism  is still  natural (in $\SubStoch(\Arity))$  when seen as $\FMeas(X\otimes_{\SubStoch(\Arity)} Y) \sim \FMeas X \otimes \FMeas Y$.

In Appendix~\ref{section_appendix:strong_monoidal}, we follow another approach to prove that $\SubStoch(\Arity) \rightarrow \ICones$ is symmetric strong monoidal:
we highlight how the adjunction $(\_\otimes A) \dashv (A \multimap \_)$ lead to a universal property for $\otimes$, and prove that the universal property for $\FMeas X \otimes \FMeas Y$ holds for  $\FMeas(X\times_{\Arity} Y)$. That implies the existence of a natural isomorphism $\FMeas(X\otimes_{\SubStoch(\Arity)} Y) \sim  \FMeas X \otimes \FMeas Y$, and it can be also easily deduced from the universal property that it preserves coherences and symmetries. 
\end{proofsketch}

  \begin{remark}
    It would not be possible to have the analogue of Theorem~\ref{th:substoch_to_icones_is_strong_monoidal} for the larger category $\Cones$ instead of $\ICones$. There are two reasons for this:
    \begin{itemize}
    \item first, to the best of our knoweldge, the existence of a monoidal product $\otimes$ in $\Cones$ -- that would be  left adjoint to the internal Hom functor -- has not been investigated in the literature. It is not immediate how the arguments in~\cite{ehrhard2025integration} could be adapted from $\ICones$ to $\Cones$: for instance,~\cite{ehrhard2025integration} uses the $\ICones$ measurability tests to show that the hypothesis of the Special Adjoint Functor Theorem (SAFT) are met.
    \item Second, even assuming the existence of such a monoidal product on $\Cones$,  the functor $\Stoch(\Arity) \rightarrow \Cones$ could not be strong monoidal. Indeed, as said above, the strong monoidality of this functor boils down to 
      the universal property of the tensor product  $\FMeas X \otimes \FMeas Y$ holding for the cone $\FMeas(X \times_{\Arity} Y)$. This is not true in $\Cones$; to highlight what happens there, let us spell out this universal property: 
      for any cone $C$, for any $g:\FMeas X,\FMeas Y \rightarrow C$ bilinear and Scott-continuous (but with no constraints on measurability or integrability), it would require the existence of a \emph{unique} $\Cones$ morphism $h:\FMeas(X \times_{\Arity} Y) \rightarrow C$ such that for any finite measures $a, b$ on $X$ and $Y$ respectively, $g(a,b) = h(a \times b)$. What breaks in $\Cones$ is the \emph{unicity} part, because unlike in $\ICones$ it isn't the case that any $\Cones$ morphism $h:\FMeas(X \times_{\Arity} Y) \rightarrow C$ is uniquely specified by its values on the product measures of shape $a \times b$. A counter-example can be found in Lemma~\ref{lemma:cones_vs_icones} in Appendix.

      \end{itemize}
    \end{remark}
    \begin{toappendix}
    \begin{lemma}\label{lemma:cones_vs_icones}
There exist $X,Y$ standard Borel spaces, and  two distinct $\Cones$ morphisms $h_1,h_2:\FMeas(X \times_{\Arity} Y) \rightarrow C$ such that for any finite measures $a, b$ on $X$ and $Y$ respectively, $h_1(a \times b) = h_2(a \times b)$.
      \end{lemma}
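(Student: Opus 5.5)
We need $X,Y$ standard Borel and two distinct linear, $\omega$-continuous, 1-bounded maps $h_1,h_2:\FMeas(X\times Y)\to C$ that agree on all product measures $a\times b$. Taking $C=\unit$ (the cone $\RRpos$), it suffices to find two $\Cones$-morphisms $\FMeas(X\times Y)\to\RRpos$ agreeing on rectangles but differing elsewhere. The natural candidates are linear functionals that ignore measurability, namely functionals built from the \emph{atomic} part of a measure. The plan is to take $X=Y=[0,1]$ with the Borel $\sigma$-algebra. Let $h_1$ be the zero functional. Let $h_2(\mu) := \sum_{x\in[0,1]} \mu(\{(x,x)\})$, the total mass of the atoms of $\mu$ lying on the diagonal $\Delta\subseteq[0,1]^2$. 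Then $h_2$ is nonzero, since $h_2(\dirac{(0,0)})=1$, so $h_1\neq h_2$.

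The first step is to check that $h_2$ is a $\Cones$-morphism.
\begin{itemize}
\item \emph{Well-definedness and boundedness.} A finite measure has at most countably many atoms, and $\sum_x \mu(\{(x,x)\}) \le \mu(\Delta)\le \norm{\mu}$. So $h_2$ is finite and 1-bounded.
\item \emph{Linearity.} This is immediate, because $\mu\mapsto\mu(\{p\})$ is linear and sums of non-negative terms commute with finite linear combinations.
\item \emph{$\omega$-continuity.} If $\mu_n$ is an increasing sequence with supremum $\mu$ in $\FMeas$, then $\mu=\sup_n\mu_n$ setwise (the supremum in the order $\le$ of the cone is the setwise limit). Hence $\mu_n(\{p\})\uparrow\mu(\{p\})$ for each $p$, and monotone convergence for countable sums (over the countable union of atom sets) gives $h_2(\mu_n)\uparrow h_2(\mu)$.
\end{itemize}

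The second step, and the main point, is to show that $h_2$ vanishes on every product measure, so that it agrees with $h_1$ there. This fails for $a=b=\dirac{0}$: there $h_2(a\times b)=1$. So the plain diagonal functional does not work, and it must be refined to kill products. The fix is to use the \emph{non-atomic} part instead. Define $h_2(\mu):=\mu_c(\Delta)$, where $\mu=\mu_c+\mu_d$ is the decomposition into the part $\mu_d$ concentrated on countably many points and the atomless part $\mu_c$. The map $\mu\mapsto\mu_c$ is linear, monotone, and commutes with increasing suprema, since atoms of the limit are controlled by the atoms of the approximants and $\mu_d$ is the sum over points. So the checks above carry over, with 1-boundedness from $\mu_c(\Delta)\le\norm\mu$.

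It remains to verify the two properties of the refined $h_2$.
\begin{itemize}
\item \emph{It vanishes on products.} Write $a=a_c+a_d$ and $b=b_c+b_d$, and expand $a\times b$ into four terms. The term $a_d\times b_d$ is purely atomic, so its atomless part is $0$. Each mixed term such as $a_c\times b_d$ is atomless, but it gives mass $\sum_y b(\{y\})\,a_c(\{y\})=0$ to $\Delta$. The term $a_c\times b_c$ gives $\Delta$ mass $\int a_c(\{y\})\,b_c(dy)=0$ by Fubini. So $h_2(a\times b)=0=h_1(a\times b)$.
\item \emph{It is nonzero.} Push forward Lebesgue measure $\lambda$ along $x\mapsto(x,x)$. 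The result is atomless and concentrated on $\Delta$, so $h_2(\mu)=1\neq 0=h_1(\mu)$.
\end{itemize}

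The step needing most care is the $\omega$-continuity of $\mu\mapsto\mu_c$. It reduces to showing that $\mu\mapsto\mu_d(\Delta)=\sum_x\mu(\{(x,x)\})$ is $\omega$-continuous, which holds by monotone convergence as in the first step, and then using $\mu_c(\Delta)=\mu(\Delta)-\mu_d(\Delta)$. The convergence of this difference follows because both terms converge and are finite, and the difference is monotone because $\mu_c$ is increasing in $\mu$. Monotonicity of $\mu\mapsto\mu_c$ holds because for $\nu\le\mu$, the measure $\mu-\nu$ is a finite measure with its own decomposition, and the atomless part is additive.
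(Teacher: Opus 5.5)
Your proof is correct. It follows a different decomposition from the paper's, though the core idea is the same.

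Both arguments take $X=Y=[0,1]$, $C=\RRpos$ and $h_1=0$. Both use the uniform measure on the diagonal $\Delta$ as the witness that $h_2\neq 0$. Both rest on the Fubini computation $(a\times b)(\Delta)=\sum_r a(\{r\})\,b(\{r\})$, which says that a product measure charges $\Delta$ only through common atoms.

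The paper defines $h_2(\nu)=\nu_0(X\times Y)$, where $\nu=\nu_0+\nu_1$ is the Lebesgue decomposition of $\nu$ with respect to the uniform diagonal measure $\mu_0$. This forces it to use uniqueness of the Lebesgue decomposition in two places: for linearity, and to show that the atomic diagonal measure $\alpha$ lies in the singular part. The paper also only sketches the check that $h_2$ is a $\Cones$-morphism.

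You use the atomless part instead, $h_2(\mu)=\mu_c(\Delta)=\mu(\Delta)-\sum_x\mu(\{(x,x)\})$. This has an explicit formula, so linearity, 1-boundedness and $\omega$-continuity each reduce to a direct monotone-convergence argument. Vanishing on products is then immediate from the Fubini identity, and no Radon--Nikodym machinery is needed. The paper's functional also kills singular-continuous mass on $\Delta$, but the lemma does not need that.

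In a final write-up, drop the first candidate (the plain diagonal-atom functional, which fails on $\delta_0\times\delta_0$). State the refined $h_2$ with the formula $\mu(\Delta)-\sum_x\mu(\{(x,x)\})$ from the start.
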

      \begin{proof}
        First, we fix $X = Y = [0,1]$, and $C = \RR_{\geq 0}$. The first mophism $h_1$ we consider is simply the $0$ morphism. 
      Let us build now a non-zero $h_2: \FMeas(X \times_{\Arity} Y) \rightarrow C$ such that $\forall a,b$,  $h_2(a \times b) = 0$.
      We're going to use the Lebesgues-Radon-Nikodym decomposition theorem, that says that for any measurable space $Z$, and any fixed finite measure $\mu_0$ on $Z$, then any other finite measure $\nu$ on $Z$ can be uniquely written as $\nu = \nu_0 + \nu_1$, where $\nu_0$ is absolutely continuous with respect to the reference measure $\mu_0$, while $\nu_1$ is singular with respect to $\mu_0$. For clarity, we recall the definition of absolute continuity and singularity for measures on some measurable space $X$:
      \begin{itemize}
      \item $\kappa$ is absolutely continuous with respect to $\mu$, noted $\kappa \ll \mu$ whenever for every measurable $A$, $\mu(A) = 0$ implies $\kappa(A)=0$.
        \item $\kappa$ is singular with respect to $\mu$, noted $\kappa \perp \mu$ whenever there exists a measurable $A$, with $\kappa(A) = 0$ and $\mu(X \setminus A) = 0$.
        \end{itemize}

      Our first step is to build a reference measure on $X \times Y = [0,1]\times [0,1]$: for this, we take $\mu_0:=(r \in [0,1]\mapsto (r,r) \in [O,1]^{2})_{\star}(\lambda)$, where $\lambda$ is the uniform measure on $[0,1]$ -- $\lambda([a,b]) = \lvert a - b\rvert$. In other words, $\mu_0$ is the measure concentrated on the diagonal $\Delta = \{(r,r) \mid r \in [0,1]\}$, and which is the uniform measure on $\Delta$.

      From there, we build $g_2$ as:
      $$g_2(\nu) := \nu_0(X\times Y ) \text{ with }\nu = \nu_0 + \nu_1 \text{ given by the decomposition theorem.}$$
      First, we can check that $g$ is indeed a $\Cones$ morphism. The main tool here is the \emph{unicity} of the decomposition.

      We are now going to show that for any $\kappa_1,\kappa_2$ measures on $[0,1]$,  $g(\kappa_1 \times \kappa_2)=0$. So let $\kappa_1 \times \kappa_2 = \nu + \eta$, where $(\nu,\eta)$ is given by the decomposition theorem with respect to the reference measure $\mu_0$ on $[0,1] \times [0,1]$. The problem now consists in computing $\nu(I^{2})$.
      First, since the measure $\mu_0$ is concentrated on the diagonal, $\nu(I^{2} \setminus \Delta) = 0$, it holds that $\nu(I^{2}) = \nu(\Delta)$.
      \begin{itemize}
\item The first step consists in proving that $\mu_1 \times \mu_2(\Delta) = \sum_{r \text{ s.t. }\mu_1(\{r\}) >0 \wedge \mu_2(\{r\})>0}\mu_1(\{r\}) \cdot \mu_2(\{r\})$ -- and the index set is countable, because any finite measure has at most a countable number of non-zero weighted singletons. We prove this now:
  observe that by Fubini theorem, $\mu_1 \times \mu_2(\Delta) = \int_{(r_1,r_2) \in I^{2}} \delta_{r_1,r_2}\cdot d(\mu_1 \times \mu_2) = \int_{r_1 \in I} (\int_{r_2 \in I} \delta_{r_1,r_2}\cdot d\mu_1)\cdot d\mu_2 = \int_{r_1 \in I} (\mu_1(\{r_1\}))\cdot d\mu_2 $ -- $\delta_{(r_1,r_2)} = 1 $ whenever $r_1=r_2$, and $0$ otherwise. Since $\mu_1$ is finite, there is at most a countable numbers of singleton with non-zero $\mu_1$ measure. Let us write $X_n$ for the countable set of those $r$ such that $\mu_1(\{r\}) >0$;  we can rewrite $\mu_1 \times \mu_2(\Delta) = \sum_{r_n \in X_n} \mu_1(\{r_n\}) \cdot \mu_2(\{r_n\})$. From there, we can rewrite: $\mu_1 \times \mu_1(\Delta) = \sum_{r \text{ s.t. }\mu_1(\{r\}) >0 \wedge \mu_2(\{r\})>0}\mu_1(\{r\}) \cdot \mu_2(r)$.
\item Let us note now $\alpha$ for the measure on $I^{2}$, $\alpha = \sum_{r \text{ s.t. }\mu_1(\{r\}) >0 \wedge \mu_2(\{r\})>0}\mu_1(\{r\}) \cdot \mu_2(r)\cdot \dirac{(r_1,r_1)}$. It is immediate to see that $\mu_0 \perp \alpha$. From there (and using the unicity of the decomposition), we can prove that $\alpha \leq \eta$ : indeed we can decompose $\mu_1 \times \mu_2 - \alpha = \beta + \beta'$, where $\beta \ll \mu_0$ and $\beta'\perp \mu_0$, and then $(\beta, \beta' + \alpha)$ is a valid decomposition)
  \item Since $\alpha$ is contained in the singular part, and $\alpha(\Delta) = \mu_1 \times \mu_2(\Delta)$, we obtain that $\nu ( \Delta) = 0 $, which allows us to conclude.
\end{itemize}
      
      \end{proof}
\end{toappendix} 
\begin{toappendix}
  \subsection{\texorpdfstring{The embedding $\SubStoch(ar) \hookrightarrow \ICones$ is strong monoidal.}{The embedding from sub-stochastic kernels to integrable cones is strong monoidal.}}\label{section_appendix:strong_monoidal}

  \begin{notation}\label{def:functor_clin}
    We note $\functorClin : \SubStoch(\Arity) \rightarrow \Icones$ as follows for the fully faithful embedding.
  \end{notation}

\subsubsection{Universal property of the tensor product.}
Recall that the monoidal product on $\Icones$ is built abstractly as the adjoint of the internal Hom. As well-known in the literature, we can express equivalently this adjunction by a universal property on the monoidal functor. Let us first recall the definition of adjunctions using universal morphisms -- see e.g.~\cite{asperti:hal-03316030}, Theorem 5.2.1.
\begin{proposition}\label{prop:adjunction_as_univ_morphism}
  Let $F:\Catone \rightarrow \Cattwo$ a functor. Then $F$ is a right adjoint if and only if for any object $X$ in $\Cattwo$, there exists a \emph{universal morphism} from $X$ to $F$, i.e. a pair $(GX,\eta_X)$ where $A \in \ob \Catone$, and $\eta_X:X \rightarrow F(GX)$ with the following universal property: $\forall Y \in \ob{\Catone},\forall g:X \rightarrow F(Y)$, $\exists !h:GX \rightarrow Y  $ such that:
  $$
    \begin{tikzcd}
      F(GX) \arrow[d,dashed,"\exists !F(h)"] & & X \arrow[ll,"\eta_X"]  \arrow[lld,"g"]\\
      F(Y) 
\end{tikzcd}
$$
Then a left adjoint of $F$ is obtained as taking $(X\mapsto GX)$ as its actions on objects, and for $f:X\rightarrow Y$, $Gf$ is the unique morphism given by applying the universal property of $(GX,\eta_X)$:
  $$
    \begin{tikzcd}
      F(GX)  \arrow[d,dashed,"\exists !"]& & X \arrow[ll,"\eta_X"] \arrow[d,"f"]\\
      F(GY) & & Y  \arrow[ll,"\eta_Y"]
\end{tikzcd}
$$
  \end{proposition}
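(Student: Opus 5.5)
The plan is to prove the two implications separately and then check that the stated construction of $G$ is a functor forming an adjunction with $F$. We will use the hom-set characterisation of adjunctions: $G \dashv F$ iff there is a bijection $\Catone(GX,Y) \cong \Cattwo(X,FY)$ natural in $X$ and $Y$.

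\emph{($\Rightarrow$).} Suppose $F$ has a left adjoint $G$, with unit $\eta$ and counit $\epsilon$. We take $(GX,\eta_X)$ as the candidate universal morphism. Given $g:X \rightarrow F(Y)$, we set $h := \epsilon_Y \circ Gg$.
\begin{itemize}
\item \emph{Existence:} naturality of $\eta$ together with the triangle identity $F\epsilon_Y \circ \eta_{FY} = \mathrm{id}_{FY}$ give $F(h)\circ \eta_X = g$.
\item \emph{Uniqueness:} if $F(h')\circ\eta_X = g$, then applying $\epsilon_Y \circ G(\cdot)$ and using naturality of $\epsilon$ with the other triangle identity gives $h' = \epsilon_Y\circ Gg$.
\end{itemize}
Equivalently, this is the statement that the map $h \mapsto F(h)\circ \eta_X$ is the hom-set bijection.

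\emph{($\Leftarrow$).} Suppose a universal morphism $(GX,\eta_X)$ is given for each $X$. For $f:X\rightarrow Y$ we define $Gf$ as the unique morphism with $F(Gf)\circ\eta_X = \eta_Y\circ f$, exactly as in the second diagram of the statement. The argument then proceeds in four steps.
\begin{enumerate}
\item \emph{Functoriality of $G$.} Both $\mathrm{id}_{GX}$ and $G(\mathrm{id}_X)$ satisfy the defining equation for $f = \mathrm{id}_X$, so they coincide by uniqueness. Likewise, $G(f')\circ G(f)$ satisfies the defining equation for $f'\circ f$, by functoriality of $F$ and pasting the two squares, so it equals $G(f'\circ f)$.
\item \emph{Hom-set bijection.} We define $\phi_{X,Y}:\Catone(GX,Y)\rightarrow \Cattwo(X,FY)$ by $h\mapsto F(h)\circ\eta_X$. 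The universal property says precisely that $\phi_{X,Y}$ is bijective.
\item \emph{Naturality in $Y$.} For $k:Y\rightarrow Y'$ we have $F(k\circ h)\circ \eta_X = F(k)\circ(F(h)\circ\eta_X)$, which is immediate from functoriality of $F$.
\item \emph{Naturality in $X$.} For $f:X'\rightarrow X$ we need $F(h\circ Gf)\circ\eta_{X'} = F(h)\circ\eta_X\circ f$. This follows by expanding $F(h\circ Gf) = F(h)\circ F(Gf)$ and using the defining square $F(Gf)\circ \eta_{X'} = \eta_X\circ f$.
\end{enumerate}
A natural bijection of hom-sets is an adjunction, so $G\dashv F$ with unit $\eta$; this also shows that the stated construction is a left adjoint.

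No step is a real obstacle, since everything is a uniqueness argument. The only point needing care is step 4, naturality in $X$, which is exactly where the definition of $Gf$ via the universal property is used. Alternatively one could avoid the hom-set formulation and define the counit $\epsilon_Y$ as the unique morphism with $F(\epsilon_Y)\circ\eta_{FY}=\mathrm{id}_{FY}$, then verify the triangle identities by uniqueness. We choose the hom-set route as the shorter one.
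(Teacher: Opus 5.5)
Your proof is correct: the triangle-identity argument for ($\Rightarrow$) and the functoriality and naturality checks for ($\Leftarrow$) all go through as you state them. The paper gives no argument of its own; it cites the textbook result (Asperti--Longo, Theorem 5.2.1), and the standard proof of that result is the same hom-set bijection $h \mapsto F(h)\circ\eta_X$ that you use.
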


Let's now apply this view of adjunction via universal morphism to a monoidal closed category $(\Catone,\otimes,\multimap)$.  For any object $A$, the functor $F :=(A \multimap \_ ): \Catone \rightarrow \Catone$ is right adjoint to the functor $(\_ \otimes A)$. Proposition~\ref{prop:adjunction_as_univ_morphism} applied to this adjunction gives us:
\begin{corollary}\label{prop:univ_product_tensor}
  In a monoidal closed category $(\Catone,\otimes,\multimap)$, for any objects $A,B$ of $\Catone$, $A \otimes B$ is caracterised (up to isomorphism) via the following universal property: $\forall$ object $C$, $\forall g:A\rightarrow(B\multimap C)$:
 
   $$
    \begin{tikzcd}
      B \multimap (A \otimes B) \arrow[d,dashed,"\exists !( B\multimap h)"] & & A \arrow[ll,"\eta_{A,B}"] \arrow[lld,"g"]  \\
      B \multimap C 
\end{tikzcd}
$$
and moreover for $f:(A_1\rightarrow A_2)$, $(f\otimes B)\circ \eta_{A_1,B} = \eta_{A_2,B}\circ f$.

\end{corollary}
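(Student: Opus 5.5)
The plan is to derive Corollary~\ref{prop:univ_product_tensor} directly from Proposition~\ref{prop:adjunction_as_univ_morphism}, applied to the right adjoint $F := (B \multimap \_)$. In a monoidal closed category this functor has left adjoint $G := (\_ \otimes B)$. First I will fix the unit of the adjunction: for each object $A$, let $\eta_{A,B} : A \rightarrow B \multimap (A \otimes B)$ be the component at $A$ of the unit of $(\_\otimes B) \dashv (B \multimap \_)$. Under the hom-set bijection $\Catone(A\otimes B, C)\cong \Catone(A, B\multimap C)$ it corresponds to $\id_{A\otimes B}$.

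\emph{Universal property.} Let $g : A \rightarrow B \multimap C$. Take $h : A\otimes B \rightarrow C$ to be the transpose of $g$. By naturality of the hom-set bijection in $C$, the transpose of $h = h \circ \id_{A\otimes B}$ is $(B\multimap h)\circ \eta_{A,B}$, so $g = (B\multimap h)\circ\eta_{A,B}$. For uniqueness, suppose $h'$ also satisfies $(B\multimap h')\circ\eta_{A,B} = g$. Then $h'$ and $h$ have the same transpose, and since transposition is a bijection, $h' = h$. Equivalently, the pair $(A\otimes B,\eta_{A,B})$ is exactly the universal morphism from $A$ to $F$ described in Proposition~\ref{prop:adjunction_as_univ_morphism}; the ``only if'' direction of that proposition gives the statement immediately.

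\emph{Characterisation up to isomorphism.} Suppose $(T,\eta')$ is another object satisfying the same universal property. Applying each universal property to the other's structure map gives morphisms $T \rightarrow A\otimes B$ and $A \otimes B \rightarrow T$. Their composites are mediating morphisms for $\eta_{A,B}$ and $\eta'$ respectively. By uniqueness, and the fact that $B \multimap \id = \id$, both composites are identities. This is the standard uniqueness argument for universal arrows.

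\emph{Naturality clause.} For $f : A_1\rightarrow A_2$, the morphism $f\otimes B$ is, by the second half of Proposition~\ref{prop:adjunction_as_univ_morphism}, the left adjoint $G$ applied to $f$. It is therefore defined as the unique mediating morphism with $(B\multimap (f\otimes B))\circ \eta_{A_1,B} = \eta_{A_2,B}\circ f$; this is just naturality of the unit. One should read the stated equation with $B\multimap(f\otimes B)$ in place of $f\otimes B$, since only then do the types match.

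The only point needing care is this last clause. We must check that the functor $G$ reconstructed from the universal arrows agrees with the given $\_\otimes B$ on morphisms. This holds because left adjoints are unique up to a unique natural isomorphism compatible with the units. So no step is a real obstacle.
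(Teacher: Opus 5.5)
Your proposal is correct and follows the paper's own route: the paper obtains the corollary in one line by applying Proposition~\ref{prop:adjunction_as_univ_morphism} to the adjunction $(\_\otimes B)\dashv(B\multimap\_)$, and you do the same, additionally writing out the transposition argument and the uniqueness-up-to-isomorphism argument. You are also right that the final clause only typechecks as $(B\multimap(f\otimes B))\circ\eta_{A_1,B}=\eta_{A_2,B}\circ f$, which is naturality of the unit.
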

In vector spaces, the universal property caracterising the tensor product is exprimed with \emph{bilinear maps}. A similar role as the one of bilinear maps is played in a closed category by
the morphisms $X \rightarrow (Y \multimap Z)$.
Conveniently, a description of those morphisms in $\ICones$ was proved in~\cite{ehrhard2025integration}, that we recall below:
\begin{proposition}[from~\cite{ehrhard2025integration}]\label{prop:n_linear_ci_maps}
  Let $A_1,A_2,B$ three $\Icones$ objects.
  Then the morphisms in $\Icones(A_1,  (A_2 \multimap B))$ can be described as the $\Set$ functions: $A_1 \times_{\Set}  A_2 \rightarrow B$ such that  :
       
      \begin{itemize}
      \item $f$ is separately linear and $\omega$-continuous in each of its arguments;
        \item composing $f:A_1 \times_{\Set}A_2 \rightarrow B$ with  measurable paths $R_i\rightarrow A_i$ for $i \in \{1,2\}$ always gives a measurable path $R_1\times_{\Arity} R_2\rightarrow B$;
      \item $f$ commutes with integrals separately in each of its arguments\footnote{This requirement makes sense because of Fubini's theorem for integrable cones~\cite{ehrhard2025integration}, that says that multiple integrals can be permuted.};
      \end{itemize}
In the following, we'll call such $\Set$-maps \emph{bilinear ci-maps}.      
       
\end{proposition}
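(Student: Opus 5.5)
The plan is to establish the correspondence by \emph{currying}: a morphism $f \in \Icones(A_1, A_2 \multimap B)$ is sent to the $\Set$-map $\tilde f(x,y) := f(x)(y)$, and conversely a bilinear ci-map $g$ is sent to $\hat g(x) := g(x,\cdot)$. These two operations are obviously mutually inverse on underlying functions. So the whole proof consists in checking that each side of the correspondence lands in the right class of maps. Throughout, I would rely on the description of $A_2 \multimap B$ recalled in Proposition~\ref{prop:closure_of_icones}. Its elements are the linear, $\omega$-continuous, path-preserving, integral-preserving maps $A_2 \to B$. Its tests are the $t \triangleleft \gamma$. Its integrals are computed pointwise.

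\textbf{From $f$ to $\tilde f$.} Linearity and $\omega$-continuity in $y$ hold because each $f(x)$ lies in $A_2 \multimap B$. In $x$ they hold because $f$ is a cone morphism and the algebraic operations of $A_2 \multimap B$ are pointwise. For lubs, I would use that they are pointwise too: this follows from separation by the tests $t \triangleleft \gamma$ with $\gamma$ constant, together with $\omega$-continuity of tests. Commutation with integrals in $y$ is immediate, since $f(x)$ preserves integrals. Commutation in $x$ follows from $f$ preserving integrals combined with the pointwise formula for integrals in $A_2\multimap B$. For measurability, take paths $\beta_i : R_i \to A_i$ and a test $m \in \mtest S^B$. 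I would reindex $\beta_2$ as a path $\gamma$ of arity $S' := S \times R_2$, and similarly reindex $m$ along the projection $S' \to S$ using composability. Then $m' \triangleleft \gamma$ is an $S'$-test on $A_2 \multimap B$. Since $f\circ\beta_1$ is a measurable path, $((s,r_2),r_1) \mapsto m(s, f(\beta_1(r_1))(\beta_2(r_2)))$ is jointly measurable, which is exactly what is needed. Boundedness of the composite path follows from boundedness of $\beta_1,\beta_2$ and of $f$.

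\textbf{From $g$ to $\hat g$.} First, each $g(x,\cdot)$ belongs to $A_2\multimap B$. Linearity, continuity and integral preservation are the hypotheses on $g$ in the second argument. Path preservation is obtained by taking $R_1$ to be the one-point space $0$ and $\beta_1$ constant at $x$. Linearity and continuity of $\hat g$ then come from the pointwise structure of $A_2 \multimap B$ again. Integral preservation by $\hat g$ uses the separate integral commutation in $x$ and the pointwise integrals in $A_2\multimap B$. Path preservation by $\hat g$ means that for $\beta_1 : R_1 \to A_1$ and every test $t\triangleleft\gamma$ of arity $S$, the map $(s,r_1)\mapsto t(s, g(\beta_1 r_1, \gamma(s)))$ is measurable. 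I would obtain this by applying the joint measurability hypothesis to the paths $\beta_1$ and $\gamma$ and to the test $t$, and then restricting along the diagonal $S \to S\times S$, which is allowed by composability.

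I expect the main obstacle to be the bookkeeping of arities in these two measurability arguments: the tests on $A_2 \multimap B$ tie the test arity to the path arity, so one must repeatedly reindex along projections and diagonals in $\Arity$. Two further points need care. The first is that lubs in $A_2\multimap B$ are pointwise. The second is $1$-boundedness of $\hat g$, which I would read as part of the intended notion of morphism: the $\Set$-maps considered send products of unit balls into the unit ball, a condition that transfers across currying because the norm on $A_2\multimap B$ is the operator norm.
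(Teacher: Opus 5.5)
Your currying argument is correct. The paper itself gives no proof and simply imports this characterisation from Ehrhard--Geoffroy. Your direct verification is the natural proof: pointwise algebraic structure, lubs and integrals in $A_2\multimap B$, and reindexing of the tests $t\triangleleft\gamma$ along projections and the diagonal.

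You are also right that the conditions as printed need joint $1$-boundedness added, namely $g(\ball{A_1}\times\ball{A_2})\subseteq\ball{B}$. Without it, $g(x,y)=2xy$ on $\unit\times\unit\to\unit$ meets all three items but curries to a map that is not $1$-bounded. A minor point: restricting along the diagonal is just composition with a measurable map, so it does not need the composability axiom.
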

We can now reformulate the tensor universal property from Corollary~\ref{prop:univ_product_tensor}, using the notion of bilinear ci-map.
\begin{proposition}\label{prop:univ_product_tensor_bilinear}
  Let $A,B,C$ be three $\ICones$ objects, and let $\eta_{A,B}:A,B\rightarrow A\otimes B$ the canonical bilinear ci-map -- i.e. the canonical morphism $A \rightarrow (B\multimap A\otimes B)$ from Corollary~\ref{prop:univ_product_tensor}). Then:
\begin{align*} & \forall  D,\,\forall \text{ bilinear ci-map } g:A,B\rightarrow D, \,\\ &\exists !  h: A\otimes B \rightarrow D, \text{ s.t. } g(x,y) =  h(\eta_{A,B}(x,y)).
\end{align*}
For $x\in A$, and $b \in B$, we will sometimes note $x\otimes y:=\eta_{A,B}(x,y) \in A\otimes B$. Observe that Proposition~\ref{prop:adjunction_as_univ_morphism} also tells us that for $f:A_1\rightarrow A_2$, , $\forall a_1\in A_1,a_2\in A_2, \, (f\otimes A_2)(a_1\otimes a_2) = f(a_1) \otimes a_2$.
  \end{proposition}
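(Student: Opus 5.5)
We derive this directly from Corollary~\ref{prop:univ_product_tensor}, using Proposition~\ref{prop:n_linear_ci_maps} to move between morphisms $A \rightarrow (B \multimap D)$ and bilinear ci-maps $A,B \rightarrow D$. First, $\eta_{A,B}$ is a morphism $A \rightarrow (B \multimap A\otimes B)$. By Proposition~\ref{prop:n_linear_ci_maps} its uncurrying $(x,y)\mapsto \eta_{A,B}(x)(y)$ is a bilinear ci-map $A,B\rightarrow A\otimes B$. We write this map as $\eta_{A,B}(x,y)$, or $x\otimes y$.

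\emph{Existence.} Let $g:A,B\rightarrow D$ be a bilinear ci-map. Proposition~\ref{prop:n_linear_ci_maps} turns $g$ into an $\ICones$-morphism $\tilde g : A \rightarrow (B\multimap D)$, defined by $\tilde g(x) = g(x,\cdot)$. Corollary~\ref{prop:univ_product_tensor} then gives a unique $h:A\otimes B\rightarrow D$ with $(B\multimap h)\circ \eta_{A,B} = \tilde g$. Unfolding $B\multimap h$ with the bifunctor description of Proposition~\ref{prop:closure_of_icones} gives $(B\multimap h)(\lambda) = h\circ\lambda$. Evaluating both sides at $x$ and then at $y$ yields $h(\eta_{A,B}(x)(y)) = g(x,y)$, which is the required equation.

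\emph{Uniqueness.} Suppose $h'$ also satisfies $g(x,y) = h'(\eta_{A,B}(x,y))$ for all $x,y$. The same unfolding shows that $(B\multimap h')\circ\eta_{A,B}$ and $\tilde g$ agree pointwise as functions $A \rightarrow (B\multimap D)$. Since $\ICones$-morphisms are functions and are determined by their values, the two morphisms are equal, so $h' = h$ by the uniqueness clause of Corollary~\ref{prop:univ_product_tensor}.

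\emph{Naturality identity.} For the final remark, $(f\otimes A_2)(a_1\otimes a_2) = f(a_1)\otimes a_2$, we use the clause $(f\otimes B)\circ\eta_{A_1,B} = \eta_{A_2,B}\circ f$ in Corollary~\ref{prop:univ_product_tensor}. Read as an equation between morphisms into $B\multimap(A_2\otimes B)$, its left side is $B\multimap(f\otimes B)$ composed with $\eta_{A_1,B}$. Evaluating both sides at $a_1$ and then at a point of $B$ gives the identity.

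The main point to check is that Proposition~\ref{prop:n_linear_ci_maps} really gives a bijection. The uncurried map of a morphism $A\rightarrow(B\multimap D)$ must be a bilinear ci-map, and conversely the curried map of a bilinear ci-map must be a genuine $\ICones$-morphism into $B\multimap D$. In particular, $x\mapsto g(x,\cdot)$ must preserve paths and integrals for the tests $t\triangleleft\gamma$ on $B\multimap D$. This follows from the separate measurability of paths and the separate preservation of integrals in the definition of a bilinear ci-map, together with the formula for integrals in $B\multimap D$ given in Proposition~\ref{prop:closure_of_icones}. Once this is in place, everything else is unfolding definitions.
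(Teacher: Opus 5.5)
Your proof is correct and follows the paper's intended route. The paper gives no separate argument: it presents this proposition as the reformulation of Corollary~\ref{prop:univ_product_tensor} obtained by currying and uncurrying through the description of $\Icones(A,B\multimap D)$ in Proposition~\ref{prop:n_linear_ci_maps}, with $B\multimap h$ acting by postcomposition. Your reading of the naturality clause as $(B\multimap(f\otimes B))\circ\eta_{A_1,B}=\eta_{A_2,B}\circ f$ is also the right way to make sense of the slightly mistyped identity in the statement.
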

\subsubsection{\texorpdfstring{$\FMeas(X\times_{\MeasC}Y)$ verifies the universal property of the $\Icones$ tensor product.}{Universal property for the embedding of product of measurable spaces.}}
To show that the embedding $C:\SubStoch(\Arity) \rightarrow \Icones$ is strong monoidal and symmetric, we will prove that $ X,W$ objects in $\Arity$  that $C(X \times_{\MeasC} W)$ verifies the universal property  for the tensor product of $CX$ and $CW$ in $\Icones$. To do that, we need to define a morphism $\kappa_{X,W}:CX \rightarrow (CW\multimap CX)$ in $\Icones$, designed with the aim of having $(C(X \times_{\MeasC} W),\kappa_{X,W})$ a universal morphism from $CX$ to the functor $(CW \multimap \_)$. 

\begin{lemmarep}\label{prop:construction_kappa}
  Let $X_1,X_2$ be measurable spaces in $\Arity$.
  The following map is a bilinear ci-map
\begin{align*}
  \kappa_{X_1,X_2}: &\FMeas {(X_1)} \times  \FMeas{(X_2)}  \rightarrow  \FMeas(X_1 \times  X_2) \\
  &(x_1,x_2) \mapsto (x_1 \times  x_2),
\end{align*}
where $x_1\times x_2$ is the \emph{product measure} of $x_1$ and $x_2$, in the sense of measure theory.
\end{lemmarep}
\begin{proof}
  First, it is immediate from the definition of product measure on a product measurable space that $\kappa_{X_1,,X_2}$ is bilinear continuous map -- ~i.e.~separately linear, and separatey Scott-continuous. 
  \begin{itemize}
\item Let's now look at measurability: recall that  for any $d,D \in \Arity$,  $\{\gamma \in \pathes d {\functorClin (D)} \text{ s.t. } \norm{\gamma}\leq 1 \} = \SubStoch(d,D)$. What we need to check--as highlighted in~\cite{ehrhard2025integration} in the section on the cone $(X,Y \multimap Z)$--is that for any $d_1,d_2 \in \Arity$, for any $\gamma_1 \in \pathes{d_1}{\functorClin X}$, $\gamma_2 \in \pathes{d_2}{\functorClin y}$, the map $(r_1,r_2 \in d_1 \times d_2 \mapsto \kappa(\gamma_1(r_1),\gamma_2(r_2)))$ is in $\pathes{d_1\times d_2}{\functorClin(X \otimes Y)}$. Using linearity, we can suppose without loss of generality that the norm of both $\gamma_1$ and $\gamma_2$ is smaller than $1$. It means that we can rewrite the requirement above in $\SubStoch$: we need to check that for every $\gamma_1 \in \SubStoch(d_1,X)$, $\gamma_2 \in \SubStoch(d_2,Y)$, it holds that  $(r_1,r_2 \in d_1 \times d_2 \mapsto \gamma_1(r_1)\times \gamma_2(r_2)) \in \SubStoch(d_1 \otimes d_2, X \otimes Y)$. It is immediate, because this map is simply the tensor product of $\gamma_1$ and $\gamma_2$ in $\SubStoch$. 
\item Let's look at integrability. We start from  $d_1,d_2 \in \Arity$, $\mu_1,\mu_2 \in \FMeas(d_1),\FMeas(d_2)$, and $\gamma_1 \in \pathes{d_1}{\functorClin X}$, $\gamma_2 \in \pathes{d_2}{\functorClin Y}$. We need to show that:
  \begin{equation}\label{goal:kappa_1} \kappa_{X,Y}(\cintc{\functorClin X} \gamma_1\cdot d\mu_1,\cintc{\functorClin Y} \gamma_2\cdot d\mu_2) =\cint^{\functorClin{(X \otimes Y)}}_{r_2 \in d_2} (\cint^{\functorClin{(X \otimes Y)}}_{r_1 \in d_1} \kappa_{X,Y}(\gamma_1(r_1),\gamma_2(r_2)) \mu_1 (dr_1))\mu_2(dr_2),
\end{equation}
     where the order of the two integrals in the right part of the equality doesn't matter, because of the $\Icones$-Fubini theorem shown in~\cite{ehrhard2025integration}. Once again, by linearity of all the constructions implied, it is enough to prove it when $\gamma_1,\gamma_2$ are $1$-bounded. As before,  whenever $d,D \in \Arity$, a $1$-bounded $\gamma \in \pathes d D$ is simply an element of $\SubStoch(d,D)$, and moreover for any $\mu \in \FMeas(D)$ it is the case that $\cint^D\gamma d\mu = \gamma \circ \mu$, where the composition is in $\SubStoch$, and a sub-probability measure on $d$ is identified with a $\SubStoch$ morphism from $\unit = \{\star\}$ to $d$. Rewriting our goal in $\SubStoch$--and using the fact that $\kappa_{X,Y}$ there is simply the tensorial product on $\Stoch$ morphisms--we see that our goal~\eqref{goal:kappa_1} becomes:
  $$(\gamma_1 \circ \mu_1) \otimes (\gamma_2 \circ \mu_2) = (\gamma_1 \otimes \gamma_2)\circ(\id_{d_1} \otimes \mu_2)\circ (\mu_1 \otimes \id_{d_1}),$$ which obviously holds by functoriality of the tensor product in $\SubStoch$.
\end{itemize}
\end{proof}
In the following, we will show that $(C(X\times_{\MeasC}Y),\kappa_{X_1,X_2})$ verifies the universal property presented in Proposition~\ref{prop:univ_product_tensor_bilinear}. For this, we need the auxilliary Lemma~\ref{prop:construction_ci_map_by_integration} below -- proved in~\cite{ehrhard2025integration} as part of Theorem 7.1 there, and for which we give a self-contained proof in the appendix -- that gives us a way of building an $\Icones$ morphism from a measurable path.
 \begin{lemmarep}\label{prop:construction_ci_map_by_integration}
    Let $X \in \Arity$, $C$ an arbitrary object in $\Icones$, and  $\gamma \in \pathes X C$.
    The map
  $(z \in \functorClin{X} \mapsto \cint_{X}^{C} \gamma \cdot dz \in C)$ is  an $\Icones$-morphism.
  \end{lemmarep}
  \begin{proof}
    Let's note $\alpha:= (z \in \functorClin{X} \mapsto \cint_{X}^{C} \gamma \cdot dz \in C)$.
  First,  Lemma 4.6 from~\cite{ehrhard2025integration} tells us that $\alpha$ is 
  linear, continuous and measurable. Let's now look at integrability.
 
  Let $d \in \Arity$, $\delta \in \pathes d {\functorClin{X}}$, and $\nu$ a bounded measure on $d$. Our goal is to show:
$ \alpha(\cint_{d}^{\functorClin{X}} \delta. d\nu) = \cint_d^{C} \alpha\circ \delta. d\nu$. We can rewrite this goal as:
\begin{equation}
\cint^{C}_{r \in X} \gamma(r) \cdot \left( \cint_{s \in d}^{\functorClin{(X)}} \delta(s). \nu(ds))\right)(dr)  = \cint_{s \in d}^{ C}\left( \cint^{C}_{r \in X} \gamma(r).\delta(s)(dr)\right). \nu(ds)
  \end{equation}

  Recall that $ \cint_{s \in d}^{\functorClin{X}} \delta(s). \nu(ds)$ is simply $\delta_\star(\nu)$, i.e.~the push-forward of the kernel $\delta$ applied to $\nu$, thus we can again rewrite our goal as:
\begin{equation}
\cint^{C}_{r \in X} \gamma(r) \cdot \delta_\star(\nu)(dr)  = \cint_{s \in d}^{ C}\left( \cint^{C}_{r \in X} \gamma(r).\delta(s)(dr)\right). \nu(ds)
  \end{equation}
  
  Let's recall the \emph{change of variables} lemma (5.7) from~\cite{ehrhard2025integration}: if $e,e' \in \Arity$, $e' \xrightarrow f e$ a measurable function, $C$ an object in $\Icones$, and $\lambda \in \pathes{e}{D}$, then for every $\mu \in \Meas(e')$: $\cint^{C}_{r' \in e'}\lambda\circ f(r') .\mu(dr') = \cint^C_{r \in e} \lambda(r). (f_{\star}(\mu))(dr)$. To conclude our proof, what we need is to check the validity of a stronger version of this change-of-variable idea, that could be applied when $e' \xrightarrow f e$ is in $\SubStoch$ instead of being merely in $\Meas$. More precisely, we need to show that if $e,e' \in \Arity$, $e' \xrightarrow f e$ a kernel in $\SubStoch$, $C$ an object in $\Icones$, and $\lambda \in \pathes{e}{C}$, then for every $\mu \in \Meas{(e')}$: $\cint^{C}_{r'\in e'}(\cint^{C}_{r \in e}\lambda(r).f(r')(dr)) .\mu(dr') = \cint^C_{r \in e} \lambda(r). (f_{\star}(\mu))(dr)$.
 This stronger change-of-variable lemma can be proved with the same tools as the one from~\cite{ehrhard2025integration}, plus  associativity of the composition in $\SubStoch$.
\end{proof}

\begin{proposition}\label{proposition:universal_morphism}
For any objects $X,W$ in $\SubStoch(\Arity)$, $(C(X\times_{\MeasC} W),\eta_{X,W})$ is an universal morphism from $C X$ to the functor $(C W \multimap \_)$. 
\end{proposition}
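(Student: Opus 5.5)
We have to show that for every object $D$ of $\Icones$ and every bilinear ci-map $g: CX, CW \rightarrow D$, there is a unique $\Icones$-morphism $h: C(X\times_{\MeasC} W)\rightarrow D$ with $g(x,w) = h(\kappa_{X,W}(x,w)) = h(x\times w)$. Here $\eta_{X,W}$ is the morphism $CX \rightarrow (CW \multimap C(X\times W))$ corresponding to the bilinear ci-map $\kappa_{X,W}$ of Lemma~\ref{prop:construction_kappa}. By Proposition~\ref{prop:n_linear_ci_maps} and Corollary~\ref{prop:univ_product_tensor}, this formulation is equivalent to the universal-morphism statement.

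\emph{Existence.} I would build $h$ by integration along Dirac measures. Consider the map $\gamma: (x,w)\in X\times W \mapsto g(\dirac x,\dirac w)\in D$.

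First, $\gamma$ is a measurable path. The maps $x\mapsto \dirac x$ and $w\mapsto\dirac w$ are $1$-bounded measurable paths $X\rightarrow CX$ and $W\rightarrow CW$ (deterministic kernels, Lemma~\ref{lemma:meas_paths_are_kernels2}). The second clause of the definition of bilinear ci-maps then says that $\gamma$ is a measurable path $X\times_{\Arity} W\rightarrow D$. It is bounded because $g$ is bounded on products of unit balls.

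Next, define $h(z) := \cint^{D}_{X\times W}\gamma\, dz$. By Lemma~\ref{prop:construction_ci_map_by_integration}, $h$ is an $\Icones$-morphism $C(X\times W)\rightarrow D$.

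Finally, check $h(x\times w) = g(x,w)$. Using the $\Icones$ Fubini theorem, $h(x\times w) = \cint_{r\in X}\cint_{s\in W} g(\dirac r,\dirac s)\, w(ds)\,x(dr)$. Since $g$ commutes with integrals separately in each argument, and $\cint^{CX}_{r}\dirac r\, x(dr) = x$ (Lemma~\ref{lemma:meas_paths_are_kernels2}, push-forward of the identity kernel), the inner integral equals $g(\dirac r, w)$ and the outer one equals $g(x,w)$. To make the Fubini step rigorous, I would test against every $m\in\mtest 0^D$, reducing it to classical Fubini on $X\times W$ for the measurable function $m\circ\gamma$.

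\emph{Uniqueness.} This is the key difference with $\Cones$ (cf.\ Lemma~\ref{lemma:cones_vs_icones}), and I expect it to be the main obstacle. Let $h'$ be any $\Icones$-morphism with $h'(x\times w) = g(x,w)$. Then in particular $h'(\dirac{(x,w)}) = h'(\dirac x\times\dirac w)= \gamma(x,w)$. Every $z\in C(X\times W)$ is an integral of Diracs: $z = \cint_{X\times W} (p\mapsto\dirac p)\, dz$, since the identity kernel pushes $z$ to itself. Because $h'$ preserves integrals, $h'(z) = \cint \gamma\, dz = h(z)$. The point to verify carefully is that $p\mapsto \dirac p$ is a legitimate measurable path on $C(X\times W)$ indexed by $X\times W\in\Arity$; this holds by closure of $\Arity$ under products and Lemma~\ref{lemma:meas_paths_are_kernels2}.

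\emph{Naturality.} The remaining condition of Corollary~\ref{prop:univ_product_tensor}, that $\eta$ is natural along $\SubStoch$-morphisms $f: X_1\rightarrow X_2$, reduces to the identity $f_\star(x)\times w = (f\otimes W)_\star(x\times w)$. This is functoriality of $\otimes$ in $\SubStoch$, checked on product measures. The uniqueness argument above extends it to all measures.
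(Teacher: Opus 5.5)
Your proof is correct and follows the same route as the paper. For existence, you integrate the path $(x,w)\mapsto g(\dirac x,\dirac w)$ using Lemma~\ref{prop:construction_ci_map_by_integration}, then check $h(x\times w)=g(x,w)$ via Fubini and the separate integral-preservation of $g$. For uniqueness, you use that an integral-preserving morphism out of $C(X\times_{\MeasC} W)$ is determined by its values on Dirac measures. Your closing naturality paragraph is not needed for the universal-morphism property itself: the \emph{moreover} clause of Corollary~\ref{prop:univ_product_tensor} is a consequence, not a hypothesis. It is harmless, and becomes relevant only later, for naturality of the induced isomorphism.
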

\begin{proof}
We need to show that the universal property stated in Proposition~\ref{prop:univ_product_tensor} holds for $(C(X \times_{\MeasC} W),\eta_{X,W})$.
Let us reformulate it -- using again the notion of bilinear maps:
\begin{align*} & \forall \Icones\text{-object} D,\,\forall \text{ bilinear ci-map } g:[CX,CW]\rightarrow D, \,\\ &\exists ! \, \ICones\text{-morphism } h: C(X\times_{\MeasC} W) \rightarrow D, \text{ s.t. } g(x,y) =  h(x\times y).
\end{align*}
In particular, it would mean that for any \emph{Dirac measures} $\dirac a \in \FMeas{(X)},\,\dirac b \in \FMeas{(W)}$, $g(\dirac a,\dirac b) = h(\dirac a \times \dirac b) = h(\dirac{(a,b)})$.
\begin{itemize}
\item The unicity of such a $h$  is immediate, because any $\ICones$-morphism $\FMeas(Z) \rightarrow D$  is entirely characterised by its values on the Dirac measure $\delta_z$, for $z \in Z$--since $(z \in Z \mapsto \dirac z \in \FMeas(Z))$ is always a measurable path.
\item Let us now show the existence of $h$. Lemma~\ref{prop:construction_ci_map_by_integration} allows us to define an $\Icones$ morphism $h: C(X\times_{\MeasC} W) \rightarrow D$ as:
  $h := (z \in C(X\times_{\Meas} W) \mapsto \cint_{X}^{C} \gamma \cdot dz \in Y)$, for $\gamma:(x,w) \mapsto g(\dirac x,\dirac w) \in \pathes{(X \times_{\MeasC} W,Y)}$. Observe that this $\gamma$ is indeed a valid path, because it is the composition of the bilinear ci-map $g$ with the two measurable path $(x \in X\mapsto \dirac x \in C X)$ and $(w \in W \mapsto \dirac w \in C W)$. We can check that indeed $\forall x \in \FMeas X,y \in \FMeas Y, \, g(x,y) = h(x\times y)$, using~\cite{ehrhard2025integration} Fubini's theorem for integrable cones.
  \end{itemize}
\end{proof}
\subsubsection{\texorpdfstring{Monoidality of $\SubStoch(\Arity) \hookrightarrow \Icones$.}{Monoidality of the embeding from sub-stochastich kernels to integrable cones.}}
We proved in Proposition~\ref{proposition:universal_morphism} that $C(X\times_{\MeasC} Y)$ verifies the universal property for the tensor product. As an immediate consequence it is isomorphic to the chosen tensor product in $\Icones$: $CX \otimes CY$, and this isomorphism is natural. To be able to conclude, however, that $\SubStoch(\Arity) \hookrightarrow \Icones$ is a strong symmetric monoidal functor, we need to show that it also preserves the coherences and symmetries.
\begin{theorem}\label{th:substoch_to_icones_is_strong_monoidal2}
$C:\SubStoch(\Arity) \hookrightarrow \Icones$ is a strong monoidal functor.
\end{theorem}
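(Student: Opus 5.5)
The plan is to build the strong monoidal structure from the universal property proved in Proposition~\ref{proposition:universal_morphism}, and then check each required piece of structure on Dirac measures.

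\textbf{Structure maps.} For $X,W$ in $\Arity$, Proposition~\ref{proposition:universal_morphism} says that $(C(X\times_{\MeasC}W),\kappa_{X,W})$ is a universal morphism from $CX$ to $(CW\multimap\_)$. The chosen tensor $(CX\otimes CW,\eta_{CX,CW})$ is another such universal morphism. By uniqueness of universal morphisms up to isomorphism, there is a unique $\ICones$-isomorphism
\[
\phi_{X,W}: CX\otimes CW\rightarrow C(X\times_{\MeasC}W)
\]
characterised by $\phi_{X,W}(x\otimes w)=x\times w$ for all $x\in \FMeas X$ and $w \in \FMeas W$. For the unit, $C\{\star\}=\FMeas(\{\star\})\cong \RRpos=\unit$; this is an isomorphism of integrable cones, since the tests on both sides coincide.

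\textbf{Uniqueness principle.} Every remaining identity will be reduced to the following fact. Two $\ICones$-morphisms out of $C(Z)$ are equal as soon as they agree on Dirac measures $\dirac z$. This holds because $z\mapsto\dirac z$ is a measurable path, every $\mu$ equals $\cint\dirac z\,\mu(dz)$, and morphisms preserve integrals. Using the isomorphisms $\phi$, the same reasoning shows that morphisms out of iterated tensors $CX\otimes CW\otimes CV$ are determined by their values on elementary tensors of Dirac measures, by the uniqueness clause of Proposition~\ref{prop:univ_product_tensor_bilinear}.

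\textbf{Verification.} I then check the following, each on Dirac measures:
\begin{itemize}
\item \emph{Naturality in $\SubStoch(\Arity)$.} For kernels $f:X\rightarrow X'$ and $g:W\rightarrow W'$, both composites $\phi_{X',W'}\circ(Cf\otimes Cg)$ and $C(f\otimes g)\circ\phi_{X,W}$ send $\dirac x\otimes\dirac w$ to $f(x)\times g(w)$. This uses the equation $(f\otimes B)(a\otimes b)=f(a)\otimes b$ from Proposition~\ref{prop:univ_product_tensor_bilinear}, together with bilinearity of $\kappa$.
\item \emph{Associativity.} Both paths around the associativity square send $(\dirac x\otimes\dirac w)\otimes\dirac v$ to $\dirac{(x,(w,v))}$, up to the associator of $\MeasC$. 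The uniqueness principle then gives equality.
\item \emph{Unitors.} Both sides send $1\otimes\dirac x$ to $\dirac{(\star,x)}$.
\item \emph{Symmetry.} Both sides send $\dirac x\otimes\dirac w$ to $\dirac{(w,x)}$.
\end{itemize}

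\textbf{Main obstacle.} The difficult step is justifying that morphisms out of the \emph{abstract} triple tensor are determined by Dirac elementary tensors. The tensor in $\ICones$ is built by an adjoint functor theorem, so its elements are not explicitly described. I would handle this by induction using Proposition~\ref{prop:univ_product_tensor_bilinear}. First, a morphism $h:(A\otimes B)\otimes D\rightarrow E$ corresponds to a bilinear ci-map on $(A\otimes B)\times D$. Second, for fixed $d$, the map $u\mapsto h(u\otimes d)$ is a morphism out of $A\otimes B$ and is therefore determined by its values on the $a\otimes b$. Third, each value $h((a\otimes b)\otimes d)$ is separately integral-preserving in $a$, $b$ and $d$, so it is determined by Dirac inputs via the decomposition $\mu=\cint\dirac z\,\mu(dz)$.
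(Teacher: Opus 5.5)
Your proposal is correct and follows essentially the same route as the paper's proof. The isomorphism $CX\otimes CW\cong C(X\times_{\MeasC}W)$ comes from Proposition~\ref{proposition:universal_morphism}, and symmetry, unit and associativity are checked on elementary tensors using the uniqueness clause of the tensor's universal property.

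The paper checks these identities on arbitrary elementary tensors $x\otimes y$ of finite measures, so it never needs your reduction to Dirac inputs. That reduction also follows more quickly by transporting along the isomorphism $\phi_{X\times W,V}\circ(\phi_{X,W}\otimes CV)$ to $C((X\times W)\times V)$, where Diracs determine morphisms. Finally, the paper takes naturality in $\SubStoch(\Arity)$ for granted, whereas you verify it explicitly.
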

\begin{proof}
  Combining Proposition~\ref{proposition:universal_morphism} and Proposition~\ref{prop:univ_product_tensor}, we obtain immediately  for any $X,Y \in \Arity$ an isomorphism  $$i_{X,Y}: \FMeas X\otimes_{\Icones}\FMeas Y \rightarrow \FMeas(X\times_{\MeasC} Y)$$
  such that moreover
\begin{equation}\label{eq_preserv_symm_1}
  \forall x \in CX,\forall y \in CY,\,i_{X,Y}(x\otimes y)=\kappa_{X,Y}(x,y)=x\times y
  \end{equation}
  \begin{itemize}
  \item Preservation of symmetries: let $s_{CX,CY}:CX \otimes CY \rightarrow CY \otimes CX$ the canonical symmetry morphism in $\ICones$, and $\sigma_{X,Y}$ the canonical symmetry morphism in $\SubStoch(\Arity)$. Recall that $\sigma_{X,Y}$ is given by the cartesian structure on $\MeasC$, and the strong structure of the Giry monad: concretely  $\forall x\in X,y\in Y$, $\sigma_{X,Y}(x,y) = \dirac{(y,x)}$. By definition of the symmetric monoidal structure on $\ICones$ -- using the universal property of the tensor -- it holds that $s_{X,Y}$ is the unique morphism  such that:
    \begin{equation}\label{eq_preserv_symm_2}
      \forall x \in CX,y\in CY, \, s_{X,Y}(x\otimes y) = y\otimes x \end{equation}
      We want to show that
    $ s_{CX,CY} = i_{X,Y}^{-1}\circ C\sigma_{X,Y} \circ i_{X,Y} : CX\otimes CY \rightarrow CY\otimes CX.$
    Using the universal property of the tensor product $CY\otimes CX$, it is equivalent to show that $\forall x\in CX,y\in CY$:
    $s_{CX,CY}(x\otimes y) = i_{X,Y}^{-1}\circ C\sigma_{X,Y} \circ i_{X,Y}(x\otimes y) \, \in  CY\otimes CX. $
    Applying~\eqref{eq_preserv_symm_2}, we can rewrite this goal as:
    $ \forall x,y,\, y\otimes x = i_{X,Y}^{-1}\circ C\sigma_{X,Y} (\kappa_{X,Y}(x,y) \in  CY\otimes CX.$
    We can easily check that $C\sigma_{X,Y}(\kappa_{X,Y}(x,y)) = y\times x = \kappa_{Y,X}(y,x)$. Thus we can rewrite our goal as:
    $$\forall x,y,\,y \otimes x = i_{X,Y}^{-1}(y\times x),$$
    and we can see immediately from~\eqref{eq_preserv_symm_1} that this equation holds.
  \item Preservation of unit: it has been shown in~\cite{ehrhard2025integration} that a unit of $\otimes$ in $\ICones$ is the cone $\unit = \FMeas(\{\star\})$, where $\{\star\}$ is a singleton set. $\{\star\}$ is also a unit of the monoidal product in $\SubStoch(\Arity)$ . By the universal property of $\otimes_{\ICones}$, for any $A \in \ob{\ICones}$, $\rho_A:A\otimes 1 \rightarrow X$ is the unique morphisms such that for every $a \in A$, $\rho_X(a \otimes \dirac{\star}) = a$. Let's note now for $X\in \Arity$, $\rho^{\dagger}:X\times_{\MeasC}\{\star\} \rightarrow X$ for the morphism given by the monoidal structure on $\SubStoch(\Arity)$: concretely $\forall x\in X,\rho^{\dagger}(x,\star) = \dirac x$. We need to show that $\rho_{CX} = C(\rho^{\dagger}_{X}) \circ i_{CX,1}$. Let's check this: for any $x\in \FMeas (X)$, we have that $C(\rho^{\dagger}_X)\circ i_{CX,1}(x\otimes \delta_{\star}) = C(\rho^{\dagger}_X)(x\times \dirac{\star}) = x$, which concludes the proof. 
    \item Preservation of associator. Using the universal property of $\otimes_{\ICones}$ given by the adjunction $(\_ \otimes B_1)\otimes B_2) \dashv B_1 \multimap B_2 \multimap \_)$, we see that for any $\ICones$ objects $A,B,D$, the associator $\alpha_{A,B,D}: (A\otimes B)\otimes D \rightarrow A \otimes (B\otimes D)$ is the unique morphism such that for every $a \in A,b \in B,d\in D$, $\alpha_{A,B,D}((a\otimes b)\otimes d) = (a\otimes (b\otimes d))$. For $X,Y,Z\in \Arity$, let's write $\alpha^{\dagger}_{X,Y,Z}$ for the associator $(X \otimes Y)\otimes Z \rightarrow X \otimes (Y\otimes Z)$ in $\SubStoch(\Arity)$: concretely $\forall x\in X,y\in Y,z\in Z$, $\alpha^{\dagger}((x,y),z) = \dirac{(x,(y,z))}$. We need to check that $\alpha_{CX,CY,CZ} =( CX\otimes i_{Y,Z}) \circ i_{X,Y\times Z}\circ C(\alpha^{\dagger}_{X,Y,Z}) \circ i_{X\times Y,Z}\circ (i_{X,Y}\otimes CZ)$. So let's take $x\in \FMeas(X), y \in \FMeas(Y), z\in \FMeas(Z)$. Using Proposition~\ref{prop:univ_product_tensor_bilinear}, we see that: $ i_{X\times Y,Z}\circ (i_{X,Y}\otimes CZ) ((x\otimes y)\otimes z) =  i_{X\times Y,Z}((i_{X,Y}(x\otimes y))\otimes z)  = \circ i_{X\times Y,Z}((i_{X,Y}(x\times y))\otimes z) =  \circ i_{X\times Y,Z}(x\times y)\otimes z) = (x\times y) \times z.$ Since$ \alpha^{\dagger}_{X,Y,Z}((x \times y)\times z) = x \times (y \times z)$, we can conclude from here.
    \end{itemize}
  
  \end{proof}
\end{toappendix}

\subsection{\texorpdfstring{Preservation of limits from $\Stoch(\Arity)$ to $\Icones$}{The embedding from stochastic kernels to integrable cones preserves connected limits.}}

It is rather easy to prove that the forgetful functor $\Stoch(\Arity) \hookrightarrow \SubStoch(\Arity)$ is a parametric right adjoint, thus preserves all small connected limits -- we prove this in Proposition~\ref{prop:preserv_limits_stoch_substoch} in appendix.
\begin{toappendix}
\begin{propositionrep}\label{prop:preserv_limits_stoch_substoch}
$\Stoch(\Arity) \hookrightarrow \SubStoch(\Arity)$ is a parametric right adjoint, thus preserves all small connected limits.
 \end{propositionrep}

 \begin{proof}
  Let us first recall a caracterisation of parametric right adjoint, due to~\cite{weber2004generic}, that uses generic factorizations.
     \begin{definition}
       Let $T:\Catone \rightarrow \Cattwo$ a functor. A morphism $f:B \rightarrow TA$ is \emph{$T$-generic} when for any commutative square
       $$\begin{tikzcd}[ampersand replacement=\&]
         B \arrow[r," \alpha"]\arrow[d, "f "] \& TX \arrow[d,"T\gamma " ]\\
         TA \arrow[r, "T\beta "] \& TZ
         \end{tikzcd}$$
       there exists a unique lift of the form $T\delta:TA \rightarrow TX$ such that $\gamma \circ \delta = \beta$. 
       \end{definition}
     \begin{definition}
       Let $T:\Catone \rightarrow \Cattwo$ a functor, such that $\Catone$ has a terminal object $\unit$. Let  $A \xrightarrow f {TB}$ be a morphism in $\Cattwo$. A  \emph{generic factorization} of $f$ is a factorisation:
$B \xrightarrow g TD \xrightarrow {Th}{TA}, $
       such that $g$ is $T$-generic.
     \end{definition}
     \begin{proposition}[From~\cite{weber2004generic}]
       Let $\Catone$ a category with a terminal object.
A functor $T:\Catone \rightarrow \Cattwo$ is a parametric right adjoint if and only if any map $A \xrightarrow f TB$ admits a $T$-generic factorization.
\end{proposition}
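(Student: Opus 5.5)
The plan is to prove both directions by unfolding the definition of a parametric right adjoint. Recall that $T:\Catone \rightarrow \Cattwo$ is a parametric right adjoint when the induced functor $T_{\unit}:\Catone \rightarrow \Cattwo/T\unit$, given by $X \mapsto (TX, T!_X)$, has a left adjoint. Since $\Catone$ has a terminal object, $\Catone \cong \Catone/\unit$, so this is the usual notion. To decide whether $T_{\unit}$ has a left adjoint, I will use the characterisation of adjunctions by universal morphisms (Proposition~\ref{prop:adjunction_as_univ_morphism}). Throughout, I read the genericity condition as also requiring the lift to satisfy $T\delta \circ f = \alpha$ (the upper triangle), which is the standard definition.

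($\Leftarrow$) Fix an object $(A, f: A \rightarrow T\unit)$ of $\Cattwo/T\unit$. Take a generic factorisation $A \xrightarrow{g} TD \xrightarrow{Th} T\unit$. Since $\unit$ is terminal, $h = !_D$, so $g$ is a morphism $(A,f) \rightarrow T_{\unit} D$ in the slice. I will show that $(D,g)$ is a universal morphism from $(A,f)$ to $T_{\unit}$.

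A morphism $(A,f) \rightarrow T_{\unit}X$ is an $\alpha: A \rightarrow TX$ with $T!_X \circ \alpha = f = T!_D \circ g$. This is exactly a commutative square with $\gamma = !_X$ and $\beta = !_D$ (both with target $Z = \unit$). Genericity of $g$ gives a unique $\delta: D \rightarrow X$ with $T\delta \circ g = \alpha$; the condition $!_X \circ \delta = !_D$ is automatic. Any $\delta'$ with $T\delta' \circ g = \alpha$ is also a lift of this same square, so uniqueness of the lift is exactly uniqueness of the mediating morphism. By Proposition~\ref{prop:adjunction_as_univ_morphism}, $T_{\unit}$ therefore has a left adjoint.

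($\Rightarrow$) Let $L \dashv T_{\unit}$ and let $f: A \rightarrow TB$. Consider the object $(A, T!_B \circ f)$ of the slice, and write $\eta: A \rightarrow TL(A)$ for the unit of the adjunction at this object. The map $f$ is a slice morphism $(A,T!_B\circ f) \rightarrow T_{\unit}B$. Its transpose $h: L(A) \rightarrow B$ therefore satisfies $f = Th \circ \eta$, which is the candidate factorisation.

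It remains to show that $\eta$ is $T$-generic. Take a commutative square $Tγ \circ \alpha = T\beta \circ \eta$ with $\alpha: A \rightarrow TX$, $\gamma: X \rightarrow Z$ and $\beta: L(A) \rightarrow Z$.
\begin{enumerate}
\item Post-composing with $T!_Z$ gives $T!_X \circ \alpha = T!_{L(A)} \circ \eta = T!_B \circ f$, so $\alpha$ is a slice morphism $(A, T!_B \circ f) \rightarrow T_{\unit}X$.
\item Its transpose is the unique $\delta: L(A) \rightarrow X$ with $T\delta \circ \eta = \alpha$.
\item Both $\gamma \circ \delta$ and $\beta$ are transposes of the same slice morphism $T\gamma \circ \alpha = T\beta \circ \eta$. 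By uniqueness of transposes, $\gamma \circ \delta = \beta$.
\item Uniqueness of such a $\delta$ is again uniqueness of the transpose.
\end{enumerate}
Hence $\eta$ is $T$-generic, and $f = Th \circ \eta$ is a generic factorisation.

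I expect the main obstacle to be bookkeeping rather than mathematics. One must check that every morphism in play really lives over $T\unit$, so that the adjunction, and the uniqueness it provides, can legitimately be invoked. This is where the terminal object is essential: it makes the compatibility conditions with $\unit$ either automatic or derivable by post-composing with $T!$.
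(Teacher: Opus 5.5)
Your proof is correct, and it is the standard argument from Weber's paper, which the present paper only cites without proof: the units of $L \dashv T_{\unit}$ at $(A, T!_B\circ f)$ are generic and the transpose of $f$ gives the second factor, while conversely generic factorisations of maps $A \to T\unit$ are exactly universal arrows to $T_{\unit}$. One point to make explicit is that you take ``$T_{\unit}$ has a left adjoint'' as the definition of parametric right adjoint; under Weber's general definition (a left adjoint to every $T_X:\Catone/X\to\Cattwo/TX$) this is equivalent when a terminal object exists, because an adjunction $L\dashv T_{\unit}$ lifts to slices, and you should cite that standard fact.
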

\begin{lemma}
  Let $T$ be the forgetful functor $\Stoch(\Arity) \rightarrow \SubStoch(\Arity)$.
Any map $A \xrightarrow f {TB}$ in $\SubStoch(\Arity)$ admits a $T$-generic factorization.
\end{lemma}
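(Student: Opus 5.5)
Given a sub-stochastic kernel $f:A\rightarrow TB$, i.e.~$f:A\rightarrow \Giry^{\leq 1}B$ with $A,B\in\Arity$, I will factor it through its ``defect'' as follows. Take $D := B\sqcup\{\star\}$, the coproduct of $B$ with the one-point space; this is again a standard Borel space and hence in $\Arity$. Define the stochastic kernel $g:A\rightarrow D$ by
\[ g(a) := f(a) + (1-f(a)(B))\cdot\dirac{\star}. \]
It is measurable because $a\mapsto f(a)(U)$ is measurable for every measurable $U\subseteq B$. Define $h:D\rightarrow B$ by $h(b)=\dirac b$ on $B$ and $h(\star)=0$. This is a sub-stochastic kernel, so it is not a morphism in $\Stoch$. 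The factorisation therefore has to be arranged differently: the generic part must land in $TD$ and the second part must be $T$ of a stochastic kernel $D\rightarrow B$. Such an $h$ cannot exist, since mass must be preserved. The correct shape is to realise $f$ as $T h'\circ g'$ where $g':A\rightarrow T D'$ is merely sub-stochastic. So I take $D' := B$, $g' := f$ and $h' := \id_B$. Then the whole content of the lemma is to show that $f$ itself is $T$-generic.

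To check genericity, consider a commuting square $T\gamma\circ\alpha = T\beta\circ f$ with $\gamma:X\rightarrow Z$ and $\beta:B\rightarrow Z$ stochastic. We need a unique stochastic $\delta:B\rightarrow X$ with $\gamma\circ\delta=\beta$ and $\delta\circ f=\alpha$. In general this fails. For example, take $B$ large and $f$ supported on a small set: then $\delta$ is unconstrained outside the support, so uniqueness fails. The naive factorisation is therefore wrong, and I will instead shrink $B$.

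The plan I commit to is to take $D := \unit$-augmented data, namely the pair consisting of the ``total mass'' $f(a)(B)$ and the normalised kernel. Concretely, I use the parametric-right-adjoint description directly rather than generic factorisations. $T$ is a parametric right adjoint iff $T_{/\unit}:\Stoch\rightarrow\SubStoch/T\unit$ is a right adjoint. Since $\unit$ is terminal in $\Stoch$ and $T\unit=\unit$, a map $A\rightarrow T\unit$ in $\SubStoch$ is a measurable weight $p:A\rightarrow[0,1]$. A left adjoint then sends $(A,p)$ to the object $A$ together with the stochastic kernel structure that normalises maps $A\rightarrow TB$ over $p$. Morphisms $(A,p)\rightarrow (TB,!)$ are sub-stochastic $f$ with $f(a)(B)=p(a)$. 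I expect to represent these as stochastic kernels out of $A$ modulo the null region. Concretely, I set $L(A,p) := A_{p>0}$, the measurable subset where $p>0$, and send $f$ to $a\mapsto f(a)/p(a)$ on $A_{p>0}$. This is a bijection, natural in $B$, because on $\{p=0\}$ the kernel $f$ vanishes. It is measurable because division by a positive measurable function is measurable. The only caveat is that $A_{p>0}$ might be empty, which is where nonemptiness issues with $\Stoch$ (empty spaces, which admit no stochastic kernels into them) must be handled.

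Translating back, the generic factorisation of $f$ is $A\xrightarrow{g}T(A_{p>0})\xrightarrow{T\hat f}TB$, with $g(a)=p(a)\dirac a$ for $a\in A_{p>0}$ and $0$ otherwise, and $\hat f(a)=f(a)/p(a)$. Genericity of $g$ is proved as follows. Given a square $T\gamma\circ\alpha=T\beta\circ g$, comparing total masses forces $\alpha(a)$ to have mass $p(a)$. Then $\delta(a):=\alpha(a)/p(a)$ on $A_{p>0}$ is the unique lift, and $\gamma\circ\delta=\beta$ follows by cancelling $p(a)>0$. The main obstacles are twofold. The first is the measurability bookkeeping, specifically that $A_{p>0}\in\Arity$, which holds since Borel subsets of standard Borel spaces are standard Borel. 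The second is the degenerate empty case.
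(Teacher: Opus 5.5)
Your final construction is correct, and it takes a different route from the paper's proof.

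\textbf{What the paper does.} The paper writes $f = T(\mathrm{id}_B)\circ f$ and then only checks that $\mathrm{id}_{TB}$ is $T$-generic. It uses the same mass-preservation observation you rely on: if $T\gamma\circ\alpha$ is stochastic and $\gamma$ is stochastic, then $\alpha$ has full mass, so $\alpha = T\alpha'$.

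\textbf{Why that is not enough.} In the factorisation $f = T(\mathrm{id}_B)\circ f$, the component that must be generic is $f$ itself, not the identity. As you noticed in your discarded attempt, $f$ is in general not generic. For instance, take $A=Z=\unit$, $B=X=\{0,1\}$, $\gamma,\beta$ the unique maps to $\unit$, and $f=\alpha=\dirac{0}$; then the lift $\delta$ is unconstrained at $1$. What the paper's computation actually establishes is the case where $f = Th$ is stochastic, via the generic factorisation $\mathrm{id}_{TA}$ followed by $Th$.

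\textbf{What your route buys.} Your factorisation is
\[ A\xrightarrow{g}T(A_{p>0})\xrightarrow{T\hat f}TB, \qquad p(a)=f(a)(B),\quad g(a)=p(a)\cdot\dirac{a},\quad \hat f=f/p. \]
When $p\equiv 1$ it reduces exactly to the paper's argument. It also handles genuinely sub-stochastic $f$, by normalising on the support of the mass, and this is where the real content of the lemma lies.

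\textbf{What it costs.} You need $\Arity$ to be closed under Borel subsets and to contain the empty space. The second requirement is unavoidable: for $f=0$, the generic factor must land in $T\emptyset$, since any nonempty $D$ admits several lifts into $\{0,1\}$ over $\unit$. Both requirements hold for $\Arity=\Pol$, which is the only case the paper uses.

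\textbf{For the write-up.} Remove the false starts. State explicitly that $T\delta\circ g=\alpha$ also holds on $\{p=0\}$, because $\alpha(a)$ has mass $p(a)=0$ there.
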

\begin{proof}
  Let $X,Y$ two objects in $\Arity$, and $X \xrightarrow f TY$ a sub-stochastic kernel (recall that $T$ is the identity on objects). We need to show that $f$ admits a $T$-generic factorization. It is immediate that $f= X \xrightarrow f TY \xrightarrow {TY} TY$, thus it is enough to show that the identity on $TY$ is $T$-generic. Let us do this: we consider any commutative square:
  $$\begin{tikzcd}[ampersand replacement=\&]
         TY \arrow[r," \alpha"]\arrow[d, "TY"] \& TX \arrow[d,"T\gamma " ]\\
         TY \arrow[r, "T\beta "] \& TZ
       \end{tikzcd}$$
       And we need to find a lift $T\delta:TY \rightarrow TX$, such that moreover $\gamma \circ \delta = \beta$. The key point here is that we can conclude from: $T\gamma \circ \alpha = TY \circ T\beta = T\beta$, that $\alpha$ is itself of the form $T\alpha'$ -- indeed the equality above tells us that when we post-compose the sub-stochastic kernel $\alpha$ with a stochastic kernel, we obtain again a stochastic kernel, and the only way that can happen is when $\alpha$ itself is a stochastic kernel. From there, we can rewrite the commutative diagram as:
         $$\begin{tikzcd}[ampersand replacement=\&]
         TY \arrow[r," T\alpha'"]\arrow[d, "TY"] \& TX \arrow[d,"T\gamma " ]\\
         TY \arrow[r, "T\beta "] \& TZ
       \end{tikzcd}$$
       We see from there that $\alpha=T\alpha'$ lifts the square, and moreover since $T$ is faithful it holds that $\gamma \circ \alpha' = \beta$.
  \end{proof}
   
         \end{proof}
         
\end{toappendix}
       Next, we show a convenient feature of the category $\Cones$ with respect to limits preservation: 

         \begin{lemmarep}\label{lemma:generic_on_icones_limits}
          
           Let $D:I\rightarrow \Cones$ a diagram. 
           $(L\xrightarrow {g_i} D_i)$ is a limit cone of  $D$ if and only if any cone on $D$ \emph{with source $\unit$} can be factorised uniquely through $L$.
           As a consequence, for any category $\Catone$, if $F:\Catone \rightarrow \Cones$ is a functor such that \begin{enumerate}\item there exists $C_{\unit}$ in $\Catone$ such that $FC_\unit=\unit$  and \item the $F$-action on morphisms is bijective $\Catone(C_\unit,A) \cong \Cones(\unit,FA)$ for every object $A$,  \end{enumerate} then $F$ preserves all small limits.
          
\end{lemmarep}
\begin{proof}
  
Let $D:I \rightarrow \Cones$ a small diagram. We suppose that this diagram has a limit cone $(L,g_i:(L \rightarrow DX_i))_{X_i \in \ob I}$ in $\Cones$. Let $A$ an object in $\Cones$.
For every element $x \in A$, we can obtain from $C$ a cone $C_X(x):(\unit,f_X(x):\unit \rightarrow DX)$ in $\Icones$, defined as $f_X(x): \lambda \in \unit = \RR_{\geq 0} \mapsto \lambda \cdot f(x) \in X)$. We apply now our hypothesis on $L$, and we obtain the existence of a \emph{unique} element--that we'll note $f^\dagger(a) :\unit \rightarrow L$, such that $g_i(f^\dagger(a)) = f_i(a)$. All the problem now consists in showing that $a \in A \mapsto f^\dagger(a) \in L$ is indeed an $\Cones$-morphism, i.e. that $f^\dagger$ is linear, and Scott-continuous. 
    \begin{enumerate}
    \item Let $a,b \in A, \lambda \in \RR$. Let us first prove that $g_i(f^\dagger(a) + \lambda f^\dagger(b)) = f_i(a + \lambda b),$ for every $X_i \in \ob{I}$: since we know that both the $g_i$ and the $f_i$ are linear (since they are $\Cones$ morphism), this statement is equivalent to say that  $g_i(f^\dagger(a)) + \lambda g_i(f^\dagger(b)) = f_i(a) + \lambda f_i(b),$ and we can obtain the latter immediately from the definition of $f^\dagger(a)$ and $f^\dagger(b)$. From there, we can conclude by unicity of $f^\dagger(a + \lambda b)$ that $f^\dagger(a+ \lambda b) = f^\dagger(a) + \lambda .f^\dagger(b)$.
      \item The reasoning is similar for Scott-continuity: let us consider en increasing chain $x_n$ in the unit ball of $C$. We want to show that $f^\dagger(\sup\{x_n\}) = \sup \{f^\dagger(x_n)\}$. Again by unicity of the way $f^\dagger(\sup \{x_n\})$ is defined, it is enough to show that for every $i$, $g_i(\sup \{f^\dagger{x_n})) = f_i(\sup \{x_n\})$. But again we obtain that easily since the $f_i,g_i$ are $\Cones$ morphism thus Scott-continuous.
      \end{enumerate}

    \end{proof}

 As a corollary of Lemma~\ref{lemma:generic_on_icones_limits}, the functor $\SubStoch(\Arity) \rightarrow \Cones$ preserve small limits. We now look at how lifting this result to $\ICones$.
\begin{toappendix}
 \begin{propositionrep}\label{prop:preserv_limits_substoch_cones}
$\SubStoch(\Arity) \rightarrow \Cones$ preserve all small limits.         

\end{propositionrep}
\begin{proof}
  
  Let  $D:I \rightarrow \SubStoch(\Arity)$ be a small diagram. We suppose that this diagram has a limit cone $C:(L,g_i:(L \rightarrow DX_i))_{X_i \in \ob I}$ in $\SubStoch(\Arity)$.
  By Lemma~\ref{lemma:generic_on_icones_limits},  it is enough for $C^{\Cones}$ to be a limit cone on $D^{\Cones}$, that we are able to factorise any cone $E$ on $D^{\Cones}$ with source $\unit_{\Cones}$  through $C^{\Cones}$. The embedding $ {\SubStoch(\Arity) \hookrightarrow \Cones}$ is surjective on morphisms in $(\unit_\SubStoch,Y)$, for every $Y \in \Arity$, and $\unit_{\Cones} = (\unit_{\SubStoch})^{\Cones}$, thus there exists a cone $E'$ on $D$ such that $E = (E')^{\Cones}$. From there, we can conclude since by hypothesis $C$ is a limit cone in $\Stoch$. 
   
\end{proof}

  We can also do now --using Lemma~\ref{lemma:generic_on_icones_limits} -- the proof of the fact that $\SubStoch^{\leq \omega} \rightarrow \pcoh$ preserve small limits.
  \begin{lemma}\label{lemma:limits_preservation_substoch_pcoh}
$\SubStoch^{\leq \omega} \rightarrow \pcoh$ preserve small limits.
\end{lemma}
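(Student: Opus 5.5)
The plan is to mimic the proof of Proposition~\ref{prop:preserv_limits_substoch_cones}, now with $\pcoh$ as the target category. The ingredient that does the work is a version of Lemma~\ref{lemma:generic_on_icones_limits} for $\pcoh$: a cone $(L,g_i)$ in $\pcoh$ is a limit cone as soon as every cone with source $\unit$ factorises uniquely through it.

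The first step is to establish this $\pcoh$ analogue. I would get it from Lemma~\ref{lemma:generic_on_icones_limits} using the fully faithful functor $c:\pcoh\to\Cones$ of Theorem~\ref{thm:ic-full-faithful}.

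\begin{itemize}
\item Let $D$ be a small diagram in $\DistSubStoch$ with limit cone $(L,g_i)$.
\item Apply the embedding $(\cdot)^{\pcoh}$ of Definition~\ref{def:wstoch_to_pcoh}, then $c$.
\item A cone with source $\unit$ on $c(D^{\pcoh})$ is a family of $\Cones$-morphisms $\unit\to c(D_i^{\pcoh})$. By fullness of $c$ and $c(\unit)=\unit$, it is the image of a cone in $\pcoh$ from $\unit$.
\item Each morphism $\unit\to X^{\pcoh}$ in $\pcoh$ is a sub-probability vector on $X$, which is exactly a sub-stochastic kernel $\{\star\}\to X$. So by fullness of $\DistSubStoch\hookrightarrow\pcoh$, it comes from a cone on $D$ in $\DistSubStoch$ with source the singleton space.
\item This cone factorises uniquely through $L$. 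Faithfulness of both functors carries existence and uniqueness of the factorisation back up.
\end{itemize}

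By Lemma~\ref{lemma:generic_on_icones_limits}, $c((L,g_i)^{\pcoh})$ is then a limit cone in $\Cones$. Since $c$ is fully faithful, it reflects limits, so $(L,g_i)^{\pcoh}$ is a limit cone in $\pcoh$.

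Equivalently, the same argument applied to the composite $c\circ(\cdot)^{\pcoh}:\DistSubStoch\to\Cones$ shows that this composite preserves small limits, via the second part of Lemma~\ref{lemma:generic_on_icones_limits} with $C_\unit=\{\star\}$. The required bijection is $\DistSubStoch(\{\star\},X)\cong\Cones(\unit,c(X^{\pcoh}))$: both sides are the sub-probability vectors on $X$, because a linear $\omega$-continuous $1$-bounded map out of $\RRpos$ is determined by the image of $1$, which lies in the unit ball. Preservation in $\pcoh$ then follows because $c$ is fully faithful and hence reflects limits.

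The main point to check carefully is the reflection step. The $\Cones$-limit given by the lemma is the image of an object of $\pcoh$ only because we already know the $\DistSubStoch$-limit $L$ and take its image. Given that, reflection is the standard fact that a fully faithful functor reflects limit cones whose vertex lies in its image. The bijection on points from $\unit$ is routine.
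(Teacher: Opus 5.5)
Your proposal is correct and is essentially the paper's proof: the composite $\DistSubStoch\to\Cones$ satisfies the hypotheses of the second part of Lemma~\ref{lemma:generic_on_icones_limits} with $C_\unit=\{\star\}$, so it preserves small limits, and since $c$ is fully faithful it reflects limits, which gives preservation by $\DistSubStoch\to\pcoh$. The only cosmetic difference is that the paper writes the composite as $\DistSubStoch\to\SubStoch(\Arity)\to\Cones$ and then uses the fact that it equals $c\circ(\cdot)^{\pcoh}$.
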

\begin{proof}
First, we can see that the requirements of Lemma~\ref{lemma:generic_on_icones_limits} hold for $F:\SubStoch^{\leq \omega} \rightarrow \SubStoch(\Arity) \rightarrow \Cones$, thus $F$ preserve limits. Moreover, recall that all our embeddings are consistent with each others, and in particular: $(\pcoh \rightarrow \Cones)\circ  (\SubStoch^{\leq \omega} \rightarrow \pcoh) = F$. To conclude that  $(\SubStoch^{\leq \omega} \rightarrow \pcoh)$, it is sufficient to observe that $(\pcoh \rightarrow \Cones) $ is fully faithful. 
  \end{proof}
  \end{toappendix}

For this, we first prove that the forgetful functor $\ICones \rightarrow \Cones$  \emph{lifts} all small limits, i.e. that any diagram $D$ that has a $\Cones$-limit $L$ when forgotten into $\Cones$ has also a limit $L'$ in $\ICones$, and that moreover we have a concrete description of its measurability structure.  Let us note  $\overline A^{\Cones},\overline{A}^{\MCones}$ for the image of $A$ in $\ICones$ under the relevant forgetful functors. 

     \begin{toappendix}
        \begin{lemma}\label{lemma:auxilliary_on_cones_limits}
          Let $D:I\rightarrow \Cones$ be a diagram, and $C =(L\xrightarrow{g_i} D_i)_{i \in I}$ a limit cone. Then:
          \begin{itemize}
            \item $\forall x,y \in L$ such that $\forall i$, $g_i(x) = g_i(y)$, then $x=y$;
            \item for every $x\in L$, $\norm x = \sup_{i \in I} g_i(x)$.
              \end{itemize}
\end{lemma}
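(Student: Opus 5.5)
The plan is to reduce both items to the universal property of the limit, applied to cones whose source is $\unit$. The key observation is that morphisms $\unit \rightarrow P$ in $\Cones$ correspond to elements of $\ball P$. Indeed, a morphism $h:\unit \rightarrow P$ is linear, so it is determined by $h(1) \in P$. It is $1$-bounded exactly when $\norm{h(1)}_P \le 1$. Conversely, for any $z \in \ball P$ the map $\lambda \mapsto \lambda\cdot z$ is linear, $\omega$-continuous (scalar multiplication is monotone and preserves sups of increasing sequences in $\RRpos$) and $1$-bounded by homogeneity of the norm. This correspondence is the same device used in the proof of Lemma~\ref{lemma:generic_on_icones_limits}.

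For the first item, take $x,y \in L$ with $g_i(x) = g_i(y)$ for all $i$. If $x = y = 0$ there is nothing to prove. Otherwise set $c := \max(\norm x,\norm y) > 0$, so that $x/c$ and $y/c$ lie in $\ball L$. They yield two morphisms $h_x,h_y:\unit \rightarrow L$. The composites $g_i\circ h_x$ and $g_i \circ h_y$ coincide for every $i$, because by linearity $g_i(x/c) = g_i(y/c)$. Hence both $h_x$ and $h_y$ factorise the same cone $(\unit, g_i\circ h_x)_{i}$, and the uniqueness part of the universal property of $L$ gives $h_x = h_y$. Evaluating at $1$ gives $x/c = y/c$, so $x = y$.

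For the second item, I read the statement as $\norm x = \sup_{i} \norm{g_i(x)}$. I would prove the two inequalities separately.

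\emph{Easy direction.} Each $g_i$ is linear and $1$-bounded, so by homogeneity $\norm{g_i(x)} \le \norm x$. Hence $s := \sup_i \norm{g_i(x)} \le \norm x$.

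\emph{Reverse direction.} If $s = 0$, then every $g_i(x) = 0 = g_i(0)$, and the first item gives $x = 0$. If $s > 0$, each $g_i(x)/s$ lies in $\ball{D_i}$. The family $(\unit \rightarrow D_i,\ \lambda \mapsto \lambda\cdot g_i(x)/s)_i$ is a cone on $D$, since the $g_i$ form a cone and all maps involved are linear. By the universal property there is a morphism $h:\unit \rightarrow L$ with $g_i(h(1)) = g_i(x)/s = g_i(x/s)$ for all $i$. The first item then yields $h(1) = x/s$. Since $h$ is $1$-bounded, $\norm{x/s} \le 1$, i.e.\ $\norm x \le s$.

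The degenerate case $I = \emptyset$ should be checked separately: there $L$ is terminal, hence the zero cone, and both statements are trivial. I expect the main care point of the whole argument to be the normalisation step: the universal property only speaks about $1$-bounded morphisms, so elements must be rescaled into the unit ball before it can be invoked.
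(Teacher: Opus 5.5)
Your proof is correct and follows essentially the paper's argument: test the limit only against cones whose source is a one-dimensional cone, use joint injectivity of the $g_i$ (the first item) to identify the mediating element, and conclude with $1$-boundedness. The only difference is cosmetic. The paper rescales the source cone (its auxiliary $\alpha.\unit$, in a proof by contradiction), whereas you rescale the element into $\ball L$ and $\ball{D_i}$; this makes explicit both the normalisation and the degenerate case $s=0$, which the paper leaves implicit.
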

\begin{proof}
 The first point can be seen by considering the cone $\unit \rightarrow D_i$ given by the $(g_i(x))_{i \in I}$ -- i.e. the one that to $\lambda \in \unit \mapsto \lambda\cdot g_i(x)$.
  Let $x\in X$.
Because all $\Cones$-morphisms are $1$-bounded, it is immediate that $\norm x \geq \sup_{i \in I} g_i(x)$. Now, suppose by contradiction that $\norm x \leq \alpha\cdot \sup_{i \in I} g_i(x)$, with $\alpha < 1$. We consider the cone $\alpha.\unit$, i.e. the cones that are the same elements as $\unit = \RR_{\geq 0}$, but $\norm{y}_{\alpha.\unit} = \alpha\cdot \norm y$. The key property of $\alpha.\unit$ is that it contains strictly \emph{more} elements in its unique ball, thus there is a morphism $\unit \rightarrow \alpha. \unit$, but not in the reverse direction. We can form a cone $E:(\lambda\in \alpha.\unit \mapsto \lambda\cdot g_i(x) \in D_i)$: it works because of our hypothesis $\norm x \leq \alpha\cdot \sup_{i \in I} g_i(x)$. So there exists a unique morphism $\eta:\alpha\cdot \unit \rightarrow L$ that factorises $E$, and since each $x\in L$ is uniquely characterised by its value under the $g_i$, we need to have $\eta(1) = x$. But $\eta$ can't be a $\Cones$ morphism, because it isn't $1$ bounded. 
  \end{proof}
        \end{toappendix}
  \begin{propositionrep}\label{prop:limits_lifting}
    The forgetful functors $\Icones \rightarrow  \Cones$ \emph{lifts} all small limits. 
      For $D$ is a small diagram in $\ICones$, and $C = (L, g_i:L\rightarrow \overline{D_i}^{\Cones})$ is a limit cone in $\Cones$ over $\overline{D}^{\Cones}$, we note $C^{\triangleleft}$ for the lifting of $C$ along $\ICones \rightarrow \Cones$. Moreover, it holds that:
      \begin{enumerate}
      \item\label{item:lifting_measurable_paths} a map $\gamma:R\rightarrow L$ is a measurable path on $L^{\triangleleft}$ if and only if every map $g_i\circ \gamma:R\rightarrow D_i$ is a measurable path;
        \item\label{item:lifting_integration} for such a measurable path $\gamma:R\rightarrow L^{\triangleleft}$, and $\mu \in \FMeas(R)$, $\cint_R^L\gamma\cdot d\mu$ can be characterised as the \emph{unique} element such that for every $i$, $g_i(\cint_R^L \gamma\cdot d\mu ) = \cint_R^{D_i} g_i\circ \gamma\cdot d\mu$.
        \end{enumerate}
  \end{propositionrep}
  \begin{proof}
    Let $D:I\rightarrow \ICones$ be a diagram. For any limit cone $C =(L\xrightarrow{g_i} \overline{ D_i}^{\Cones})_{i \in I}$ on $\overline{D}^{\Cones}$ in $\Cones$, we need to:
    \begin{enumerate}%[a)]
    \item find  a way to enrich $L$ into an  $L^{\triangleleft}$ with measurability tests such that moreover all the $L^{\triangleleft} \xrightarrow{g_i} {D_i}$ preserve measurable paths.
    \item prove that $L^{\triangleleft}$ is an integrable cone, and that moreover the $L^{\triangleleft} \xrightarrow{g_i} D_i$ preserve integrals;
    \item prove that the resulting cone $(L^{\triangleleft} \xrightarrow{g_i} D_i)$ is a limit on $D$ in $\ICones$.
      \end{enumerate}
  
      We do the three steps one by one:
      \begin{enumerate}%[a)]
   \item We take as measurability tests on $L$:
      $$\mtest R^{L} = \{((r,x) \in R\times L \mapsto t_i(r, g_i(x))):L\rightarrow \unit \mid t_i \in \mtest R^{D_i} \}.$$
      $\mtest R^{L}$ is indeed a valid measurability structure on $L$. It is immediate measurability and composability -- in the sense of Definition~\ref{def:measurability_structure} holds for $\mtest R^{}$. That also separation and respect of the norm hold is a consequence of Lemma~\ref{lemma:auxilliary_on_cones_limits} in the appendix. 
 
 Once we know that this defines a valid object in $\MCones$, it is  immediate that~\eqref{item:lifting_measurable_paths} holds.
    \item Let now be $R\in \Arity$, $\distrone \in \FMeas(R), \,\gamma:R\rightarrow L^{\triangleleft}$. 
      We form a cone on $D$ as: $E:=( \lambda\cdot \star\in \unit \xrightarrow{\lambda\cdot \int{g_i \circ \gamma}} D_i)$ -- to check that it is indeed a cone on $D$, we need to use the fact that the morphisms that forms $D$ preserve integrals. So we obtain a unique $\delta:\unit \rightarrow L$ that factorises $E^{\Cones}$ through $C$, and we defines $\cint_{R}^{L^{\triangleleft}} \gamma.d\mu =\delta(\star) $. We can check that it verifies the requirement for being the integral of $\gamma$ over $\mu$. We can check that moreover~\eqref{item:lifting_integration} holds by observing that when we fix $\mu$, the family $(\lambda\cdot \star \in \unit \mapsto \lambda\cdot g_i\circ\gamma(\mu) \in D_i)$ are a cone $(\unit \rightarrow D_i)$ on $\Cones$ . To conclude, we can also check immediately that the $g_i:L^{\triangleleft} \rightarrow D_i$ preserve integrals.
    \item To check that $(L^{\triangleleft} \xrightarrow{g_i} D_i$ is indeed a limit cone on $D$, we take another cone $(Z\xrightarrow {h_i} D_i$. Using the fact that $C$ is a limit cone on $D$, we obtain a unique $\Cones$-morphism $h:Z \rightarrow L^{\triangleleft}$. We can check easily that $h$ is actually an $\ICones$-morphism by using~\eqref{item:lifting_measurable_paths} and~\eqref{item:lifting_integration}.
      \end{enumerate}

    \end{proof}

    Proposition~\ref{prop:limits_lifting} tells us that starting from a limit cone $L$ of a diagram $D$ in $\SubStoch(\Arity)$, we can thus build \emph{two} cones on $D^{\ICones}$: the first one is the cone $C^{\ICones}$ given by the functor $\SubStoch(\Arity) \rightarrow \ICones$. The second one is the cone $(C^{\Cones})^{\triangleleft}$ obtained   using that   $C^{\Cones}$ is a limit on $D^{\Cones}$ -- since $\SubStoch(\Arity) \rightarrow \Cones$ preserve limits -- and thus by Proposition~\ref{prop:limits_lifting} we can lift it  into a limit cone  $(C^{\Cones})^{\triangleleft}$ on $D^{\ICones}$.  Using the concrete description of the measurability structure on $(C^{\Cones})^{\triangleleft}$ we proved as part of Proposition~\ref{prop:limits_lifting}, we prove -- in Proposition~\ref{prop:iso_to_lifting} in appendix -- that these two cones are actually isomorphic, and consequently that $C^{\ICones}$ is a limit cone on $D^{\ICones}$.
    \begin{toappendix} 
   \begin{propositionrep}\label{prop:iso_to_lifting}
Let $D$ a diagram in $\SubStoch(\Arity)$, and $C$ a limit cone for $D$ there. Then the image of $C$ by $\SubStoch(\Arity) \rightarrow \Icones$ is $\ICones$-isomorphic to the lifting $(C^\Cones)^{\triangleleft}$ from the limit cone $C^{Cones}$ on $D^{\Cones}$ along the forgetful functor $\Cones \rightarrow \ICones$.
\end{propositionrep}
\begin{proofsketch}
The main ingredient of the proof is the characterisation of measurability paths (and integrals) on  $(L^{\Cones})^{\triangleleft}$ that we proved as part of Proposition~\ref{prop:limits_lifting}. 
  \end{proofsketch}
\begin{proof}
  Let $C = (L,(L\xrightarrow{g_i} D_i)_{i\in I})$ a limit cone over $D$ in $\SubStoch(\Arity)$.
Recall from the definition of $(\_)^{\triangleleft}$ in the statement of Proposition~\ref{prop:limits_lifting} that $L^{\ICones}$ and $(L^{\Cones})^{\triangleleft}$ have the same underlying cone, which is $L^{\Cones}$.
\begin{itemize}
  \item First, since $(C^{\Cones})^{\triangleleft}$ is a limit cone on $D^{\ICones}$, there exists a unique cone morphism $C^{\ICones}\rightarrow (C^{\Cones})^{\triangleleft}$, given by a $f: L^{\ICones} \rightarrow (L^{\Cones})^{\triangleleft}$. Moreover, $f$ seen as a $\Cones$ map  is $\idm{\Cones}{L}$, because $C^{\Cones}$ is a limit cone on $D^{\Cones}$ by Proposition~\ref{prop:preserv_limits_substoch_cones}, and so there exists only one cone morphism $C^{\Cones}\rightarrow C^{\Cones}$, which is $\idm{\Cones}{L^{\Cones}}$ . 
    \item\label{item:new_proof_deux} We will now show that $\idm{\Cones}{L^{\Cones}}$ is in $\ICones((L^\Cones)^{\triangleleft} \rightarrow L^{\ICones}$, i.e. it sends  measurable paths on $(L^{\Cones})^{\triangleleft}$  to measurable paths on $L^{\ICones}$, and commutes with integrals in this direction. Observe that it is enough to prove this for \emph{$1$-bounded paths}, and the result for general paths follow by linearity.
     Let $R\in \Arity$, and $\gamma:R\rightarrow (L^{\Cones})^{\triangleleft}$ a measurable $1$-bounded path. We want to show that $\gamma:R \rightarrow L^{\Cones}$ is again a measurable path onto the cone $(L^{\ICones})$, i.e.-- by Lemma~\ref{lemma:meas_paths_are_kernels2}--that $\gamma \in \SubStoch(\Arity)(R,L)$. By Proposition~\ref{prop:limits_lifting} and Lemma~\ref{lemma:meas_paths_are_kernels2}, we know that for every $i$, $(g_i)_{\star}\circ_{\Set} \gamma \in \SubStoch(\Arity)(R,D_i)$. From there, we can see that $R,(g_i)_{\star}\circ_{\Set} \gamma : R\rightarrow D_i)$ is a cone on the diagram $D$ in $\SubStoch(\Arity)$. Here, we are going to use the fact that by hypothesis, $C$ is a \emph{limit cone} on $D$ in $\SubStoch(\Arity)$: as a consequence there exists a mediating $\delta \in \SubStoch(\Arity)(R,L)$ such that:
      \begin{equation}\label{eq:new_approach_1}
        \forall i, (g_i \circ_{\SubStoch(\Arity)} \delta) = (g_i)_\star\circ_{\Set} \gamma.
        \end{equation}
        From the equalities~\eqref{eq:new_approach_1}, we are going to show that $\delta = \gamma$, as $\Set$ maps, and from there we will immediately conclude that $\gamma \in \SubStoch(\Arity)(R,L)$. Let $r\in R$. We can build a cone on $D$ as:
        $F:=(\{\star\}, (g_i\circ_{\Set} \gamma)(\delta_\star):\star \rightarrow D_i)_{i \in I}$ -- recall that a morphism $\unit=\{\star\} \rightarrow X$ in $\SubStoch(\Arity)$ is simply a sub-probability measure. By the universal property of the cone $C$, there exists a unique mediating morphism $\{\star\}\rightarrow L$ that factorises this cone $F$ through $C$. But we can check that both $\gamma(r)$ and $\delta(r)$ are sub-probability measure that --seen as morphisms $\{\star\}\rightarrow L$ factorises $F$ through $C$, thus $\gamma(r) = \delta(r)$ .
      
      \end{itemize}

  \end{proof}

  \end{toappendix}
     \begin{theorem}\label{th:substoch_to_icones_preserve_limits}
$\SubStoch(\Arity) \rightarrow \ICones$ preserves all small limits.
\end{theorem}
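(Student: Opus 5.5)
The proof is a composition of the results just established. Let $D:I\rightarrow \SubStoch(\Arity)$ be a small diagram with a limit cone $C=(L,g_i:L\rightarrow D_i)_{i\in I}$. We must show that the image $C^{\ICones}$ of $C$ under $\SubStoch(\Arity)\rightarrow \ICones$ is a limit cone on $D^{\ICones}$. We never construct the limit in $\ICones$ directly. Instead we build it in $\Cones$, lift it along the forgetful functor, and compare the lift with $C^{\ICones}$.

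We proceed in three steps.

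\begin{enumerate}
\item By Proposition~\ref{prop:preserv_limits_substoch_cones}, the composite $\SubStoch(\Arity)\rightarrow \ICones\rightarrow \Cones$ preserves small limits. Hence $C^{\Cones}$ is a limit cone on $D^{\Cones}=\overline{D^{\ICones}}^{\Cones}$. That proposition itself follows from Lemma~\ref{lemma:generic_on_icones_limits}: the unit $\unit_{\SubStoch}=\{\star\}$ is sent to $\unit_{\Cones}$, and morphisms out of $\{\star\}$, i.e.\ sub-probability measures, correspond bijectively to $\Cones$-morphisms $\unit\rightarrow \FMeas(Y)$.
\item By Proposition~\ref{prop:limits_lifting}, applied to the diagram $D^{\ICones}$ and the $\Cones$-limit cone $C^{\Cones}$, the lifted cone $(C^{\Cones})^{\triangleleft}$ is a limit cone on $D^{\ICones}$ in $\ICones$. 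Its measurable paths and integrals are characterised componentwise through the $g_i$.
\item By Proposition~\ref{prop:iso_to_lifting}, $C^{\ICones}$ is $\ICones$-isomorphic, as a cone over $D^{\ICones}$, to $(C^{\Cones})^{\triangleleft}$. A cone isomorphic to a limit cone is a limit cone, so $C^{\ICones}$ is a limit cone and the functor preserves the limit of $D$.
\end{enumerate}

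The isomorphism of step 3 is the identity on the common underlying cone $L^{\Cones}$. Its nontrivial direction is that every measurable path $\gamma$ on $(L^{\Cones})^{\triangleleft}$ is a sub-stochastic kernel into $L$. To show this, observe that the kernels $(g_i)_\star\circ\gamma$ form a cone on $D$ in $\SubStoch(\Arity)$. This cone factors through some kernel $\delta:R\rightarrow L$. The universal property applied pointwise, to cones with source $\{\star\}$, then forces $\delta=\gamma$.

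I expect this last comparison to be the main obstacle. The argument there is that one cannot simply assume the two measurability structures on $L^{\Cones}$ agree. We need the fact that measurable paths into $\FMeas(X)$ are exactly the bounded kernels, which is Lemma~\ref{lemma:meas_paths_are_kernels2}. We also need that integrals on both structures are determined by their images under the $g_i$. Both facts have already been established, in Lemma~\ref{lemma:meas_paths_are_kernels2} and in item~\ref{item:lifting_integration} of Proposition~\ref{prop:limits_lifting}. So the theorem itself reduces to chaining these three results.
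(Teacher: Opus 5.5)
Your proposal is correct and follows the paper's own argument. The paper proceeds exactly as you do: limit preservation into $\Cones$ via Lemma~\ref{lemma:generic_on_icones_limits} and Proposition~\ref{prop:preserv_limits_substoch_cones}, then lifting along the forgetful functor via Proposition~\ref{prop:limits_lifting}, then identifying $C^{\ICones}$ with $(C^{\Cones})^{\triangleleft}$ via Proposition~\ref{prop:iso_to_lifting}. Your sketch of the nontrivial direction of that last step is precisely the paper's proof: the kernel cone $((g_i)_\star\circ\gamma)_i$, its mediating kernel $\delta$, and pointwise uniqueness through cones from $\{\star\}$ forcing $\delta=\gamma$. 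Your remark that integrals agree because they are determined by their $g_i$-images also covers the integral-preservation part, which the paper states but does not spell out.
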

By combining Theorem~\ref{th:substoch_to_icones_preserve_limits} with preservation of small connected limits by the embedding $\Stoch(\Arity) \rightarrow \SubStoch(\Arity)$, we can conclude that $\Stoch(\Arity) \rightarrow \ICones$ preserves all small connected limits.

\subsection{De Finetti Theorems}

Combining Theorem~\ref{th:substoch_to_icones_is_strong_monoidal} and Theorem~\ref{th:substoch_to_icones_preserve_limits}, we see that the delete chains, the draw-and-delete chains, and their respective limits are preserved from $\Stoch(\Arity)$ to $\Icones$. From there, we obtain a central technical contribution of our work, which is that the two categorical formulations of De Finetti's theorem hold in the category of integrable cones (when starting from  $X$ in the $\Arity$ category)\footnote{There is no hope to have this result for all objects in $\Icones$, since that would imply De Finetti theorems in $\Stoch$ for \emph{any} measurable space.}.

\begin{theorem}[De Finetti theorems in $\Icones$]\label{th:de_finetti_in_icones}
  Let $X$ be a standard Borel space in the category $\Arity$. In the category $\Icones$:
  \begin{enumerate}
  \item\label{item1:df_in_icones} $(X^\omega)^{\ICones}$ is the limit of the delete chain:
    $$1 \xleftarrow{{\dd_0}^{\ICones}}  X^{\ICones} \xleftarrow{{\dd_1}^{\ICones}}  X^{\ICones} \otimes  X^{\ICones} \ldots,$$ thus we'll note $({X^{\ICones}})^\omega$ for ${(X^{\omega})^{\ICones}}$.
  \item $(\MnCat{\Stoch} n X)^{\ICones}$ is the equaliser of  symmetries on $(X^{\ICones})^{\otimes n}$.
  \item $(\Giry X)^{\ICones}$ is the equaliser of all finitely supported symmetries on $(X^\omega)^{\ICones}$.
  \item ${\Giry X}^{\ICones}$ is the limit of the draw and delete chain built from $(X^\ICones,(\_)^{\ICones}:X^{\ICones} \rightarrow \unit)$.
    \end{enumerate}
  \end{theorem}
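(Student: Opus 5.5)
The plan is to transport each $\Stoch(\Pol)$ statement to $\ICones$ along the embedding $F := (\Stoch(\Arity) \hookrightarrow \SubStoch(\Arity) \hookrightarrow \ICones)$. By Theorem~\ref{th:substoch_to_icones_is_strong_monoidal} and the connected-limit preservation just established, $F$ is strong monoidal and preserves small connected limits. Since $F$ is strong monoidal, it sends $X^{\otimes n}$ to $(X^{\ICones})^{\otimes n}$ up to the coherence isomorphisms, and sends symmetries to symmetries. Since $\unit$ is the unit of both categories, it also sends the discard morphism $X \rightarrow \unit$ to the morphism $(\_)^{\ICones}:X^{\ICones} \rightarrow \unit$. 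Hence the delete chain and the symmetry diagrams over $X$ in $\Stoch$ are sent, up to natural isomorphism of diagrams, to the corresponding diagrams over $(X^{\ICones},(\_)^{\ICones})$ in $\ICones$. One routine check is needed here: the coherence isomorphisms $F(X^{\otimes n}) \cong (FX)^{\otimes n}$ commute with the images of $\dd_n$ and of the symmetries. This follows from the monoidal-functor axioms, since $\dd_n = X^{\otimes n}\otimes w$ and the symmetries are built from the symmetric structure.

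The items then go as follows.

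For item~1, Proposition~\ref{prop:infinite_product_borel_standard} gives that $X^\omega$ is the limit of the delete chain in $\Stoch(\Pol)$. A chain is a connected diagram, so $F$ preserves this limit, and transporting along the diagram isomorphism above yields the claim.

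For item~2, Proposition~\ref{prop:existence_of_equalisers_and_coequalisers_in_stoch} gives that $\Mn n X$ is the equaliser of the $n!$ symmetries in $\Stoch$. The diagram consisting of parallel arrows $X^{\otimes n} \rightrightarrows X^{\otimes n}$ is connected, so $F$ preserves this equaliser.

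For item~3, the categorical De Finetti theorem (\cite{fritz2021finetti,moss2022probability}, or Proposition~\ref{th:de-finetti-perrone} with $Y=\unit$) gives $\Giry X$ as the equaliser in $\Stoch(\Pol)$ of the finitely supported symmetries on $X^\omega$. This is again a connected diagram. We also need $F(\dot\sigma)$ to be the finitely supported symmetry on $(X^{\ICones})^\omega$, with the latter defined via the limit from item~1. This follows from uniqueness of the mediating morphism: $F(\dot\sigma)$ satisfies the defining cone equations $\pi_{\leq m}\circ F(\dot\sigma) = \sigma^{\uparrow m}\circ \pi_{\leq m}$.

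For item~4, items~1--3 supply exactly the hypotheses of Theorem~\ref{theorem:equivalence_of_two_df_formulations} in $\ICones$ for the copointed object $(X^{\ICones},(\_)^{\ICones})$. That theorem then identifies the limit of the draw-and-delete chain with the equaliser from item~3, namely $(\Giry X)^{\ICones}$. Alternatively, Theorem~\ref{prop:dd_limit_in_stoch} gives $\Giry X$ as the DD-chain limit in $\Stoch(\Pol)$, and $F$ preserves this chain limit directly. In that route one must check that $F(\DDn n)$ is the $\ICones$ draw-and-delete morphism, which follows by uniqueness once item~2 is known.

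The main obstacle is the bookkeeping that connected-limit preservation from $\Stoch(\Arity)$ to $\ICones$ applies to these diagrams. Two subtleties arise. First, the limits are taken in $\Stoch(\Pol)$, and one must match $\Pol$ with $\Arity$ and ensure that the limit objects ($X^\omega$, $\Mn n X$, $\Giry X$) lie in $\Arity$. This holds because each is standard Borel. Second, the strong monoidal coherence must identify the image diagrams with the intrinsic $\ICones$ diagrams; I would handle this by the uniqueness arguments described above rather than by explicit computation.
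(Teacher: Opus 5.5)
Your proposal is correct and matches the paper's own argument. The paper gets all four items at once by combining strong monoidality of $\SubStoch(\Arity)\hookrightarrow\ICones$ (Theorem~\ref{th:substoch_to_icones_is_strong_monoidal}) with preservation of small connected limits by $\Stoch(\Arity)\to\ICones$, so that the delete chain, the symmetry equalisers, the draw-and-delete chain and their limits in $\Stoch(\Pol)$ are carried over to $\ICones$. Your extra bookkeeping (coherence isomorphisms, and identifying the images of $\dd_n$, $\dot\sigma$ and $\DDn n$ with their $\ICones$ counterparts by uniqueness) only makes explicit what the paper leaves implicit, and either of your two routes for item~4 works.
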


 \section{Characterising the total elements in $!X^{\pcoh}$}\label{sect:total_elements}

  In this last part, we combine our results from Sections~\ref{sect:connection_chains} and~\ref{sec:df_in_icones} to investigate what are the elements  of PCS $!X^{\pcoh}$, for any countable discrete measurable space $X$.  It is non trivial to describe those: indeed $!X^{\pcoh}$ can be defined either as a categorical limit with the layered construction of the free exponential, or with the original definition of $!$ in~\cite{danos2011probabilistic} as the bi-orthogonality closure of the set of all the \emph{promotions} of elements in $X^{\pcoh}$. None of these descriptions  gives us a direct and exhaustive description of $!X^{\pcoh}$ elements .
\begin{remark}There is also a characterisation due to Ehrhard and Geoffroy~\cite{ehrhard2025integration} of the bi-orthogonality closure: it is closure by finite convex sum, and by downward closure and $\omega$-continuity for the pointwise order. But it is nonetheless difficult to understand which kind of elements can be obtained by combining the downward closure and the Scott continuity.
\end{remark}
\subsection{Measures on $\Giry X$ live in  $!X^{\pcoh}$ as mixture of promotions}

In this section, we  use our De Finetti theorems in $\ICones$ -- Theorem~\ref{th:de_finetti_in_icones} -- to buld a monomorphism $\iota:\Giry X^{\ICones} \rightarrow (!X^\pcoh)^{\ICones}$, for every countable discrete measurable space $X$. Moreover, we will show that $\iota$ sends all probability measures in $\Giry(\Giry X)$ to \emph{continuous mixtures} of promotions in $\clique{!X^{\pcoh}}$ --  recall from Definition~\ref{prop:limit_morphisms} that for any PCS $A$, the promotions in $\clique {! A}$ are the $x^!$, for $x \in \clique A$, that generate $\clique{!A}$ by biorthogonality.
\subsubsection{The limit of the approximations $\DD^!$ chains are preserved by embedding into $ \ICones$.}
We have seen that for any countable discrete measurable space $X$, $!X^{\pcoh}$ is the limit in $\pcoh$ of the  draw-and-delete chain $DD^{!}_{X^{\pcoh}}$  generated by the free copointed object $(!X^{\pcoh})_\bullet = !X^{\pcoh} \with \unit$. Recall that $\pcoh \rightarrow \ICones$ is strong monoidal -- Theorem~\ref{thm:ic-full-faithful}. This embedding moreover preserves cartesian products, thus the image in $\ICones$ of $\DD^{!}_{X^\pcoh}$
coincides with the $\ICones$-DD-chain generated by $(!X^{\pcoh})^{\ICones} \with \unit$. The natural next step consists in looking at whether $(!X^{\pcoh})^{\ICones}$ is the limit of this chain, i.e. whether limits of approximations $\DD^!$-chains are preserved along $\pcoh \rightarrow \ICones$.

In this direction, we can first prove that all small limits are preserved along $\pcoh \rightarrow \Cones$.

  \begin{lemma}\label{lemma:limits_preserved_cones_pcoh}
 $c:\pcoh \hookrightarrow \Cones$ preserves all small limits.
\end{lemma}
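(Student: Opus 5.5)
We will apply the criterion of Lemma~\ref{lemma:generic_on_icones_limits} to the functor $F := c:\pcoh \rightarrow \Cones$, exactly as was done for $\SubStoch(\Arity) \rightarrow \Cones$ in Proposition~\ref{prop:preserv_limits_substoch_cones}. Two hypotheses must be checked. First, we need an object $C_\unit$ of $\pcoh$ with $c(C_\unit) = \unit$. We take the monoidal unit $\unit$ of $\pcoh$ from Example~\ref{ex:pcs}: its web is $\{\star\}$ and its elements are identified with $[0,1]$. By Definition~\ref{def:ic}, $c(\unit)$ is the cone $\{\lambda \cdot x \mid x \in [0,1]\} = \RRpos$. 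The norm is $\norm{x} = \inf\{\lambda>0 \mid x/\lambda \le 1\} = x$. This is exactly the cone $\unit = \RRpos$ of $\Cones$.

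Second, we need the action of $c$ on morphisms to be a bijection $\pcoh(\unit, A) \cong \Cones(\unit, c(A))$ for every PCS $A$. This follows directly from Theorem~\ref{thm:ic-full-faithful}, which says that $c$ is full and faithful. It is also easy to see by hand. A $\pcoh$-morphism $\unit \rightarrow A$ is a column vector $x \in \Rp{\web A}$ with $x \in \clique A$. A $\Cones$-morphism $\RRpos \rightarrow c(A)$ is linear, so it is determined by the image $x$ of $1$. That image must lie in the unit ball of $c(A)$, and this unit ball is $\clique A$, since $\clique A$ is $\omega$-closed and down-closed by biorthogonality. Scott-continuity is automatic for $\lambda \mapsto \lambda x$.

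With both hypotheses verified, Lemma~\ref{lemma:generic_on_icones_limits} gives that $c$ preserves every small limit existing in $\pcoh$. To make the argument transparent, we unfold it once. Let $(L, g_i)$ be a limit cone in $\pcoh$ and $(\unit \xrightarrow{f_i} c(D_i))$ a cone in $\Cones$. Fullness makes each $f_i$ of the form $c(f_i')$. Faithfulness transports the cone equations to $\pcoh$, where the universal property yields a unique factorisation $\unit \rightarrow L$. Its image under $c$ gives the required unique factorisation in $\Cones$: a second factorisation would also come from $\pcoh$ by fullness, so it coincides by faithfulness.

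The only delicate point is the identification of the unit ball of $c(A)$ with $\clique A$. This is needed so that $1$-boundedness of $\Cones$-morphisms matches the $\pcoh$ requirement $f\cdot x \in \clique{A}$. We rely on Theorem~\ref{thm:ic-full-faithful} for it, so it is not a real obstacle here. The substantive work is contained in Lemma~\ref{lemma:generic_on_icones_limits}, namely building the $\Cones$-morphism $f^\dagger$ out of pointwise factorisations, and that lemma is already available.
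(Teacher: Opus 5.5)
Your proof is correct and follows the paper's argument. Like the paper, you apply Lemma~\ref{lemma:generic_on_icones_limits} to $c$, using that $c$ is full and faithful (Theorem~\ref{thm:ic-full-faithful}) and sends the PCS $\unit$ to the cone $\RRpos$. Your explicit computation of $c(\unit)$ and your unfolding of the factorisation argument spell out what the paper's one-line proof leaves implicit.
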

\begin{proof}
  Since the embedding $\pcoh \rightarrow \Cones$ is fully faithful and sends the $\pcoh$ monoidal unit to the $\Icones$ monoidal unit, we can deduce from Lemma\ref{lemma:generic_on_icones_limits}  that all limits in $\pcoh$ are preserved in $\Cones$.
\end{proof}
It's however surprisingly difficult to extend this general limit preservation result to $\ICones$ -- it remains at this time an open question.
The approximations chain $\DD^{!}$ have however a sufficiently regular shape to avoid the difficulties that appear in the general case.

\begin{toappendix}
  \begin{lemmarep}\label{lemma:key_ic_bang}
    Let $X$ be a PCS.
The morphism $\idm \Cones{}: {!X^{\Cones}} \rightarrow {!X^{\Cones}}$ lifts to an $\ICones$ morphism $(!X^{\Cones})^{\triangleleft} \rightarrow !X^{\ICones}$.
\end{lemmarep}
\begin{proofsketch}
We need to show that $\idm{\Cones}{(!X)^{\Cones}}$  preserves measurable paths and integrals.
 For this, we are going to use the characterisation of measurable paths and integrals on $((!X)^{\Cones})^{\triangleleft}$ we established in Proposition~\ref{prop:limits_lifting}, as well as the concrete expression for $!X$ recalled in Proposition~\ref{prop:limit_morphisms}.
  \end{proofsketch}
\begin{proof}
  We need to show that $\idm{\Cones}{(!X)^{\Cones}}$  preserves measurable paths and integrals.
 For this, we are going to use the characterisation of measurable paths and integrals on $((!X)^{\Cones})^{\triangleleft}$ we established in Proposition~\ref{prop:limits_lifting}. Let $R\in \Arity$, $\gamma:R\rightarrow ((!X)^{\Cones})^{\triangleleft}$ a measurable path, $\mu \in \FMeas(R)$.
  \begin{itemize}
  \item  We want to show that $\gamma:R \rightarrow (!X)^{\ICones}$ is a measurable path, i.e. that for any $y \in (!X)^{\perp}$, $(r\in R \mapsto \scal {\gamma(r)} {y} \in \RR_{\geq 0})$ is measurable. Let $x \in (!X)^{\perp}$.
Using the concrete expressions of the $\rho_{\infty,n}$ given in Proposition~\ref{prop:limit_morphisms}, we see that:
  \begin{align*}
    \scal{\gamma(r)}{y} = \sum_{a\in \web{!\pcs A}} \gamma(r)_a\cdot y_a &= \sup_{n\in \NN}\sum_{a\in \web{\Mn n {\pcs A}}} \gamma(r)_a\cdot y_a \\ &=\sup_{n\in \NN}\sum_{a\in \web{\Mn n {\pcs A}}} y_{\hat a} \cdot \rho_{\infty,n}(\gamma(r))_a),
  \end{align*}
  where $y_{\hat a}$ is the element of $\MfSet {\web{A}}$ obtained from  $y \in \MnSet n{\web{A} \cup \{\star\}}$ by removing all the $\star$.  By applying the monotone convergence theorem, we see from there that it is enough to prove measurability of the function $f_{n,a}:=(r \in R \mapsto \rho_{\infty,n}(\gamma(r))_a \in \RR_{\geq 0})$, for every $n \in \NN$, and $a \in \web{\Mn n {\pcs A}}$. We fix such a pair $n,a$. Using the technical conditions in Definition~\ref{def:pcoh} -- combined with closure of the set of elements by downward closure for the pointwise order on coefficients --,
  we see that there exists $\lambda>0$, with $\lambda.e_a \in \clique{\Mn n {\pcs A}^{\perp}}$. Thus $f_{n,a} = t\circ (\rho_{\infty,n}\circ\gamma)$, where $t$ is the measurability test $t:= (z \in {!X}^{\ICones} \mapsto \scal{z}{\lambda.e_a})$ on  ${{\Mn n {\pcs A}}}^{\ICones}$. From this observation, we can conclude that $f_{n,a}$ is measurable by our hypothesis on $\gamma$, and Proposition~\ref{prop:limits_lifting}.
  
\item Let us now look at integrals preservation.
  We want to show that $\cint_{R}^{((!X)^{\Cones})^{\triangleleft}}\gamma \cdot d\mu = \cint_R^{!X^{\ICones}} \gamma\cdot d\mu$. By Proposition~\ref{prop:limits_lifting}, it is enough to check that for every $n$, $\rho_{\infty,n} ( \cint_R^{!X^{\ICones}} \gamma\cdot d\mu ) = \int_R^{\Mn n X} \rho_{n,\infty} \circ \gamma\cdot d\mu$. But that's immediate since the $\rho_{n,\infty}$ are integrable.
  \end{itemize}
\end{proof}
\end{toappendix}
\begin{propositionrep}
Let $X$ be any PCS. The limit of $DD_X^!$  is preserved by $\ic:\pcoh \hookrightarrow \ICones$.
\end{propositionrep}
\begin{proofsketch}
  Recall that the approximations chain generated by a PCS $X$ is the DD-chain $\DD_{X}^{!}$ generated by the copointed object $X\rightarrow X\with \unit$, and its limit cone in $\pcoh$ is  $ ( !X \xrightarrow{\rho_{\infty,n}} \Mn n X)$. We are going to make use again of the fact -- Proposition~\ref{prop:limits_lifting} -- that $\ICones \rightarrow \Cones$ lifts limits. Since the limits of the approximations chain is preserved in $\Cones$, we can lift it from $\Cones$ to $\ICones$, and we obtain a limit cone there on $\ic(\DD_{X}^{!}):  ( !X^{\Cones})^{\triangleleft} \rightarrow{} \ic{(\Mn n X)})_{n \in \NN}$, with a description of $(!X^\Cones)^{\triangleleft}$ measurability behaviour given in  Proposition~\ref{prop:limits_lifting}. The non-automatic part, done in Lemma~\ref{lemma:key_ic_bang} in appendix, consists in proving that the $\Cones$-identity on $!X^{\Cones}$ becomes a monomorphism from $(!X^{\Cones})^{\triangleleft} $ to $\ic{(!X)}$. For this, we need to use the actual shape of the limit cone $(!X,\rho_{\infty,n}:!X\rightarrow \Mn n X)$ on the chain $\DD^{!}_X$, that we recalled in Proposition~\ref{prop:limit_morphisms}.
\end{proofsketch}
\begin{proof}
  It is immediate, by combining Lemma~\ref{lemma:key_ic_bang}, with the fact that $\idm{\Cones}{(!X)^{\Cones}}$ is an $\ICones$ morphism, because of the limit cone property of $(((!X)^{\Cones})^{\triangleleft} \rightarrow \Mn n X)_{n \in \NN}$ over the chain $\ic(\DD_{X}^!)$ -- the reasonning is similar to the one in the proof of Proposition~\ref{prop:iso_to_lifting}.
  \end{proof}

\subsubsection{Sending measures on $\Giry X$ into $!X^{\pcoh}$.}\label{section:from_measures_to_bang_bool}
As we proved in Proposition~\ref{prop:high-level_chain_morph_in_pcoh_bis}, there is a chain morphism in $\pcoh$ -- thus also in $\Icones$ -- from the De Finetti chain to the DD-chain of the layered construction. Our De Finetti Theorem~\ref{th:de_finetti_in_icones} for integrable cones tells us that $\Giry X^{\ICones}$ is a limit of the De Finetti DD-chain  in $\Icones$, and we can thus exploit this chain morphism to obtain a canonical $\Icones$ morphism between the respective limits of these DD-chains.
\begin{propositionrep}\label{prop:existence_iota}
  We can define a $\Icones$ monomorphism $\iota:{\Giry X}^{\ICones} \rightarrow (!X^{\pcoh})^{\ICones}$ by
  $$\iota(\mu) = \cint_{r \in \Giry X}^{(!X^{\pcoh})^{\ICones}} {(r \in \Giry X \mapsto (r^{\pcoh}) ^!)}\cdot \mu(dr), $$
  -- where $r^{\pcoh}$ is the element in $\clique {X^{\pcoh}}$ defined as $(r^{\pcoh})_x:=r(x)$, and $(r^\pcoh)^{!}$ is its promotion in $\clique{!X^{\pcoh}} \subseteq (!X^{\pcoh})^{\ICones}$. 
  Moreover $\iota$ is the unique morphism such that:
{\small $$ \begin{tikzpicture}[scale=0.8]
\node (0d) at (0,0) {$\unit$};
\node (1d) at (3.5,0) {$\Mn 1 {X^{\ICones}}$};
\node (2d) at (8.5,0) {$\Mn 2 {X^{\ICones}}$};
\node (3d) at (11,0) {$\dots$};
\node (limitd) at (12.5,1) {$\Giry{X}^{\ICones}$};
\draw[->] (limitd) to node[left] {} (0d);
\draw[->] (limitd) to node[above] {} (1d);
\draw[->] (limitd) to node[right] {} (2d);
\node (0) at (0,-1.5) {$\unit$};
\node (1) at (3.5,-1.5) {$\Mn 1 {X^{\ICones}\with 1}$};
\node (2) at (8.5,-1.5) {$\Mn 2 {X^{\ICones} \with 1}$};
\node (3) at (11,-1.5) {$\dots$};
\node (limit) at (12.5,-2.2) {$(\oc{X^{\pcoh}})^{\ICones}$};
\draw[->] (3d) to node[below] {$ \DDn 2^{\text{DF}}$} (2d); 
\draw[->] (2d) to node[below] {$ \DDn 1^{\text{DF}}$} (1d);
\draw[->] (1d) to node[below] {$\DDn 0^{\text{DF}}$} (0d);
\draw[->] (limit) to node[right] {} (0);
\draw[->] (limit) to node[right] {} (1);
\draw[->] (limit) to node[right] {} (2);
\draw[->] (3) to node[above] {} (2);
\draw[->] (2) to node[above] {} (1);
\draw[->] (1) to node[above] {} (0);

\draw[->] (2d) to node[above] {} (2);
\draw[->] (1d) to node[above] {} (1); 
\draw[->] (0d) to node[above] {} (0);
\draw[->,dashed] (limitd) to node[left] {$\exists ! \iota$} (limit);
\end{tikzpicture}$$}
\end{propositionrep}

\begin{proofsketch}
  Theorem~\ref{th:de_finetti_in_icones} gives us the existence of $\iota$, and then we use the commuting diagram to find its expression. For this, we use the fact that we have access to concrete expressions for
  the limit morphisms $\multkern n: \Giry X \rightarrow  \Mn n X$ that have been made explicit in~\cite{jacobs2020finetti}; and the limit morphisms $\rho_{\infty,n}:!X^{\pcoh} \rightarrow \Mn n {X^{\pcoh} \with \unit}$ in Proposition~\ref{prop:limit_morphisms}. 
\end{proofsketch}

\begin{example}\label{ex:bernouilli_sampler}

  Let us first consider a Dirac probability measure $\dirac r$, with $r \in [0,1]$.  From the viewpoint of exchangeable sequence of random variables -- via De Finetti Theorem--it corresponds to an i.i.d. sequence of Bernouilli distributions of parameter $r$. Then $\iota({\dirac r})$ is the \emph{promotion} of the Boolean program that is $\texttt{true}$ with probability $r$ and $\texttt{false}$ with probability $(1-r)$.  
  
  Suppose now that we start from any sub-probability measure $\mu$ on $[0,1]$.  From the viewpoint of exchangeable sequence of random variables, it corresponds to a mixture along $\mu$ of i.i.d. sequences of Bernouilli distributions. We can understand $\iota(\mu)$ as the interpretation of a program that first samples a $r$ in $[0,1]$ according to $\mu$, and then each time a new boolean is asked of it, returns true with probability $r$ and $(1-r)$ otherwise. Such a program could be written for instance in a variant of the call-by-push-value language with $I = [0,1]$ as ground type\footnote{A discrete probabilistic variant of call-by-push-value has been interpreted in $\pcoh$ in~\cite{lmcs:1537}. A natural extension would be a model of probabilistic call-by-push-value with continuous data types (e.g.$[0,1]$) in $\Icones$. It has not been done formally yet, to the best of our knowledge.}:
  $M := \text{ let }x = \text{sample}(\mu) \text{ in } !x .$
\end{example}

Not \emph{all} cliques in e.g. $!\Bool$ can be seen as the image by $\iota$ of some probability measure on $[0,1]$: to build one counter-example, we can use the closure of the $!\Bool$ cliques by downward closure, as illustrated below.
\begin{example}\label{ex:non_total_elements}
  Let's look now at a variation $M'$ of the program $M$ of Example~\ref{ex:bernouilli_sampler}, where 
    at each call the program refuses to answer with probability $0< p < 1$:  this could be written as
    $M' := \text{ let }x = \text{sample}(\mu) \text{ in } !(x \oplus^{p} \Omega) ,$
    where $\oplus^p$ represent a $p$-biased probabilistic choice, and $\Omega$ a non-terminating program. We expect the interpretation of $M'$ to be $(\sem M')_m:= (p^{\card m}\cdot \sem{M})_{m}$ for any multiset $m$ over $\{\ttrue,\ffalse\}$, but this  $\sem{M'}$ cannot be written as a mixture of promotions. 
  \end{example}
  Example~\ref{ex:non_total_elements} illustrates the existence of computationally meaningful elements of $!X^{\pcoh}$ that cannot be built from a finite measure on $\Giry X$, because they have some kind of divergent behaviour. We could now ask whether all \emph{non-divergent} elements in $!X^{\pcoh}$ are the image of a probability measure on $\Giry X$. To do that, the first step is to precise what is a non-divergent element of $!X^{\pcoh}$: it will means a \emph{total} element, in the sense of the probabilistic coherence spaces with totality from~\cite{ehrhard2025variable}.
  \subsection{Totality in probabilistic coherence spaces}
  \begin{toappendix}
 \subsection{Totality in probabilistic coherence spaces}
    \end{toappendix}
    The model of probabilistic coherence spaces \emph{with totality} was introduced by Ehrhard et al~\cite{ehrhard_mpri_lecture_notes,ehrhard2025variable} as a refinement of $\pcoh$. The first step consists in introducing a \emph{strict} orthogonality relation, which strengthens the one from Notation~\ref{notation:weak_orthogonality}: for $X$  a PCS, and $U \subseteq \clique X$, we note 
    $U^{\Perp} = \{u \in (\clique X)^{\perp} \mid \scal x u = 1\},$ where $\scal {\,}{\,}$ is the scalar product introduced in Notation~\ref{notation:weak_orthogonality}, and $\clique{X}^{\perp}$ is the dual of $\clique X$ for the (lax) orthogonality  from Notation~\ref{notation:weak_orthogonality}.
  % \begin{notation}
  %   Let $X$ a PCS, and $U \subseteq PX$. We note:
  %   $U^{\Perp} = \{u \in P(X^{\perp}) \mid \scal x u = 1\}.$
  % \end{notation}
    \begin{definition}
      A probabilistic coherence space with totality (PCST) is a pair $(X, T (X))$ where $X$ is a PCS and $T (X) \subseteq \clique X$ satisfies $T (X)^{\Perp \Perp} = T (X)$.
  \end{definition}
      \begin{example}\label{ex:pcst}
We can equip the PCS $\Bool$ with the structure of a PCST by taking $T(\Bool) := \{x \mid x_\ttrue + x_\ffalse = 1\}$. The total elements of $\Bool$ are thus the proper probability distributions on $\deux$.
      \end{example}
  $T(X)$ is called the set of \emph{total} elements, and is designed to contain only those elements  that never induce additional non-termination, no matter in which context we put them. We note $\pCoht$ for the category where objects are PCST, and morphisms are those PCS morphisms that preserve the total elements.  In~\cite{ehrhard_mpri_lecture_notes}, a refinement of the linear logic structure of $\pcoh$ is built, in such a way that the forgetful functor  $\pCoht \rightarrow \pcoh$ preserve the structure of LL model (monoidal product, cartesian products, exponential \ldots).  We recall below the definition of $\oc$ in $\pCoht$: the set of total elements in $!\pcs A$ is generated -- by double orthogonality closure -- from the \emph{promotions} of total elements in $\pcs A$:
    \begin{definition}[Exponential with Totality]\label{def:exp_with_totality}
For $\pcs X = (\pcs A,T(\pcs A))$ a PCST, $!\pcs X = (!\pcs A,\{z^{!} \mid z \in T(\pcs A)\}^{\Perp\Perp})$.
\end{definition}
\begin{toappendix}
    \begin{lemma}[From~\cite{ehrhard_mpri_lecture_notes}]\label{lemma:totality_exponential_functions}
      Let $X,Y$ be a PCSTs. Then $f \in \pCoht(!X,Y)$ if and only if for every $x \in TX$, $f.x^{!} \in TY$.

    \end{lemma}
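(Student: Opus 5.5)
The plan is to prove the two implications separately. The only tools are the definition of $T(!X)$ as a strict biorthogonal (Definition~\ref{def:exp_with_totality}) and the duality between a matrix $f$ and its transpose with respect to the scalar product of Notation~\ref{notation:weak_orthogonality}.

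\emph{Direct implication.} Suppose $f \in \pCoht(!X,Y)$ and let $x \in TX$. Write $P := \{z^! \mid z \in TX\}$. For any set $U$ we have $U \subseteq U^{\Perp\Perp}$: if $u \in U$ and $v \in U^{\Perp}$, then $\scal u v = 1$ by definition. Hence $x^! \in P \subseteq P^{\Perp\Perp} = T(!X)$. Since $f$ preserves total elements, $f.x^! \in TY$.

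\emph{Converse.} Assume $f \in \pcoh(!X,Y)$ satisfies $f.x^! \in TY$ for all $x \in TX$, and let $z \in T(!X) = P^{\Perp\Perp}$. We already know $f.z \in \clique Y$, so it remains to show $f.z \in TY = (TY^{\Perp})^{\Perp}$; that is, $\scal{f.z}{u} = 1$ for every $u \in TY^{\Perp}$.

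Fix such a $u$ and set $v := {}^t f.u \in \Rp{\web{!X}}$. All coefficients are non-negative, so sums may be reordered freely (Tonelli), and for every $w \in \Rp{\web{!X}}$ we get $\scal w v = \scal{f.w}{u}$. Three consequences follow.
\begin{enumerate}
\item $v \in \clique{!X}^{\perp}$. Indeed, for $w \in \clique{!X}$ we have $f.w \in \clique Y$ and $u \in \clique Y^{\perp}$, so $\scal w v = \scal{f.w}{u} \le 1$.
\item $v \in P^{\Perp}$. For $x \in TX$, the hypothesis gives $f.x^! \in TY$, hence $\scal{x^!}{v} = \scal{f.x^!}{u} = 1$.
\item Therefore, since $z \in P^{\Perp\Perp}$, we get $\scal{f.z}{u} = \scal z v = 1$.
\end{enumerate}
This gives $f.z \in TY$, so $f$ preserves totality.

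The only delicate point is the second half of the converse. It relies on ${}^t f$ sending $\clique{Y}^{\perp}$ into $\clique{!X}^{\perp}$, which is needed so that $v$ is an admissible element of $P^{\Perp}$, together with the exchange of the order of summation. Both follow from non-negativity and the fact that $f$ is a $\pcoh$ morphism. Everything else is unfolding the definitions of $(\cdot)^{\Perp}$ and of $T(!X)$.
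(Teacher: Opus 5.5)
Your proof is correct. The paper does not prove this lemma itself; it imports it from Ehrhard's lecture notes. Your argument is the standard biorthogonality argument behind it: you turn each test $u \in TY^{\Perp}$ into the test ${}^t f.u \in \{x^! \mid x \in TX\}^{\Perp}$, and then use $T(!X) = \{x^! \mid x \in TX\}^{\Perp\Perp}$. The one detail worth making explicit is that ${}^t f.u$ has finite coefficients. Your item~1 gives this when applied to $w = \lambda e_a \in \clique{!X}$, which exists for some $\lambda > 0$ because $c_{!X}(a) > 0$ and $\clique{!X}$ is down-closed.
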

    \end{toappendix}
\subsection{Integrable Cone of Total Elements}
From any PCST $\pcs X^T = (\pcs X,T\pcs X)$, we define its \emph{cone of total elements} $\tc{\pcs X^T}$, that lives in the category $\Icones$. This cone consists of the smallest sub-cone of $\ic ({\pcs X})$ that contains all the (images by $\ic$) of the total elements in $\pcs X$ , and inherits its structure of integrable cone from $\ic{(\pcs X )}$. An important observation here is that for a PCST $\pcs X^T$, its cone of total elements \emph{can possibly be outside} the full subcategory $\ic(\pcoh)$ of $\ICones$, even though it is a sub-cone of a PCS. That is because cones in $\ic(\pcoh)$ have a very peculiar shape -- notably they have a \emph{countable basis} compatible with the order, the norm, and that also determines the measurability paths -- which is not necessarily inherited by all their sub-cones.

    \begin{propositionrep}[Totality Cone of a PCST]\label{prop:totality_cone}
      Let $\pcs{X}^{T} = (\pcs X, T {\pcs X})$ be a PCST. We build an object $\tc {\pcs{X}^T}$ in $\ICones$ as the sub-cone of ${(\pcs X)}^{\ICones}$ with underlying set:
      $\{\lambda \cdot u \mid u \in T\pcs{X},\lambda \in \RR_{\geq 0}\}. $ -- and that inherits all algebraic and  measurability structures of  ${(\pcs X)}^{\ICones}$.     We note $\iotatc: \tc{\pcs X}^T \rightarrow \pcs X^{\ICones}$ for the inclusion morphism in $\ICones$.
    \end{propositionrep}
    \begin{proofsketch}
We need to check that $\tc {\pcs{X}^T}$, equipped with the cone structure and measurability tests inherited from $\ic {\pcs X}$ is indeed an integrable cone. The proof can be found in the appendix. 
      \end{proofsketch}
    \begin{proof}
      We need to check that $\tc {\pcs{X}^T}$, equipped with the cone structure and measurability tests inherited from $\ic {\pcs X}$ is indeed a cone.  Observe first that
       the valid paths $R\rightarrow \tc{\pcs{X}^{T}}$ are exactly the valid paths $R \rightarrow \ic{\pcs X}$ whose image is contained in $\tc{\pcs{X}^T}$
       The only non-trivial things to check are:
      \begin{enumerate}
      \item that any non-decreasing sequence of elements $(x_n)_{n \in \NN}$ in $\ball{\tc{\pcs X^T}}$ has a supremum in $\ball{\tc{{\pcs X}^T}}$. Let us fix $(x_n)$ such a non-decreasing sequence: for every $n$, there exists $\lambda_n\geq 0, y_n\in T{\pcs X}$ such that $x_n = \lambda_n\cdot  y_n$. Suppose that all the $\lambda_n=0$, then it implies that all the $x_n=0$, and that ends the proof. Now, suppose otherwise: we  define $\lambda:=\sup{\lambda_n}$. Since since the order and the norm are inherited from $\ic{\pcs X}$, there exists $y \in \ball{\ic{\pcs X}} = \clique {\pcs{X}}$ such that $y = \sup(x_n)_{n\in \NN}$ in $\ic{\pcs X}$. We are going to check that $\frac 1 \lambda \cdot y$ is a total element in $\pcs X$. Since $TX$ is closed by bi-orthogonality for  $\Perp$, it is enough to check that for every $z\in T{\pcs X}^{\Perp}$, $\scal z {\frac 1 \lambda \cdot y} = 1$:
        \begin{align*}
          \scal z {\frac 1 {\lambda}\cdot y} & = \frac 1 {\lambda}\cdot \sup_{n \rightarrow +\infty } \scal z {x_n} 
                                              = \frac 1 \lambda \cdot \sup_{n \rightarrow +\infty } \lambda_n\cdot \scal z {y_n} 
          \\ &=  \frac 1 \lambda \cdot \sup_{n \rightarrow +\infty } \lambda_n \,(\text{ since }y_n\in T{\pcs X})\quad   = 1
          \end{align*}
        
        \item that for any measurable path $\gamma:R\rightarrow \ic{\pcs X}$ such that $\gamma(R) \subseteq \tc{\pcs{X}^T}$, and any measure $\mu \in \FMeas(R)$, $\cint^{\ic{\pcs X}}_R \gamma\cdot d\mu \in \tc{\pcs{X}^T} .$ Without loss of generality, we can suppose that $\gamma$ is $1$-bounded. By hypothesis on $\gamma$, there exists two maps $\lambda: R \rightarrow \RR_{\geq 0}$, and $\delta:R\rightarrow T\pcs{X}$ such that $\forall r, \gamma(r) = \lambda(r)\cdot \delta(r)$. From there, we can write, for every $z\in T{\pcs X}^{\Perp}$:
          \begin{align*}
            &\scal z {\cint^{\ic{\pcs X}}_R \gamma\cdot d\mu} = \cint ^{\ic{\pcs X}}_R \scal \gamma z \cdot d\mu \\ & \text{ since } (\_ \in \tc{\pcs X^T}\mapsto \scal z {\_} \in \RR_{\geq 0}) \text{ is a meas. test on }\ic{\pcs X} \\
                                                             &= \cint ^{\ic{\pcs X}}_R (r\mapsto \lambda(r)\cdot \scal {\delta(r)} z) \cdot d\mu \\
            &= \cint ^{\ic{\pcs X}}_R (r\mapsto \lambda(r)\cdot 1) \cdot d\mu \text{ since }z \in T\pcs{X}^{\Perp}
          \end{align*}
          Let us define $\alpha:=  \cint ^{\ic{\pcs X}}_R (r\mapsto \lambda(r)) \cdot d\mu$ -- observe that this quantity does not depend on $z$. If $\alpha>0$, we can conclude from the equalities above that $\frac 1 {\alpha} \cdot \cint^{\ic{\pcs X}}_R \gamma\cdot d\mu \in T{\pcs X}$, thus  $\cint^{\ic{\pcs X}}_R \gamma\cdot d\mu \in \tc{\pcs X^T}.$ Suppose now that $\alpha = 0$. Then:
          \begin{align*} \norm{\cint^{\ic{\pcs X}}_R \gamma\cdot d\mu}_{\ic{\pcs X}}  & = \sup_{t\text{ meas. test }} t({\cint^{\ic{\pcs X}}_R \gamma\cdot d\mu}) \\ &= \sup_{x \in \clique{\pcs X}^{\perp}} \scal x {\cint^{\ic{\pcs X}}_R \gamma\cdot d\mu} \\ 
                                                                                %      & = \sup_{x \in \clique{\pcs X}^{\perp}}  {\cint^{\ic{\pcs X}}_R \scal \gamma x\cdot d\mu} \\
                                                                                      & = \sup_{x \in \clique{\pcs X}^{\perp}}  {\cint^{\ic{\pcs X}}_R \lambda \cdot \scal \delta x\cdot d\mu}
          \end{align*}
          Since $\forall r, \delta(r)\in \clique{\pcs X}$, then $\forall x \in \clique{\pcs X}$, $ \scal{\delta(r)}{x} \leq 1$. Using this, we can deduce from the equalities above: $\norm{\cint^{\ic{\pcs X}}_R \gamma\cdot d\mu}_{\ic{\pcs X}} \leq  \alpha$. It means -- using the $\Cones$ axioms -- that whenever $\alpha = 0$, also $\cint^{\ic{\pcs X}}_R \gamma\cdot d\mu =0$, and that ends the proof.                               
          
        \end{enumerate}
      \end{proof}
     
      It can be shown that for a measurable path $\gamma:R\rightarrow \tc{\pcs{X}^{T}}$ such that $\gamma(R) \subseteq T{\pcs X}$, and for $\mu$ a probability measure,  $\cint_R^{\tc{\pcs X}^{T}}\gamma\cdot d\mu \in T{\pcs X}$. It means that in the particular case where we know that $\gamma: R\rightarrow \tc{\pcs X^T}$ reaches only \emph{proper} total elements in $T{\pcs X}$,  taking the integral of $\gamma$ over a measure $\mu$ into the cone of total elements should be understood as taking a \emph{continuous convex combination}, or \emph{mixture} of total elements.       \begin{example}
        The totality cone of the PCST $\Bool$ of  Example~\ref{ex:pcst}  is simply $(\Bool)^{\ICones}$. It is however not always the case that the totality cone of a PCST $\pcs X^{T} = (\pcs X,T\pcs{X})$ coincides with ${(\pcs X)}^{\ICones}$: that's because not all elements $x \in \pcs X$ can necessarily be written as a total element scaled by some factor.  In particular, we will see later that it is not the case for $!\Bool$. %Moreover, it is a priori not even the case that $\tc{\pcs{X^T}}$ is isomorphic in $\ICones$ to the image by $\ic$ of \emph{any} PCS.
\end{example}

  \begin{propositionrep}\label{lemma:iota_total_elements}
    Let $X$ a countable discrete space.

There exists a unique morphism $\iotares:{\Giry X}^{\ICones} \rightarrow \tc{(!X^{\pcoh})}$ such that $\iotatc \circ \iotares = \iota$, where $\iota:  \Giry X^{\pcoh} \rightarrow (!X^{\pcoh})^{\Icones}$ is the morphism we built in Proposition~\ref{prop:existence_iota}. 
\end{propositionrep}
\begin{proofsketch}
It is enough to prove that $\iota$  sends any probability measure $\mu \in (\Giry X)^{\ICones}$ to a total element in $\clique{!X^{\pcoh}}$. This can be deduced from  $\iota(\mu) = \cint_{r \in \Giry X}^{(!X^{\pcoh})^{\ICones}} (r \in \Giry X \mapsto (r^{\pcoh})^!)\cdot \mu(dr)$ -- see  Proposition~\ref{prop:existence_iota}.
  \end{proofsketch}
\begin{proof}
  It is enough to prove that $\iota$  sends any finite measure $\mu \in (\Giry X)^{\ICones}$ to an element in the totality sub-cone of ${!(X^{\pcoh})}$.
  Let $\distrone \in \FMeas{(\Giry X)}$. Suppose first that $\distrone$ is a  probability measure.
  Recall that by Proposition~\ref{prop:existence_iota}, $\iota(\mu) = \cint_{r \in [0,1]}^{\ic(!X^{\pcoh})} \ic(r^{\pcoh}  )^!)\cdot \mu(dr).$ Thus $\iota(\distrone)$ is a mixture of promotions of elements in $!X^{\pcoh}$. Moreover, those elements in $!X^{\pcoh}$ whose we take the promotions are always \emph{total elements}, which means that by definition of $!$ in $\pCoht$ those promotions themselves are total. Since, as highlighted before, a mixture along a probability distribution of total elements in a PCS is always total, $\iota(\distrone)$ is  a total element in $!\Bool$. Now, if $\distrone$ isn't a probability measure, then it is either $0$ or a $\lambda \cdot \distrtwo$, with $\distrtwo$ a probability measure, and $\iota(\distrone) = \lambda\cdot \iota(\distrtwo)$ is in the totality cone of $\tc{!X^{\pcoh}}$.  
  \end{proof}
    \subsection{Total elements in $!\Bool$ are mixtures of promotions}\label{sect:total_elem_mixtures_promotions}

We just proved in Proposition~\ref{lemma:iota_total_elements} that for $X$ a countable discrete space, probability measures in $\Giry X$ -- or equivalently exchangeable sequences of $X$-valued random variable -- are sent by $\iota$ to total elements of $!X^{\pcoh}$. As highlighted before, all total promotions are reached this way; more precisely they are the image of Dirac measures. More generally, any (continuous) convex combination of promotions --
i.e. $\cint_{\Giry X}^{\ic{!X}}( r\mapsto (r^{\pcoh})^!) \cdot d\mu$ --  is reached by $\iota$.
We are now going to conclude our work by proving that we caracterise this way \emph{all} total elements of $!X^{\pcoh}$.

Our first step in this direction is a combinatorial lemma on total elements $y$ of $!X^{\pcoh}$: the valuation $y_{\mu}$ for a multiset $\mu$ of size $n$ can be deduced of the valuations of $y$ for multisets of size $n-1$:
\begin{lemmarep}\label{lemma:key_technical_lemma_for_totality}
   Let $u \in T{!X^\pcoh}$. Then for every multiset over $X$, it holds that $u_{\mu} = \sum_{x\in X}u_{(\mu +[x])}$.
\end{lemmarep}
\begin{proof}
  We use the concrete presentation of $!X^\pcoh$ from Proposition~\ref{prop:limit_morphisms}, with its totality structure from Definition~\ref{def:exp_with_totality}.
 
  We fix $\mu_0 \in \web{!X^{\pcoh}}$
  \begin{enumerate}
     \item First, we show that there exists at least a total morphism $f:!X^\pcoh \rightarrow 1$ in $\pCoht$ such that $f_{(\mu_0,\star)} >0$, and $f_{(\nu,\star)} = 0$ when $\card \nu > \card {\mu_0}$. To see that, it is enough to take any enumeration $(a_1,...,a_n)$ of $\mu_0$, and to consider the program:
         $$ P:= \text{if}(x \neq a_1)(\star)( \text{ if }(x \neq a_2)(\star)(\text{if }....))).$$
         $P$ lives in an extension of discrete probabilistic $\PCF$ -- see~\cite{EhrhardPT18} -- where moreover we have a date type for $X$. It is not hard to see that the semantics~\cite{EhrhardPT18} of probabilistic $\PCF$ in $\pcoh$ can be extended accordingly (e.g.~si $X = \NN$, no adjustment is necessary).
         
    We can see easily that $\sem P\in \pCoh(!X^{\pcoh} \multimap 1)$ is a total function (for instance, by testing it against all the total promotions, and using Lemma~\ref{lemma:totality_exponential_functions}), and that $\sem P_{(\mu_0,\star)} >0$.
  \item Now, we build $f':!X^\pcoh \rightarrow 1$ in $\pCoh$ as:
    $$f'_{\nu,\star} = \begin{cases} f'_{\nu,\star} \text{ if }\card \nu \leq \card {\mu_0}, \mu_0 \neq \nu \\
                         f_{\mu_0,\star} \text{ if } \exists x \in X, \nu = \mu_0+[x]  \\
                         0 \text{ otherwise.}
                       \end{cases} $$
   Now, we can see that $f'$ also is a total function: again by Lemma~\ref{lemma:totality_exponential_functions}, it is enough to see that for every $x \in TX$, $f'(x^!) = 1 = f(x^{!})$. To see that, observe that  $f'(x^!) = f(x^!) + f_{\mu_0} \cdot ((\sum_{x \in X} x^!_{\mu+[x]}) - x^{!}_{\mu_0})$, and we can conclude since $(\sum_{x \in X} x^!_{\mu+[x]})  - x^{!}_\mu = (\sum_{x\in X}x_\mu - 1) \cdot x^{!}_{\mu} $ (by definition of the promotion operator $x^!$), and$ \sum_{x\in X}x_\mu= 1$ since $x \in T \Bool$.
    \item Since both $f$ and  $f'$ are total functions, and $u$ is a total element, we have $f'(u) = f(u) = 1$. We can compute (similarly as above) $f(u) - f'(u)$, and we obtain: $0 = f(u) - f'(u) = f_{\mu_0} \cdot ((\sum_{x \in X} u_{\mu_0+[x]})- u_{\mu_0})$, and since $f_{\mu_0} >0$, it implies that $\sum_{x \in X} u_{\mu_0 + [x]} = u_{\mu_0}$. 
    \end{enumerate}
 
  \end{proof}
  Let us now define the family of morphisms
  $$f_n:= \Mn n {\pi_1}: (\Mn n {X^{\ICones}} \rightarrow \Mn n {X^{\ICones} \with \unit})_{n \in \NN} $$
  where $\pi_1$ is the left projection morphism $X^{\ICones} \with \unit \rightarrow X^{\ICones}$.
As $\pi_1$ is not a copointed object morphism from $(X^{\ICones},(\_)^{\ICones}:X^{\ICones}\rightarrow \unit)$ to $(X^{\ICones})_\bullet = (X^{\ICones}\with \unit,\pi_2)$, Theorem~\ref{th:chain_morphism_from_copointed_morphisms} does not apply, and indeed the $f_n$ \emph{are not} a chain morphism from the De Finetti DD-chain to the approximants DD-chain. However, we can recover a cone from $\tc{!X^\pcoh}$ to the $\ICones$-De Finetti DD-chain, by precomposing the $(f_n)$ by the morphisms $ \rho_{\infty,n} \circ \iotatc : \tc{(!X^{\pcoh})} \rightarrow   \Mn n {X^{\ICones}}$, that sends a total element of $!X^\pcoh$ first to the larger cone of all $!X^\pcoh$ elements, and then to each of the $\Mn n {X^\ICones}$ by the limit morphisms $\rho_{\infty,n}! {(!X^{\pcoh})}^{\ICones} \rightarrow \Mn n {X^\ICones}$.

  \begin{lemmarep}\label{prop:applying_de_finetti_to_totality}
    The family $(m_n)_{n \in \NN}:= f_n \circ {\rho_{\infty,n}} \circ {\iotatc}: \tc{(!X^\pcoh)} \rightarrow   \Mn n {{X}^{\ICones}} $ forms a cone on the De Finetti draw-and-delete chain in $\Icones$.
  \end{lemmarep}
  \begin{proof}
    Since $\ic$ is (symmetric) strong monoidal, it is enough to show in $\pcoh$ that for any \emph{total element} in $\clique{! X^\pcoh}$,  $( \Mn n {\pi_1}\circ \rho_{n,\infty}(x)) = (\DDn n^{DF}\circ  \Mn n {\pi}\circ \rho_{n+1,\infty}(x))$, where we note $\DD^{DF}$ the draw and delete chain in $\pcoh$ generated by the copointed object ${(\_)}^{\pcoh}: X^\pcoh \rightarrow \unit$. Here, we use the concrete choice of PCS for the equalisers $\Mn n {\cdot}$ and the limit $!^\pcoh X$ presented in Propositions~\ref{def:mn_pcoh} and~\ref{prop:limit_morphisms}.
     For any element $x \in \clique{! X^\pcoh}$, $\Mn n {\pi}\circ \rho_{n,\infty}(x)$ is the element in $\clique{\Mn n {X^\pcoh}}$ obtained from $x$ by keeping exactly the multisets of size exactly $n$. As a consequence, for $\mu$ a multiset of size $n$ over $X$,  $(  \Mn n {\pi}\circ \rho_{n,\infty}(x))_\mu = x_\mu$, while  $(\DDn n^{DF}\circ  \Mn n {\pi}\circ \rho_{n+1,\infty}(x))_\mu = \sum_{a \in X} x_{\mu + [a]}$. We can conclude by our combinatorial Lemma~\ref{lemma:key_technical_lemma_for_totality}.
   \end{proof}

Since by our De Finetti theorem in $\ICones$ -- Theorem~\ref{th:de_finetti_in_icones}-- $\Giry X^{\ICones}$ is a limit of the $\ICones$ De Finetti DD-chain, we can deduce from Lemma~\ref{prop:applying_de_finetti_to_totality} a unique mediating morphism  $\alpha: \tc{ !X^{\pcoh}} \rightarrow \Giry X^\ICones$.
    
  \begin{theoremrep}\label{prop:iso_element_total_meas}
  
For any $X$ countable discrete space, $\tc{! X^\pcoh}$ and $\Giry X^{\ICones}$ are isomorphic, and thus the total elements of $!X^\pcoh$ are \emph{exactly} the continuous mixtures of promotions.
\end{theoremrep}
\begin{proofsketch}
The proof consists in showing that $\alpha:\tc{!X^{\pcoh}} \rightarrow {\Giry X^\pcoh}$ and the morphism  $\iotares:  {\Giry X}^{\Icones} \rightarrow \tc{!X^{\pcoh}}$ from Lemma~\ref{lemma:iota_total_elements} are inverse of one another. 
  \end{proofsketch}
\begin{proof}
  We are going to show that $\alpha:\tc{!X^{\pcoh}} \rightarrow {\Giry X^\pcoh}$ given by the cone in Lemma~\ref{prop:applying_de_finetti_to_totality} and $ \iotares:\Giry X^\pcoh  \xrightarrow{\iotatc}\tc{!X^\pcoh}$ are inverse one another.
  \begin{itemize}
  \item We first show that $\tc{!X^{\pcoh}} \xrightarrow{\alpha} \Giry X^{\ICones} \xrightarrow{\iota} !X^\pcoh$ is the mediating morphism that factorises the cone $ \tc{!X^\pcoh} \xrightarrow{\rho_{\infty,n} \circ \iotatc} \Mn n {( X^{\ICones}\with \unit)}$ on the approximants DD-chain through  $!X^\pcoh$. To check this, it is enough to see that $\forall n \in \NN$, $\rho_{n,\infty}\circ \iota \circ \alpha  = \rho_{\infty,n} \circ \iotatc$. Recall that by definition of $\iota$ -- in Proposition~\ref{prop:existence_iota} --as a mediating morphism itself, $\rho_{\infty,n} \circ \iota = \Mn n {(\langle X^\ICones,\_\rangle)}\circ \multkern n$. Thus $\rho_{n,\infty}\circ  \iota \circ \alpha =  \Mn n {(\langle X^\ICones,\_\rangle)}\circ \multkern n \circ \alpha$. We are now ready to use the definition of $\alpha$ as a mediating morphism: it tells us that  $\multkern n \circ \alpha = {\Mn n \pi_1} \circ \rho_{n,\infty}\circ \iotatc$. So summing up, we proved: $\rho_{n,\infty}\circ  \iota \circ \alpha =  \Mn n {(\langle X^\ICones,\_\rangle)} \circ {\Mn n \pi_1} \circ \rho_{n,\infty}\circ \iotatc$. Observe that $\langle X^\ICones ,\_ \rangle \circ {\pi_1} = \id_{X^{\ICones}}$, and from there we can deduce $\Mn n {(\langle X^{\ICones},\_\rangle)} \circ {\Mn n \pi_1} = \id_{\Mn n {X^\ICones \with \unit}}$. We can conclude from here. 
    
  \item Another morphism that factorises the cone $ \tc{!X^\pcoh} \xrightarrow{\rho_{\infty,n} \circ \iotatc} \Mn n {( X^{\ICones}\with \unit)}$ on the approximants DD-chain through  $!X^\pcoh$ is simply $\tc{!X^\pcoh} \xrightarrow{\iotatc} !X^{\pcoh}$. By the universal property of the limit, it means that  $\iotatc = {\iota}\circ \alpha $. Since $\iota =\iotatc  \circ \iotares$ (by Lemma~\ref{lemma:iota_total_elements})  and $\iotatc$ is obviously a monomorphism, we can deduce that $\iotares \circ \alpha = \id_{\tc{!X^\pcoh}}$.
    \item To conclude from the above that $\iotares \circ \alpha$, it is enough to prove that $\iotares$ is a monomorphism :indeed, $(\iotares \text{ mono }\wedge\iotares \circ \alpha = \id_{\tc{!X^\pcoh}}) \Rightarrow (\alpha \text{ mono }\wedge  \iotares \circ \alpha \circ \iotares= \id \circ \iotares) \Rightarrow \alpha \circ \iotares = \id.$ It is ieasy to check, because since $(\Giry X \xrightarrow{\multkern n} \Mn n{X})_{n \in \NN}$ is a limit cone, the $(\multkern n)_{n \in \NN}$ are jointly monic, and moreover the morphisms $\Mn n X \xrightarrow{\langle X,\_ \rangle} \Mn n {X\with \unit}$ are monos. 
   \end{itemize}

    \end{proof}

  \section{Future Work}
  The natural next step would be to try to prove that the exponential comonad $!$ in $\Icones$ can indeed be obtained by Melliès et al.'s layered construction of the free exponential. That would allow to connect $\Giry X$ and $!X$ for any $X$ in the category $\Arity$, instead of only for countable discrete spaces. Another direction would be to extend this connection to other flavours of De Finetti theorems, e.g.~to establish bridges between the quantum categorical De Finetti theorem from~\cite{staton2023quantum,Fritz:2023jou} and LL models for quantum computations~\cite{pagani_selinger}. Again another direction would be to explore what flavours of \emph{continuous probabilistic calculus} can be interpreted into $\pcoh$, starting from the fact that we now know that the datatype $[0,1] = \Giry \deux$ can be interpreted in $!\Bool$.

\bibliographystyle{plainurl}
\bibliography{biblio}

\end{document}